\newif\ifdraft\draftfalse
\newif\ifproceedings\proceedingsfalse
\renewcommand{\comm@todo@mpar}[1]{}
\def\divider{%
  \leavevmode\leaders\hrule height 0.6ex depth \dimexpr0.4pt-0.6ex\hfill%
  \kern0pt%
}
\newcommand\defeq{\coloneqq}
\renewcommand\phi{\varphi}
\definecolor{oxblue}{RGB}{0,33,71}
\definecolor{oxlightblue}{RGB}{0,72,205}
\definecolor{oxlightblue2}{RGB}{161,196,208}
\definecolor{oxgold}{HTML}{a0630a
}
\definecolor{oxyellow}{RGB}{243,222,116}
\definecolor{oxbrown}{HTML}{88562e}
\definecolor{oxgreen}{HTML}{3e6111}
\definecolor{oxgreen2}{RGB}{105,145,59}
\definecolor{oxgreen3}{RGB}{185,207,150}
\definecolor{oxred}{HTML}{ac0b1d}
\definecolor{oxred2}{RGB}{190,15,52}
\definecolor{oxpink}{RGB}{235,196,203}
\definecolor{oxorange}{RGB}{207,122,48}
\definecolor{oxgrey}{RGB}{167,157,150}
\newcommand\biglor{\bigvee}
\newcommand\bigland{\bigwedge}
\newcommand\sinti[2]{{#1}\llbracket#2\rrbracket}
\newcommand{\from}{:}
\newcommand{\As}{\mathcal A}
\newcommand{\Bs}{\mathcal B}
\newcommand{\Hf}{\mathcal H}
\newcommand{\Sf}{\mathcal S}
\newcommand{\Mf}{\mathcal M}
\newcommand{\Cf}{\mathcal C}
\newcommand{\Ad}{\mathfrak A}
\newcommand{\Bd}{\mathfrak B}
\newcommand{\Rd}{\mathfrak R}
\newcommand{\Sd}{\mathfrak S}
\newcommand{\Td}{\mathfrak T}
\newcommand{\igt}{\mathrm{gt}_\iota}
\newcommand\nat{\mathbb N}
\newcommand\bool{\mathbb B}
\newcommand\nin{\not\in}
\newcommand\crel{\sqsubseteq_c}
\newcommand\crelrev{\sqsupseteq_c}
\newcommand\prel{\sqsubseteq}
\newcommand\arel{\precsim}
\DeclareMathOperator{\argmax}{arg\, max}
\DeclareMathOperator{\argmin}{arg\, min}
\DeclareMathOperator{\dir}{dir}
\DeclareMathOperator{\free}{fv}
\DeclareMathOperator{\vars}{vars}
\DeclareMathOperator{\dom}{dom}
\DeclareMathOperator{\andf}{and}
\DeclareMathOperator{\orf}{or}
\DeclareMathOperator{\hexistsh}{exists}
\DeclareMathOperator{\On}{\mathbf{On}}
\DeclareMathOperator{\Lim}{\mathbf{Lim}}
\DeclareMathOperator{\LIA}{LIA}
\DeclareMathOperator{\FO}{{FO}}
\DeclareMathOperator{\app}{@}
\DeclareMathOperator{\posex}{posex}
\DeclareMathOperator{\nega}{neg}
\DeclareMathOperator{\Comp}{Comp}
\DeclareMathOperator{\Iter}{Iter}
\DeclareMathOperator{\Add}{Add}
\newcommand{\hexists}{\hexistsh}
\newcommand{\xtwoheadrightarrow}[2][]{
  \xrightarrow[#1]{#2}\mathrel{\mkern-14mu}\rightarrow
}
\newcommand\xxrightarrow[2][]{\mathrel{%
  \setbox2=\hbox{\stackon{\scriptstyle#1}{\scriptstyle#2}}%
  \stackunder[0pt]{%
    \xrightarrow{\makebox[\dimexpr\wd2\relax]{$\scriptstyle#2$}}%
  }{%
   \scriptstyle#1\,%
  }%
}}
\newcommand\xxtwoheadrightarrow[2][]{\mathrel{%
  \setbox2=\hbox{\stackon{\scriptstyle#1}{\scriptstyle#2}}%
  \stackunder[0pt]{%
    \xtwoheadrightarrow{\makebox[\dimexpr\wd2\relax]{$\scriptstyle#2$}}%
  }{%
   \scriptstyle#1\,%
  }%
}}
\newcommand{\lred}[1]{\xxrightarrow[\ell]{#1}}
\newcommand{\lredrt}{\xxtwoheadrightarrow[\ell]{}}
\newcommand{\sred}{\xxrightarrow[s]{}}
\newcommand{\bured}{\rightarrow_{\beta\upsilon}}
\newcommand{\buredrt}{\twoheadrightarrow_{\beta\upsilon}}
\newcommand{\bred}{\rightarrow_\beta}
\newcommand{\pred}{\rightarrow_\parallel}
\newcommand{\force}{\vartriangleright}
\newcommand{\Res}{\Rightarrow_{\Resh}}
\newcommand\sref[1]{Proof Step \ref{#1}}
\newcommand\srefs[2]{Proof Steps \ref{#1} and \ref{#2}}
\newcommand\btypes{\mathfrak I}
\newcommand{\stkout}[1]{\ifmmode\text{\sout{\ensuremath{#1}}}\else\sout{#1}\fi}
\setlist[enumerate,1]{label={(\roman*)}}
\newlist{thmlist}{enumerate}{1}
\setlist[thmlist]{label=\textup{(\roman{thmlisti})},ref={(\roman{thmlisti})},noitemsep}
\newlist{complist}{enumerate}{1}
\setlist[complist,1]{noitemsep,label=\textbf{(C\arabic*)},leftmargin=2.5\parindent}
\tikzstyle{resnode}=[anchor=base,fill=black!15,inner sep=2pt]
\tikzstyle{unode}=[anchor=base]
\renewcommand{\p@thmlisti}{\perh@ps{\thetheorem}}
\protected\def\perh@ps#1#2{\textup{#1#2}}
\newcommand{\itemrefperh@ps}[2]{\textup{#2}}
\newcommand{\itemref}[1]{\begingroup\let\perh@ps\itemrefperh@ps Part~\ref{#1}\endgroup}
\newcommand{\itemrefs}[2]{\begingroup\let\perh@ps\itemrefperh@ps Parts~\ref{#1} and~\ref{#2}\endgroup}
\renewcommand{\Res}{\vdash_\As}
\newcommand{\Resp}{\vdash_\Ad}
\newcommand{\Respstr}[1]{\vdash_{#1}}
\newcommand{\embed}[1]{\left[ #1\right]}
\newcommand{\vals}{\mathfrak a}
\newcommand{\Set}{\Gamma}
\newcommand{\Prgm}{\Pi}
\renewcommand{\Hf}{\mathcal F}
\declaretheorem[
    name=Theorem,
    Refname={Theorem,Theorems}]{theorem}
\declaretheorem[
    name=Lemma,
    Refname={Lemma,Lemmas},
    sibling=theorem]{lemma}
\declaretheorem[
    name=Proposition,
    Refname={Prop.,Propositions},
    sibling=theorem]{proposition}
\declaretheorem[
    name=Corollary,
    Refname={Corollary,Corollaries},
    sibling=theorem]{corollary}
\declaretheorem[
    name=Claim,
    Refname={Claim,Claims}
    ]{claim}
\declaretheorem[numbered=no,
    name=Assumption,
    Refname={Assumption,Assumptions}]{assumption}
\theoremstyle{definition}
\declaretheorem[
    name=Definition,
    Refname={Definition,Definitions},
    sibling=theorem]{definition}
\declaretheorem[
    name=Example,
    Refname={Example,Examples},
    sibling=theorem]{example}
\theoremstyle{remark}
\declaretheorem[
    name=Remark,
    Refname={Remark,Remarks},
    sibling=theorem]{remark}
    \declaretheoremstyle[
    spaceabove=-4pt, 
    spacebelow=6pt, 
    headfont=\it, 
    bodyfont = \normalfont,
    postheadspace=1em, 
    qed=$\blacksquare$, 
    headpunct={.}]{myproofstyle} 
\declaretheorem[name={Proof}, style=myproofstyle, unnumbered]{claimproof}
\let\oldproof\proof
\let\oldendproof\endproof
\def\proof{\setcounter{claim}{0}\begingroup\oldproof}
\def\endproof{\oldendproof \endgroup}
\Crefname{theorem}{Thm.}{Theorems}
\Crefname{corollary}{Cor.}{Corollary}
\crefname{proposition}{Prop.}{Propositions}
\Crefname{claim}{Claim}{Claims}
\Crefname{definition}{Def.}{Definitions}
\Crefname{fact}{Fact}{Facts}
\Crefname{conj}{Conjecture}{Conjectures}
\Crefname{example}{Ex.}{Examples}
\Crefname{example}{Rem.}{Remarks}
\Crefname{convention}{Convention}{Conventions}
\Crefname{lemma}{Lemma}{Lemmas}
\Crefname{section}{Sec.}{Sec.}
\Crefname{appendix}{App.}{App.}
\newcommand{\inferbinlics}[5]{\begin{minipage}{22ex}{\bfseries #1}\end{minipage} {
    \begin{minipage}{10ex}{$\infer{#4}{#2 &&& #3}$}\end{minipage}
    \par\noindent #5}}
\newcommand{\inferthrlics}[6]{\begin{minipage}{13ex}{\bfseries #1}\end{minipage} {
    \begin{minipage}{10ex}{$\infer{#5}{#2 &&& #3 &&& #4}$}\end{minipage}
    \par\noindent #6}}
\newcommand{\inferunlics}[4]{\begin{minipage}{22ex}{\bfseries #1}\end{minipage} {
    \begin{minipage}{10ex}{$\infer{#3}{#2}$}
    \end{minipage}
    \par\noindent
      #4
    }}
\begin{document}
%



\title{HoCHC: A Refutationally Complete and Semantically Invariant System of Higher-order Logic Modulo Theories}

\author{\IEEEauthorblockN{C.-H. Luke Ong}
\IEEEauthorblockA{University of Oxford}
\and
\IEEEauthorblockN{Dominik Wagner}
\IEEEauthorblockA{University of Oxford}}



\ifproceedings
 \IEEEoverridecommandlockouts
 \IEEEpubid{\makebox[\columnwidth]{978-1-7281-3608-0/19/\$31.00~
 \copyright2019 IEEE \hfill}
\hspace{\columnsep}\makebox[\columnwidth]{ }}
\fi

\maketitle

\begin{abstract}

  We present a simple resolution proof system for \emph{higher-order constrained Horn clauses} (HoCHC)---a system of higher-order logic modulo theories---and prove its soundness and refutational completeness w.r.t.~both standard and Henkin semantics.
  As corollaries, we obtain the compactness theorem and semi-decidability of HoCHC for semi-decidable background theories, and we prove that HoCHC satisfies a canonical model property.
  Moreover a variant of the well-known translation from higher-order to 1st-order logic is shown to be sound and complete for HoCHC in both semantics.
  We illustrate how to transfer decidability results for (fragments of) 1st-order logic modulo theories to our higher-order setting, using as example the Bernays-Sch\"onfinkel-Ramsey fragment of HoCHC modulo a restricted form of Linear Integer Arithmetic.
  
    



 \end{abstract}


%
\IEEEpeerreviewmaketitle



\section{Introduction}




Cathcart Burn et al.~\cite{BOR18} recently advocated an automatic,
programming-language independent approach
to verify safety properties of higher-order programs by framing them as solvability problems for systems of higher-order constraints.
These systems consist of Horn clauses of higher-order logic, containing constraints expressed in some suitable background theory.
Consider the functional program:
\[
  \begin{array}{l}
  \mathsf{let}\; \mathit{add} \, x \, y = x + y \\
  \mathsf{letrec}\; \mathit{iter} \, f \, s \, n =
    \mathsf{if}\; {n \leq 0}\; \mathsf{then}\ {s}\ \mathsf{else}\; f\, n \, (\mathit{iter}\, f\, s\, (n - 1)) \\
    \qquad \mathsf{in}\;\lambda n. \mathsf{assert} \;
    \big( n \geq 1 \, \to \, (\mathit{iter}\, \mathit{add} \, n \, n > n + n) \big) 
  \end{array}
\]
Thus $(\mathit{iter} \, 
\mathit{add} \, n \, n)$ computes the value $n+\sum_{i=1}^n i$.

To verify that the program is \emph{safe} (i.e.~the assertion is never violated), 
it suffices to find 
overapproximations of the input-output-graph (i.e.\ \emph{invariants})
of the functions that imply
the required property.
The idea then is to express the problem of finding such a program invariant, \emph{logically}, as a satisfiability problem for the following higher-order constrained system:
\begin{example}[Invariant as system of higher-order constraints]
\label{ex:iter-working}
\[
\begin{array}{l}
\forall x, y, z \, . \, \big(z = x + y \to \Add\,  x \, y \, z\big) \\
\forall f, s, n, x \, .\, \big( n \leq 0 \wedge s = x \to \Iter\, f \, s \, n \, x \big) \\
\forall f, s, n, x \, .\, \big(n > 0 \wedge \exists y \, . \, (\Iter \, f \, s \, (n-1) \, y \wedge f \, n \, y \, x)\\ 
\qquad \qquad \qquad \qquad \to \Iter \, f \, s \, n \, x\big) \\
\forall n, x  \, . \, \big ( n \geq 1 \wedge \Iter \, \Add \, n \, n \, x \; \to \; x > n + n \big)
\end{array}
\]
The above are Horn clauses of higher-order logic, obtained by transformation from the preceding program; $\Add \from \iota \to \iota \to \iota \to o$ and $\Iter \from (\iota \to \iota \to \iota \to o) \to \iota \to \iota \to \iota \to o$ are higher-order relations,
and the binary predicates $({\leq}, {>}, \cdots)$ are formulas of the
background theory, Linear Integer Arithmetic (LIA).

Since the the
assertion in the program is violated for $n=1$, the clauses are unsatisfiable.
\end{example}

\emph{Is higher-order logic modulo theories a sensible algorithmic approach to verification?
Is it well-founded?}

To set the scene, recall that 1st-order logic is semi-decidable:
1st-order validities\footnote{Define $\mathbf{V}^n(P)$ to be the set of valid sentences of $n$th-order logic with 2-place predicate $P$. 
Then $\mathbf{V}^1({=})$ is recursively enumerable} are recursively enumerable;\addtocounter{footnote}{-1}\addtocounter{Hfootnote}{-1} moreover if a formula is unsatisfiable then it is provable by resolution \cite{DBLP:journals/jacm/DavisP60,R65}.
By contrast, higher-order logic in standard semantics is wildly undecidable.
E.g.~the set\footnotemark ~$\mathbf{V}^2(=)$ of valid sentences of the 2nd-order 
language of equality is not even analytical \cite{Enderton01}.

This does not necessarily spell doom for the higher-order logic approach. 
One could consider higher-order logic in \emph{Henkin semantics}
\cite{DBLP:journals/jsyml/Henkin50}, which is, after all, ``nothing
but many-sorted 1st-order logic with comprehension
axioms''~\cite{Enderton01} (see also \cite{vanBehthemD83,Leivant94}).
However, because the standard semantics is natural and comparatively simple, 
it seems to be the semantics of choice in program verification (e.g.~monadic 2nd-order logic in model checking, and HOL theorem prover \cite{GordonM93,GordonP94} in automated deduction) and in program specification.

In this paper, we study the algorithmic, model-theoretic and semantical properties of higher-order Horn clauses with a 1st-order background theory.

\paragraph{A Complete Resolution Proof System for HoCHC}
The main technical contribution of this paper is the design of a simple resolution proof system for \emph{higher-order constrained Horn clauses} (HoCHC) where the background theory has a unique model \cite{BOR18}, 
and its refutational completeness proof with respect to the standard semantics (\cref{sec:resolution}). 
The proof system and its refutational completeness proof are generalised in \cref{sec:compact-theories} to arbitrary \emph{compact} background theories, which may have more than one model.

The completeness proof hinges on a novel model-theoretic insight: we
prove that the immediate consequence operator is
\emph{quasi-continuous}, although
it is not continuous in the standard Scott sense.
Thus, the immediate consequence operator gives rise to a syntactic explanation for unsatisfiability. Moreover, we adapt the proof of the standardisation theorem of the $\lambda$-calculus in \cite{K00} to argue that this explanation can be captured by the rules of the resolution proof system.


\paragraph{Canonical Model Property} As shown in \cite{BOR18}, a
disadvantage of the standard semantics is failure of the least model
property (w.r.t.~the pointwise ordering).
However, we prove in \cref{sec:quasi-mon} that the immediate consequence operator is ``sufficiently'' monotone and hence (by an extension of the
Knaster-Tarski theorem) gives rise to a model of all satisfiable instances.


\paragraph{Compactness Theorem and Semi-decidability of HoCHC}
A well-known feature of higher-order logic in standard semantics is failure of the compactness theorem.
As a consequence of HoCHC's refutational completeness, it follows that the {compactness theorem} \emph{does} hold for HoCHC (in standard semantics):
for every unsatisfiable {set} $\Set$ of HoCHCs, there is a finite subset $\Set' \subseteq \Set$ which is unsatisfiable.

Moreover, if the consistency of conjunctions of atoms in the background theory  is semi-decidable, so is HoCHC unsatisfiability.
Crucially, this underpins the \emph{practicality} of the HoCHC-based
approach to program verification.

\paragraph{Semantic Invariance}
The soundness and completeness of our resolution proof system has
another pleasing corollary: satisfiability of HoCHC does \emph{not}
depend on the choice of semantics\footnote{within the reasonable
  bounds formalised by \emph{(complete) frames}}~(\cref{SEC:EQUIV}). 
In particular, this constitutes an alternative proof of the equivalence of standard, monotone and continuous semantics for HoCHCs, without exhibiting explicit translations between semantics.
Moreover, this demonstrates that, in contrast to (full) higher-order logic, satisfiability of HoCHCs with respect to standard semantics on the one hand, and to Henkin semantics on the other, coincide.

Semantic invariance is an important advantage for program
verification. 
It follows that one can use (the simpler and more intuitive) standard semantics for specification, 
  but use Henkin semantics for the development of refined proof systems that are complete \cite{A71,H72,BK98,BBCW18},  
and use continuous semantics (which enjoys a richer structure) to
construct solution methods and in static analysis. 

\paragraph{Complete 1st-order Translation}

As suggested by the equivalence of standard and Henkin semantics, we show that there is a variant of the standard translation of higher-order logic into 1st-order logic which is sound and complete also
for standard semantics, when restricted to HoCHC (\cref{sec:appenc}). 

\paragraph{Decidable Fragments of HoCHC}
Satisfiability of finite sets of HoCHCs is trivially decidable for background theories with finite domains.
In \cref{sec:decidable}, we identify a fragment\ifproceedings \else\footnote{Another one (higher-order Datalog) is presented in \cref{sec:hodatalog}.} 
\fi of HoCHC (the Bernays-Sch\"onfinkel-Ramsey fragment of HoCHC modulo a restricted form of Linear Integer Arithmetic) with a decidable satisfiability problem by showing \mbox{equi-satisfiability} to clauses w.r.t.\ a finite number of such background theories.

  

\paragraph*{Outline} 
We begin with some key definitions in \cref{sec:prelims}.
Then we show that even standard semantics satisfies a canonical model
property (\cref{sec:quasi-mon}).
In \cref{sec:resolution}, we present the resolution proof system for HoCHC and prove its completeness.
In \cref{sec:equiv} we show HoCHC's semantic invariance and in \cref{sec:compact-theories} we generalise the refutational completeness proof to compact background theories, which may have more than one model.
In \cref{sec:appenc} we present a 1st-order translation 
of higher-order logic and prove it complete when restricted to HoCHC.
In \cref{sec:decidable} we exhibit decidable fragments of HoCHC.
Finally, we discuss related work in \cref{sec:relwork}, and conclude
in \cref{sec:conc}.

\ifproceedings
  For the extended version of the paper refer to \cite{OW19}.
\fi

\section{Technical Preliminaries}
\label{sec:prelims}
\begin{figure*}[!t]
  \centering
  \begin{gather*}
    \infer[(\mathrm{Var})]{\Delta\vdash x\from\Delta(x)}{x\in\dom(\Delta)}\qquad\infer[(\mathrm{Cst})]{\Delta\vdash c\from\sigma}{c\from\sigma\in\Sigma}\qquad
    \infer[(\mathrm{App})]{\Delta\vdash
      M_1M_2\from\sigma_2}{\Delta\vdash
      M_1\from\sigma_1\to\sigma_2&\Delta\vdash M_2\from\sigma_1}\qquad
    \infer[(\mathrm{Abs})]{\Delta\vdash\lambda x\ldotp M\from\Delta(x)\to\rho}{\Delta\vdash M\from\rho}\\[2pt]
    \infer[(\mathrm{And/Or})]{\Delta\vdash\circ\from o\to o\to
      o}{\circ\in\{\land,\lor\}}\qquad\infer[(\mathrm{Neg})]{\Delta\vdash\neg
      M\from o}{\Delta\vdash M\from o}\qquad
    \infer[(\mathrm{Ex})]{\Delta\vdash\exists_\tau\from(\tau\to o)\to o}{}
  \end{gather*}
  \caption{Typing judgements}
  \label{fig:tjud}
\end{figure*}


This section introduces the syntax and semantics of a restricted form
of higher-order logic (\cref{sec:relhol}), higher-order constrained
Horn clauses (\cref{ch:defHoCHC}) and programs (\cref{sec:nftrans}).

\subsection{Relational Higher-order Logic}
\label{sec:relhol}
\subsubsection{Syntax}
For a fixed set $\btypes$ (intuitively the types of individuals),
the set of
\emph{argument types}, \emph{relational types}, \emph{1st-order
  types} and \emph{types} (generated by $\btypes$) are mutual recursively defined by
\[
\begin{array}{lrclr}
\hbox{\em Argument type} & \tau & ::= & \iota\mid\rho\\
\hbox{\em Relational type} &  \rho&::= & o\mid\tau\to\rho\\
\hbox{\em 1st-order type} & \sigma_{\FO}&::=&\iota\mid \iota\to o\mid\iota\to\sigma_{\FO}\\
\hbox{\em Type} &  \sigma&::=&\rho\mid\sigma_{\FO},
\end{array}
\]
where $\iota\in\btypes$.
We sometimes abbreviate the (1st-order) type $\underbrace{\iota\to\cdots\to\iota}_{n}\to\iota$ to $\iota^n\to\iota$ (similarly for $\iota^n\to o$). 
For types $\tau_1\to\cdots\to\tau_n\to\sigma$ we also write $\overline\tau\to\sigma$. 
Intuitively, $o$ is the type of the truth values (or Booleans). 
Besides, $\sigma_{\FO}$ contains all (1st-order) types of the form $\iota^n\to\iota$ or $\iota^n\to o$, 
i.e.\ all arguments are of type $\iota$. Moreover, each relational type has the form $\overline\tau\to o$.

A \emph{type environment} (typically $\Delta$) is a function mapping
variables 
(typically denoted by $x,y,z$ etc.)
to {argument} types; 
for $x \in \dom(\Delta)$, we write $x : \tau\in\Delta$ to mean $\Delta(x) = \tau$.
A \emph{signature} is a set of distinct typed \emph{symbols}
$c\from\sigma$, where $c\nin\dom(\Delta)$ and $c$ is not one of the
\emph{logical symbols} $\neg$, $\land$, $\lor$ and $\exists_\tau$ (for
argument types $\tau$, which we omit frequently).
It is \emph{1st-order} if for each $c\from\sigma\in\Sigma$, $\sigma$ is 1st-order. 
We often write $c\in\Sigma$ if $c\from\sigma\in\Sigma$ for some $\sigma$.

The set of \emph{$\Sigma$-pre-terms} is given by
\begin{align*}
  M::=x\mid c\mid\neg\mid\land\mid\lor\mid\exists_\tau\mid
  MM\mid\lambda x\ldotp M
\end{align*}
where $c\in\Sigma$.
Following the usual conventions we assume that application associates to the left and the scope of abstractions extend as far to the right as possible.
We also write $M\,\overline N$ and $\lambda\overline x\ldotp M'$ for $M\,N_1\cdots N_n$ and $\lambda x_1\ldotp\cdots\lambda x_n\ldotp M'$, respectively, assuming implicitly that $M$ is not an application.
Besides, we abbreviate $\exists_\tau (\lambda x\ldotp M)$ as $\exists x\ldotp M$.
Moreover, we identify terms up to $\alpha$-equivalence and adopt Barendregt's \emph{variable convention} \cite{B12}.

The typing judgement $\Delta\vdash M\from\sigma$ is defined in \cref{fig:tjud}.
We say that $M$ is \emph{$\Sigma$-term} if $\Delta\vdash M\from\sigma$
for some $\sigma$ and it is a \emph{$\Sigma$-formula} if $\sigma=o$. 
A $\Sigma$-formula is a \emph{1st-order $\Sigma$-formula} if its
construction is restricted to symbols $c\from\sigma_{\FO}\in\Sigma$
and variables ${x\from\iota}\in\Delta$, and uses no
$\lambda$-abstraction.
Finally, for a $\Sigma$-term $M$,
$\free(M)$ is the set of free variables, and $M$ is a \emph{closed} 
$\Sigma$-term if $\free(M)=\emptyset$.


\begin{remark}
  \label{rem:isimple}
   It follows from the definitions that
   \begin{inparaenum}[(i)]
   \item 
   each term $\Delta\vdash M\from \iota^n\to\iota$ can only contain variables of type $\iota$ and constants of non-relational 1st-order type, and contains neither $\lambda$-abstractions nor logical symbols (a similar approach is adopted in \cite{CHRW13});
   \item 
   $\neg$ can only occur in a term if applied to a formula (and not in pre-terms of the form $R\,\neg$). 
 \end{inparaenum}
 \end{remark}
 The following kind of terms is particularly significant:
 \begin{definition}
 A $\Sigma$-term is \emph{positive existential} if the logical constant ``$\neg$'' is not a subterm.
 \end{definition}

For $\Sigma$-terms $M, N_1,\ldots,N_n$ and variables
$x_1,\ldots,x_n$ satisfying $\Delta\vdash N_i\from\Delta(x_i)$, the \emph{(simultaneous) substitution}
$M[N_1/x_1,\ldots,N_n/x_n]$ is defined in the standard way.
 
\subsubsection{Semantics}
There are two classic semantics for higher-order logic:
\emph{standard} and \emph{Henkin semantics} \cite{DBLP:journals/jsyml/Henkin50}.
Whereas in standard semantics the interpretation of higher types is uniquely determined by the domains of individuals (quantifiers range over \emph{all} set-theoretic functions of the appropriate type), 
it can be \emph{stipulated} quite liberally in Henkin semantics.


Formally,
a \emph{pre-frame} $\Hf$ assigns to each type $\sigma$ a
non-empty set $\sinti\Hf\sigma$ such that 
\begin{enumerate}
\item $\sinti\Hf o\defeq\bool \defeq \{0,1\}$ and for each type $\sigma_1\to\sigma_2$,
$\sinti\Hf{\sigma_1\to\sigma_2}\subseteq[\sinti\Hf{\sigma_1}\to\sinti\Hf{\sigma_2}]$
\item $\andf,\orf\in\sinti\Hf{o\to o\to o}$
\item $\exists_\tau\in\sinti\Hf{(\tau\to o)\to o}$ for each argument
  type $\tau$
  \end{enumerate}
  where $[\sinti\Hf{\sigma_1}\to\sinti\Hf{\sigma_2}]$ is the set of
functions $\sinti\Hf{\sigma_1}\to\sinti\Hf{\sigma_2}$ and
\begin{align*}
  \andf(b_1)(b_2)&\defeq\min\{b_1,b_2\} \qquad
                   \orf(b_1)(b_2) \defeq\max\{b_1,b_2\}\\
  \hexists_\tau(r)&\defeq\max\{r(s)\mid s\in\sinti\Hf\tau\}
\end{align*}



\begin{example}[Pre-frames] 
  For every $\iota\in\btypes$, we fix an arbitrary non-empty set $D_\iota$. We
  define $\Sf$, $\Mf$ and $\Cf$, which we call the \emph{standard}, \emph{monotone} and
  \emph{continuous frame}, respectively,
  recursively by
  $\sinti{\Sf}
  o\defeq\sinti
  {\Mf}
  o\defeq\sinti
  {\Cf}
  o\defeq\bool$;
  $\sinti{\Sf}\iota\defeq\sinti
  {\Mf}\iota\defeq\sinti
  {\Cf}\iota\defeq D_\iota$
  for $\iota\in\btypes$; and
  \begin{align*}
    \sinti {\Sf}{\tau\to\sigma}&\defeq [\sinti {\Sf}\tau\to\sinti {\Sf}\sigma]\\
    \sinti {\Mf}{\tau\to\sigma}&\defeq [\sinti {\Mf}\tau\xrightarrow{m}\sinti {\Mf}\sigma]\\
    \sinti {\Cf}{\tau\to\sigma}&\defeq [\sinti {\Cf}\tau\xrightarrow{c}\sinti {\Cf}\sigma],
  \end{align*}
  where 
  $[P \xrightarrow{m} P']$ ($[P \xrightarrow{c} P']$) is the set of
  monotone (continuous) functions from the posets $P$ to $P'$ (cf. \cite{AJ95}).
\end{example}

Let $\Sigma$ be a signature and $\Hf$ be a pre-frame. 
A \emph{$(\Sigma,\Hf)$-structure} $\As$ assigns to each $c\from\sigma\in\Sigma$ an element
$c^\As\in\sinti\Hf\sigma$ and we set $\sinti\As\sigma\defeq\sinti\Hf\sigma$
for types $\sigma$. 
A \emph{$(\Delta,\Hf)$-valuation} $\alpha$ is a
function such that for every $x\from\tau\in\Delta$,
$\alpha(x)\in\sinti\Hf\tau$. 
For a $(\Delta,\Hf)$-valuation $\alpha$, variable $x$ and $r\in\sinti\Hf{\Delta(x)}$,
$\alpha[x\mapsto r]$ is defined in the usual way.

The \emph{denotation} $\sinti\As M(\alpha)$ of a term $M$ with respect
to $\As$ and $\alpha$ is defined recursively 
by

\vspace*{-2mm}\noindent
\begin{minipage}{0.6\linewidth}
  \begin{align*}
    \sinti\As x(\alpha)&\defeq \alpha(x) \qquad
  \\
  \sinti\As\land(\alpha)&\defeq\andf \qquad
  \\
  \sinti\As{\exists_\tau}(\alpha)&\defeq\hexists_\tau\qquad
    \\
  \sinti\As{M_1 \, M_2}(\alpha)&\defeq \sinti\As{M_1}(\alpha)(\sinti\As{M_2}(\alpha))\\
  \sinti\As{\lambda x\ldotp M}(\alpha)&\defeq\embed{\lambda r\in\sinti\As{\Delta(x)}\ldotp\sinti\As M(\alpha[x\mapsto r])}_{\Delta(x)\to\rho}
  \end{align*}
\end{minipage}
\begin{minipage}{0\linewidth}
  \begin{align*}
    \hspace*{-43mm}\sinti\As c(\alpha)&\defeq c^\As\\
    \hspace*{-43mm}\sinti\As\lor(\alpha)&\defeq\orf\\
    \hspace*{-43mm}\sinti\As{\neg M}(\alpha)&\defeq 1-\sinti\As
                                             M(\alpha)\\
    \hspace*{-43mm}\\
    \hspace*{-43mm}
  \end{align*}
\end{minipage}
\vspace*{0mm}

\noindent (assuming $\Delta\vdash M\from\rho$ in the last case), where
$\embed{r}_\sigma=r$ if $r\in\sinti\As\sigma$ and otherwise
$\embed{r}_\sigma\in\sinti\As\sigma$ is arbitrary.
Thus, for each term $\Delta\vdash M\from\sigma$, $\sinti\As
M(\alpha)\in\sinti\As\sigma$. 

Being independent of valuations, the
denotation of closed terms $M$ is abbreviated as $\sinti\As M$.
Besides, for $\Sigma$-formulas $F$, we write $\As, \alpha\models F$ if $\sinti\As F(\alpha)=1$, and $\As\models F$ if for all $\alpha'$, $\As,\alpha'\models F$.
We extend $\models$ in the usual way to sets of formulas.

A \emph{frame} is a pre-frame $\Hf$ that satisfies the

{

\advance\leftmargini -2mm
\begin{quote}
\emph{Comprehension Axiom}:
    for each signature $\Sigma$, type environment
    $\Delta$, $(\Sigma,\Hf)$-structure $\As$, $(\Delta,\Hf)$-valuation
    $\alpha$, positive existential $\Sigma$-term $\lambda x\ldotp
    M$, and $r\in\sinti\As{\Delta(x)}$,
    $\sinti\As{\lambda x\ldotp M}(\alpha)(r)=\sinti\As
    M(\alpha[x\mapsto r])$.
  \end{quote}
}

Our comprehension axiom 
ensures that positive existential terms are interpreted in the expected way;
it is  non-standard in that it is 
restricted to positive existential formulas.

As a consequence, if $\Hf$ is a frame then for every relational type $\overline\tau\to o$, $\top_{\overline\tau\to o}\in\sinti\Hf{\overline\tau\to o}$, where $1=:\top_{\overline\tau\to o}(\overline r)=\sinti\As{\lambda\overline x\ldotp y}(\alpha[y\mapsto 1])(\overline r)$.



\paragraph{Complete Frames}
For types $\sigma$, let ${\prel_\sigma}\subseteq\sinti\Hf\sigma\times\sinti\Hf\sigma$ be the usual partial
order defined pointwise for higher types, 
which is the discrete order on $\sinti\Hf\iota$ and the ``less than or equal'' relation on $\sinti\Hf o$.

For relational types $\rho$ and $\Rd\subseteq\sinti\Hf\rho$, the
least upper bound $\bigsqcup_\rho\Rd$ is defined pointwise, by recursion
on $\rho$. In particular,
$\bigsqcup_{\overline\tau\to o}\emptyset=\bot_{\overline\tau\to o}$,
where $\bot_{\overline\tau\to o}(\overline r) \defeq 0$.
For a singleton set $\{f\}\subseteq\sinti\Hf{\iota^n\to\iota}$ we define $\bigsqcup_{\iota^n\to\iota}\{f\} \defeq f$. 
 Throughout the paper, we omit type subscripts to reduce clutter
 because they can be inferred.

 A (pre-)frame $\Hf$ is \emph{complete} if for every relational $\rho$ and $\Rd\subseteq\sinti\Hf\rho$, $\bigsqcup\Rd\in\sinti\Hf\rho$,
 i.e.\ each $\sinti\Hf\rho$ is
a complete lattice ordered by $\prel_\rho$ with least upper bounds $\bigsqcup_\rho$.

\begin{example}[complete frames]
  \label{ex:Henkin}
  $\Sf$ is trivially a complete frame. 
  It is not difficult to prove that $\Mf$ and $\Cf$ are also
  complete frames \ifproceedings\cite{OW19}\else(\cref{sec:appsmcHenkin})\fi.
\end{example}

\paragraph{1st-order Structures}
Let $\Sigma$ be a 1st-order signature. A \emph{1st-order $\Sigma$-structure} is a $(\Sigma,\Sf)$-structure. Note that by taking standard frames this coincides with the standard definition in a purely 1st-order setting (cf.\ e.g.\ \cite{CK13}).
\begin{example}
  \label{ex:folstr}
    In the examples we will primarily be concerned with the signature of \emph{Linear Integer Arithmetic}\footnote{with the usual types $0,1\from\iota$; ${+}, {-}\from\iota\to\iota\to\iota$ and $\triangleleft\from\iota\to\iota\to o$ for $\triangleleft\;\in\{<,\leq,=,\neq,\geq,>\}$; and we use the common abbreviation $n$ for $\underbrace{1+\cdots+1}_n$, where $1\leq n\in\nat$} $\Sigma_{\LIA} \defeq \{0,1,+,-,<,\leq,=,\neq,\geq,>\}$ and its standard model $\As_{\LIA}$.
\end{example}

\subsection{Higher-order Constrained Horn Clauses}
\label{ch:defHoCHC}
\begin{assumption}
  Henceforth, we fix a 1st-order signature $\Sigma$ over a single
  type of individuals $\iota$ 
  and a 1st-order $\Sigma$-structure $\As$.
  
Moreover, we fix a signature $\Sigma'$ extending $\Sigma$ with (only) symbols of relational
type, and a type environment $\Delta$ such that $\Delta^{-1}(\tau)$ is infinite for each {argument} type $\tau$.
\end{assumption}
Intuitively, $\Sigma$ and $\As$ correspond to the language and interpretation of the background theory, e.g.\ $\Sigma_{\LIA}$ together with its standard model $\As_{\LIA}$. 
In particular, we first focus on background theories with a single model. 
In \cref{sec:compact-theories} we extend our results to a more general setting.

We are interested in whether 1st-order structures can be expanded to larger (higher-order) signatures. 
This is made precise by the following:
\begin{definition}
  \begin{thmlist}
  \item A frame $\Hf$ \emph{expands} $\As$ if
    $\sinti\Hf\iota=\sinti\As\iota$ and $c^\As\in\sinti\Hf\sigma$ for all
    $c\from\sigma\in\Sigma$.
  \item Suppose  $\Hf$ expands $\As$. Then a $(\Sigma',\Hf)$-structure $\Bs$ is a \emph{$(\Sigma',\Hf)$-expansion of $\As$} if $c^\As=c^\Bs$ for all $c\in\Sigma$.
  \end{thmlist}
\end{definition}
\begin{remark}
  \begin{thmlist}
  \item\label{rem:deneq} By \cref{rem:isimple} the denotation of terms $\Delta\vdash
  M\from\iota^n\to\iota$ is the same for all
  $(\Sigma',\Hf)$-expansions of $\As$ and $(\Delta,\Hf)$-valuations
  agreeing on $\Delta^{-1}(\tau)$.
  \item\label{rem:exlattice} In case $\Hf$ is complete, the $(\Sigma',\Hf)$-expansions of $\As$
  ordered by $\prel$ constitute a complete lattice
  with least upper
  bounds $\bigsqcup$, where $\prel$ and $\bigsqcup$ are lifted in a
  pointwise fashion to $(\Sigma',\Hf)$-expansions of
  $\As$.\footnote{This is possible because $(\Sigma',\Hf)$-expansions
    of $\As$ agree on symbols of type $\iota^n\to\iota$.}
\end{thmlist}
\end{remark}




Next, we introduce higher-order constrained Horn clauses and their
satisfiability problem.

\begin{definition}
  \begin{thmlist}
  \item An \emph{atom} is a $\Sigma'$-formula that does not contain a logical symbol.
  \item An atom is a \emph{background atom} if it is also a 1st-order $\Sigma$-formula. 
  Otherwise it is a \emph{foreground atom}.
  \end{thmlist}
\end{definition}
Note that a foreground atom has one of the following forms: 
\begin{inparaenum}[(i)]
\item $R\,\overline M$ where $R\in{(\Sigma'\setminus\Sigma)}$, 
\item $x\,\overline M$, or 
\item $(\lambda y\ldotp N)\overline M$.
\end{inparaenum}

We use $\phi$ and $A$ (and variants thereof) to refer to background atoms and general atoms, respectively.

\begin{definition}[HoCHC]
  \begin{thmlist}
  \item A \emph{goal clause} is a disjunction $\neg A_1\lor\cdots\lor\neg A_n$, where each $A_i$ is an atom.
  We write $\bot$ to mean the empty (goal) clause. 
  \item If $G$ is a goal clause,  $R\in {(\Sigma'\setminus\Sigma)}$ and the variables in $\overline x$ are distinct, then $G\lor R\,\overline x$ is a \emph{definite clause}. 
  \item A \emph{(higher-order) constrained Horn clause (HoCHC)} is a goal or definite clause.
  \end{thmlist}
\end{definition}
In the following we transform the higher-order sentences in \cref{ex:iter-working} into HoCHCs (by first converting to prenex normal form and then omitting the universal quantifiers).
\begin{example}[A system of HoCHCs]
  \label{ex:HoCHC}
  Let
  $\Sigma'=\Sigma_{\LIA}\cup\{\Add\from\iota\to\iota\to\iota\to
  o,\Iter\from(\iota\to\iota\to\iota\to o)\to\iota\to\iota\to\iota\to
  o\}$ and let $\Delta$ be a type environment satisfying
  $\Delta(x)=\Delta(y)=\Delta(z)=\Delta(n)=\Delta(s)=\iota$ and
  $\Delta(f)=\iota\to\iota\to\iota\to o$. 
  \begin{align*}
    \neg(z=x+y)\lor \Add\,x\,y\,z\\
    \neg(n\leq 0)\lor\neg (s=x) \lor \Iter\,f\,s\,n\,x\\
    \neg(n>0)\lor\neg \Iter\,f\,s\,(n-1)\,y\lor\neg(f\,n\,y\,x) \lor \Iter\,f\,s\,n\,x\\
    \neg (n\geq 1)\lor\neg\Iter\,\Add\,n\,n\,x\lor\neg(x\leq n+n)
  \end{align*}
  We refer to the first three (definite) HoCHCs as $D_1$ to $D_3$ and to the last (goal) HoCHC as $G$.
\end{example}

\begin{definition}
  \label{def:sat}
  Let $\Set$ be a set of HoCHCs, and suppose $\Hf$ is a frame expanding $\As$. 
  \begin{thmlist}
  \item $\Set$ is \emph{$(\As,\Hf)$-satisfiable} if there
    exists a $(\Sigma',\Hf)$-expansion $\Bs$ of
    $\As$ satisfying $\Bs\models\Set$.
  \item $\Set$ is \emph{$\As$-Henkin-satisfiable} if it is $(\As,\Hf)$-satisfiable for
    some frame $\Hf$ expanding $\As$.
  \item $\Set$ is \emph{$\As$-standard-satisfiable}
    if
    it is $(\As,\Sf)$-satisfiable.
  \item $\Set$ is \emph{$\As$-monotone-satisfiable} if
    it is $(\As,\Mf)$-satisfiable.
    \item $\Set$ is \emph{$\As$-continuous-satisfiable} if
    it is $(\As,\Cf)$-satisfiable.
  \end{thmlist}
\end{definition}
Observe that $\As$-Henkin satisfiability is trivially implied by all
notions of satisfiability in \cref{def:sat}.

\subsection{Programs}
\label{sec:nftrans}
Whilst HoCHCs have a simple syntax (thus yielding a simple proof
system), our completeness proof relies on programs, which are syntactically slightly more complex.
\begin{definition}\label{def:program}
  A \emph{program} (usually denoted by $\Prgm$) is a set of $\Sigma'$-formulas
  $\{\neg F_R\lor R\,\overline x_R\mid R\in{(\Sigma'\setminus\Sigma)}\}$ such that
  for each $R\in\Sigma'\setminus\Sigma$, $F_R$ is positive existential,
  the
  variables in $\overline x_R$ are distinct, and $\free(F_R)\subseteq\free(R\,\overline x_R)$.
\end{definition}


For each goal clause $G$ there is a closed positive existential formula\footnote{
see \ifproceedings\cite{OW19} \else\cref{sec:appnftrans} \fi for details}
$\posex(G)$ 
such that for each frame $\Hf$ and $(\Sigma',\Hf)$-structure $\Bs$, 
$\Bs\not\models G$ iff $\Bs\models\posex(G)$.\addtocounter{footnote}{-1}\addtocounter{Hfootnote}{-1}
Similarly, for each finite set of HoCHCs $\Set$, there exists a
program\footnotemark{} $\Prgm_\Set$ such that for each frame $\Hf$ and $(\Sigma',\Hf)$-structure $\Bs$, $\Bs\models\{D\in\Set\mid D\text{ definite}\}$ iff $\Bs\models \Prgm_\Set$.
    
\begin{example}[Program]
\label{ex:program} The following program corresponds to
  the set of (definite) HoCHCs of \cref{ex:HoCHC} (modulo renaming of variables): 
\[
\begin{array}{l}
    \neg(z=x+y) \, \lor \, \Add\,x\,y\,z\\
    \neg \big((n\leq 0 \; \land \; s=x) \; \lor\\
    \quad \;\;(\exists y\ldotp n>0\land\Iter\,f\,s\,(n-1)\,y\land f\,n\,y\,x)\big) \lor \Iter\,f\,s\,n\,x.
\end{array}
\]
\end{example}


\section{Canonical Model Property}
\label{sec:quasi-mon}

The introduction of monotone semantics for HoCHC in \cite{BOR18} was
partly motivated by the observation that the least model property (w.r.t. the
pointwise ordering $\prel$) fails for standard semantics (but holds for monotone semantics):
  \begin{example}
    \label{ex:notminmod}
    Consider the program $\Prgm$
    \begin{align*}
      \neg x_R\,U&\lor R\,x_R& \neg x_U\neq x_U\lor U\,x_U
    \end{align*}
    with signature $\Sigma'=\Sigma_{\LIA}\cup\{R\from((\iota\to o)\to o)\to
    o,U\from\iota\to o\}$, a type environment $\Delta$ satisfying
    $\Delta(x_R)=(\iota\to o)\to o$ and $\Delta(x_U)=\iota$ taken from \cite{BOR18}. Let $\Hf=\Sf$ be the standard frame and let $\nega\in\sinti \Sf{(\iota\to o)\to o}$ be such that $\nega(s)=1$ iff $s=\bot_{\iota\to o}$. 

    There are (at least) two expansions $\Bs_1$ and $\Bs_2$ defined by
    $U^{\Bs_1}=\bot_{\iota\to o}$ and $R^{\Bs_1}(s)=1$ iff $s(\bot_{\iota\to o})=1$, and $U^{\Bs_2}=\top_{\iota\to o}$ and $R^{\Bs_2}(s)=1$ iff $s(\top_{\iota\to o})=1$, respectively.
    
    Note that $\Bs_1\models \Prgm$, $\Bs_2\models \Prgm$ and there are no models smaller than any of these with respect to the pointwise ordering $\prel$. Furthermore, neither $\Bs_1\prel\Bs_2$ nor $\Bs_2\prel\Bs_1$ holds because $R^{\Bs_1}(\nega)=1>0=R^{\Bs_2}(\nega)$ and for any $n\in\sinti \Sf\iota$, $U^{\Bs_2}(n)=1>0=U^{\Bs_1}(n)$.
  \end{example}
  
In this section, we sharpen and extend the result: HoCHC \emph{does} enjoy a \emph{canonical} (though not
least w.r.t.\ $\prel$) model property. More precisely, the structure obtained by
iterating the \emph{immediate consequence operator} (see e.g.\ \cite{CHRW13}) is a model of all
satisfiable HoCHCs.

\begin{assumption}
  \label{ass:setting}
For \cref{sec:resolution,sec:quasi-mon} we fix a
complete
frame $\Hf$ expanding $\As$.
Furthermore, let $\Set$ be a finite set of HoCHCs and let $\Prgm=\Prgm_\Set$ (the program corresponding to $\Set$).
\end{assumption}

If no confusion arises, we refrain from mentioning $\Sigma'$, $\Delta$
and $\Hf$ explicitly. 

Given an expansion $\Bs$ of $\As$, the \emph{immediate consequence operator} $T_\Prgm$ returns the expansion $T_\Prgm(\Bs)$ of $\As$ defined by
$R^{T_\Prgm(\Bs)} \defeq \sinti\Bs{\lambda\overline x_R\ldotp F_R}$,
for relational symbols $R\in\Sigma'\setminus\Sigma$. (Recall that
$F_R$ is the unique positive existential formula such that $\neg
F_R\lor R\,\overline x_R\in \Prgm$.)
Observe that the prefixed points of $T_\Prgm$
(i.e. structures $\Bs$ such that $T_\Prgm(\Bs)\prel\Bs$) are precisely
the models of $\Prgm$.

Unfortunately, the immediate consequence operator is not monotone
w.r.t.\ $\prel$. Hence, we cannot apply the Knaster-Tarski
theorem. Therefore, we introduce the notion of
\emph{quasi-monotonicity} and a slightly stronger version of that theorem.
This is a warm-up for \cref{subsec:quasicont}, where we propose
\emph{quasi-continuity} and 
a version
of Kleene's fixed point theorem.

\subsection{Quasi-monotonicity}
\begin{assumption}
Let $L$ be a complete lattice ordered by $\leq$ with least upper
bounds $\biglor$ and least element $\bot$. Furthermore, let $F\from
L\to L$ be an (endo-)function.  
\end{assumption}
We define
\begin{align*}
  a_{\beta+1}&\defeq F(a_\beta)&(\beta\in\On)\\
  a_\gamma&\defeq \biglor_{\beta<\gamma}a_\beta&(\gamma\in\Lim)\\
  a_F&\defeq \biglor_{\beta\in\On}a_\beta
\end{align*}
In particular, $a_0=\bot$. Clearly, $a_F,a_\beta\in L$ for all ordinals $\beta$.

\begin{definition}
  \label{def:qm}Let ${\arel}\subseteq L\times L$ be a relation.
  \begin{thmlist}
  \item $\arel$ is \emph{compatible} with $\leq$ if
    \begin{enumerate}[noitemsep,label=(C\arabic*),leftmargin=8.0mm]
    \item\label{def:rel1} for all $a,b,c\in L$, if $a\arel b$ and $b\leq
      c$ then $a\arel c$,
    \item\label{def:rel2} for all $a\in L$ and $A\subseteq\{b\in L\mid
      b\arel a\}$, $\biglor A\arel a$.
    \end{enumerate}
  \item $F$ is \emph{quasi-monotone} if for all $a,b\in L$, $a\arel b$ implies $F(a)\arel F(b)$.
  \end{thmlist}
\end{definition}
In particular, $\leq$ is compatible to itself and $\bot\arel
a$ for $a\in L$.

\begin{restatable}{proposition}{knastertarski}
  \label{lem:knaster-tarski} 
  \begin{enumerate*}
  \item $F(a_F)\leq a_F$ and 
  \item if $\arel$ is
  compatible with $\leq$, $F$ is quasi-monotone and $b\in L$ satisfies $F(b)\leq b$ then $a_F\arel b$.
  \end{enumerate*}
\end{restatable}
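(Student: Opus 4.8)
The plan is to treat the two claims separately; the only genuine subtlety is in the first, where $F$ is not assumed monotone, so one cannot simply say ``the iteration increases and stabilises''.

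\textbf{Claim 1.} The key observation is that the class-indexed supremum defining $a_F$ is legitimate and is attained at some ordinal stage. Indeed, $S \defeq \{a_\beta \mid \beta \in \On\}$ is a subclass of the set $L$, hence a set; for each $s \in S$ let $\beta_s$ be the least ordinal with $a_{\beta_s} = s$, and (using Replacement) set $\delta \defeq \bigl(\sup_{s \in S}(\beta_s + 1)\bigr) + \omega$, which is a limit ordinal with $\beta_s < \delta$ for all $s \in S$. Then $\{a_\beta \mid \beta < \delta\} = S$, so $a_\delta = \biglor_{\beta < \delta} a_\beta = \biglor S = a_F$. Consequently $F(a_F) = F(a_\delta) = a_{\delta+1} \in S$, whence $F(a_F) \leq \biglor S = a_F$, which is Claim 1.

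\textbf{Claim 2.} Assume $\arel$ is compatible with $\leq$, $F$ is quasi-monotone, and $F(b) \leq b$. The strategy is to prove $a_\beta \arel b$ for every ordinal $\beta$ by transfinite induction and then take a supremum. The base case $a_0 = \bot \arel b$ is the remark preceding the proposition (equivalently, \ref{def:rel2} applied to $A = \emptyset$). For the successor step, from $a_\beta \arel b$ quasi-monotonicity gives $a_{\beta+1} = F(a_\beta) \arel F(b)$, and then $F(b) \leq b$ with \ref{def:rel1} yields $a_{\beta+1} \arel b$. For a limit $\gamma$, the induction hypothesis gives $\{a_\alpha \mid \alpha < \gamma\} \subseteq \{c \in L \mid c \arel b\}$, so \ref{def:rel2} gives $a_\gamma = \biglor_{\alpha < \gamma} a_\alpha \arel b$. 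Finally, applying \ref{def:rel2} once more to the set $S = \{a_\beta \mid \beta \in \On\} \subseteq \{c \in L \mid c \arel b\}$ gives $a_F = \biglor S \arel b$, as required.

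I expect the main obstacle to be purely presentational: making precise the set-theoretic point in Claim 1 that $\{a_\beta \mid \beta \in \On\}$ is a set and that the iteration reaches $a_F$ at a suitable limit stage, despite the absence of monotonicity. Once that is in place, everything reduces to a routine transfinite induction driven entirely by the two compatibility conditions \ref{def:rel1} and \ref{def:rel2}.
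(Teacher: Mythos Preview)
Your proof is correct and follows essentially the same approach as the paper. The paper's argument is slightly terser: for Claim~1 it simply notes that since $L$ is a set there exists an ordinal $\beta$ with $a_F = a_\beta$ (without constructing a specific limit $\delta$), and for Claim~2 it uses this stabilisation directly rather than applying \ref{def:rel2} once more to the full set $S$; but these are purely presentational differences, and your more explicit treatment of the set-theoretic point is perfectly fine.
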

The proof idea is the same as for the standard Knaster-Tarski theorem,
which can be recovered from the above by using $\leq$ for $\arel$.

\subsection{Application to the Immediate Consequence Operator}
The idea now is to instantiate $L$ with the complete lattice of
expansions of $\As$ (see \cref{rem:exlattice}), and $F$ with
the immediate consequence operator $T_\Prgm$. We denote the structure at stage $\beta$ by $\As_\beta$
and the limit structure by $\As_\Prgm$.


Intuitively, we start from the $\prel$-minimal structure assigning
$\bot_\rho$ to every $R\from\rho\in\Sigma'\setminus\Sigma$ and 
we incrementally extend the structure
to satisfy more of the program.
$\As_\Prgm$ is a prefixed point of $T_\Prgm$ (\cref{lem:knaster-tarski}). Therefore,
\begin{corollary}
    \label{lem:modp}$\As_\Prgm\models \Prgm$ and $\As_\Prgm\models\{D\in\Set\mid D\text{ definite}\}$.
\end{corollary}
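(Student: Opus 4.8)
The plan is to read off the corollary directly from \cref{lem:knaster-tarski}(i) together with the two syntactic facts recorded just before the statement. First I would instantiate the abstract setting of the previous subsection: take $L$ to be the complete lattice of $(\Sigma',\Hf)$-expansions of $\As$ ordered by $\prel$ with least upper bounds $\bigsqcup$ (available by \cref{rem:exlattice}, using \cref{ass:setting} that $\Hf$ is complete), and take $F \defeq T_\Prgm$. With this instantiation the transfinite sequence $a_\beta$ becomes the sequence $\As_\beta$ and $a_F$ becomes $\As_\Prgm$.

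The first conjunct then follows from \cref{lem:knaster-tarski}(i), which asserts $F(a_F) \prel a_F$ with no monotonicity hypothesis, i.e.\ $T_\Prgm(\As_\Prgm) \prel \As_\Prgm$. Hence $\As_\Prgm$ is a prefixed point of $T_\Prgm$, and since (as observed immediately after the definition of $T_\Prgm$) the prefixed points of $T_\Prgm$ are exactly the models of $\Prgm$, we conclude $\As_\Prgm \models \Prgm$.

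For the second conjunct I would invoke \cref{ass:setting}, which fixes $\Prgm = \Prgm_\Set$, and the defining property of $\Prgm_\Set$: for every frame $\Hf$ and every $(\Sigma',\Hf)$-structure $\Bs$ one has $\Bs \models \{D \in \Set \mid D \text{ definite}\}$ iff $\Bs \models \Prgm_\Set$. Applying this with $\Bs \defeq \As_\Prgm$ and using the first conjunct gives $\As_\Prgm \models \{D \in \Set \mid D \text{ definite}\}$.

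I do not expect any genuine obstacle here: the corollary is pure bookkeeping, assembling (a) the instantiation of the lattice-theoretic framework, (b) the prefixed-point half of the Knaster–Tarski-style proposition, (c) the characterisation of models of $\Prgm$ as prefixed points of $T_\Prgm$, and (d) the equivalence between satisfying $\Prgm_\Set$ and satisfying the definite clauses of $\Set$. The only point to state carefully is that part (i) of \cref{lem:knaster-tarski} needs none of the quasi-monotonicity or compatibility hypotheses, so it applies to $T_\Prgm$ despite its known failure of $\prel$-monotonicity.
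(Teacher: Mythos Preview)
Your proposal is correct and matches the paper's approach exactly: the paper simply notes that $\As_\Prgm$ is a prefixed point of $T_\Prgm$ by \cref{lem:knaster-tarski}, from which the corollary follows immediately via the observation that prefixed points of $T_\Prgm$ are precisely the models of $\Prgm$ and the defining property of $\Prgm_\Set$. Your write-up is more explicit about the instantiation and the second conjunct, but the argument is the same.
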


Next, suppose there are relations
${\arel_\sigma}\subseteq\sinti\Hf\sigma\times\sinti\Hf\sigma$ (for types
$\sigma$) 
compatible with $\prel_\sigma$,
and
\begin{enumerate}
\item if $\Bs\arel\Bs'$ and $\alpha\arel\alpha'$ then $\sinti\Bs
  M(\alpha)\arel\sinti{\Bs'} M(\alpha')$, and
\item $b\arel b'$ iff $b\leq b'$ for $b,b'\in\sinti\Bs o=\bool$,
\end{enumerate}
where we omit type subscripts and lift $\arel$ in the usual pointwise manner to structures and valuations.
Then $T_\Prgm$ is quasi-monotone. Besides, if $\Bs$ is an expansion
of $\As$ satisfying $\Bs\models \Prgm$ then by \cref{lem:knaster-tarski}, for closed positive existential
formulas $F$, $\sinti{\As_\Prgm}F\leq\sinti\Bs F$. Consequently,
$\As_\Prgm\models\Set$ if $\Bs\models\Set$.

The main obstacle (and where $\prel$ fails) is to ensure that $\arel$
is compatible with applications, i.e.\ if $r\arel_{\tau\to\rho} r'$ and $s\arel_\tau s'$
then $r(s)\arel_\rho r'(s')$. Therefore, we simply \emph{define} it
that way:
\begin{definition}
  \label{def:arelm}
  We define a relation ${\arel_\sigma}\subseteq{\sinti\Hf\sigma\times\sinti\Hf\sigma}$ as follows by recursion on the type $\sigma$:
\[
\begin{array}{lr}
  n\arel_{\iota} n' \; \defeq \; n=n' & (n,n'\in\sinti\Hf{\iota})\\
  b\arel_o b' \; \defeq \; b\leq b' & (b,b'\in\sinti\Hf o)\\
  r\arel_{\tau\to\sigma} r' \; \defeq \; \forall s, s'\in\sinti
                                       \Hf\tau\ldotp
                                       s\arel_\tau
                                       s'\rightarrow {}\hspace*{-5mm}\\
  \qquad \qquad \qquad \qquad r(s)\arel_\sigma r'(s') & (r,r'\in\sinti \Hf{\tau\to\sigma})
\end{array}
\]
\end{definition}
$\arel$ is transitive but neither reflexive (\cref{ex:qmnotrt}) nor
antisymmetric, in general, and coincides with the pointwise ordering $\prel$ on the monotone
frame $\Mf$ \ifproceedings\cite{OW19}\else(\cref{lem:moncoin})\fi.
\begin{example}
  \begin{thmlist}
  \item\label{ex:qmlog1} For all relational types $\rho$ and
    $s\in\sinti\Hf\rho$, $\bot_\rho\arel s\arel\top_\rho$.
  \item\label{ex:qmlog2} $\orf\arel\orf$, $\andf\arel\andf$ and for
    argument types $\tau$, $\hexists_\tau\arel\hexists_\tau$\ifproceedings\else\footnote{the argument for the
      latter is not entirely trivial but similar as in \cref{ex:qclog2}}\fi.
  \item\label{ex:qmnotrt} Let $\Hf=\Sf$ be the standard frame and let
    $\nega\in\sinti\Sf{(\iota\to o)\to o}$ as in
    \cref{ex:notminmod}. Recall that $\bot_{\iota\to
      o}\arel\top_{\iota\to o}$. However,
    $\nega(\bot_{\iota\to
      o})=1>0=\nega(\top_{\iota\to o})$. This shows that
    $\arel$ is not reflexive, in general.
  \end{thmlist}
\end{example}
\begin{example}
\label{ex:notminmod2}
  For the structures $\Bs_1$ and $\Bs_2$ of \cref{ex:notminmod} it
  holds that $\Bs_1=\As_\Prgm$ and $\Bs_1\arel\Bs_2$ because due to
  $\bot_{\iota\to o}\arel\top_{\iota\to o}$, for any $s\arel
  s'$, $s(\bot_{\iota\to o})\leq s'(\top_{\iota\to o})$ and
  therefore $R^{\Bs_1}(s)\leq R^{\Bs_2}(s')$. In particular, the fact
  that $R^{\Bs_1}(\nega)>R^{\Bs_2}(\nega)$ is not a concern because
  $\nega\arel\nega$ does \emph{not} hold.
\end{example}

A simple induction \ifproceedings\else(cf.\ \cref{cor:relm})\fi on the type $\sigma$ shows that $\arel_\sigma$
is compatible with $\prel_\sigma$.
Furthermore,
\begin{restatable}{lemma}{termmon}
  \label{lem:termmon}
  Let $\Bs\arel\Bs'$ be expansions of $\As$, $\alpha\arel\alpha'$ be valuations and let $M$ be a positive existential term.
  Then $\sinti\Bs M(\alpha)\arel\sinti{\Bs'} M(\alpha')$.
\end{restatable}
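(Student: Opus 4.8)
The plan is to prove the lemma by structural induction on the positive existential term $M$, keeping the expansions $\Bs\arel\Bs'$ and the valuations $\alpha\arel\alpha'$ universally quantified so that the induction hypothesis is available for arbitrary related inputs. A useful preliminary observation is that every subterm of a positive existential term is again positive existential and that the case $M=\neg M'$ never occurs, so the argument stays inside the class of terms to which the comprehension axiom applies.

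First I would dispatch the leaf cases. For $M=x$ a variable the goal $\sinti\Bs x(\alpha)=\alpha(x)\arel\alpha'(x)=\sinti{\Bs'}x(\alpha')$ is just the (pointwise) definition of $\alpha\arel\alpha'$; for $M=R$ with $R\in\Sigma'\setminus\Sigma$ the goal $R^\Bs\arel R^{\Bs'}$ is the (pointwise) definition of $\Bs\arel\Bs'$. For a background constant $c\from\sigma_{\FO}\in\Sigma$ we have $c^\Bs=c^\As=c^{\Bs'}$ since $\Bs,\Bs'$ expand $\As$, so it remains to check that $\arel$ is reflexive at every 1st-order type. This I would prove by a short side-induction on $\sigma_{\FO}$, using that $\arel_\iota$ is equality: a function of 1st-order type is therefore only ever compared against itself on equal arguments, so reflexivity at $o$ and $\iota$ propagates upward. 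Finally, the logical constants are handled by $\andf\arel\andf$, $\orf\arel\orf$ and $\hexists_\tau\arel\hexists_\tau$ from \cref{ex:qmlog2}.

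The application case $M=M_1\,M_2$ is immediate and is exactly what \cref{def:arelm} was designed to make work: by the induction hypothesis $\sinti\Bs{M_1}(\alpha)\arel\sinti{\Bs'}{M_1}(\alpha')$ and $\sinti\Bs{M_2}(\alpha)\arel\sinti{\Bs'}{M_2}(\alpha')$, and feeding the second pair into the first via the definition of $\arel$ at an arrow type yields the claim for $M_1\,M_2$.

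The hard part is the abstraction case $M=\lambda x\ldotp M'$ with $\Delta\vdash M'\from\rho$. Unfolding the definition of $\arel$ at the arrow type $\Delta(x)\to\rho$, I must show $\sinti\Bs M(\alpha)(s)\arel\sinti{\Bs'}M(\alpha')(s')$ for every $s\arel s'$. The subtlety is that the denotation of a $\lambda$-abstraction is defined through the a priori arbitrary embedding $\embed{\cdot}$, so a naive unfolding does not even reduce $\sinti\Bs M(\alpha)(s)$ to $\sinti\Bs{M'}(\alpha[x\mapsto s])$. This is precisely where positivity is essential: since $\lambda x\ldotp M'$ is positive existential, the comprehension axiom applies to $\Bs$ and to $\Bs'$ and gives $\sinti\Bs M(\alpha)(s)=\sinti\Bs{M'}(\alpha[x\mapsto s])$ and likewise for $\Bs'$. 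Since $\alpha[x\mapsto s]\arel\alpha'[x\mapsto s']$, the induction hypothesis applied to the positive existential subterm $M'$ closes the case. I expect this interplay between the comprehension axiom and the definition of $\arel$ on arrow types to be the only genuinely delicate point — on a mere pre-frame this step, and indeed the statement itself, would fail; everything else is bookkeeping.
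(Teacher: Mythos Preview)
Your proposal is correct and follows essentially the same structural induction as the paper's proof: variables via $\alpha\arel\alpha'$, constants via $\Bs\arel\Bs'$ (the paper does not separate $\Sigma$ from $\Sigma'\setminus\Sigma$ and simply invokes the assumption, whereas you additionally spell out the reflexivity of $\arel$ at 1st-order types), logical constants via \cref{ex:qmlog2}, application directly from \cref{def:arelm}, and abstraction via the comprehension axiom plus the inductive hypothesis on the extended valuations. Your exposition of the abstraction case is more explicit about why the comprehension axiom is indispensable, but the argument is the same.
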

Consequently, the immediate consequence operator is quasi-monotone and
we conclude:
\begin{restatable}
  {theorem}{propap}
\label{thm:canmod} If $\Set$ is $(\As,\Hf)$-satisfiable then $\As_\Prgm\models\Set$.
\end{restatable}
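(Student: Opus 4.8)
The plan is to assemble two ingredients established above: the quasi-Knaster--Tarski theorem (\cref{lem:knaster-tarski}) and the fact that the denotation of positive existential terms respects $\arel$ (\cref{lem:termmon}). We instantiate $L$ with the complete lattice of $(\Sigma',\Hf)$-expansions of $\As$ (\cref{rem:exlattice}), $F$ with the immediate consequence operator $T_\Prgm$, and $\arel$ with the relation of \cref{def:arelm} lifted pointwise to structures and valuations. First I would unpack the hypothesis: assuming $\Set$ is $(\As,\Hf)$-satisfiable, fix a $(\Sigma',\Hf)$-expansion $\Bs$ of $\As$ with $\Bs\models\Set$. In particular $\Bs$ satisfies every definite clause of $\Set$, hence $\Bs\models\Prgm$ by the defining property of $\Prgm=\Prgm_\Set$, and therefore $T_\Prgm(\Bs)\prel\Bs$ since the prefixed points of $T_\Prgm$ are precisely the models of $\Prgm$.

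Next I would verify the hypotheses of \cref{lem:knaster-tarski}. Compatibility of the lifted $\arel$ with $\prel$ follows from the type-indexed statement that each $\arel_\sigma$ is compatible with $\prel_\sigma$ (the induction noted just before \cref{lem:termmon}), together with a routine pointwise argument at the level of structures; quasi-monotonicity of $T_\Prgm$ is exactly \cref{lem:termmon} applied to the positive existential terms $\lambda\overline x_R\ldotp F_R$, so that $\Bs_1\arel\Bs_2$ gives $R^{T_\Prgm(\Bs_1)}=\sinti{\Bs_1}{\lambda\overline x_R\ldotp F_R}\arel\sinti{\Bs_2}{\lambda\overline x_R\ldotp F_R}=R^{T_\Prgm(\Bs_2)}$ for every $R\in\Sigma'\setminus\Sigma$; and $T_\Prgm(\Bs)\prel\Bs$ was shown above. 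Hence \cref{lem:knaster-tarski} yields $\As_\Prgm\arel\Bs$ (note both are expansions of $\As$, since $\As_\Prgm$ lies in $L$).

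It remains to transfer satisfaction from $\Bs$ to $\As_\Prgm$. The definite clauses of $\Set$ are already handled by \cref{lem:modp}. For a goal clause $G\in\Set$, recall that $\posex(G)$ is a \emph{closed} positive existential formula with $\Bs'\not\models G$ iff $\Bs'\models\posex(G)$ for every structure $\Bs'$; since $\Bs\models G$ this gives $\sinti\Bs{\posex(G)}=0$. Fix a $(\Delta,\Hf)$-valuation $\alpha$ sending each individual variable to an arbitrary element and each relational variable $x\from\rho$ to $\bot_\rho$; then $\alpha\arel\alpha$, because $\arel_\iota$ is equality and $\bot_\rho\arel\bot_\rho$ (\cref{ex:qmlog1}). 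Applying \cref{lem:termmon} to $\As_\Prgm\arel\Bs$, the term $\posex(G)$, and $\alpha\arel\alpha$, and using that $\arel_o$ is $\leq$, we obtain $\sinti{\As_\Prgm}{\posex(G)}(\alpha)\leq\sinti\Bs{\posex(G)}(\alpha)=0$; since $\posex(G)$ is closed this means $\As_\Prgm\not\models\posex(G)$, i.e.\ $\As_\Prgm\models G$. As $G$ was an arbitrary goal clause of $\Set$, we conclude $\As_\Prgm\models\Set$.

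The only step requiring care is the last one: because $\arel$ fails reflexivity in general, \cref{lem:termmon} cannot be invoked with an arbitrary valuation, and one must exhibit a concrete $\alpha$ with $\alpha\arel\alpha$ (using that $\arel$ is reflexive on individuals and on the least elements). Beyond this, the argument is pure bookkeeping, the genuine model-theoretic content having already been packaged into \cref{lem:knaster-tarski} and \cref{lem:termmon}.
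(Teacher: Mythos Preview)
Your proposal is correct and follows essentially the same route as the paper: instantiate \cref{lem:knaster-tarski} with the lattice of expansions and $T_\Prgm$, use \cref{lem:termmon} for quasi-monotonicity to obtain $\As_\Prgm\arel\Bs$, handle definite clauses via \cref{lem:modp}, and conclude for goal clauses by comparing the closed positive existential formula $\posex(G)$ at type $o$. Your explicit construction of a valuation $\alpha$ with $\alpha\arel\alpha$ (needed because $\arel$ fails reflexivity) is a point the paper glosses over, but it is exactly the right detail to fill in.
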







\section{Resolution Proof System}
\label{sec:resolution}
\begin{figure*}[!t]
  \centering
  {
    \begin{prooftree}
      \AxiomC{$\overbrace{\neg(n\geq 1)\lor
          \tikz[baseline]{\node[resnode] {$\neg\Iter\,\Add\,n\,n\,x$};}
        \lor\neg(x\leq n+n)}^G$}
      \AxiomC{$D_3$}
      \LeftLabel{Resolution}
      \def\extraVskip{1pt}
      \BinaryInfC{$\neg(n\geq 1)\lor
        \tikz[baseline]{\node[unode] {\uwave{$\neg(n>0)$}};}\lor
        \tikz[baseline]{\node[unode] {\uwave{$\neg \Iter\,\Add\,n\,(n-1)\,y$}};} \lor
        \tikz[baseline]{\node[resnode] {\uwave{$\neg\Add\,n\,y\,x$}};}
      \lor\neg(x\leq n+n)$}

      \AxiomC{$D_1$}
      \LeftLabel{Resolution}
      \BinaryInfC{$\neg(n\geq 1)\lor\neg(n>0)\lor
        \tikz[baseline]{\node[resnode]
          {$\neg\Iter\,\Add\,n\,(n-1)\,y$};}
        \lor
        \tikz[baseline]{\node[unode] {\uwave{$\neg (x=n+y)$}};}
        \lor\neg(x\leq n+n)$}
      \AxiomC{$D_2$}
      \LeftLabel{Resolution}
      \BinaryInfC{\vspace*{-100pt}$\neg(n\geq 1)\lor\neg(n>0)\lor
        \tikz[baseline]{\node[unode] {\uwave{$\neg (n-1\leq
              0)$}};}\lor
        \tikz[baseline]{\node[unode] {\uwave{$\neg(n=y)$}};}
        \lor\neg (x=n+y)\lor\neg(x\leq
        n+n)$\vspace*{-100pt}}
      \LeftLabel{Constraint refutation}
      \UnaryInfC{$\bot$}
    \end{prooftree}
    }
    \caption{Refutation of the set of HoCHCs from
      \cref{ex:HoCHC}. Atoms involved in resolution steps are shaded;
      atoms that are added are wavy-underlined.}
    \label{fig:Sref}
  \end{figure*}

Our resolution proof system is remarkably simple,
consisting of only three rules:
  \begin{enumerate*}
  \item a higher-order version of the usual resolution rule \cite{R65}
    between a pair of goal and definite clauses (thus yielding a goal clause),
  \item a rule for $\beta$-reductions on leftmost (outermost) positions
    of atoms in goal clauses and
  \item a rule to refute certain goal clauses which are not satisfied
    by the model of the background theory (similar to \cite{BGW94}).
  \end{enumerate*}

\medskip
\inferbinlics{Resolution}{\neg R\,\overline M\lor G}{G'\lor R\,\overline x}{G\lor \big(G'[\overline M/\overline x]\big)}

\inferunlics{$\beta$-Reduction}{\neg (\lambda x\ldotp L)M\,\overline N\lor G}{\neg L[M/x]\, \overline N\lor G}{}

\inferunlics{Constraint
  refutation}{G\lor\neg\phi_1\lor\cdots\lor\neg\phi_n}{\bot}{provided
  that each atom in $G$ has the form\footnote{where $x$ is a variable}
  $x\,\overline M$, each $\phi_i$ is a background atom and there exists a valuation $\alpha$ such that $\As,\alpha\models\phi_1\land\cdots\land\phi_n$.}

\smallskip

\begin{example}[Refutation proof]
  A refutation of the set of HoCHCs from \cref{ex:HoCHC} is given in \cref{fig:Sref}.
  The last inference is admissible because for any valuation satisfying $\alpha(n)=\alpha(y)=1$ and $\alpha(x)=2$,
  \begin{align*}
    \As_{\LIA},\alpha\models\,&(n\geq 1)\land(n>0)\land (n-1\leq 0)\,\land\\&(n=y)\land (x=n+y)\land(x\leq n+n).
  \end{align*}
\end{example}

Since variables are implicitly universally quantified, the rules have to
be applied modulo the renaming of (free) variables; we
write $\Set'\Res\Set'\cup\{G\}$ if $G$ can be thus derived from the clauses
in $\Set'$ using
the above rules and $\Res^*$ for the reflexive, transitive closure of $\Res$.


\begin{restatable}[Soundness]{proposition}{soundness}
  \label{prop:soundness}
  Let $\Set$ be a set of HoCHCs.
  
  If $\Set\Res^* \Set'\cup\{\bot\}$ (for some $\Set'$) then $\Set$ is
  $(\As,\Hf)$-un\-satisfiable,
  and this holds even if $\Hf$ is not complete.
\end{restatable}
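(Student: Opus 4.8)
The plan is to show that each of the three rules preserves $(\As,\Hf)$-satisfiability, in the contrapositive direction: if the premise set is $(\As,\Hf)$-satisfiable, then so is the conclusion set. Since the empty clause $\bot$ is never satisfied by any structure, it then follows by induction on the length of a derivation $\Set \Res^* \Set' \cup \{\bot\}$ that $\Set$ cannot be $(\As,\Hf)$-satisfiable. The key observation making this work is that all three rules only \emph{add} clauses, so it suffices to check that any $(\Sigma',\Hf)$-expansion $\Bs$ of $\As$ with $\Bs \models \Set'$ also satisfies the newly derived clause $G$; this argument nowhere uses completeness of $\Hf$, only that $\Hf$ is a frame expanding $\As$, which is why the final clause of the statement holds.

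First I would handle \textbf{Resolution}. Suppose $\Bs \models \neg R\,\overline M \lor G$ and $\Bs \models G' \lor R\,\overline x$, and fix a valuation $\alpha$; I must show $\Bs,\alpha \models G \lor G'[\overline M/\overline x]$. If $\Bs,\alpha \models G$ we are done, so assume not; then from the first premise $\sinti{\Bs}{R\,\overline M}(\alpha) = 0$. Let $\overline r = \sinti{\Bs}{\overline M}(\alpha)$ and consider the valuation $\alpha' \defeq \alpha[\overline x \mapsto \overline r]$ (legitimate since the $\overline x$ are distinct); by the substitution lemma for denotations, $\sinti{\Bs}{R\,\overline x}(\alpha') = \sinti{\Bs}{R}(\alpha)(\overline r) = \sinti{\Bs}{R\,\overline M}(\alpha) = 0$, using that $R$ is a constant whose denotation does not depend on the valuation. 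From the second premise $\Bs,\alpha' \models G' \lor R\,\overline x$, hence $\Bs,\alpha' \models G'$, and again by the substitution lemma $\Bs,\alpha \models G'[\overline M/\overline x]$, as required. For \textbf{$\beta$-Reduction} the argument is even simpler: by the substitution lemma together with the clause for $\lambda$-abstraction and application in the definition of the denotation, $\sinti{\Bs}{(\lambda x\ldotp L)M\,\overline N}(\alpha) = \sinti{\Bs}{L[M/x]\,\overline N}(\alpha)$ for every $\Bs$ and $\alpha$ --- here one uses that $L[M/x]$ is positive existential when $L$ is (negation cannot be introduced by substitution), so the comprehension axiom applies; actually for soundness even the unrestricted $\beta$-equality of the denotation of $\lambda$-terms under the standard clauses suffices, so the two clauses $\neg(\lambda x\ldotp L)M\,\overline N \lor G$ and $\neg L[M/x]\,\overline N \lor G$ have literally the same denotation under every $\Bs,\alpha$, and preservation of satisfaction is immediate.

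The remaining rule is \textbf{Constraint refutation}, deriving $\bot$ from $G \lor \neg\phi_1 \lor \cdots \lor \neg\phi_n$ where each atom of $G$ has the form $x\,\overline M$, each $\phi_i$ is a background atom, and there is a valuation $\alpha$ with $\As,\alpha \models \phi_1 \land \cdots \land \phi_n$. Here I argue directly that the premise clause is itself $(\As,\Hf)$-unsatisfiable, which suffices since derivations only add clauses. Given any $(\Sigma',\Hf)$-expansion $\Bs$ of $\As$, build a valuation $\alpha^*$ agreeing with $\alpha$ on the first-order variables occurring in the $\phi_i$ and sending every higher-type variable $x$ that heads an atom $x\,\overline M$ of $G$ to $\top$ (the top element $\top_{\overline\tau\to o}$, which lies in $\sinti{\Hf}{\cdot}$ because $\Hf$ is a frame, as noted after the comprehension axiom). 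Since each $\phi_i$ is a first-order $\Sigma$-formula and $\Bs$ agrees with $\As$ on $\Sigma$, we get $\Bs,\alpha^* \models \phi_i$, so $\Bs,\alpha^* \not\models \neg\phi_i$ for all $i$; and for each atom $x\,\overline M$ of $G$, $\sinti{\Bs}{x\,\overline M}(\alpha^*) = \top(\sinti{\Bs}{\overline M}(\alpha^*)) = 1$, so $\Bs,\alpha^* \not\models \neg(x\,\overline M)$. Hence $\Bs,\alpha^* \not\models G \lor \neg\phi_1 \lor \cdots \lor \neg\phi_n$, so $\Bs \not\models$ the clause. The main subtlety, and the only place needing care, is exactly this step: one must check that assigning $\top$ to the head variables is consistent and that $\top$ is available in an arbitrary frame $\Hf$ --- this is guaranteed by the restricted comprehension axiom --- and that the form restriction on $G$ (only atoms $x\,\overline M$ with a \emph{variable} head, not $R\,\overline M$ with a defined-relation head) is what makes the $\top$-assignment possible, since $R^{\Bs}$ is fixed by $\Bs$ and need not be $\top$. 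Everything else is a routine induction on derivation length invoking these three preservation facts.
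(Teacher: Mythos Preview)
Your proposal is correct and follows essentially the same approach as the paper: show that each rule preserves models (the paper phrases this via an auxiliary lemma stating $\Set'\models C$ for the non-$\bot$ cases and derives a contradiction in the constraint-refutation case), and in the constraint-refutation case set relational variables to $\top_\rho$---available in any frame by the comprehension axiom---so that the atoms $x\,\overline M$ evaluate to $1$ while the background atoms $\phi_j$, containing only type-$\iota$ variables, are unaffected. The paper's version sets \emph{all} relational variables to $\top_\rho$ (not just the head variables), which is slightly cleaner but not materially different; your more detailed treatment of the Resolution case via the substitution lemma is exactly what the paper means by ``the same ideas as the classic one for 1st-order logic.''
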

\begin{proof}[Proof sketch]
  The most interesting case occurs when the constraint
    refutation rule is applied to $G\defeq\biglor_{i=1}^m\neg
  x_i\,\overline M_i\lor\biglor_{j=1}^n\neg\phi_j$. Being of
  relational type, each variable $x_i$ cannot
  occur in any $\phi_j$. Thus, modifying witnesses $\alpha$ of
  $\As,\alpha\models \phi_1\land\cdots\land\phi_n$ to satisfy
  $\alpha'(x)=\top_\rho$ for $x\from\rho\in\Delta$, we conclude
  $\Bs,\alpha'\not\models G$ for
all expansions $\Bs$ of $\As$.
\end{proof}
Observe that the argument makes use of
$\top_\rho\in\sinti\Hf\rho$, which is a consequence of the
comprehension axiom.

The following completeness theorem is significantly more difficult. 
In fact, we will not prove it until \cref{sec:completeness}.
\begin{restatable}[Completeness]{theorem}{completeness}
  \label{thm:completeness}
  If $\Set$ is $(\As,\Hf)$-unsatisfiable then $\Set\Res^*\{\bot\}\cup\Set'$ for some $\Set'$.
\end{restatable}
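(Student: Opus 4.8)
\emph{Reduction to the canonical model.} My plan is to route the proof through the canonical model $\As_\Prgm$ of \cref{sec:quasi-mon}. Let $\Prgm = \Prgm_\Set$ be the program associated with the (finite) set $\Set$ (\cref{def:program}). By \cref{lem:modp}, $\As_\Prgm$ satisfies every definite clause of $\Set$; were it also to satisfy every goal clause of $\Set$ it would be a satisfying expansion of $\As$, contradicting $(\As,\Hf)$-unsatisfiability. Hence $\As_\Prgm \not\models G$ for some goal clause $G \in \Set$, equivalently $\As_\Prgm \models \posex(G)$, where $\posex(G)$ is the associated closed positive existential formula.

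\emph{Closing the iteration at $\omega$.} Building on the quasi-monotonicity framework of \cref{sec:quasi-mon}, I would introduce a notion of \emph{quasi-continuity} of the immediate consequence operator --- the companion of Scott continuity relative to the compatible relation $\arel$ --- and prove the corresponding strengthening of Kleene's fixed point theorem: for every closed positive existential $F$, $\sinti{\As_\Prgm}{F} = \bigsqcup_{n<\omega}\sinti{\As_n}{F}$. Two ingredients are needed: that the structures $\As_n$ form an $\arel$-chain with each $\As_n$ $\arel$-related to itself (at the base $\As_0 \arel \As_0$ holds by \cref{ex:qmlog1}, and this propagates to successors by quasi-monotonicity); and that the denotation of a positive existential term is $\arel$-continuous along $\arel$-chains, a continuity companion of \cref{lem:termmon}. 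Applying this with $F = \posex(G)$ yields a finite stage $n$ with $\sinti{\As_n}{\posex(G)} = 1$, i.e.\ $\As_n \not\models G$; fix a valuation $\alpha$ with $\sinti{\As_n}{G}(\alpha) = 0$, so that every atom of $G$ denotes $1$ in $\As_n$ under $\alpha$.

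\emph{From a finite-stage refutation to a derivation.} Since $\As_k = T_\Prgm(\As_{k-1})$ assigns to each relational symbol $R$ the value $\sinti{\As_{k-1}}{\lambda\overline x_R\ldotp F_R}$, and $F_R$ is a disjunction of existentially quantified conjunctions of atoms whose disjuncts correspond to the definite clauses of $\Set$ with head $R$, the assertion ``$\As_n \not\models G$'' unwinds into a finite semantic derivation: each foreground atom $R\,\overline M$ with relational head that is forced to denote $1$ makes some disjunct of $F_R$ hold one stage earlier, and selecting that disjunct, substituting $\overline M$ for $\overline x_R$ and discharging the existential witnesses is precisely a \textbf{Resolution} step against the corresponding definite clause; the $\beta$-redexes created by the substitution are then removed by the \textbf{$\beta$-Reduction} rule, mirroring the fact that denotations are computed modulo $\beta$. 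Peeling off the stages in this way, a foreground atom with relational head cannot survive down to $\As_0$ (where it would denote $0$), so the surviving foreground atoms are variable-headed atoms $x\,\overline M$ (which $\alpha$ can keep true by assigning $\top$), while the background atoms $\phi_1,\dots,\phi_m$ accumulated along the way all denote $1$ under the accumulated valuation; the side condition of the \textbf{Constraint refutation} rule is then met and that rule produces $\bot$. As $\Res$ only ever adds clauses, the definite clauses of $\Set$ remain available throughout, so this whole sequence witnesses $\Set \Res^* \{\bot\} \cup \Set'$.

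\emph{The main obstacle.} The hard part is that both \textbf{Resolution} and \textbf{$\beta$-Reduction} act only on leftmost-outermost positions of goal clauses, whereas the finite semantic derivation above respects no such discipline; the crux of the proof is therefore to show that an arbitrary finite semantic derivation can be reorganised into a leftmost-outermost one performable by the proof system. This is an adaptation of the standardisation theorem of the $\lambda$-calculus, for which I would follow Kashima's proof \cite{K00}. It is made more delicate by the fact that the arguments $\overline M$ of a resolved atom $R\,\overline M$ may themselves mention relational symbols evaluated at a stage different from the surrounding context, so the induction cannot be run on the stage $n$ but must be driven by a well-founded measure attached to the semantic derivation itself. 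Finally, establishing quasi-continuity is itself non-trivial: the non-reflexivity of $\arel$ precludes a direct appeal to domain theory, so the fixed-point theory has to be redeveloped with $\arel$-chains in place of $\prel$-directed sets.
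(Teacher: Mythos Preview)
Your proposal is correct and follows essentially the same route as the paper: reduce to the canonical model $\As_\Prgm$, use quasi-continuity to descend to a finite stage $\As_n$, then reorganise the resulting unfolding into leftmost-outermost form via a standardisation argument \`a la Kashima, and finish with a well-founded measure driving the proof system to $\bot$.

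The one methodological difference worth flagging is how the paper handles the stage-mismatch issue you raise. Rather than tracking stages through a semantic derivation, the paper first passes entirely to syntax: it introduces a \emph{parallel unfolding} relation $\pred$ that replaces \emph{every} occurrence of each $R\in\Sigma'\setminus\Sigma$ by $\lambda\overline x_R\ldotp F_R$ simultaneously, and proves the exact correspondence $\sinti{T_\Prgm(\Bs)}{M}(\alpha)=\sinti{\Bs}{N}(\alpha)$ whenever $M\pred N$ (\cref{lem:parallelcorr}). From $\As_n\not\models G$ one then gets a single closed term $F_n$ with $\posex(G)\pred^n F_n$ and $\As_0\models F_n$, and standardisation is performed purely on the rewrite relation $\bured$ (where $\upsilon=\{(R,\lambda\overline x_R\ldotp F_R)\mid R\in\Sigma'\setminus\Sigma\}$), with no stages in sight. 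The final induction is driven by the measure $\mu(\Set')=\min\{m\mid \posex(G)\lred{m}F,\ \alpha\force F\text{ for some }G\in\Set',\alpha\}$, and one shows that a single resolution or $\beta$-reduction step strictly decreases $\mu$ (\cref{lem:mudec}). Your direct semantic unwinding would work too, but the paper's detour through $\pred$ buys precisely the uniformity you identify as the obstacle, at the cost of one extra (easy) correspondence lemma. A minor point: you only need the inequality $\sinti{\As_\Prgm}{F}\leq\bigsqcup_{n<\omega}\sinti{\As_n}{F}$, not the equality you state; quasi-continuity gives $\As_\Prgm\arel\As_\omega$, which suffices.
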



Consequently, the resolution proof system gives rise to a
semi-decision procedure for the $(\As,\Hf)$-unsatisfiability problem
provided it is (semi-)decidable whether
a goal clause of background atoms is not satisfied by the
  background theory\footnote{i.e.\ whether there exists a valuation $\alpha$ such that $\As,\alpha\models\phi_1\land\cdots\land\phi_n$}.

\paragraph*{Outline of the Completeness Proof}
\begin{enumerate}[label=\textbf{(S\arabic*)},leftmargin=2\parindent]
\item\label{it:outlinecomp2} First, we prove that some goal
  clause is not satisfied by the canonical structure already after a
  \emph{finite} number of iterations if $\Set$ is $(\As,\Hf)$-unsatisfiable (\cref{subsec:quasicont}).
\item\label{it:outlinecomp3} Consequently, there is a \emph{syntactic} reason
  for $\Set$'s $(\As,\Hf)$-unsatisfiability (by ``unfolding definitions'') (\cref{sec:syntactic}).
\item\label{it:outlinecomp4} Finally, we prove that the ``unfolding''
  actually only needs to take place at the leftmost (outermost)
  positions of atoms (\cref{sec:leftmost}), which can be captured by the resolution proof system (\cref{sec:completeness}).
\end{enumerate}
Observe that \sref{it:outlinecomp2} is model theoretic / semantic,
whilst \srefs{it:outlinecomp3}{it:outlinecomp4} are proof theoretic /
syntactic.

\subsection{Quasi-Continuity}
\label{subsec:quasicont}
\label{SUBSEC:QUASICONT}

Whilst in \cref{sec:quasi-mon} we have shown that $\As_\Prgm$ is a
model of the definite clauses, we now examine the consequences of
$\As_\Prgm\not\models G$ for some goal clause $G\in\Set$. 
Unlike the 1st-order case \cite{BGMR15}, stage $\omega$ is not a fixed point of $T_\Prgm$ in general, as the following example illustrates:
\begin{example}
  \label{ex:notstopomega}
  Consider the following program:
  \[
    \neg (x_R=0\lor R\,(x_R-1)) \lor R\,x_R \qquad \quad
    \neg {(x_U\,R)} \, \lor \, U\,x_U
  \]
  where $\Sigma'=\Sigma_{\LIA}\cup\{R\from\iota\to o,U\from((\iota\to
  o)\to o)\to o\}$, $\Delta(x_R)=\iota$ and $\Delta(x_U)=(\iota\to o)\to o$. Let $\As$ be the standard model of Linear Integer Arithmetic
  $\As_{\LIA}$ and let $\Hf=\Sf$ be the standard frame. For ease of
  notation, we introduce functions
  $r_\alpha\from\sinti\Sf\iota\to\bool$ such that
  $r_\alpha(n)=1$ iff $0\leq n<\alpha$, and
  $\delta_\alpha\from\sinti\Sf{\iota\to o}\to\bool$ such that
  $\delta_\alpha(r)=1$ iff $r=r_\alpha$, where $\alpha\in\omega\cup\{\omega\}$.
  Then it holds $R^{\As_n}=r_n$, $U^{\As_0}=\bot_{(\iota\to o)\to o}$ and $U^{\As_n}(s)=s(r_{n-1})$ for $n>0$. Therefore $R^{\As_\omega}=r_\omega$ and $U^{\As_\omega}(s)=1$ iff there exists $n<\omega$ satisfying $s(r_n)=1$.
  In particular, $U^{\As_\omega}(\delta_\omega)=0$. On the other
  hand, \changed[dw]{$U^{\As_{\omega+1}}(\delta_\omega)=\sinti{\As_\omega}{\lambda x_U\ldotp
      x_U\,R}(\delta_\omega)=1$.}
  Consequently, $\As_\omega\neq\As_{\omega+1}$.
\end{example}
Nonetheless, there still exists a (finite) $n\in\omega$ satisfying
$\As_n\not\models G$ if $\As_\Prgm\not\models G$ (\cref{thm:contmain}). We make use
of a similar strategy to establishing the canonical model property: we
introduce the notion of \emph{quasi-continuity}, state a version of Kleene's
fixed point theorem and prove the immediate consequence operator
to be quasi-continuous.

\begin{definition}
  Let ${\arel}\subseteq L\times L$ be a relation.
  \begin{thmlist}
  \item $\arel$ is \emph{$\arel$-directed} if for every $a,b\in L$,
    \changed[dw]{$a\arel a$ and}
    there exists $c\in L$ satisfying $a,b\arel c$.

    For $a\in L$ we write $\dir_\arel(a)$ for the set of $\arel$-directed subsets $D$ of $L$ satisfying $a\arel\biglor D$.
  \item $F$ is \emph{quasi-continuous} if for all $a\in L$ and $D\in\dir_\arel(a)$, $F(a)\arel\biglor_{b\in D}F(b)$.
  \end{thmlist}
\end{definition}
  Thus, every quasi-continuous function is in particular
  quasi-monotone if $\arel$ is reflexive.
\begin{restatable}{proposition}{kleene}
  \label{lem:kleene}
  If $\arel$ 
  is compatible with $\leq$
  and $F$ is quasi-continuous then
  \begin{enumerate*}
  \item for all ordinals $\beta\leq\beta'$, $a_\beta\arel
    a_{\beta'}$ and
  \item
    $a_F\arel a_\omega$.
  \end{enumerate*}
\end{restatable}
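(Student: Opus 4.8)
The plan is to establish both parts by transfinite induction, carrying along as an auxiliary invariant that every stage is $\arel$-reflexive, i.e.\ $a_\beta\arel a_\beta$. This invariant is the crucial point: since $\arel$ is not assumed reflexive, quasi-continuity is not literally stronger than quasi-monotonicity, and one only recovers the latter in the ``conditional'' form ``if $a\arel b$ and $b\arel b$ then $F(a)\arel F(b)$'' (take $D=\{b\}\in\dir_\arel(a)$ and apply quasi-continuity). I would first record two easy consequences of compatibility used throughout: instantiating (C2) with $A=\emptyset$ gives $\bot\arel a$ for every $a\in L$, which disposes of all $\beta=0$ base cases; and for a limit $\lambda$ one has $\{a_{\gamma+1}\mid\gamma<\lambda\}\subseteq\{a_\gamma\mid\gamma<\lambda\}$, hence $\biglor_{\gamma<\lambda}a_{\gamma+1}\le a_\lambda$.

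For part~(i) I would prove, by transfinite induction on $\beta'$, the statement $Q(\beta')$: $a_\beta\arel a_{\beta'}$ for all $\beta\le\beta'$ (so in particular $a_{\beta'}\arel a_{\beta'}$). Within the step for a fixed $\beta'$ I run a secondary induction on $\beta\le\beta'$. The case $\beta=0$ is $\bot\arel a_{\beta'}$. For $\beta=\gamma+1$: if $\beta'=\delta+1$ then $\gamma\le\delta$, and $Q(\delta)$ gives $a_\gamma\arel a_\delta$ and $a_\delta\arel a_\delta$, so conditional quasi-monotonicity yields $a_{\gamma+1}=F(a_\gamma)\arel F(a_\delta)=a_{\delta+1}$; if $\beta'=\lambda$ is a limit, set $D=\{a_\epsilon\mid\epsilon<\lambda\}$, which is $\arel$-directed because each $a_\epsilon\arel a_\epsilon$ and $a_{\epsilon_1},a_{\epsilon_2}\arel a_{\max(\epsilon_1,\epsilon_2)}$ by the outer hypothesis, and has $\biglor D=a_\lambda$; the secondary hypothesis gives $a_\gamma\arel a_\lambda=\biglor D$, so $D\in\dir_\arel(a_\gamma)$, and quasi-continuity gives $F(a_\gamma)\arel\biglor_{\epsilon<\lambda}F(a_\epsilon)=\biglor_{\epsilon<\lambda}a_{\epsilon+1}\le a_\lambda$, whence $a_{\gamma+1}\arel a_\lambda$ by (C1). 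Finally, for $\beta$ a limit with $\beta\le\beta'$, the secondary hypothesis gives $a_\epsilon\arel a_{\beta'}$ for all $\epsilon<\beta$, so (C2) yields $a_\beta=\biglor_{\epsilon<\beta}a_\epsilon\arel a_{\beta'}$ (this also covers $\beta=\beta'$ when $\beta'$ is a limit).

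For part~(ii) I would show, by transfinite induction on $\beta$, that $a_\beta\arel a_\omega$ for every ordinal $\beta$, and then conclude by applying (C2) to $a_F=\biglor_{\beta\in\On}a_\beta$, a join of elements each $\arel a_\omega$. For $\beta\le\omega$ this is part~(i). The observation that makes the transfinite part work is that $D\defeq\{a_n\mid n<\omega\}$ is, by part~(i), an $\arel$-directed set with $\biglor D=a_\omega$, so it may be fed to quasi-continuity for \emph{any} point $a$ satisfying $a\arel a_\omega$. Concretely: if $\beta=\gamma+1$ with $\gamma\ge\omega$, then by the induction hypothesis $a_\gamma\arel a_\omega=\biglor D$, hence $D\in\dir_\arel(a_\gamma)$ and $F(a_\gamma)\arel\biglor_{n<\omega}a_{n+1}\le a_\omega$, so $a_{\gamma+1}\arel a_\omega$ by (C1); if $\beta$ is a limit $>\omega$, then by the induction hypothesis $a_\gamma\arel a_\omega$ for all $\gamma<\beta$, so (C2) gives $a_\beta\arel a_\omega$.

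The step I expect to be the main obstacle is precisely the bookkeeping forced by the non-reflexivity of $\arel$: every directed set handed to quasi-continuity must be verified to satisfy the reflexivity clause of $\arel$-directedness, which is why $a_\beta\arel a_\beta$ has to be maintained together with $a_\beta\arel a_{\beta'}$, and why the secondary induction is arranged so that $a_\epsilon\arel a_{\beta'}$ for all $\epsilon<\beta$ is already available when one tries to conclude $a_\beta\arel a_{\beta'}$. Beyond that, only elementary ordinal arithmetic and the two compatibility axioms are needed.
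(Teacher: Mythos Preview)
Your proof is correct and follows essentially the same strategy as the paper. The only differences are organizational: the paper factors out the reflexivity statement $a_\beta\arel a_\beta$ into a separate preliminary lemma (proved by its own transfinite induction) and then runs the induction for part~(i) on the \emph{smaller} ordinal $\beta$; in the successor--limit subcase it takes the slightly shorter route $a_\beta\arel a_\beta\le a_{\beta'}$ together with~(C1), rather than invoking quasi-continuity with the full directed set $\{a_\epsilon\mid\epsilon<\lambda\}$ as you do. Part~(ii) is argued identically.
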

Combined with \cref{lem:knaster-tarski} this yields
Kleene's fixed point theorem in the case of ${\arel}\defeq{\leq}$.

Similarly as in \cref{sec:quasi-mon}, we need relations
${\arel_\sigma}\subseteq\sinti\Hf\sigma\times\sinti\Hf\sigma$ which
behave well
with applications in order for the immediate consequence
operator to be quasi-continuous. Therefore, we stipulate 
(overloading the
notation of \cref{sec:quasi-mon}):
\begin{definition}
  \label{def:arelc}
We define ${\arel_\sigma} \subseteq {\sinti\Hf\sigma\times\sinti\Hf\sigma}$ by recursion on the type $\sigma$: 
\[
\begin{array}{lr}
  b\arel_o b'
  \; \defeq \; b\leq b' & (b,b'\in\sinti \Hf o)\\
  n\arel_\iota n'
  \; \defeq \; n=n' & (n,n'\in\sinti\Hf\iota)\\
  r\arel_{\tau\to\sigma} r'
  \; \defeq \; \forall s\in\sinti \Hf\tau,\Sd'\in\dir_{\arel_\tau}(s)\ldotp\hspace*{-5mm}\\
  \qquad\qquad\qquad r(s)\arel_\sigma\bigsqcup_{s'\in\Sd'}r'(s') & (r,r'\in\sinti \Hf{\tau\to\sigma})
\end{array}
\]
\end{definition}
There is an elementary inductive argument
\ifproceedings\else(cf.~\cref{cor:relc}) \fi
that each $\arel_\sigma$ 
is compatible with $\prel_\sigma$.
We lift $\arel$ to structures
and valuations in a pointwise way, and abbreviate $\dir_\arel$ as $\dir$.

\begin{restatable}{lemma}{contmain}
  \label{lem:contmain}
  Let $M$ be a positive existential term, $\Bs$ be an expansion of $\As$, $\Bd'\in\dir(\Bs)$, $\alpha$ be a valuation and let $\vals'\in\dir(\alpha)$.
  Then\footnote{By \cref{rem:deneq} the right-hand side is
  well-defined.}
\ifproceedings
\begin{align*}
    \label{eq:ncontmain}
    \sinti\Bs M(\alpha)\arel\bigsqcup_{\Bs'\in\Bd',\alpha'\in\vals'}\sinti{\Bs'} M(\alpha').
\end{align*}
\else
\begin{align}
    \label{eq:ncontmain}
    \sinti\Bs M(\alpha)\arel\bigsqcup_{\Bs'\in\Bd',\alpha'\in\vals'}\sinti{\Bs'} M(\alpha').
\end{align}
\fi
\end{restatable}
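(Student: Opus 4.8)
The plan is to prove the inequality by structural induction on the positive existential term $M$. Positive existentiality enters twice: we never have to interpret ``$\neg$'', and the comprehension axiom applies to every abstraction occurring in $M$. It is convenient to work with the product: $P \defeq \Bd' \times \vals'$ is $\arel$-directed (being a product of two $\arel$-directed sets), all its elements are $\arel$-reflexive, $\bigsqcup P = (\bigsqcup\Bd', \bigsqcup\vals')$, and $(\Bs,\alpha) \arel \bigsqcup P$; thus the claim reads ``$\sinti{(\cdot)}{M}(\cdot)$, as a function of the pair, is quasi-continuous along $P$''. Throughout I use that $\arel$ and $\bigsqcup$ on structures, valuations and relational domains are the pointwise liftings, that $\arel_\sigma$ is compatible with $\prel_\sigma$ (properties~\ref{def:rel1} and~\ref{def:rel2}), and that $\arel$ is transitive.

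Base cases are routine. If $M$ is a relational variable $x$ or a relational symbol $R \in \Sigma'\setminus\Sigma$, both sides reduce — by pointwise-ness — to $\alpha(x)$ versus $(\bigsqcup\vals')(x)$, respectively $R^\Bs$ versus $\bigsqcup_{\Bs'\in\Bd'}R^{\Bs'}$, and the claim follows from $\alpha \arel \bigsqcup\vals'$, respectively $\Bs \arel \bigsqcup\Bd'$. If $M$ is a background symbol $c \in \Sigma$ or a logical constant, both sides coincide up to the least upper bound of a singleton, so it suffices that $c^\As$, $\andf$, $\orf$ and $\hexists_\tau$ are $\arel$-reflexive; the latter three are \cref{ex:qclog2}, and a short induction on $\iota^n \to \iota$ and $\iota^n \to o$ — using that $\arel_\iota$ is equality and that least upper bounds over $\sinti\Hf\iota$ are singletons — shows that every element of these domains is $\arel$-reflexive. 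Finally, if $\Delta \vdash M \from \iota^n \to \iota$ then \cref{rem:deneq} makes both sides literally equal (all $\alpha' \in \vals'$ agree with $\alpha$ on $\Delta^{-1}(\iota)$ because $\arel_\iota$ is equality), so in the inductive cases we may assume $M$ has relational type.

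For $M = \lambda x\ldotp M'$: expanding both sides via the comprehension axiom and unfolding $\arel$ at relational type, the goal becomes that for every $t \in \sinti\Hf{\Delta(x)}$ and every $\Td' \in \dir_{\arel_{\Delta(x)}}(t)$, $\sinti\Bs{M'}(\alpha[x\mapsto t]) \arel \bigsqcup\{\sinti{\Bs'}{M'}(\alpha'[x\mapsto t']) \mid \Bs' \in \Bd',\ \alpha' \in \vals',\ t' \in \Td'\}$. This is exactly the induction hypothesis for $M'$ with the $\arel$-directed valuation set $\{\alpha'[x\mapsto t'] \mid \alpha' \in \vals',\ t' \in \Td'\} \in \dir(\alpha[x\mapsto t])$, whose directedness and least-upper-bound relation follow from $\alpha \arel \bigsqcup\vals'$ and $t \arel \bigsqcup\Td'$.

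The substantive case is $M = M_1\,M_2$ with $\Delta \vdash M_1 \from \tau \to \sigma$ — the case \cref{def:arelc} is tailored for. Put $r \defeq \sinti\Bs{M_1}(\alpha)$, $s \defeq \sinti\Bs{M_2}(\alpha)$, and for $p = (\Bs',\alpha') \in P$ put $r_p \defeq \sinti{\Bs'}{M_1}(\alpha')$, $s_p \defeq \sinti{\Bs'}{M_2}(\alpha')$; the induction hypotheses give $r \arel \bigsqcup_{p\in P} r_p$ and $s \arel \bigsqcup_{p\in P} s_p$. First I would check $\{s_p \mid p \in P\} \in \dir_{\arel_\tau}(s)$: the relation $s \arel \bigsqcup_p s_p$ is the hypothesis for $M_2$, while $\arel_\tau$-directedness follows by applying the lemma to the \emph{proper subterm} $M_2$ with singleton directed sets $\{\Bs'\},\{\alpha'\}$ — legitimate since members of $\Bd'$ and $\vals'$ are $\arel$-reflexive — which yields $s_p \arel s_p$ and, for an upper bound $p_3$ of $p_1,p_2$ in $P$, $s_{p_1},s_{p_2} \arel s_{p_3}$; the same instances for $M_1$ give $r_q \arel r_m$ whenever $q \arel m$ in $P$. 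Unfolding $r \arel \bigsqcup_q r_q$ at $t \defeq s$ and $\Sd' \defeq \{s_p \mid p \in P\}$ then gives $r(s) \arel \bigsqcup_p (\bigsqcup_q r_q)(s_p) = \bigsqcup_{p,q\in P} r_q(s_p)$ (least upper bounds of relational type being pointwise). It remains to collapse this double least upper bound onto the diagonal: given $p,q \in P$ pick $m \in P$ above both; then $r_q \arel r_m$ and $s_p \arel s_m$ (so $\{s_m\} \in \dir_{\arel_\tau}(s_p)$) give $r_q(s_p) \arel r_m(s_m) \leq \bigsqcup_{m\in P} r_m(s_m)$, so property~\ref{def:rel1} gives $r_q(s_p) \arel \bigsqcup_m r_m(s_m)$ for all $p,q$, property~\ref{def:rel2} gives $\bigsqcup_{p,q} r_q(s_p) \arel \bigsqcup_m r_m(s_m)$, and transitivity, with $\bigsqcup_m r_m(s_m) = \bigsqcup_{p\in P} r_p(s_p)$, closes the case. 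I expect this application case to be the main obstacle, precisely in (i) showing the image of the directed set $P$ under $M_2$'s denotation is again $\arel_\tau$-directed — which forces the carefully scoped appeal to the lemma at the proper subterm $M_2$, and must be arranged so the induction stays well-founded — and (ii) the diagonalisation replacing $\bigsqcup_{p,q} r_q(s_p)$ by $\bigsqcup_p r_p(s_p)$, which is exactly where directedness of $\Bd'$ and $\vals'$ — rather than mere $\arel$-relatedness of single pairs, too weak since $\arel$ is not reflexive (cf.~\cref{ex:qmnotrt}) — is genuinely needed; the remaining cases are bookkeeping with the pointwise definitions.
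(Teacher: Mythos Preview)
Your proposal is correct and follows essentially the same approach as the paper: a structural induction on $M$, with the application case handled by first showing that $\{\sinti{\Bs'}{M_2}(\alpha')\mid \Bs'\in\Bd',\alpha'\in\vals'\}\in\dir(\sinti\Bs{M_2}(\alpha))$ via the inductive hypothesis applied to $M_2$ with singleton directed sets, and then collapsing the double-indexed supremum $\bigsqcup_{p,q}r_q(s_p)$ to the diagonal using directedness---exactly the paper's Claims~1 and~2. The only differences are cosmetic (your use of the product $P=\Bd'\times\vals'$) and that you are a touch more explicit about the base case $c\in\Sigma$, where the paper simply writes ``by assumption''; note that $\andf\arel\andf$ and $\orf\arel\orf$ are covered separately in the paper (\cref{ex:qclog11}), not by \cref{ex:qclog2}.
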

Consequently, the immediate consequence operator is
quasi-continuous. Moreover, by \cref{lem:kleene}, for closed positive existential formulas
$F$,
$\sinti{\As_\Prgm}F\leq\max_{n\in\omega}\sinti{\As_n}F$.
Therefore, we get the following result, which is key for the
refutational completeness of the proof system.
\begin{theorem}
  \label{thm:contmain}
  Let $G$ be a goal clause. 
  If $\As_\Prgm\not\models G$ then there exists $n\in\omega$ such that $\As_n\not\models G$.
\end{theorem}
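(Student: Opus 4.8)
The plan is to reduce the claim about the goal clause $G$ to a claim about a single closed positive existential formula, $\posex(G)$, and then cash in the quasi-continuity results of \cref{subsec:quasicont}. Recall that $\posex(G)$ is a closed positive existential formula satisfying $\Bs\not\models G$ iff $\Bs\models\posex(G)$ for every $(\Sigma',\Hf)$-structure $\Bs$; concretely it is obtained by existentially quantifying the conjunction of the atoms occurring in $G$, a formula which is manifestly free of $\neg$. This is precisely the point at which the goal-clause shape is essential: for a clause containing positive literals the failure of $G$ could not be expressed positively, and the later appeal to \cref{lem:contmain}---which concerns only positive existential terms---would break down.

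First I would observe that $T_\Prgm$ is quasi-continuous, which is immediate from \cref{lem:contmain} applied to the bodies $\lambda\overline x_R\ldotp F_R$ of $\Prgm$. Then \cref{lem:kleene}, instantiated with $L$ the complete lattice of $(\Sigma',\Hf)$-expansions of $\As$ and $F\defeq T_\Prgm$, yields both $\As_\beta\arel\As_{\beta'}$ whenever $\beta\le\beta'$ and $\As_\Prgm\arel\As_\omega$. From the former, the set $D\defeq\set{\As_n\mid n\in\omega}$ is $\arel$-directed---each $\As_n\arel\As_n$, and any $\As_n,\As_m$ are both $\arel$-below $\As_{\max(n,m)}\in D$---and since $\bigsqcup D=\As_\omega$, the latter gives $D\in\dir(\As_\Prgm)$. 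Now I would instantiate \cref{lem:contmain} with $M\defeq\posex(G)$, $\Bs\defeq\As_\Prgm$ and $\Bd'\defeq D$; as $\posex(G)$ is closed the valuation is immaterial, so one may take $\vals'\defeq\set\alpha$ for any valuation $\alpha$ with $\alpha\arel\alpha$ (e.g.\ one sending every relational variable to $\bot$, which satisfies $\alpha\arel\alpha$ because $\bot\arel\bot$ by compatibility of $\arel$ with $\prel$). Since $\arel_o$ coincides with $\le$ on $\bool$, the lemma then delivers $\sinti{\As_\Prgm}{\posex(G)}\le\max_{n\in\omega}\sinti{\As_n}{\posex(G)}$---the inequality for closed positive existential formulas stated just above the theorem.

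The theorem then follows at once: if $\As_\Prgm\not\models G$ then $\As_\Prgm\models\posex(G)$, so $\sinti{\As_\Prgm}{\posex(G)}=1$; by the inequality $\max_{n\in\omega}\sinti{\As_n}{\posex(G)}=1$, whence $\sinti{\As_n}{\posex(G)}=1$ for some $n\in\omega$, i.e.\ $\As_n\models\posex(G)$, i.e.\ $\As_n\not\models G$.

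The substance has already been discharged in \cref{lem:contmain} (quasi-continuity of the denotations of positive existential terms, through the bespoke relation $\arel$) and in \cref{lem:kleene} (the transfinite-iteration lemma), so at this stage the argument is short. The one point that needs care is that $\arel$ is not reflexive in general (cf.\ \cref{ex:qmnotrt}): hence the reflexivity $\As_n\arel\As_n$ must be extracted from the $\arel$-monotonicity of the iterates furnished by \cref{lem:kleene}, and the use of $\set\alpha\in\dir(\alpha)$ must rest on a deliberate choice of a valuation at which $\arel$ is reflexive rather than on reflexivity in general. I expect that bit of bookkeeping to be the only real obstacle.
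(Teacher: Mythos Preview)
Your proposal is correct and follows essentially the same route as the paper: reduce to $\posex(G)$, use \cref{lem:kleene} to obtain $\{\As_n\mid n\in\omega\}\in\dir(\As_\Prgm)$, and apply \cref{lem:contmain} to conclude. The only cosmetic difference is that the paper's argument takes the valuation $\top_\Delta$ (using $\top_\rho\arel\top_\rho$) rather than your bottom valuation, but since $\posex(G)$ is closed and both choices satisfy $\alpha\arel\alpha$, this is immaterial.
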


\subsection{Syntactic Unfolding}
\label{sec:syntactic}
\label{SEC:SYNTACTIC}
  Having established \sref{it:outlinecomp2}, we study a functional
  relation $\pred$ on positive existential terms, which is a syntactic
  counterpart of the immediate consequence
  operator. Essentially\footnote{For a formal definition refer to \ifproceedings\cite{OW19}\else\cref{fig:parallel} in
    \cref{sec:appsyntactic}\fi.},
  it
  holds $M\pred N$ if $N$ is obtained from $M$ by replacing all
  occurrences of symbols $R\in\Sigma'\setminus\Sigma$ with
  $\lambda\overline x_R\ldotp F_R$, which is reminiscent of the
  definition of $R^{T_\Prgm(\Bs)}$. Therefore:


\begin{restatable}{proposition}{parallelcorr}
  \label{lem:parallelcorr}
  Let $\Bs$ be an expansion of $\As$ and let $M$ and $N$ be positive existential terms satisfying $M\pred N$. Then for all valuations $\alpha$, $\sinti{T_\Prgm(\Bs)} M(\alpha)=\sinti{\Bs}N(\alpha)$.
\end{restatable}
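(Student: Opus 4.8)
The plan is to prove the statement by induction on the structure of the positive existential term $M$, simultaneously for all valuations $\alpha$. The key observation is that $\pred$ is defined so that $M \pred N$ means $N$ arises from $M$ by replacing each occurrence of a foreground symbol $R \in \Sigma' \setminus \Sigma$ with its defining body $\lambda \overline x_R \ldotp F_R$, and the denotation $\sinti{T_\Prgm(\Bs)}{-}$ differs from $\sinti{\Bs}{-}$ precisely in that $R^{T_\Prgm(\Bs)} = \sinti{\Bs}{\lambda \overline x_R \ldotp F_R}$. So the base and inductive cases should all reduce to pushing the substitution of bodies for symbols through the compositional definition of the denotation function.

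First I would set up the induction. Since $\pred$ is functional and reflexive-like on terms not containing foreground symbols, the cases to handle are: (i) $M = x$ a variable, where $N = x$ and both sides equal $\alpha(x)$; (ii) $M = c$ a background constant or logical symbol, where $N = M$ and both denotations agree because $T_\Prgm(\Bs)$ is a $(\Sigma',\Hf)$-expansion of $\As$ agreeing with $\Bs$ on $\Sigma$ and on logical symbols; (iii) $M = R$ a foreground symbol, where $N = \lambda \overline x_R \ldotp F_R$, and here $\sinti{T_\Prgm(\Bs)}{R}(\alpha) = R^{T_\Prgm(\Bs)} = \sinti{\Bs}{\lambda \overline x_R \ldotp F_R}(\alpha)$ by the very definition of $T_\Prgm$ (using that $F_R$ has free variables among $\overline x_R$, so the valuation is irrelevant, and invoking the comprehension axiom so the $\lambda$-abstraction is interpreted as expected); (iv) $M = M_1 M_2$ with $M_1 \pred N_1$, $M_2 \pred N_2$, $N = N_1 N_2$, where one applies the induction hypothesis to $M_1$ and $M_2$ and the compositional clause $\sinti{\cdot}{M_1 M_2}(\alpha) = \sinti{\cdot}{M_1}(\alpha)(\sinti{\cdot}{M_2}(\alpha))$; (v) $M = \lambda y \ldotp M'$ with $M' \pred N'$, $N = \lambda y \ldotp N'$, where one uses the $\lambda$-clause of the denotation, applies the induction hypothesis to $M'$ under every extended valuation $\alpha[y \mapsto r]$, and again appeals to the comprehension axiom to justify that both sides are genuine elements of $\sinti\Hf{\Delta(y) \to \rho}$ computed pointwise (this is where one needs that $M$, hence $M'$, is positive existential, so that the comprehension axiom applies).

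The main obstacle is the $\lambda$-abstraction case, and more specifically making sure the comprehension axiom is invoked correctly on both sides. The subtlety is that $\sinti{\As}{\lambda y \ldotp M'}(\alpha)$ is defined as $\embed{\lambda r \ldotp \sinti{\As}{M'}(\alpha[y \mapsto r])}$, where the bracket $\embed{-}$ coerces into the frame and may be arbitrary if the function is not in the frame; the comprehension axiom is exactly what guarantees this coercion is the identity for positive existential bodies, so that the abstraction behaves extensionally. One must check that $N' = \lambda y \ldotp N'$'s body is still positive existential (it is, since replacing $R$ by $\lambda \overline x_R \ldotp F_R$ introduces no negations because $F_R$ is positive existential by the definition of a program), so the axiom applies on the right-hand side too, and then pointwise equality of the two functions follows from the induction hypothesis applied at each $r \in \sinti{\As}{\Delta(y)}$. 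The remaining cases are routine unfoldings of the denotation clauses. I would also remark that one should first confirm $M \pred N$ with $M$ positive existential forces $N$ positive existential, which is needed throughout for the comprehension axiom and is immediate from the shape of the substitution.
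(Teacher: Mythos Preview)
Your proposal is correct and follows essentially the same approach as the paper: an induction on the derivation of $M \pred N$ (equivalently, on the structure of $M$), with the $\lambda$-abstraction case handled via the comprehension axiom (the paper phrases this as ``exploiting the fact that $\Hf$ is a frame''). Your discussion of why the comprehension axiom applies on both sides---because $F_R$ is positive existential and hence so is $N$---is more explicit than the paper's own proof, but the underlying argument is identical.
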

A similar idea is exploited in \cite{CHRW13}.


Next, let $\upsilon\defeq\{(R,\lambda\overline x_R\ldotp F_R)\mid R\in\Sigma'\setminus\Sigma\}$ and $\beta\upsilon\defeq\beta\cup\upsilon$. 
Besides, let $\bured$ be the compatible closure \cite[p.~51]{B12} of $\beta\upsilon$. 
It is easy to see that ${\pred}\subseteq{\buredrt}$, where $\buredrt$ is the reflexive, transitive closure of $\bured$.

\subsection{Leftmost (Outermost) Reduction}
\label{sec:leftmost}
\label{SEC:LEFTMOST}
\begin{figure*}[!t]
  \subfloat[Definition of leftmost (outermost) reductions.]{
    \label{fig:lred}
    $\begin{aligned}
        \begin{tabularx}{\textwidth}{XcXcXcX}
    &\infer[R\in\Sigma'\setminus\Sigma]{R\,\overline M\lred 1(\lambda\overline x_R\ldotp F_R)\overline M}{}&&\infer{(\lambda x\ldotp L)M\,\overline N\lred 1 L[M/x]\overline N}{}&
   &\infer[\circ\in\{\land,\lor\}]{M_1\circ M_2\lred{m_1+m_2}
      N_1\circ N_2}{M_1\lred{m_1} N_1&
      &M_2\lred{m_2} N_2}&\\[5pt]
    &\infer{\exists x\ldotp M\lred m\exists x\ldotp N}{M\lred m N}&
  &\infer{M\lred 0 M}{}&
    &\infer{L\lred{m_1+m_2} N}{L\lred{m_1} M&M\lred{m_2} N}&
    \end{tabularx}
  \end{aligned}$
}

\subfloat[Definition of standard reductions (by $\overline M\sred\overline N$ we mean $M_j\sred N_j$ for each $1\leq j\leq n$,
assuming $\overline M$ is $M_1,\ldots,M_n$ and $\overline
N$ is $N_1,\ldots,N_n$).]{%
  \label{fig:std}
  $\begin{aligned}
  \begin{tabularx}{\textwidth}{XcXcXcX}
    &\infer[L\lredrt c\,\overline M,
    c\in\Sigma'\cup\{\land,\lor,\exists_\tau\}]{L\sred c\,\overline
      N}{\overline M\sred\overline N}&&\infer[L\lredrt x\,\overline
    M]{L\sred x\,\overline N}{\overline M\sred\overline N}&
    &\infer[L\lredrt (\lambda x\ldotp M')\overline M]{L\sred (\lambda x\ldotp N')\overline N}{M'\sred N'&\overline M\sred\overline N}.&
  \end{tabularx}
\end{aligned}$
}
\caption{Leftmost outermost and standard reductions.}
\label{fig:los}
\end{figure*}


There is an important mismatch between the relation $\bured$
  and the rules of the proof system: in contrast to the former, the
  latter only take leftmost (outermost) positions of atoms into
  account. Fortunately, arbitrary sequences of $\beta\upsilon$-reductions can be
mimicked by sequences which are standard in the sense that purely
leftmost reductions are followed by purely non-leftmost
ones (\cref{cor:sthi}).

\cref{fig:los} defines $\lred\cdot$ and $\sred$, which formalise
leftmost (outermost) and
standard reductions, respectively. We write $M\lredrt N$ if $M\lred{m}
N$ for some $m$, where $m$ corresponds to the number of leftmost
$\beta\upsilon$-reductions having been performed.
The idea is that $L\sred N$ if for some $M$, $L\lredrt M$ and we can
obtain $N$ from $M$ by performing standard $\beta\upsilon$-reductions only on
non-leftmost positions.




\begin{restatable}{lemma}{elimbum}
  \label{lem:elimbum}
  If $K\sred M\bured N$ then $K\sred N$.
\end{restatable}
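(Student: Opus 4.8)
The plan is to proceed by induction on the derivation of $K \sred M$, following the adaptation of the standardisation argument from \cite{K00} that was promised in the introduction. First I would recall that, by definition of $\sred$ (Fig.~\ref{fig:std}), the judgement $K \sred M$ factors as $K \lredrt K'$ for some head normal form $K'$ of the shape $h\,\overline{M'}$ (with $h$ a constant, a variable, or an abstraction $\lambda x\ldotp M''$), together with $\overline{M'} \sred \overline{M}$ (and, in the abstraction case, $M'' \sred$ the corresponding body). Since leftmost reduction $\lredrt$ already commutes with everything, the content of the lemma is really about the non-leftmost part: the single step $M \bured N$ takes place inside one of the sub-terms $\overline{M}$ (or the abstraction body), at a non-leftmost position relative to $K'$, and we must show it can be absorbed into the standard reduction $\overline{M'} \sred \overline{M}$.

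The key case is therefore: $M \bured N$ is a $\beta\upsilon$-step fired strictly inside one component $M_j$ of $\overline M$, say $M_j \bured N_j$, with $N$ obtained from $M$ by replacing $M_j$ with $N_j$. By the factorisation we have $M'_j \sred M_j$, so it suffices to show $M'_j \sred N_j$, i.e.\ the lemma holds for the strictly smaller derivation $M'_j \sred M_j$ — this is exactly the induction hypothesis. Re-assembling via the same $\sred$ rule (using $K \lredrt K'$ unchanged and the updated component) gives $K \sred N$. The remaining cases are: (a) the contracted redex in $M \bured N$ is itself at the leftmost (head) position of $M$ — but this cannot happen, because $M$ is already a $\lredrt$-normal form from the point of view of the head, i.e.\ $K'$ is in head normal form, so any redex of $M$ contracted by $\bured$ must lie in a proper sub-term; (b) the $\upsilon$-step rewrites some $R \in \Sigma' \setminus \Sigma$ occurring in a component $M_j$, which is handled identically to the $\beta$ sub-case since $\upsilon$ is treated uniformly with $\beta$ in $\bured$; and (c) the abstraction-body case, treated exactly like the component case.

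The main obstacle I anticipate is bookkeeping rather than conceptual: one must be careful that contracting a non-leftmost redex in $M$ does not create a \emph{new} leftmost redex that should have been fired earlier — i.e.\ that standardness (leftmost steps strictly before non-leftmost ones) is preserved. This is handled exactly as in the classical standardisation proof: because $M$'s head is already a constant, variable, or $\lambda$ (not an application of an abstraction or of an $R$-symbol), reducing a sub-term cannot expose a head redex, so the ``leftmost part'' $K \lredrt K'$ of the standard reduction is untouched. A secondary point to verify is that the definition of $\lred{\cdot}$ and $\sred$ interacts correctly with the $\upsilon$-rule $R\,\overline M \lred{1} (\lambda \overline x_R \ldotp F_R)\overline M$, so that $\upsilon$-redexes at non-leftmost positions genuinely count as ``non-leftmost'' and are not forced to the head; this follows directly from the shape of the rules in Fig.~\ref{fig:lred}. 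Beyond these checks the induction is routine.
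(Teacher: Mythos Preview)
Your overall strategy (induction on the derivation of $K \sred M$) matches the paper's, but there is a genuine gap in your case (a). You claim the $\bured$-redex cannot sit at the head of $M$ because ``$K'$ is in head normal form''. This is false: inspecting the three rules defining $\sred$ (Fig.~\ref{fig:std}), the intermediate term to which $K$ leftmost-reduces is \emph{not} required to be $\beta\upsilon$-head-normal. It may be $R\,\overline{M'}$ with $R\in\Sigma'\setminus\Sigma$ (a head $\upsilon$-redex), or $(\lambda x\ldotp L')\,\overline{M'}$ with $\overline{M'}$ non-empty (a head $\beta$-redex). The definition of $\sred$ merely asks leftmost reduction to \emph{stop} at some term of one of the three listed shapes; it does not ask that no further leftmost step be possible.

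Both head-redex sub-cases genuinely arise and are the crux of the proof. For $M=R\,\overline M\bured(\lambda\overline x_R\ldotp F_R)\,\overline M$, one extends the leftmost phase by one more $\upsilon$-step $K\lredrt R\,\overline L\lred{1}(\lambda\overline x_R\ldotp F_R)\,\overline L$ and reassembles. For $M=(\lambda x\ldotp M')M_1\cdots M_n\bured M'[M_1/x]M_2\cdots M_n$, one needs the substitution-compatibility lemma for $\sred$ (from $L'\sred M'$ and $O\sred Q$ conclude $L'[O/x]\sred M'[Q/x]$), together with closure of $\sred$ under right application, in order to get $L'[L_1/x]\,L_2\cdots L_n\sred M'[M_1/x]\,M_2\cdots M_n$; one then prefixes the extra leftmost $\beta$-step $K\lredrt(\lambda x\ldotp L')\overline L\lred{1}L'[L_1/x]L_2\cdots L_n$. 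The paper explicitly flags this substitution property as the key insight underlying the lemma; your plan omits it entirely by incorrectly ruling these cases out.
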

The proof of this proposition is very similar to the proof of the standardisation theorem in the
$\lambda$-calculus as presented in \cite{K00}
and relies on the
insight that if all of $\overline K\sred\overline M$, $K'\sred M'$ and
$O\sred Q$ hold then $K'[O/x]\,\overline K\sred M'[Q/x]\,\overline M$.

\begin{corollary}
  \label{cor:sthi}
  Let $M$ and $N$ be positive existential terms such that $M\buredrt N$.
  Then $M\sred N$.
\end{corollary}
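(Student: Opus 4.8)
The plan is an easy induction on the number of $\bured$-steps in $M \buredrt N$, using \cref{lem:elimbum} as the one-step gluing lemma; the only preliminary needed is that $\sred$ is reflexive on positive existential terms.

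First I would prove $L \sred L$ for every positive existential term $L$, by structural induction on $L$. Since $\neg$ never occurs in $L$, after peeling off the application spine its head is a variable, a constant in $\Sigma'$, one of $\land,\lor,\exists_\tau$, or a $\lambda$-abstraction, so $L$ has one of the forms $c\,\overline M$ (with $c\in\Sigma'\cup\{\land,\lor,\exists_\tau\}$), $x\,\overline M$, or $(\lambda x\ldotp M')\,\overline M$ --- exactly the three cases in the definition of $\sred$ in \cref{fig:std}. In each case $L \lredrt L$ holds via the empty leftmost reduction $L \lred 0 L$, and the corresponding clause reduces $L \sred L$ to the subgoals $\overline M \sred \overline M$ (together with $M' \sred M'$ in the abstraction case); these hold by the induction hypothesis, as the components are strictly smaller positive existential terms.

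For the induction proper, write $M \buredrt N$ as $M = M_0 \bured M_1 \bured \dots \bured M_k = N$. Each $M_i$ is positive existential, because $\bured$ preserves this property: $\beta$-reduction clearly does, and a $\upsilon$-step replaces some $R \in \Sigma'\setminus\Sigma$ by $\lambda\overline x_R\ldotp F_R$, whose body $F_R$ is positive existential by the definition of a program. If $k = 0$ then $M = N$ and $M \sred N$ by the reflexivity just established. Otherwise, by the induction hypothesis $M \sred M_{k-1}$, and since $M_{k-1} \bured M_k = N$, \cref{lem:elimbum} yields $M \sred N$.

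The argument is essentially bookkeeping; the only point requiring a moment's thought is the reflexivity of $\sred$, which becomes clear once one observes that the three clauses of \cref{fig:std} jointly exhaust the possible heads of a positive existential term (it is precisely the absence of $\neg$ that makes this work). All the real content sits in \cref{lem:elimbum} itself.
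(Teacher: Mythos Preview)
Your proof is correct and follows essentially the same route as the paper: reflexivity of $\sred$ on positive existential terms (which the paper records separately as \cref{lem:isrefl}, with the same structural-induction argument you sketch), followed by induction on the length of the $\bured$-sequence using \cref{lem:elimbum} for the step case.
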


Next, we consider the relation $\force$ on positive existential
formulas and valuations inductively defined by:
\begin{gather*}
  \infer{\alpha\force x\,\overline M}{}\qquad\infer{\alpha\force\phi}{\As,\alpha\models\phi}\qquad \infer{\alpha\force\exists x\ldotp
    M}{r\in\sinti \Hf{\Delta(x)}&\alpha[x\mapsto r]\force M}\\[5pt]
  \infer{\alpha\force M_1\lor M_2}{i\in\{1,2\}&\alpha\force M_i}\qquad\infer{\alpha\force M_1\land M_2}{\alpha\force M_1&\alpha\force M_2}
\end{gather*}
Intuitively,
$\alpha\force F$ if for some $\alpha'$ (agreeing with $\alpha$ on $\Delta^{-1}(\iota)$), $\As_0,\alpha'\models F$ and there are no
$\lambda$-abstractions in relevant leftmost positions.
\begin{remark}
  \label{rem:refute}
  If $G$ is a goal clause and $\alpha\force\posex(G)$ (for some $\alpha$) then $G$ has the
  form $\biglor_{i=1}^m\neg x_i\,\overline
  M_i\lor\biglor_{j=1}^n\neg\phi_j$ and $G$ can be refuted by the 
  constraint refutation rule in one step.
\end{remark}

\begin{restatable}{lemma}{sdecompose}
  \label{lem:sdecompose}
  Let $G$ be a goal clause, $F$ be a $\beta$-normal positive existential formula and $\alpha$ be a valuation such that $\As_0,\alpha\models F$ and $\posex(G)\sred F$.
  Then there exists a positive existential formula $F'$ satisfying $\posex(G)\lredrt F'$ and $\alpha\force F'$.
\end{restatable}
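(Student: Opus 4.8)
The plan is to establish, by induction on the size of $F$, the following slightly more general claim: for every positive existential formula $L$ of type $o$ all of whose quantifiers occur in the form $\exists x\ldotp(\cdot)$, every $\beta$-normal positive existential formula $F$ and every valuation $\alpha$, if $L\sred F$ and $\As_0,\alpha\models F$ then there is a positive existential formula $F'$ with $L\lredrt F'$ and $\alpha\force F'$. The lemma is the case $L=\posex(G)$, for which the quantifier hypothesis holds (and it is preserved by leftmost reduction and by passing to subterms, which is all that happens to $L$ in the induction). Unfolding the definition of $\sred$ (\cref{fig:std}), fix $M$ with $L\lredrt M$ so that $M$ has one of the forms $c\,\overline M$ with $c\in\Sigma'\cup\{\land,\lor,\exists_\tau\}$, or $x\,\overline M$, or $(\lambda x\ldotp M')\overline M$, and $F$ is obtained by standard-reducing the arguments (and $M'$) of $M$. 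Since $F$ has type $o$ and is $\beta$-normal, the third form would make $F$ either a $\beta$-redex or a term of arrow type; hence it does not arise, and I distinguish cases on the head of $M$.

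If the head is a variable, put $F':=M=x\,\overline M$: then $L\lredrt F'$ and $\alpha\force F'$ by the $\force$-axiom for variable-headed atoms (here $\As_0,\alpha\models F$ is not even used). If the head is a symbol $P\in\Sigma$, then $P\from\iota^n\to o$ (for $M$ to have type $o$), so the arguments of $M$ have type $\iota$ and hence, by \cref{rem:isimple}, contain neither a $\lambda$-abstraction nor a symbol of $\Sigma'\setminus\Sigma$, i.e.\ are $\beta\upsilon$-normal; the standard reduction therefore does nothing, $F=M$ is a background atom, and since $\As_0$ and $\As$ agree on $\Sigma$ we get $\As,\alpha\models F$, so $\alpha\force F$ by the $\force$-rule for background atoms---take $F':=M$. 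If the head is a relational symbol $R\in\Sigma'\setminus\Sigma$, then $M=R\,\overline M$ is fully applied and $F=R\,\overline N$, so $\sinti{\As_0}{F}(\alpha)=R^{\As_0}(\cdots)=0$ (as $R^{\As_0}=\bot$), contradicting $\As_0,\alpha\models F$; this case is vacuous. This last point is the heart of the argument: the hypothesis $\As_0,\alpha\models F$ is exactly what forbids the standard reduction from stopping at a foreground atom whose head has not been unfolded, because $\As_0$ interprets foreground symbols as $\bot$.

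Finally, suppose the head is a connective or quantifier. If $M=M_1\land M_2$ then $F=F_1\land F_2$ with $M_i\sred F_i$ and $\As_0,\alpha\models F_i$; as $|F_i|<|F|$, the induction hypothesis (with $L:=M_i$) yields $F_i'$ with $M_i\lredrt F_i'$ and $\alpha\force F_i'$, whence $M\lredrt F_1'\land F_2'$ by the congruence rule for $\land$ and $\alpha\force F_1'\land F_2'$; take $F':=F_1'\land F_2'$. If $M=M_1\lor M_2$ then $F=F_1\lor F_2$ with $M_i\sred F_i$ and $\As_0,\alpha\models F_i$ for some $i$, say $i=1$; the induction hypothesis gives $F_1'$ with $M_1\lredrt F_1'$ and $\alpha\force F_1'$, hence $M\lredrt F_1'\lor M_2$ (congruence, with $M_2\lredrt M_2$) and $\alpha\force F_1'\lor M_2$; take $F':=F_1'\lor M_2$. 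If $M=\exists x\ldotp K$ (using the quantifier hypothesis) then $F=\exists x\ldotp K_0$ with $K\sred K_0$, and from $\As_0,\alpha\models F$ we obtain a witness $r\in\sinti\Hf{\Delta(x)}$ with $\As_0,\alpha[x\mapsto r]\models K_0$; as $|K_0|<|F|$, the induction hypothesis (with $L:=K$ and valuation $\alpha[x\mapsto r]$) gives $K'$ with $K\lredrt K'$ and $\alpha[x\mapsto r]\force K'$, whence $M\lredrt\exists x\ldotp K'$ by the congruence rule for $\exists$ and $\alpha\force\exists x\ldotp K'$ via the $\force$-rule for existentials; take $F':=\exists x\ldotp K'$. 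In each case $L\lredrt M\lredrt F'$, completing the induction.

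The routine parts are verifying that the $\force$-rules accommodate exactly the disjunct and witness choices extracted from $\As_0,\alpha\models F$, and that the congruence and transitivity rules of \cref{fig:lred} let one reassemble the sub-reducts into a single leftmost reduct of $L$. I expect the main thing to get right to be the case analysis on the head of $M$ together with the observation that $\beta$-normality of $F$ and $\As_0,\alpha\models F$ jointly rule out precisely the standard reductions ($\beta$-redexes of type $o$, and un-unfolded foreground atoms) that would otherwise obstruct the construction; the interaction of leftmost reduction with quantifiers is the only other subtlety, handled by the (reduction-stable) restriction to quantifiers of the form $\exists x\ldotp(\cdot)$.
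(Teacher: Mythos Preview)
Your proof is correct and follows essentially the same approach as the paper: structural induction on $F$, case analysis on the head of the intermediate term after leftmost reduction, with the ex-normality hypothesis (your ``quantifier hypothesis'') handling the existential case and $\As_0$'s interpretation of foreground symbols as $\bot$ ruling out un-unfolded foreground atoms. Your explicit generalisation from $\posex(G)$ to arbitrary ex-normal $L$ is in fact cleaner than the paper's presentation, which states the lemma only for $\posex(G)$ but silently applies the inductive hypothesis to the subterms $M_j$ obtained after inversion.
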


\subsection{Concluding Refutational Completeness}
\label{sec:completeness}
Finally, we establish a connection between the (abstract)
relation $\lredrt$ on positive existential terms and the resolution
proof system on clauses. We define
a function $\mu$ assigning natural numbers or $\omega$ to positive
existential formulas $E$ by
\begin{align*}
  \mu(E)&\defeq\min\left(\{\omega\}\cup\{m\mid E\lred{m} F\text{ and }\alpha\force F\text{
          for some }\alpha\}\right)
\end{align*}
which is extended to  non-empty sets $\Set'$
of HoCHCs by $\mu(\Set')\defeq\min\{\mu(\posex(G))\mid G\in\Set\}$. 

We can use the resolution proof system to derive a set of HoCHCs $\Set''$ with a strictly smaller measure by simulating a $\lred{1}$-reduction step:
\begin{restatable}{proposition}{mudec}
  \label{lem:mudec}
  Let $\Set'\supseteq\Set$ be a set of HoCHCs satisfying $0<\mu(\Set')<\omega$. 
  Then there exists $\Set''\supseteq\Set$ satisfying $\Set'\Res\Set''$ and $\mu(\Set'')<\mu(\Set')$.
\end{restatable}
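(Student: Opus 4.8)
The plan is to unpack the definition of $\mu(\Set')$ to locate a goal clause $G\in\Set'$ with $\mu(\posex(G))=\mu(\Set')=:k$, so $0<k<\omega$. By definition of $\mu$ on formulas there is a positive existential formula $F$ with $\posex(G)\lred{k}F$ and $\alpha\force F$ for some valuation $\alpha$. Since $k>0$, this leftmost reduction is non-trivial; I would peel off the \emph{first} step $\posex(G)\lred{1}F_1$ and note $F_1\lred{k-1}F$, so $\mu(F_1)\le k-1<k$. The heart of the argument is then to produce, using a single resolution or $\beta$-reduction step of the proof system, a set $\Set''\supseteq\Set$ with $\posex(G')\lredrt F_1$ (up to the correspondence between $\posex(G')$ and $F_1$) for some $G'\in\Set''$, whence $\mu(\Set'')\le\mu(\posex(G'))\le\mu(F_1)<k=\mu(\Set')$.

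The case analysis is driven by the shape of the leftmost redex contracted in $\posex(G)\lred{1}F_1$, following the rules in Figure~\ref{fig:lred}. Writing $G=\bigvee_i\neg A_i$ so that $\posex(G)$ is (essentially) $\bigwedge_i A_i$ with the leftmost position of each $A_i$ exposed, the redex sits inside some atom $A_{i_0}$:
\begin{itemize}
\item If $A_{i_0}=R\,\overline M$ with $R\in\Sigma'\setminus\Sigma$, the step is $R\,\overline M\lred{1}(\lambda\overline x_R\ldotp F_R)\overline M$; since $\neg F_R\lor R\,\overline x_R\in\Prgm_\Set$ is (the program form of) a definite clause in $\Set$, I apply the \textbf{Resolution} rule to $\neg R\,\overline M\lor G''$ and this definite clause, obtaining a goal clause $G'$ whose corresponding positive existential formula is exactly $F_1$ (after the substitution $[\overline M/\overline x_R]$, matching the $\beta\upsilon$-contraction).
\item If $A_{i_0}=(\lambda x\ldotp L)M\,\overline N$, the step is the $\beta$-contraction $(\lambda x\ldotp L)M\,\overline N\lred{1}L[M/x]\,\overline N$; I apply the \textbf{$\beta$-Reduction} rule to $\neg(\lambda x\ldotp L)M\,\overline N\lor G''$ to get $G'$ with $\posex(G')$ corresponding to $F_1$.
\end{itemize}
In both cases $\Set''\defeq\Set'\cup\{G'\}\supseteq\Set$, $\Set'\Res\Set''$, and $\mu(\Set'')\le\mu(\posex(G'))\le\mu(F_1)\le k-1<\mu(\Set')$, completing the argument. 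The cases where the leftmost reduction acts under $\land$, $\lor$ or $\exists$ (the congruence rules of Figure~\ref{fig:lred}) are handled by pushing the analysis into the corresponding conjunct/disjunct/body: the leftmost redex of $\posex(G)$ is still a redex inside a single atom of $G$, so one of the two atomic cases above applies.

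The main obstacle is bookkeeping: making precise that a $\lred{1}$-step on $\posex(G)$ corresponds \emph{exactly} to one application of Resolution or $\beta$-Reduction on the clause $G$, modulo the translations $\posex(\cdot)$ and $\Prgm_\Set$ and modulo renaming of the (implicitly universally quantified) variables. One must check that the substitution $[\overline M/\overline x_R]$ generated by Resolution agrees with the $\upsilon$-contraction $R\,\overline M\bured(\lambda\overline x_R\ldotp F_R)\overline M$ followed by the ensuing $\beta$-steps down to the redex that $\lred{1}$ actually contracts --- i.e.\ that the ``leftmost'' discipline on the term side lines up with the syntactic discipline of the rules. The freeness condition $\free(F_R)\subseteq\free(R\,\overline x_R)$ from Definition~\ref{def:program} and Barendregt's variable convention ensure no variable capture occurs, so this matching is routine but needs to be stated carefully. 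Once the correspondence is set up, the strict decrease of $\mu$ is immediate from the definition of $\mu$ as a minimum over reduction lengths.
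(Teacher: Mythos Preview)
Your overall strategy matches the paper's, and the $\beta$-reduction case is fine. The gap is in the Resolution case.

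You write that ``$\neg F_R\lor R\,\overline x_R\in\Prgm_\Set$ is (the program form of) a definite clause in $\Set$'' and then apply the \textbf{Resolution} rule to it. But $\neg F_R\lor R\,\overline x_R$ is \emph{not} a HoCHC in $\Set'$: by construction of $\Prgm_\Set$, the body $F_R$ is the \emph{disjunction} $\posex(G_{R,1},\overline x_R)\lor\cdots\lor\posex(G_{R,k},\overline x_R)$ over \emph{all} definite clauses $G_{R,j}\lor R\,\overline x_R\in\Set$ with head $R$. The Resolution rule of the proof system, however, operates on a single definite HoCHC $G'\lor R\,\overline x$. So after the $\upsilon$-step $R\,\overline M\lred{1}(\lambda\overline x_R\ldotp F_R)\,\overline M$, there is no direct correspondence with one Resolution step: the resulting $F_1$ contains a $k$-fold disjunction where the resolvent would contain only one of its disjuncts.

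What the paper does (and what you are missing) is: continue the leftmost reduction through the $\beta$-steps $(\lambda\overline x_R\ldotp F_R)\,\overline M\lredrt F_R[\overline M/\overline x_R]$, then use the witness $\alpha\force F$ to \emph{select} a disjunct $j$ (since $\alpha\force$ a disjunction forces one disjunct), and only then apply Resolution with the specific clause $G_{R,j}\lor R\,\overline x_R\in\Set\subseteq\Set'$. The remaining $\lred{}$-length from that disjunct is bounded by $m_1-1$, and the bookkeeping with the newly introduced existentials (the variables $\overline y_j$ of $G_{R,j}$) together with the variable-disjointness assumption yields $\mu(\posex(G'))<\mu(\Set')$. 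Your ``obstacle'' paragraph gestures at the $\beta$-steps but not at the disjunct selection, which is the step that actually connects the program-level unfolding to a single proof-system inference.
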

\begin{example}
  Consider the HoCHCs $\Set=\{\neg (x_R\geq 5)\lor R\,x_R,\neg
  R\,(x_R-5)\lor R\,x_R,\neg R\,5\}$. It holds
  that $R\,5\lred{1}(\lambda
  x_R\ldotp x_R\geq 5\lor R\,(x_R-5))\,5\lred{1}5\geq 5\lor
  R\,(5-5)$ and $\mu(\Set)=2$. Furthermore, $\Set\Res\Set\cup\{\neg 5\geq 5\}$
  and  $\mu(\Set\cup\{\neg 5\geq 5\})=0$.
\end{example}
Combining everything, we finally obtain:
\completeness*
\begin{proof}
  By \cref{lem:modp}, $\As_\Prgm\models D$ for all definite clauses
  $D\in\Set$. Since $\Set$ is $(\As,\Hf)$-unsatisfiable there exists a goal
  clause $G\in\Set$ satisfying $\As_\Prgm\not\models G$. By
  \cref{thm:contmain} there exists $n\in\omega$ such that
  $\As_n\not\models G$. Let $F_n$ be such that $\posex(G)\pred^n
  F_n$ \changed[dw]{(where $\pred^n$ is the $n$-fold composition of
    $\pred$)}. By \cref{lem:parallelcorr}, $\As_0,\alpha\models F_n$
  (for any $\alpha$ as $F_n$ is closed). Let $F_n'$ be the $\beta$-normal form of $F_n$. By \cref{cor:sthi,lem:sdecompose} there exists $F'$ such that $\posex(G)\lredrt F'$ and $\alpha\force F'$. Consequently, $\mu(\Set)<\omega$. By \cref{lem:mudec} there exists $\Set'\supseteq\Set$ satisfying $\Set\Res^* \Set'$ and $\mu(\Set')=0$. 

  Hence, there exists $G\in\Set'$ and $\alpha$ such that $\posex(G)\lred{0} F'$ and
  $\alpha\force F'$. Clearly, this implies $F'=\posex(G)$, and by \cref{rem:refute}, $\Set\Res^* \Set'\Res\{\bot\}\cup\Set'$.
\end{proof}


\subsection{Compactness of HoCHC}
\label{sec:compact-HoCHC}
The reason why we restrict $\Set$ to be finite is to achieve correspondence with programs (\cref{def:program}), which are finite expressions. 
If we simply extend programs with infinitary disjunctions (but keep HoCHCs finitary) we can carry out exactly the same reasoning to derive that also every \emph{infinite}, $(\As,\Hf)$-unsatisfiable set of HoCHCs can be refuted in the proof system.
Consequently: 

\begin{theorem}[Compactness]
\label{thm:compactness}
For every $(\As,\Hf)$-unsatisfiable set $\Set$ of HoCHCs there exists a finite subset $\Set'\subseteq\Set$ which is $(\As,\Hf)$-unsatisfiable.
\end{theorem}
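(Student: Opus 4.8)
The plan is to derive \cref{thm:compactness} from refutational completeness by the classical ``a refutation is a finite object'' argument, after first removing the finiteness hypothesis on $\Set$. As indicated in the remark above, I would re-run the development of \cref{sec:resolution} with \emph{infinitary} programs. Given an arbitrary (possibly infinite) set $\Set$ of HoCHCs, associate with it a generalised program $\Prgm_\Set$ as follows: for each $R\in\Sigma'\setminus\Sigma$, let $B_i$ $(i\in I_R)$ enumerate the bodies of the definite clauses of $\Set$ with head $R$ --- each $B_i$ a finitary positive existential formula with $\free(B_i)\subseteq\{\overline x_R\}$, exactly as in the finitary construction of $\Prgm_\Set$ --- and put $\neg F_R\lor R\,\overline x_R\in\Prgm_\Set$ with $F_R\defeq\bigvee_{i\in I_R}B_i$, now possibly an \emph{infinitary} disjunction. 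As before, $\Bs\models\{D\in\Set\mid D\text{ definite}\}$ iff $\Bs\models\Prgm_\Set$, while the HoCHCs themselves stay finitary.

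I would then check that the entire development of \cref{sec:resolution} survives this change. Since $\Hf$ is complete, $T_{\Prgm_\Set}$ remains well-defined: $R^{T_{\Prgm_\Set}(\Bs)}=\sinti\Bs{\lambda\overline x_R\ldotp F_R}$ is the pointwise $\bigsqcup$ of the maps $\sinti\Bs{\lambda\overline x_R\ldotp B_i}\in\sinti\Hf{\overline\tau\to o}$ and hence lies in $\sinti\Hf{\overline\tau\to o}$. An infinitary disjunction $\bigvee_{i\in I}M_i$ denotes $\bigsqcup_{i\in I}\sinti\Bs{M_i}(\alpha)$ and therefore behaves exactly like $\hexists$ over the index set $I$; consequently the inductions establishing quasi-monotonicity (\cref{lem:termmon}) and quasi-continuity (\cref{lem:contmain}) of positive existential terms each acquire one further, completely analogous case --- the inductive step for the $o$-typed term $\bigvee_i M_i$ follows from the steps for the $M_i$ just as the step for $\hexists_\tau$ does (cf.\ \cref{ex:qmlog2}). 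Hence \cref{lem:modp} holds for $\Prgm\defeq\Prgm_\Set$ and, crucially, so does \cref{thm:contmain}; likewise the syntactic machinery of \cref{sec:resolution} --- $\pred$, $\bured$, the standardisation results (\cref{lem:elimbum,cor:sthi}), $\force$, $\mu$, \cref{lem:parallelcorr,lem:sdecompose} and \cref{lem:mudec} --- carries over to infinitary positive existential formulas without change, the point being that every leftmost-reduction sequence and every $\force$-derivation is a \emph{finite} tree and so inspects only finitely many disjuncts. Re-running the proof of \cref{thm:completeness} verbatim then gives, for an \emph{arbitrary} set $\Set$: if $\Set$ is $(\As,\Hf)$-unsatisfiable, then $\Set\Res^*\{\bot\}\cup\Set'$ for some $\Set'$.

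With infinitary completeness in hand, I would finish as follows. A refutation $\Set\Res^*\{\bot\}\cup\Set'$ is, by definition, a \emph{finite} chain of $\Res$-steps, each operating on finitely many clauses of the current set. Let $\Set_0$ consist of those clauses of $\Set$ that occur in this chain as a premise of some step without having been produced by an earlier step; then $\Set_0\subseteq\Set$ is finite, and the very same chain of $\Res$-steps witnesses $\Set_0\Res^*\{\bot\}\cup\Set_0''$ for some $\Set_0''$. By \cref{prop:soundness} --- whose proof requires neither finiteness of the premise set nor completeness of $\Hf$ --- $\Set_0$ is $(\As,\Hf)$-unsatisfiable, which is the required finite subset.

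I expect the only genuinely non-routine point to be the verification underlying the second paragraph: that the inductive proofs of quasi-monotonicity and --- above all --- quasi-continuity of positive existential terms, and the standardisation argument and \cref{lem:mudec}, all tolerate an infinitely-branching disjunction. This is however pure bookkeeping: $\bigvee_{i\in I}$ is quasi-continuous for exactly the reason $\hexists_\tau$ is (truth of the disjunction is witnessed by truth of a single disjunct, to which one then passes), and any particular reduction sequence or $\force$-derivation stays finite, so no new idea is needed. (An alternative, less self-contained route would be to combine the soundness and completeness of the 1st-order translation of \cref{sec:appenc} with the compactness theorem for 1st-order logic.)
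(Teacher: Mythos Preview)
Your proposal is correct and follows essentially the same approach as the paper: the paper's own argument (the paragraph preceding \cref{thm:compactness}) simply states that one extends programs to allow infinitary disjunctions while keeping HoCHCs finitary, re-runs the completeness proof verbatim, and then extracts a finite unsatisfiable subset from the finite refutation. You have in fact spelled out in considerably more detail than the paper which lemmas need the additional infinitary-$\lor$ case and why each survives (quasi-continuity via the $\hexists$-style argument, finiteness of $\force$-derivations and leftmost reductions, the single-disjunct selection in \cref{lem:mudec}), so your write-up is a faithful and more explicit rendering of the paper's intended proof.
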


\section{Semantic Invariance}
\label{sec:equiv}
\label{SEC:EQUIV}
\cite{BOR18} details an explicit translation between standard and
monotone models of HoCHCs, thus yielding the equivalence of
$\As$-standard- and $\As$-monotone-satisfiability.

As a consequence of
the Completeness \cref{thm:completeness} for the proof system, $(\As,\Hf)$-unsatisfiability for
\emph{any} complete frame $\Hf$ implies the existence of a refutation, which in turn
entails $(\As,\Hf')$-unsatisfiability for \emph{any} frame $\Hf'$ by
the Soundness \cref{prop:soundness}.

Therefore, exploiting \cref{ex:Henkin}, we obtain an
equivalence result encompassing a much wider class of semantics:
\begin{theorem}[Semantic Invariance]
  \label{thm:equivsem}
  Let $\Set$ be a set of HoCHCs. Then the following are equivalent:
  \begin{thmlist}
  \item $\Set$ is $\As$-standard-satisfiable,
  \item $\Set$ is $\As$-Henkin-satisfiable,
  \item $\Set$ is $\As$-monotone-satisfiable,
  \item $\Set$ is $\As$-continuous-satisfiable,
  \item $\Set$ is $(\As,\Hf)$-satisfiable, where $\Hf$ is a
    complete 
    frame
    expanding $\As$.
  \end{thmlist}
\end{theorem}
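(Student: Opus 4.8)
The plan is to derive the chain of equivalences purely from the Soundness and Completeness results established for the resolution proof system, together with the fact (\cref{ex:Henkin}) that $\Sf$, $\Mf$ and $\Cf$ are all complete frames. The key observation is that the proof system makes no reference whatsoever to the choice of frame: the three rules (Resolution, $\beta$-Reduction, Constraint refutation) operate on clauses, and the only place where the semantics enters is the side-condition of the Constraint refutation rule, which speaks only of satisfiability of background atoms in the fixed 1st-order structure $\As$. Hence ``$\Set \Res^* \{\bot\}\cup\Set'$ for some $\Set'$'' is a frame-independent statement about $\Set$.

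First I would reduce everything to the case of \emph{finite} $\Set$ using the Compactness \cref{thm:compactness}: a set is $(\As,\Hf)$-unsatisfiable iff some finite subset is, and the analogous statement holds for each of the named semantics; so it suffices to prove the equivalences for finite $\Set$, which is the setting of \cref{ass:setting} and under which Soundness and Completeness apply. (Strictly, one first proves the equivalences for finite sets and then lifts to arbitrary sets by compactness on both sides — unsatisfiability in one sense passes to a finite subset, which is then unsatisfiable in all senses, hence the original set is unsatisfiable in all senses.) Next, I would argue the core implication: for a complete frame $\Hf$ expanding $\As$, if $\Set$ is \emph{not} $(\As,\Hf)$-satisfiable, then by \cref{thm:completeness} there is a refutation $\Set\Res^*\{\bot\}\cup\Set'$; but then, for \emph{any} frame $\Hf'$ expanding $\As$ — complete or not — \cref{prop:soundness} (which is explicitly noted to hold even when $\Hf'$ is not complete) yields that $\Set$ is $(\As,\Hf')$-unsatisfiable. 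Taking contrapositives: $(\As,\Hf')$-satisfiability for \emph{some} frame $\Hf'$ implies $(\As,\Hf)$-satisfiability for \emph{every} complete frame $\Hf$.

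This single biconditional does all the work. $\As$-Henkin-satisfiability means $(\As,\Hf')$-satisfiability for some frame $\Hf'$; by the above this is equivalent to $(\As,\Hf)$-satisfiability for every complete frame $\Hf$, which immediately gives the equivalence of (i), (ii), (v), and also (iii) and (iv) since $\Mf$ and $\Cf$ are complete frames expanding $\As$ by \cref{ex:Henkin} (that $\Mf,\Cf$ expand $\As$ follows because $\As$ is 1st-order, so its constants are interpreted by monotone/continuous functions of 1st-order type — indeed functions on the discrete order $\sinti\Hf\iota$ are automatically monotone and continuous). Concretely I would display the implications as a cycle, e.g.\ (i)$\Rightarrow$(ii) trivially (standard satisfiability implies Henkin satisfiability, as already observed after \cref{def:sat}), (ii)$\Rightarrow$(v) and (v)$\Rightarrow$(i) via the Soundness/Completeness argument above with $\Hf=\Sf$, and similarly fold in (iii) and (iv) by instantiating (v) at $\Hf=\Mf$ and $\Hf=\Cf$.

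The main obstacle, such as it is, is purely bookkeeping: being careful that Completeness is invoked only for a complete frame (so its hypotheses are met) while Soundness is invoked for the — possibly non-complete — target frame, and that the compactness reduction is applied correctly in both directions so that the equivalences transfer from finite to arbitrary $\Set$. There is no genuine semantic content beyond what Soundness, Completeness and \cref{ex:Henkin} already supply; the point of the theorem is precisely that the refutational completeness proof, having been carried out once for an arbitrary complete frame, collapses the whole landscape of semantics for HoCHC. I would also remark, as the paper does, that this gives an alternative, translation-free proof of the standard/monotone/continuous equivalence from \cite{BOR18}, and extends it to Henkin semantics.
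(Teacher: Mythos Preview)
Your proposal is correct and follows essentially the same approach as the paper: derive the equivalences from Soundness (\cref{prop:soundness}, valid for arbitrary frames) and Completeness (\cref{thm:completeness}, valid for complete frames), together with \cref{ex:Henkin}. The only minor difference is that you reduce to finite $\Set$ via the Compactness \cref{thm:compactness}, whereas the paper simply relies on the remark in \cref{sec:compact-HoCHC} that Completeness itself extends to infinite $\Set$; either route is fine, and your extra care about $\Mf,\Cf$ expanding $\As$ is a detail the paper leaves implicit.
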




Thus, we call a set of $\Set$ of HoCHCs \emph{$\As$-satisfiable} if it satisfies any of the equivalent conditions of \cref{thm:equivsem}.
\section{Compact Theories}
\label{sec:compact-theories}


  In this section, we extend our results to background theories (over $\Sigma$)
  with a \emph{set} $\Ad$ of models (i.e.\ $\Sigma$-structures), calling a set of HoCHCs \emph{$\Ad$-satisfiable} if
    it is $\As$-satisfiable for some $\As\in\Ad$. Otherwise it is \emph{$\Ad$-unsatisfiable}.

  Observe that the Completeness \cref{thm:completeness} critically relies on the
  observation that $\As$-unsatisfiability can be traced back to
  the failure of \emph{a single} goal clause of background atoms
  (manifested in the constraint refutation rule). Therefore, it is
  natural to generalise constraint refutation to a rule refuting
  \emph{sets} of $\Ad$-unsatisfiable\footnote{Note that for a set
    $\Set$ of goal
    clauses of background atoms,
    $\As$-satisfiability is in essence not about the existence of
    $\As\in\Ad$ and an \emph{expansion} $\Bs$ of $\As$ such that $\Bs\models\Set$ but only about the existence of
    $\As\in\Ad$ such that $\As\models\Set$.}
  goal clauses of background atoms:

\medskip

\inferthrlics{Comp. const.
  refutation}{G_1\lor\biglor_{i=1}^{m_1}\neg\phi_{1,i}}{\hspace{-0.75cm}\ldots
  \hspace{-0.75cm}}{G_n\lor\biglor_{i=1}^{m_n}\neg\phi_{n,i}}{\bot}{provided
  that each atom in each $G_i$ has the form $x\,\overline M$, each $\phi_{i,j}$ is a background atom
  and $\{\neg\phi_{j,1}\lor\cdots\lor\neg\phi_{j,m_j}\mid 1\leq
  j\leq n\}$ is $\Ad$-unsatisfiable.}

\smallskip
\noindent and let $\Resp$ be defined accordingly.
However, to match the rule's finitary nature, $\Ad$ needs to be restricted a little:
\begin{definition}
\label{def:compact}
  A set $\Ad$ of 1st-order $\Sigma$-structures is \emph{compact} if for all $\Ad$-unsatisfiable sets $\Set$ of goal clauses of background atoms there exists a finite $\Set' \subseteq\Set$ which is $\Ad$-unsatisfiable.
\end{definition}




In particular, every finite $\Ad$ is compact. Then we obtain:

\begin{restatable}[Soundness and Completeness]{theorem}{compcomp}
  \label{thm:compcomp}
  Let $\Ad$ be a compact set of $\Sigma$-structures
  and $\Set$ be a set of HoCHCs. Then
  $\Set$ is $\Ad$-unsatisfiable iff $\Set\Resp^* \Set'\cup\{\bot\}$ for some $\Set'$.
\end{restatable}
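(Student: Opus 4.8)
The plan is to mirror the proofs of \Cref{prop:soundness} and \Cref{thm:completeness}, showing that the only place the single-model assumption was used is in the constraint refutation rule, which has now been replaced by the compact constraint refutation rule. First I would treat soundness: suppose $\Set\Resp^*\Set'\cup\{\bot\}$. The only rule that differs from the proof system of \Cref{sec:resolution} is compact constraint refutation, so it suffices to check that this rule is sound in the sense that whenever $\Set''\Resp\Set''\cup\{\bot\}$ by an application of it, $\Set''$ is $\Ad$-unsatisfiable. So fix $\As\in\Ad$ and an expansion $\Bs$ of $\As$. Since the premises $G_j\lor\biglor_i\neg\phi_{j,i}$ all lie in $\Set''$, if $\Bs\models\Set''$ then $\Bs$ satisfies each of them. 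Because $\{\neg\phi_{j,1}\lor\cdots\lor\neg\phi_{j,m_j}\mid 1\le j\le n\}$ is $\Ad$-unsatisfiable, there is some $j$ with $\As\not\models\neg\phi_{j,1}\lor\cdots\lor\neg\phi_{j,m_j}$, i.e.\ for every valuation $\alpha$ we have $\As,\alpha\models\phi_{j,1}\land\cdots\land\phi_{j,m_j}$. Then, exactly as in the proof sketch of \Cref{prop:soundness}, modify $\alpha$ to send each (relational-typed) variable occurring in $G_j$ to $\top_\rho$; since these variables cannot occur in the background atoms $\phi_{j,i}$, the modified valuation still satisfies all the $\phi_{j,i}$ while falsifying every disjunct of $G_j$, so $\Bs$ falsifies the premise $G_j\lor\biglor_i\neg\phi_{j,i}$, a contradiction. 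Hence no expansion of any $\As\in\Ad$ models $\Set''$, so $\Set''$—and therefore $\Set$—is $\Ad$-unsatisfiable.

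For completeness, suppose $\Set$ is $\Ad$-unsatisfiable. Since the clauses in $\Set$ contain only finitely many background atoms per clause and (by considering the corresponding sets of goal clauses of background atoms) compactness of $\Ad$ reduces the problem to a finite sub-obligation, I would first argue we may assume $\Set$ is finite: by \Cref{thm:compactness}-style reasoning applied per model this is delicate, so instead I would run the argument directly. For each $\As\in\Ad$, $\Set$ is $\As$-unsatisfiable, so by the proof of \Cref{thm:completeness} there is a refutation $\Set\Res_\As^*\{\bot\}\cup\Set_\As'$, where $\Res_\As$ uses ordinary constraint refutation relative to $\As$. Inspecting the completeness proof, the resolution and $\beta$-reduction steps do not mention $\As$ at all; only the \emph{final} constraint refutation step does, and it is applied to a goal clause of the form $\biglor_i\neg x_i\,\overline M_i\lor\biglor_j\neg\phi_j$ with $\As,\alpha\models\bigwedge_j\phi_j$ for some $\alpha$. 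So: run the model-theoretic/syntactic machinery (\Cref{lem:modp}, \Cref{thm:contmain}, \Cref{lem:parallelcorr}, \Cref{cor:sthi}, \Cref{lem:sdecompose}, \Cref{lem:mudec}) uniformly to derive from $\Set$, using only resolution and $\beta$-reduction, a set $\Set'\supseteq\Set$ containing, for each $\As\in\Ad$, a goal clause $G_\As$ that $\force$-reduces (after some $\lred{}$ steps recorded in $\mu$) to a formula witnessed by some valuation $\alpha_\As$ with $\As,\alpha_\As\models\bigwedge$ of the relevant background atoms.

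The key new observation is that the \emph{set} $\mathcal{G}$ of all goal clauses of background atoms of the form $\biglor_j\neg\phi_j$ arising as leftmost reducts of goal clauses derivable from $\Set$ (over all $\As\in\Ad$) is $\Ad$-unsatisfiable: if some $\As\in\Ad$ satisfied all of them, then by the per-model completeness argument $\Set$ would be $\As$-satisfiable, contradicting $\Ad$-unsatisfiability. By compactness of $\Ad$ there is a \emph{finite} $\Ad$-unsatisfiable subset $\{\neg\phi_{j,1}\lor\cdots\lor\neg\phi_{j,m_j}\mid 1\le j\le n\}\subseteq\mathcal G$. Each of these $n$ clauses comes from a goal clause derivable from $\Set$ by resolution and $\beta$-reduction (a $\lred{}$-sequence); using \Cref{lem:mudec}-style simulation (which is purely about resolution and $\beta$-reduction and hence model-independent) we derive, in finitely many $\Res$-steps, a set $\Set''\supseteq\Set$ that contains, for each $j$, a goal clause $G_j\lor\biglor_{i=1}^{m_j}\neg\phi_{j,i}$ in which every foreground atom of $G_j$ has the form $x\,\overline M$. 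Now a single application of the compact constraint refutation rule to these $n$ clauses yields $\bot$, so $\Set\Resp^*\Set''\Resp\{\bot\}\cup\Set''$.

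The main obstacle I expect is the bookkeeping needed to ensure that the finitely many "target" goal clauses of background atoms identified by compactness of $\Ad$ can all be \emph{simultaneously} exposed, as leftmost reducts with foreground part of the shape $x\,\overline M$, by a single finite $\Res$-derivation: the per-model argument produces each one in isolation, and one must check that the resolution/$\beta$-reduction steps needed for different $j$ do not interfere (they do not, since they only add clauses and never remove them, and the measure-decreasing simulation of \Cref{lem:mudec} can be iterated), and that \Cref{lem:sdecompose} and \Cref{rem:refute} apply uniformly to put each $G_j$ into the required normal form independently of the model $\As$. Once this uniformity is in place, the compact constraint refutation rule discharges all $n$ obligations at once, and the theorem follows; the converse (soundness) direction is the straightforward part sketched above.
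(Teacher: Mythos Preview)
Your approach is essentially the same as the paper's: for each $\As\in\Ad$ invoke the single-model completeness theorem to obtain a derivable goal clause $G_\As\lor\biglor_i\neg\phi_{\As,i}$ with $\As\not\models\biglor_i\neg\phi_{\As,i}$, note that the resulting set of background goal clauses is $\Ad$-unsatisfiable, apply compactness of $\Ad$ to extract finitely many, combine the (clause-monotone) derivations, and fire the compact constraint refutation rule once. The paper's proof is just a terser version of exactly this; your worry about ``simultaneously exposing'' the finitely many targets is resolved precisely as you suspect, since $\Res$ only adds clauses.

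One small slip in your soundness sketch: from $\As\not\models\neg\phi_{j,1}\lor\cdots\lor\neg\phi_{j,m_j}$ you conclude that \emph{every} valuation satisfies $\bigwedge_i\phi_{j,i}$, but it only gives you \emph{some} valuation $\alpha$ with $\As,\alpha\models\bigwedge_i\phi_{j,i}$. This is harmless for the argument (one witness is all you need before modifying relational variables to $\top_\rho$), but the quantifier should be existential.
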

As an interesting special case, this shows that the proof system is
also sound and complete in the unconstrained setting:
by the compactness theorem for 1st-order logic the set of 1st-order $\Sigma$-structures (possibly interpreting
\mbox{(in-)equality} symbols as (non-)identity)
is compact. 
Consequently, there does not exist a $\Sigma'$-structure $\Bs$
(interpreting (in-)equality as \mbox{(non-)identity}) satisfying $\Bs\models\Set$ iff $\Set$ is refutable.

\section{1st-order Translation}
\label{sec:appenc}
\label{SEC:APPENC}

\begin{figure*}[!t]
  \label{fig:appenc}
  \centering
  \begin{align*}
    \lfloor D_1\rfloor&=\neg(z=x+y)\lor H\,(\app\, (\app\, (\app\,\Add\,x)\,y)\,z)\\
    \lfloor D_2\rfloor&=\neg(n\leq 0)\lor\neg(s=x)\lor H\,(\app\,(\app\, (\app\, (\app\,\Iter\,f)\,s)\,n)\,x)\\
    \lfloor D_3\rfloor&=\neg(n>0)\lor\neg H\,(\app\,(\app\, (\app\, (\app\,\Iter\,f)\,s)\,(n-1))\,y)\lor\neg H\,(\app\,(\app\,(\app\,f\,n)\,y)\,x)\lor H\,(\app\,(\app\, (\app\, (\app\,\Iter\,f)\,s)\,n)\,x)\\
    \lfloor G\rfloor&=\neg(n\geq 1)\lor\neg H\,(\app\,(\app\, (\app\, (\app\,\Iter\,\Add)\,n)\,n)\,x)\lor\neg(x\leq n+n)\\
    \Comp_{\iota^3\to o}&=H\,(\app\,(\app\,(\app c_{\iota^3\to o}\, x_1)\,x_2)\,x_3)
  \end{align*}
  \caption{1st-order translation of the set of HoCHCs of \cref{ex:HoCHC}.}
  \label{fig:appenc}
\end{figure*}

It is folklore that there is a 1st-order translation of
higher-order logic which is sound and complete for Henkin semantics
(see e.g.\ \cite{BD83,K91,BKPU16}).
The essence of the technique is to replace all symbols by constants
(of a base type) and encode application using dedicated binary
function symbols.

For the reasons discussed in the introduction this
translation is however not in
general complete for standard semantics.
In this section, we present a particularly simple 1st-order
translation of HoCHC which is sound and complete even for standard
semantics.
Fortunately, the target fragment is still semi-decidable.


We do not need to consider HoCHCs containing $\lambda$-abstractions
because
they can be eliminated by a logical counterpart of
$\lambda$-lifting \cite{J85}
(i.e.\ introducing new relational
symbols and adding appropriate ``definitions'' for them\ifproceedings\ \cite{OW19}\else, see \cref{sec:elimnest,thm:ellambdahochc})\fi.
This constitutes a considerable generalisation
of the ``polarity-dependent renaming'' for 1st-order logic
\cite{PG86,NW01}.

Therefore, the following is without loss of generality:
\begin{assumption}
  Henceforth, we fix a finite set $\Set$ of HoCHCs which does not contain
  $\lambda$-abstractions and a set $\Ad$ of 1st-order $\Sigma$-structures.
\end{assumption}



Let $\btypes=\{\iota\}\cup\{\lfloor\rho\rfloor\mid\rho\text{
  relational}\}$ (and we set
$\lfloor\iota^n\to\iota\rfloor\defeq\iota^n\to\iota$). Clearly, we can
regard $\Sigma$ and each $\As\in\Ad$ as a 1st-order signature and structure, respectively, over the
extended set of types of individuals.

We assume a type environment $\lfloor\Delta\rfloor$ such that for
$x\from\tau\in\Delta$, $\lfloor\Delta\rfloor(x)=\lfloor\tau\rfloor$
and define $\lfloor\Sigma'\rfloor$ to be the following 1st-order
extension of $\Sigma$:
\begin{align*}
  \Sigma&\cup\{c_R\from\lfloor\rho\rfloor\mid
  {R\from\rho}\in{\Sigma'\setminus\Sigma}\}\\
                                       &\cup\{c_\rho\from\lfloor\rho\rfloor\mid\rho\text{ relational}\}\\
                                       &\cup\{\app_{\tau,\rho}\from\lfloor\tau\to\rho\rfloor\to\lfloor\tau\rfloor\to\lfloor\rho\rfloor\mid\tau\to\rho\text{
                                         relational}\}\\
                                       &\cup\{H\from\lfloor o\rfloor\to o\}
\end{align*}
To reduce clutter, we often omit the subscripts from $\app$.


Intuitively, $\app$ encodes application, relational symbols $R\in\Sigma'\setminus\Sigma$ become
constants $c_R$, 
$H$ maps the ``bogus booleans``
$\lfloor o\rfloor$ to $o$
and the following \emph{comprehension axiom} $\Comp_\rho$ (for
relational $\rho=\tau_1\to\cdots\to\tau_n\to o$) asserts the existence of an element (the interpretation of $c_\rho$) corresponding to $\top_\rho$:
\begin{align*}
   \Comp_\rho \defeq H\, (\app\, (\cdots(\app\,(\app\,c_\rho\,x_1)\,x_2)\cdots )\,x_n)
 \end{align*}
 where the $x_i$ are distinct variables of type $\lfloor\tau_i \rfloor$.

For a $\Sigma'$-term $M$ containing neither logical symbols nor $\lambda$-abstractions, we define $\lfloor M\rfloor'$ by structural recursion: 
\begin{align*}
  \lfloor x\rfloor'&\defeq x\\
  \lfloor R\rfloor'&\defeq c_R&\text{if }R\in\Sigma'\setminus\Sigma\\
  \lfloor c\,\overline N\rfloor'&\defeq c\,\overline N &\text{if $c\in\Sigma$}\\
  \lfloor M\,\overline N\,N'\rfloor'&\defeq \app\,\lfloor M\,\overline N\rfloor'\,\lfloor N'\rfloor'&\text{if $M\nin\Sigma$}
\end{align*}
Thus, terms of the background theory are unchanged by $\lfloor \cdot
\rfloor'$ and
by \cref{rem:isimple}, for each $\Sigma'$-term
$\Delta\vdash M\from\sigma$
which is not a background atom,
$\lfloor\Delta\rfloor\vdash\lfloor
M\rfloor'\from\lfloor\sigma\rfloor$.
The following operator $\lfloor\cdot\rfloor$ ensures that also foreground
atoms have type $o$ (not $\lfloor o\rfloor$)
  \[
\lfloor A \rfloor \defeq
\left\{
\begin{array}{ll}
  A & \hbox{if $A=c\,\overline N$ with $c \in \Sigma$}\\
  H \, \lfloor A \rfloor' & \hbox{otherwise ($A$ is a foreground atom).}
\end{array}
\right.
\]
and we lift $\lfloor\cdot\rfloor$ to HoCHCs by
\begin{align*}
  \lfloor(\neg) A_1\lor\cdots\lor(\neg) A_n\rfloor&\defeq (\neg)\lfloor A_1\rfloor\lor\cdots\lor(\neg)\lfloor A_n\rfloor
\end{align*}
Finally, for $\Set$ we set
 \begin{align*}
   \lfloor\Set\rfloor\defeq\{\lfloor C\rfloor\mid C\in\Set\}\cup\{\Comp_\rho\mid {x\from\rho} \in\Delta\text{ occurs in }\Set\}.
 \end{align*}
 Note that $\lfloor\Set\rfloor$ is a finite set of 1st-order Horn clauses\footnote{in the standard sense} of the (1st-order) language of $\lfloor \Sigma' \rfloor$.

\begin{example}[1st-order translation $\lfloor \cdot \rfloor$]
  Consider again the set $\Set$ of HoCHCs from \cref{ex:HoCHC}. 
  Applying the translation $\lfloor \cdot \rfloor$ to $\Set$ we get the 1st-order clauses in \cref{fig:appenc}.
\end{example}

For $\As\in\Ad$ and a $\Sigma'$-expansion $\Bs$ of $\As$, let
$\lfloor\Bs\rfloor$ be the 1st-order $\lfloor\Sigma\rfloor$-expansion
of $\As$ defined by
\[
  \begin{array}{c}
    \sinti{\lfloor\Bs\rfloor}{\lfloor\rho\rfloor} \defeq
    \sinti\Bs\rho\qquad c_R^{\lfloor\Bs\rfloor} \defeq R^\Bs\qquad
    c_\rho^{\lfloor\Bs\rfloor}\defeq \top_\rho\\
    \app^{\lfloor\Bs\rfloor}_{\tau,\rho'}(r)(s)\defeq r(s)\qquad H^{\lfloor\Bs\rfloor}(b)\defeq b
  \end{array}
\]
for relational $\rho$ and $\tau\to\rho'$,
$R\in\Sigma'\setminus\Sigma$, $r\in\sinti{\lfloor\Bs\rfloor}{\lfloor\tau\to\rho'\rfloor}$,
     $s\in\sinti{\lfloor\Bs\rfloor}{\lfloor\tau\rfloor}$ and
     $b\in\sinti{\lfloor\Bs\rfloor}{\lfloor o\rfloor}=\bool$.
   It is easy to see that $\Bs\models\Set$ implies $\lfloor\Bs\rfloor\models\lfloor\Set\rfloor$. Consequently:
   
 \begin{proposition}
   \label{prop:folpreservesat}
   If $\Set$ is $\Ad$-satisfiable then $\lfloor\Set\rfloor$ is $\Ad$-satisfiable.
 \end{proposition}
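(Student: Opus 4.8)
The plan is to lift the model witnessing $\Ad$-satisfiability of $\Set$ through the translation $\lfloor\cdot\rfloor$. So suppose $\Set$ is $\Ad$-satisfiable: by \cref{def:sat} and \cref{thm:equivsem} there are $\As\in\Ad$, a frame $\Hf$ expanding $\As$, and a $(\Sigma',\Hf)$-expansion $\Bs$ of $\As$ with $\Bs\models\Set$. I claim the 1st-order $\lfloor\Sigma'\rfloor$-expansion $\lfloor\Bs\rfloor$ of $\As$ defined above satisfies $\lfloor\Set\rfloor$; since $\As\in\Ad$, this yields $\Ad$-satisfiability of $\lfloor\Set\rfloor$. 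Note first that $\lfloor\Bs\rfloor$ is well-defined: putting $c_\rho^{\lfloor\Bs\rfloor}\defeq\top_\rho$ is legitimate precisely because $\top_\rho\in\sinti\Hf\rho$ for every relational $\rho$, which is the consequence of the comprehension axiom for frames recorded just after its statement; likewise $\app^{\lfloor\Bs\rfloor}(r)(s)\defeq r(s)$ is well-typed since $\sinti\Bs{\tau\to\rho'}\subseteq[\sinti\Bs\tau\to\sinti\Bs{\rho'}]$.

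The heart of the argument is a denotation-preservation lemma: for every $\Sigma'$-term $M$ containing neither logical symbols nor $\lambda$-abstractions, and every $(\Delta,\Hf)$-valuation $\alpha$ — which, since $\sinti{\lfloor\Bs\rfloor}{\lfloor\tau\rfloor}=\sinti\Bs\tau$ for all argument types $\tau$, is simultaneously a valuation for $\lfloor\Delta\rfloor$ over $\lfloor\Bs\rfloor$ — one has $\sinti{\lfloor\Bs\rfloor}{\lfloor M\rfloor'}(\alpha)=\sinti\Bs M(\alpha)$. I would prove this by structural induction following the clauses defining $\lfloor\cdot\rfloor'$: the cases $M=x$ and $M=R\in\Sigma'\setminus\Sigma$ are immediate from $\alpha(x)=\alpha(x)$ and $c_R^{\lfloor\Bs\rfloor}=R^\Bs$; the case $M=c\,\overline N$ with $c\in\Sigma$ is immediate too, since the term is left unchanged and, by \cref{rem:isimple} and \cref{rem:deneq}, its denotation depends only on $\As$ and the values $\alpha$ assigns to $\iota$-typed variables, hence agrees in $\Bs$ and in $\lfloor\Bs\rfloor$ (both being expansions of $\As$); and the application case $M\,\overline N\,N'$ with $M\nin\Sigma$ uses $\app^{\lfloor\Bs\rfloor}(r)(s)=r(s)$ together with the induction hypothesis on $M\,\overline N$ and on $N'$. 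From this lemma, for a foreground atom $A$ we obtain $\sinti{\lfloor\Bs\rfloor}{\lfloor A\rfloor}(\alpha)=H^{\lfloor\Bs\rfloor}\big(\sinti{\lfloor\Bs\rfloor}{\lfloor A\rfloor'}(\alpha)\big)=\sinti\Bs A(\alpha)$, using that $H^{\lfloor\Bs\rfloor}$ is the identity on $\bool$; and for a background atom $A=c\,\overline N$ we have $\lfloor A\rfloor=A$ with unchanged denotation.

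It then remains to check that $\lfloor\Bs\rfloor$ validates every clause of $\lfloor\Set\rfloor$. A clause $\lfloor C\rfloor$ coming from $C=(\neg)A_1\lor\cdots\lor(\neg)A_n\in\Set$ is the same disjunction with each $A_i$ replaced by $\lfloor A_i\rfloor$, and by the previous paragraph each literal has, under every valuation, the same truth value in $\lfloor\Bs\rfloor$ as the corresponding literal of $C$ has in $\Bs$; hence $\Bs\models C$ gives $\lfloor\Bs\rfloor\models\lfloor C\rfloor$. For a comprehension axiom $\Comp_\rho=H\,(\app\,(\cdots(\app\,(\app\,c_\rho\,x_1)\,x_2)\cdots)\,x_n)$ with $\rho=\tau_1\to\cdots\to\tau_n\to o$, unfolding $\app^{\lfloor\Bs\rfloor}$, $H^{\lfloor\Bs\rfloor}$ and $c_\rho^{\lfloor\Bs\rfloor}=\top_\rho$ shows that for any valuation sending $x_i$ to $r_i\in\sinti\Bs{\tau_i}$ the formula evaluates to $\top_\rho(r_1)\cdots(r_n)=1$, so $\lfloor\Bs\rfloor\models\Comp_\rho$. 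Therefore $\lfloor\Bs\rfloor\models\lfloor\Set\rfloor$, which completes the proof.

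I expect the only real obstacle to be notational bookkeeping rather than any genuine mathematical difficulty: precisely identifying a $(\Delta,\Hf)$-valuation with a valuation for $\lfloor\Delta\rfloor$ over $\lfloor\Bs\rfloor$, and in the inductive lemma keeping clean the boundary between background subterms — whose denotation must be argued to be \emph{literally} unchanged via \cref{rem:isimple} and \cref{rem:deneq} — and foreground material, where the $H$-wrapping and the $\app$-encoding are active. This is the straightforward ``soundness'' half of the correspondence between $\Set$ and $\lfloor\Set\rfloor$; the converse direction is where the substantive work of \cref{sec:appenc} lies.
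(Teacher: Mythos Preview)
Your proposal is correct and follows exactly the approach the paper takes: the paper merely remarks that ``it is easy to see that $\Bs\models\Set$ implies $\lfloor\Bs\rfloor\models\lfloor\Set\rfloor$'' and immediately concludes \cref{prop:folpreservesat}, so you have simply spelled out the details behind that one-line observation. The denotation-preservation lemma, the treatment of background versus foreground atoms, and the verification of the comprehension axioms via $c_\rho^{\lfloor\Bs\rfloor}=\top_\rho$ are precisely the routine checks the paper leaves implicit.
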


 Conversely, applications of the (higher-order) resolution rule can be matched by 1st-order resolution inferences between the corresponding translated 
 clauses. Besides, the 1st-order translation contains comprehension
 axioms $\Comp_\rho$, which complements the instantiation of relational variables
 with $\top_\rho$ in the proof of the Soundness \cref{prop:soundness}. Therefore, we obtain:
 \begin{restatable}{lemma}{follifting}
   \label{lem:follifting} Let $\Set'$ be a set of HoCHCs not containing $\lambda$-abstractions and suppose $\Set'\Respstr{\Ad}\Set'\cup\{G\}$. Then
   \begin{thmlist}
   \item $G$ does not contain $\lambda$-abstractions
   \item if $G\neq\bot$ then $\lfloor \Set'\rfloor\models\lfloor \Set'\cup\{G\}\rfloor$
   \item if $G=\bot$ then $\lfloor \Set'\rfloor$ is $\Ad$-unsatisfiable.
   \end{thmlist}
 \end{restatable}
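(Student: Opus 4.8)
The plan is a case analysis on the rule instance underlying the single step $\Set'\Respstr{\Ad}\Set'\cup\{G\}$. The first observation is that, since $\Set'$ contains no $\lambda$-abstraction, no clause of $\Set'$ has an atom of the shape $(\lambda x\ldotp L)M\,\overline N$, so the $\beta$-Reduction rule cannot fire; only the Resolution rule and the Compatible constraint refutation rule remain. In the latter case $G=\bot$, so Parts~(i) and~(ii) are immediate and only Part~(iii) requires work; in the former case $G\neq\bot$, so Part~(iii) is vacuous and the content lies in Parts~(i) and~(ii).

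For the Resolution case, write the premises as $\neg R\,\overline M\lor G_0$ and $G'\lor R\,\overline x\in\Set'$, so $G=G_0\lor G'[\overline M/\overline x]$. Part~(i) is easy: $G_0$, $G'$ and the terms $\overline M$ are (sub)expressions of $\Set'$ and hence $\lambda$-free, and substituting $\lambda$-free terms into a $\lambda$-free term never produces a $\lambda$-abstraction. For Part~(ii) I would show that the step is mirrored by one first-order binary resolution inference between $\lfloor\neg R\,\overline M\lor G_0\rfloor = \neg H\,(\lfloor R\,\overline M\rfloor')\lor\lfloor G_0\rfloor$ and $\lfloor G'\lor R\,\overline x\rfloor = \lfloor G'\rfloor\lor H\,(\lfloor R\,\overline x\rfloor')$, resolving the literal $H\,(\lfloor R\,\overline x\rfloor') = H\,(\app\,(\cdots(\app\,c_R\,x_1)\cdots)\,x_p)$ against $\neg H\,(\lfloor R\,\overline M\rfloor') = \neg H\,(\app\,(\cdots(\app\,c_R\,\lfloor M_1\rfloor')\cdots)\,\lfloor M_p\rfloor')$ via $\theta\defeq[x_1\mapsto\lfloor M_1\rfloor',\dots,x_p\mapsto\lfloor M_p\rfloor']$; this substitution is well-typed because $\lfloor\Delta\rfloor\vdash\lfloor M_i\rfloor'\from\lfloor\tau_i\rfloor$, and it is a most general unifier since the $\overline x$ are fresh. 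The resolvent is $\lfloor G_0\rfloor\lor\lfloor G'\rfloor\theta$. An auxiliary induction on term structure shows that translation commutes with such substitutions, $\lfloor N[\overline M/\overline x]\rfloor' = \lfloor N\rfloor'[\overline{\lfloor M\rfloor'}/\overline x]$, and likewise for $\lfloor\cdot\rfloor$ on atoms (background atoms contain no relational-typed variables and so are untouched on both sides), whence $\lfloor G'\rfloor\theta = \lfloor G'[\overline M/\overline x]\rfloor$ and the resolvent is exactly $\lfloor G\rfloor$. Since first-order resolution is sound and both premises lie in $\lfloor\Set'\rfloor$, we get $\lfloor\Set'\rfloor\models\lfloor G\rfloor$; and since every variable of $G$ already occurs in $\Set'$ we have $\lfloor\Set'\cup\{G\}\rfloor = \lfloor\Set'\rfloor\cup\{\lfloor G\rfloor\}$, so Part~(ii) follows.

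For the Compatible constraint refutation case, write the premises as $C_j = G_j\lor\bigvee_{i=1}^{m_j}\neg\phi_{j,i}\in\Set'$ for $1\le j\le n$, where each atom of $G_j$ has the form $x\,\overline M$ and $\{\bigvee_i\neg\phi_{j,i}\mid 1\le j\le n\}$ is $\Ad$-unsatisfiable. Suppose for contradiction that $\lfloor\Set'\rfloor$ is $\Ad$-satisfiable, say $\As\in\Ad$ and a first-order $\lfloor\Sigma'\rfloor$-expansion $\mathfrak C$ of $\As$ satisfy $\mathfrak C\models\lfloor\Set'\rfloor$. For each relational $\rho$ with a variable of type $\rho$ occurring in $\Set'$, the axiom $\Comp_\rho\in\lfloor\Set'\rfloor$ forces $c_\rho^{\mathfrak C}$ to behave like $\top_\rho$, i.e.\ $H^{\mathfrak C}(\app^{\mathfrak C}(\cdots(\app^{\mathfrak C}\,c_\rho^{\mathfrak C}\,s_1)\cdots)\,s_k)=1$ for all arguments $s_i$. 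By $\Ad$-unsatisfiability of the background clauses applied to this particular $\As$ there are a valuation $\alpha$ of the $\iota$-typed variables and an index $j_0$ with $\As,\alpha\models\bigwedge_i\phi_{j_0,i}$. Extend $\alpha$ to $\alpha'$ by sending each relational-typed variable $x\from\rho$ to $c_\rho^{\mathfrak C}$; then every disjunct of $\lfloor C_{j_0}\rfloor = \lfloor G_{j_0}\rfloor\lor\bigvee_i\neg\phi_{j_0,i}$ is falsified under $\alpha'$ --- the $\neg\phi_{j_0,i}$ by the choice of $j_0$, and each $\neg H\,(\lfloor x\,\overline M\rfloor')$ because substituting $c_\rho^{\mathfrak C}$ for the head $x$ and invoking $\Comp_\rho$ makes the atom true --- contradicting $\mathfrak C\models\lfloor C_{j_0}\rfloor$. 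Hence $\lfloor\Set'\rfloor$ is $\Ad$-unsatisfiable, which is Part~(iii); this mirrors, on the first-order side, the soundness argument for the higher-order rule, with $c_\rho^{\mathfrak C}$ playing the role that $\top_\rho$ plays there.

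The main obstacle is the bookkeeping in Part~(ii): proving that $\lfloor\cdot\rfloor'$ commutes with the substitutions in question, checking that the induced first-order unifier is well-typed and most general, and verifying that the single first-order resolvent is literally $\lfloor G\rfloor$, so that soundness of first-order resolution applies verbatim. The conceptual point of Part~(iii) --- that the comprehension axioms $\Comp_\rho$ in $\lfloor\Set'\rfloor$ supply exactly the ``top'' elements at which relational variables get instantiated in the higher-order soundness proof --- is routine once one unpacks what $\Ad$-satisfiability of the translated clause set means, in particular the reading of satisfiability for sets of goal clauses of background atoms.
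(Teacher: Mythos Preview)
Your proposal is correct and, for Parts~(i) and~(ii), essentially identical to the paper's proof: both observe that $\beta$-reduction cannot fire, both exhibit the substitution $\theta=[\lfloor M_1\rfloor'/x_1,\ldots,\lfloor M_p\rfloor'/x_p]$ as a unifier of $\lfloor R\,\overline M\rfloor'$ and $\lfloor R\,\overline x\rfloor'$, prove the commutation lemma $\lfloor N[\overline M/\overline x]\rfloor'=\lfloor N\rfloor'[\lfloor\overline M\rfloor'/\overline x]$ by structural induction, and invoke soundness of first-order resolution.

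For Part~(iii) you take a slightly different route. The paper argues proof-theoretically: for each atom $x_{j,i}\,\overline M_{j,i}$ in a premise $C_j$, it resolves $\lfloor C_j\rfloor$ against the comprehension axiom $\Comp_{\Delta(x_{j,i})}\in\lfloor\Set'\rfloor$ (using the unifier $[c_\rho/x_{j,i},\lfloor\overline M_{j,i}\rfloor'/\overline x]$), thereby deriving $\lfloor\Set'\rfloor\models\biglor_i\neg\phi_{j,i}$ for each $j$; since this set of background goal clauses is $\Ad$-unsatisfiable, so is $\lfloor\Set'\rfloor$. You instead argue semantically: assume a model $\mathfrak C$ of $\lfloor\Set'\rfloor$ over some $\As\in\Ad$, use $\Comp_\rho$ to see that $c_\rho^{\mathfrak C}$ behaves like $\top_\rho$, pick $j_0,\alpha$ with $\As,\alpha\models\bigwedge_i\phi_{j_0,i}$, and send each relational variable to the appropriate $c_\rho^{\mathfrak C}$ to falsify $\lfloor C_{j_0}\rfloor$. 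Both arguments exploit $\Comp_\rho$ in exactly the same way---to neutralise the variable-headed atoms---and are equally valid; yours is marginally more direct, while the paper's reuses the first-order-resolution machinery already set up for Part~(ii).
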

 By the Completeness \cref{thm:compcomp} we conclude:
 \begin{corollary}
  If $\Ad$ is compact and $\Set$ is $\Ad$-unsatisfiable then $\lfloor \Set\rfloor$ is $\Ad$-unsatisfiable.
\end{corollary}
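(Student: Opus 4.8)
The plan is to deduce the corollary purely by gluing together the Soundness-and-Completeness \cref{thm:compcomp} with the 1st-order lifting \cref{lem:follifting}; no new ideas are needed, and what remains is essentially a bookkeeping induction over a refutation. First, since $\Ad$ is compact and $\Set$ is $\Ad$-unsatisfiable, \cref{thm:compcomp} furnishes a finite derivation $\Set = \Set_0 \Resp \Set_1 \Resp \cdots \Resp \Set_k$ with $\bot \in \Set_k$. I would take this derivation to be of minimal length, so that $\bot \notin \Set_i$ for every $i < k$; then the last step is $\Set_{k-1} \Resp \Set_{k-1} \cup \{\bot\}$, while every earlier step has the form $\Set_i \Resp \Set_{i+1} = \Set_i \cup \{G_i\}$ for some clause $G_i \neq \bot$. (If $\bot \in \Set$ already, then $\bot \in \lfloor\Set\rfloor$ and the claim is immediate.)

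Next I would run two simultaneous inductions on $i$. Since $\Set = \Set_0$ is $\lambda$-free by the standing assumption of \cref{sec:appenc}, \cref{lem:follifting}(i) gives by induction that each $\Set_i$ (and each $G_i$) is $\lambda$-free, so that the translation $\lfloor\cdot\rfloor$ is defined at every stage. Simultaneously I claim $\lfloor\Set\rfloor \models \lfloor\Set_i\rfloor$ for all $i \le k-1$: the case $i=0$ is trivial, and for the step from $i$ to $i+1 \le k-1$ we apply \cref{lem:follifting}(ii) to $\Set_i \Resp \Set_i \cup \{G_i\}$ --- legitimate because $G_i \neq \bot$ --- obtaining $\lfloor\Set_i\rfloor \models \lfloor\Set_i \cup \{G_i\}\rfloor = \lfloor\Set_{i+1}\rfloor$, whence $\lfloor\Set\rfloor \models \lfloor\Set_{i+1}\rfloor$ by transitivity of 1st-order entailment. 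It matters here that clause (ii) compares $\lfloor\Set_i\rfloor$ with the translation of the \emph{whole} enlarged set $\Set_i \cup \{G_i\}$, not merely with $\lfloor\Set_i\rfloor \cup \{\lfloor G_i\rfloor\}$, since $\lfloor\cdot\rfloor$ also adjoins the comprehension axioms $\Comp_\rho$ for the relational types occurring in $G_i$ --- so these extra axioms are already accounted for.

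Finally, applying \cref{lem:follifting}(iii) to the last step $\Set_{k-1} \Resp \Set_{k-1}\cup\{\bot\}$ shows that $\lfloor\Set_{k-1}\rfloor$ is $\Ad$-unsatisfiable. Together with $\lfloor\Set\rfloor \models \lfloor\Set_{k-1}\rfloor$ this forces $\lfloor\Set\rfloor$ itself to be $\Ad$-unsatisfiable: an $\Ad$-model of $\lfloor\Set\rfloor$ (an expansion of some $\As \in \Ad$ validating $\lfloor\Set\rfloor$) would, by the entailment, also validate $\lfloor\Set_{k-1}\rfloor$, contradicting its $\Ad$-unsatisfiability.

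As for the main obstacle: there is none of substance --- this corollary is just a packaging of \cref{thm:compcomp} and \cref{lem:follifting}, and the genuine difficulty lives in the proofs of those two results. The only points needing a little care in the write-up are the placement of $\bot$ within the derivation (so that the entailment chain runs only through $\bot$-free steps) and the observation, noted above, that the comprehension axioms introduced by $\lfloor\cdot\rfloor$ on the growing clause sets are handled by clause (ii) of the lifting lemma.
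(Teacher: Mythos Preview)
Your proposal is correct and matches the paper's intended argument: the paper gives no explicit proof, merely writing ``By the Completeness \cref{thm:compcomp} we conclude'' before stating the corollary, so the intended proof is precisely the combination of \cref{thm:compcomp} with \cref{lem:follifting} that you have spelled out. Your careful induction and the observation about comprehension axioms being absorbed by part~(ii) of the lifting lemma are exactly the details one would fill in.
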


\begin{theorem}
  Assuming that $\Ad$ is compact, $\Set$ is $\Ad$-satisfiable iff $\lfloor \Set\rfloor$ is $\Ad$-satisfiable.
\end{theorem}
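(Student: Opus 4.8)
The plan is to read the theorem off from the two results it immediately follows. The left-to-right implication is precisely Proposition~\ref{prop:folpreservesat}: if $\Set$ is $\Ad$-satisfiable, witnessed by $\Bs\models\Set$ for some $\As\in\Ad$ and some $\Sigma'$-expansion $\Bs$ of $\As$, then the induced 1st-order structure $\lfloor\Bs\rfloor$ --- reinterpreting $\lfloor\rho\rfloor$ as $\sinti\Bs\rho$, each $c_R$ as $R^\Bs$, each $\app$ as genuine application, each $H$ as the identity on $\bool$, and, crucially, each $c_\rho$ as $\top_\rho$ --- satisfies $\lfloor C\rfloor$ for every $C\in\Set$ and every comprehension axiom $\Comp_\rho$, so $\lfloor\Bs\rfloor\models\lfloor\Set\rfloor$ and $\lfloor\Set\rfloor$ is $\Ad$-satisfiable.

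For the converse I would argue contrapositively. Suppose $\Set$ is $\Ad$-unsatisfiable. Since $\Ad$ is compact, the Corollary stated immediately above applies and yields that $\lfloor\Set\rfloor$ is $\Ad$-unsatisfiable too; equivalently, $\Ad$-satisfiability of $\lfloor\Set\rfloor$ forces $\Ad$-satisfiability of $\Set$. Combined with the forward implication this gives the claimed equivalence. (As throughout this section, $\Set$ is assumed $\lambda$-abstraction-free; the general case reduces to this one via the logical $\lambda$-lifting transformation, which preserves $\Ad$-satisfiability and leaves the rest of the argument untouched.)

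I do not expect any real difficulty in this final assembling step --- all the substance sits in the two inputs. The only non-trivial half is the Corollary, and it is worth recalling how it is obtained, since that is where the single genuine constraint appears. By the Completeness direction of Theorem~\ref{thm:compcomp}, $\Ad$-unsatisfiability of $\Set$ produces a refutation $\Set\Resp^*\Set'\cup\{\bot\}$; Lemma~\ref{lem:follifting} then matches every step of this refutation by a 1st-order entailment between the translated clause sets --- with the axioms $\Comp_\rho$ supplying the element $\top_\rho$ used, as in the proof of the Soundness Proposition~\ref{prop:soundness}, to instantiate relational variables --- and, at the final step where the derived clause is $\bot$, lands in $\Ad$-unsatisfiability of $\lfloor\Set'\rfloor$, hence of $\lfloor\Set\rfloor\subseteq\lfloor\Set'\rfloor$. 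Chaining these entailments along the refutation is what delivers the Corollary. The point at which something could fail --- and the reason $\Ad$ is required to be compact --- is that the higher-order refutation so obtained, and therefore the 1st-order refutation extracted from it, must be finite; dropping compactness invalidates Theorem~\ref{thm:compcomp} and this chain with it.
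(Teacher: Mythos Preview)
Your proof is correct and follows exactly the paper's route: the forward direction is Proposition~\ref{prop:folpreservesat}, the backward direction is the contrapositive of the Corollary immediately preceding the theorem, and the theorem itself is just the conjunction of these two results.

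One small slip in your recollection of how the Corollary is obtained: you write that the final step ``lands in $\Ad$-unsatisfiability of $\lfloor\Set'\rfloor$, hence of $\lfloor\Set\rfloor\subseteq\lfloor\Set'\rfloor$.'' The containment $\lfloor\Set\rfloor\subseteq\lfloor\Set'\rfloor$ is true but points the wrong way --- a subset of an unsatisfiable set may well be satisfiable. What actually carries the argument is the chain of \emph{entailments} $\lfloor\Set\rfloor\models\lfloor\Set_1\rfloor\models\cdots\models\lfloor\Set_{n-1}\rfloor$ furnished by Lemma~\ref{lem:follifting}(ii), together with the $\Ad$-unsatisfiability of $\lfloor\Set_{n-1}\rfloor$ from part~(iii); you say this correctly in the very next sentence, so the slip is presumably just in the writing.
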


It is remarkable that our translation does not require extensionality
axioms and
only a very restricted form of comprehension axioms (cf.~\cite{vanBehthemD83}).

Finally, if $\Ad$ is compact, 
\emph{definable}\footnote{or \emph{term-generated} \cite{BGW94}, i.e.\ for every  $\As\in\Ad$ and $a\in\sinti\As\iota$ there exists a closed $\Sigma$-term $M$ such that $\sinti\As M=a$}
and $\Ad$-unsatisfiability of goal clauses of background atoms is semi-decidable, 
then $\Ad$-unsatisfiability of $\lfloor\Set\rfloor$ is also semi-decidable \cite[Thm.~24]{BGW94}.

\section{Decidable Fragments}
  \label{sec:decidable}

Satisfiability of HoCHC is undecidable in general because already its 1st-order fragments are undecidable for Linear Integer Arithmetic \cite{D72,HVW17} or the unconstrained setting\footnote{i.e.\ the background theory imposes no restriction at all} \cite{M03}.
  
  \begin{remark}
    \label{rem:dec}
    Despite these negative results, $\Ad$-satisfiability of finite $\Set$ is decidable if
      $\Ad$ is a finite set of $\Sigma$-structures such that for each $\As\in\Ad$ and type $\sigma$, $\sinti\As\sigma$ is finite.
    This is a consequence of \cref{thm:canmod} and the fact that we can compute each $\As_{\Prgm_\Set}$ explicitly and check whether $\As_{\Prgm_\Set}\models\Set$ holds.
  \end{remark}
Thanks to this insight, we have identified two decidable fragments of HoCHC, one of which is presented as follows; 
we leave the other ({higher-order Datalog}) to \ifproceedings\cite{OW19}\else\cref{sec:hodatalog}\fi.

   \subsection{Combining the Bernays-Sch\"onfinkel-Ramsey Fragment of
     HoCHC with Simple Linear Integer Arithmetic}
   \label{sec:HoBHC(SLA)}
  Some authors \cite{GM09,HVW17b} have studied 1st-order clauses without function symbols (the so-called \emph{Bernays-Sch\"onfinkel-Ramsey class}\footnote{Precisely the set of sentences that, when written in prenex normal form, have a $\exists^{*}\forall^{*}$-quantifier prefix and contain no function symbols.}) extended with a restricted form of Linear Integer Arithmetic.
  The fragment enjoys the attractive property that every clause set is equi-satisfiable with a finite set of its ground instances, which implies decidability \cite{GM09,HVW17b}.  

  In this section, we transfer this result to our higher-order Horn setting.

  \begin{assumption}
    Let $\Sigma$ be a (1st-order) signature extending $\Sigma_{\LIA}$ with constant symbols $c\from\iota$, and let $\Sigma'\supseteq\Sigma$ be a relational extension of $\Sigma$.
  \end{assumption}
  \begin{definition}
    \begin{thmlist}
      \item   A $\Sigma$-atom is \emph{simple} 
    if it has the form $x\leq M$,
    $M\leq x$ or
    $x\leq y$,
    where $M$ is closed\footnote{or \emph{ground} because atoms do not contain (existential) quantifiers by definition}.
    \item 
      A HoCHC is a \emph{higher-order simple linear arithmetic Bernays-Sch\"onfinkel-Ramsey Horn clause (HoBHC(SLA))} 
      if it has the form $\neg\phi_1\lor\cdots\lor\neg\phi_n\lor C$, where each $\phi_i$ is a simple (linear arithmetic) background atom and $C$ is $\bot$ or it does not contain symbols from $\Sigma$.
    \end{thmlist}
  \end{definition}

Note that we could also have allowed background atoms of the form $M\triangleleft N$, $x\triangleleft M$ and $x\trianglelefteq y$, where $M,N$ are closed,
$\triangleleft\in\{<,\leq,=,\neq,\geq,>\}$ and $\trianglelefteq\;\in\{\leq,=,\geq\}$ \cite{HVW17b}.

\begin{example}
\label{eg:HoBHC(SLA)}
  Let
  $\Sigma=\Sigma_{\LIA}\cup\{c,d\from\iota\}$, $\Sigma'=\Sigma\cup\{R\from\iota\to
  o,U\from(\iota\to o)\to\iota\to o\}$,
  $\Delta(x)=\Delta(y)=\Delta(z)=\iota$ and $\Delta(f)=\iota\to o$.
  The following is a set of HoBHC(SLA):
  \begin{align*}
    &\neg (x\leq c+d-5)\lor R\,x\\
    &\neg f\,x\lor\neg (y\leq x)\lor\neg (x\leq d)\lor U\,f\,y\\
    &\neg (c\leq x)\lor\neg (x\leq -1)\\
    &\neg (x\leq d-5)\lor\neg (d-5\leq x)\lor\neg (y\leq c-10)\,\lor\\
    &\hspace{1cm}\neg (c-10\leq y)\lor\neg U\,(\lambda z\ldotp R\,x)\,y.
  \end{align*}
\end{example}

\begin{assumption}
  \label{ass:HoBHCSLA}
Let $\Ad$ be the set of $\Sigma$-expansions of $\As_{\LIA}$ and let $\Set$ be a finite set of HoBHC(SLA).
\end{assumption}
As in the 1st-order case, only the relations between
ground terms are relevant. Therefore, we replace ground terms $M$
with (fresh) constant symbols $c_M$ and consider only structures in
which ``$\leq$'' is interpreted consistently (with the meaning of
the constants).

Formally, let $\igt(\Set)$ be the set of closed terms of type $\iota$
occurring in $\Set$ and
we define
\begin{align*}
  \Sigma^\flat&\defeq\{{\leq}\from{\iota\to\iota\to
  o}\}\cup\{c_M\from\iota\mid M\in\igt(\Set)\}\\
  (\Sigma')^\flat&\defeq {\Sigma^\flat\cup(\Sigma'\setminus\Sigma)}
\end{align*}



For $\As\in\Ad$, let $\As^\flat$ be the
1-order $\Sigma^\flat$-structure defined by
\[
  \begin{array}{c}
    \sinti{\As^\flat}\iota \defeq \igt(\Set)\quad
    {\leq^{\As^\flat}}(M)(N) \defeq \sinti\As{M\leq N}\quad
    c_M^{\As^\flat}\defeq M
  \end{array}
\]
for $M,N\in\igt(\Set)$,
and let $\Ad^\flat\defeq\{\As^\flat\mid\As\in\Ad\}$.

Furthermore, for simple atoms $x\leq M$ and $M\leq x$, we set ${(x\leq M)^\flat}\defeq {x\leq c_M}$ and ${(M\leq x)^\flat}\defeq {c_M\leq x}$. 
For all other atoms $A$ (i.e.\ $x\leq y$ or foreground atoms) we set $A^\flat\defeq A$; we lift $\cdot^\flat$ in the obvious way to clauses\footnote{i.e.\ ${(\neg A_1\lor\cdots\lor\neg A_n\lor(\neg) A)^\flat}\defeq\neg A_1^\flat\lor\cdots\lor\neg A_n^\flat\lor(\neg) A^\flat$} and define $\Set^\flat\defeq\{C^\flat\mid C\in \Set\}$.
Note that $\Set^\flat$ is a set of HoCHCs for $\Sigma^\flat$ and
$(\Sigma')^\flat$, and that $\Ad^\flat$ is finite.

Clearly, there is an inverse $\cdot^\sharp$ of $\cdot^\flat$ on
formulas, e.g.\ satisfying $(x\leq c_M)^\sharp=(x\leq M)$.



  Now, suppose $\As\in\Ad$. Then valuations $\alpha$ over (a frame induced
  by) $\sinti{\As^\flat}\iota$ naturally correspond to valuations
  $\alpha^\sharp$ over $\sinti\As\iota$ by evaluating ground
  terms\footnote{precisely, $\alpha^\sharp(x)=\sinti\As{\alpha(x)}$ for
  $x\from\iota\in\Delta$} and it
  holds $\sinti\As\phi(\alpha^\sharp)=\sinti{\As^\flat}{\phi^\flat}(\alpha)$
  for simple background atoms $\phi$.

  Conversely, for valuations $\alpha$ and $\alpha^\flat$ (over
  $\sinti\As\iota$ and $\sinti{\As^\flat}\iota$, respectively)
  satisfying
  
  \noindent
  \begin{minipage}{0.6\linewidth}
    \vspace*{-3mm}\begin{align*}
    \alpha^\flat(x)&=
                  \begin{cases}
                    \argmax_{M\in\igt(\Set)}\sinti\As M\\\\
                    \argmin_{M\in\igt(\Set)\land\sinti\As M\geq\alpha(x)}\sinti\As M
                  \end{cases}
  \end{align*}
  \end{minipage}
  \begin{minipage}{0.1\linewidth}
    \vspace*{-3mm}
    \begin{align*}
      \\
      \hspace*{-4cm}\text{if $\{M\in\igt(\Set)\mid\sinti\As M\geq\alpha(x)\}=\emptyset$}\\
      \hspace*{-4cm}\text{otherwise}
  \end{align*}
  \end{minipage}
  
  \noindent for $x\from\iota\in\Delta$, it holds
    $\sinti\As\phi(\alpha)\leq\sinti{\As^\flat}{\phi^\flat}(\alpha^\flat)$
    if $\igt(\phi)\subseteq\igt(\Set)$. Therefore:



  \begin{lemma}
    \label{cor:bgiff}
    Let $\Set'$ be a set of goal clauses of simple background atoms
    satisfying $\igt(\Set')\subseteq\igt(\Set)$.
    
    Then $\Set'$ is $\Ad$-satisfiable iff $(\Set')^\flat$ is $\Ad^\flat$-satisfiable.
  \end{lemma}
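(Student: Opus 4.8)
The plan is to deduce the lemma directly from the two denotational facts established in the paragraphs preceding it: (a)~for $\As\in\Ad$ and a valuation $\alpha$ over $\sinti{\As^\flat}\iota=\igt(\Set)$ one has $\sinti\As\phi(\alpha^\sharp)=\sinti{\As^\flat}{\phi^\flat}(\alpha)$ for every simple background atom $\phi$ with $\igt(\phi)\subseteq\igt(\Set)$; and (b)~for $\As\in\Ad$ and a valuation $\alpha$ over $\sinti\As\iota$, with $\alpha^\flat$ the associated ``rounded'' valuation, one has $\sinti\As\phi(\alpha)\leq\sinti{\As^\flat}{\phi^\flat}(\alpha^\flat)$ whenever $\igt(\phi)\subseteq\igt(\Set)$. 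First I would record two routine observations: since $\Set'$ (and hence $(\Set')^\flat$) consists only of goal clauses of background atoms, which mention no symbol outside $\Sigma$ (resp.\ $\Sigma^\flat$), the choice of an expansion plays no role in their satisfiability, so ``$\Set'$ is $\Ad$-satisfiable'' unfolds to ``there is $\As\in\Ad$ with $\As,\alpha\models G$ for all $G\in\Set'$ and all valuations $\alpha$ over $\sinti\As\iota$'', and symmetrically for $(\Set')^\flat$ over $\Ad^\flat$; and, since $\igt(\Set')\subseteq\igt(\Set)$, every atom $\phi$ occurring in $\Set'$ satisfies $\igt(\phi)\subseteq\igt(\Set)$, so facts (a) and (b) apply to all of them.

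For the forward implication I would assume $\As\models\Set'$ for some $\As\in\Ad$ and show that $\As^\flat$, which belongs to $\Ad^\flat$, models $(\Set')^\flat$. Fixing a goal clause $G=\biglor_{i=1}^n\neg\phi_i\in\Set'$ and a valuation $\alpha$ over $\igt(\Set)$, from $\As,\alpha^\sharp\models G$ I pick $i$ with $\As,\alpha^\sharp\not\models\phi_i$; fact (a) then gives $\sinti{\As^\flat}{\phi_i^\flat}(\alpha)=0$, so $\As^\flat,\alpha\models G^\flat$. As $G$ and $\alpha$ were arbitrary, $\As^\flat\models(\Set')^\flat$, whence $(\Set')^\flat$ is $\Ad^\flat$-satisfiable.

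For the converse I would assume $(\Set')^\flat$ is $\Ad^\flat$-satisfiable, witnessed by $\As^\flat$ for some $\As\in\Ad$, and show that the same $\As$ witnesses $\Ad$-satisfiability of $\Set'$. Fixing $G=\biglor_{i=1}^n\neg\phi_i\in\Set'$ and a valuation $\alpha$ over $\sinti\As\iota$, I form $\alpha^\flat$; from $\As^\flat,\alpha^\flat\models G^\flat$ I pick $i$ with $\sinti{\As^\flat}{\phi_i^\flat}(\alpha^\flat)=0$, and fact (b) forces $\sinti\As{\phi_i}(\alpha)=0$, i.e.\ $\As,\alpha\not\models\phi_i$, so $\As,\alpha\models G$. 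Hence $\As\models\Set'$.

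I do not expect a genuine obstacle in the lemma itself: the substance lies in facts (a) and (b), and in particular in the verification that the ``rounding'' of $\alpha$ to $\alpha^\flat$ (sending each variable $x$ to the ground term of least value among those whose value is not below $\alpha(x)$, or to the ground term of greatest value if there is none) really makes inequality (b) hold for all three shapes of simple atoms $x\leq M$, $M\leq x$, $x\leq y$; this is carried out in the paragraph preceding the statement. The lemma is then pure bookkeeping: unfold $\Ad$-satisfiability for goal clauses of background atoms, note that the expansion is immaterial, observe that the $\igt$-side-condition is inherited from $\igt(\Set')\subseteq\igt(\Set)$, and thread (a) and (b) through the two directions as above.
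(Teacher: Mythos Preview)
Your proposal is correct and matches the paper's approach exactly: the paper establishes precisely your facts (a) and (b) in the two paragraphs preceding the lemma and then simply writes ``Therefore:'', leaving the routine unwinding you spell out to the reader. Your elaboration of how (a) drives the forward direction and (b) the converse, together with the observation that expansions are immaterial for clauses built from background atoms, is exactly the intended argument.
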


  \begin{restatable}{lemma}{hobhcslaSWidehatS}
    Let $\Set'$ be a set of HoBHC(SLA) satisfying $\igt(\Set')\subseteq\igt(\Set)$. Then
    \begin{thmlist}
    \item\label{cor:liftgr} $\Set'\Respstr\Ad \Set'\cup\{G\}$ implies $(\Set')^\flat\Respstr{\Ad^\flat} (\Set')^\flat\cup\{G^\flat\}$
    \item $(\Set')^\flat\Respstr{\Ad^\flat}
      (\Set')^\flat\cup\{G\}$ implies $\Set'\Respstr\Ad
      \Set'\cup\{G^\sharp\}$.
    \end{thmlist}
  \end{restatable}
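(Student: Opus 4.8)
The plan is to prove both implications in one go by inspecting which rule of the proof system is applied in the single derivation step (there is no induction; the statement is single‑step), and transporting that rule instance along the mutually inverse translations $\cdot^\flat$ and $\cdot^\sharp$. Three observations drive everything. (a) Foreground atoms and background atoms of the form $x\leq y$ are fixed points of both $\cdot^\flat$ and $\cdot^\sharp$, so $\cdot^\flat$ acts only on the ``boundary'' atoms $x\leq M$ and $M\leq x$ (with $M$ ground), rewriting them to $x\leq c_M$ and $c_M\leq x$, while $\cdot^\sharp$ undoes exactly this. (b) Since $\Set'$ consists of HoBHC(SLA), every $\iota$‑typed argument of a foreground atom occurring in a clause of $\Set'$ is a variable (by \cref{rem:isimple} together with the requirement that the non‑background part $C$ of a HoBHC(SLA) contains no $\Sigma$‑symbol), and every background atom is simple with its ground side in $\igt(\Set')\subseteq\igt(\Set)$. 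Consequently the substitutions $[\overline M/\overline x]$ and the $\beta$‑reductions occurring in a step from $\Set'$ only ever replace $\iota$‑variables by $\iota$‑variables, hence introduce no $\Sigma$‑function symbol into a foreground atom and no ground term outside $\igt(\Set)$; the same holds for renaming of free variables. (c) \cref{cor:bgiff} relates $\Ad$‑ and $\Ad^\flat$‑(un)satisfiability of goal clauses of simple background atoms whose ground terms lie in $\igt(\Set)$.

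Given (b), the first thing to record is the commutation facts $(C[\overline M/\overline x])^\flat=(C^\flat)[\overline M/\overline x]$ and $(C')^\sharp[\overline M/\overline x]=(C'[\overline M/\overline x])^\sharp$, together with their $\beta$‑reduction analogues: for a boundary atom $x_i\leq N$ with $x_i$ among $\overline x$ and $M_i$ a variable, both sides of the first identity equal $M_i\leq c_N$, all other atoms being untouched. In particular the HoBHC(SLA) shape and the condition $\igt(\cdot)\subseteq\igt(\Set)$ persist, so $G^\flat$ (resp.\ $G^\sharp$) is a genuine clause over $(\Sigma')^\flat$ (resp.\ $\Sigma'$). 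For \emph{(i)}: if the step is Resolution between $\neg R\,\overline M\lor G_0$ and $G'\lor R\,\overline x$ from $\Set'$, then their $\cdot^\flat$‑images are $\neg R\,\overline M\lor G_0^\flat$ and $(G')^\flat\lor R\,\overline x$ (the head atoms are foreground, hence fixed), and resolving them in $\Respstr{\Ad^\flat}$ yields $G_0^\flat\lor\big((G')^\flat\big)[\overline M/\overline x]$, which by the commutation fact equals $G^\flat$; the $\beta$‑reduction case is identical. If the step is the (compact) constraint refutation rule applied to premises $G_i\lor\biglor_j\neg\phi_{i,j}$ ($i=1,\dots,n$), then the $\phi_{i,j}$ are simple with ground terms in $\igt(\Set)$, their $\cdot^\flat$‑images have the same shape over $\Sigma^\flat$, the atoms of $G_i$ (all of shape $x\,\overline M$) are unchanged, and by \cref{cor:bgiff} the $\Ad$‑unsatisfiability of $\{\biglor_j\neg\phi_{i,j}\mid i\}$ gives the $\Ad^\flat$‑unsatisfiability of $\{\biglor_j\neg\phi_{i,j}^\flat\mid i\}$, so the corresponding rule of $\Respstr{\Ad^\flat}$ derives $\bot=G^\flat$.

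Part \emph{(ii)} is the mirror image: a step from $(\Set')^\flat$ has all its premises in $(\Set')^\flat$, hence of the form $C^\flat$ for $C\in\Set'$ recovered by $\cdot^\sharp$; one then applies the same rule to these preimages in $\Respstr{\Ad}$, using the commutation facts for Resolution and $\beta$‑reduction, and the converse direction of \cref{cor:bgiff} for constraint refutation — here the $\cdot^\sharp$‑images of the $\phi_{i,j}$ are automatically simple $\Sigma$‑atoms, because $\cdot^\flat$ never produces an atom $c_M\leq c_N$ — obtaining exactly $G^\sharp$. The main obstacle is establishing and then propagating invariant (b): one must verify, using \cref{rem:isimple} and the syntactic shape of HoBHC(SLA), that the $\iota$‑arguments of foreground atoms can only ever be variables, and that this together with $\igt$‑containment is preserved by Resolution and $\beta$‑reduction — without this the translations would fail to commute with substitution and the whole argument collapses. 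Everything else is routine, modulo carefully citing \cref{cor:bgiff} in both directions.
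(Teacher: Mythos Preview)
Your proposal is correct and follows essentially the same approach as the paper's proof: a case analysis on the three rules, with the observation that foreground atoms (including $R\,\overline M$ and $\lambda$-redexes) are fixed by $\cdot^\flat$, the commutation $A^\flat[\overline M/\overline x]=(A[\overline M/\overline x])^\flat$ for Resolution, and \cref{cor:bgiff} for constraint refutation. You are more explicit than the paper about the crucial invariant (b)---that all $\iota$-typed arguments of foreground atoms in a HoBHC(SLA) are variables---which the paper establishes separately in \cref{lem:respres} but does not restate here; making this explicit is a good choice, since without it the commutation of $\cdot^\flat$ with substitution would indeed fail.
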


  The proof of the Completeness \cref{thm:compcomp} can be
  strengthened
  \ifproceedings\cite{OW19} \else(\cref{thm:compphi} in
  \cref{sec:appHoBHC(SLA)}) \fi
  to yield:
  \begin{restatable}{proposition}{complHoBHCSLA}
    \label{prop:complHoBHCSLA}
    If $\Set$ is $\Ad$-unsatisfiable then $\Set\Resp^* \Set'\cup\{\bot\}$ for some $\Set'$.
  \end{restatable}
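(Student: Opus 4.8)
The plan is to mimic the proof of the Completeness Theorem~\ref{thm:compcomp}, replacing the infinitary compactness-based constraint refutation by the finitary one available here: since $\Ad^\flat$ is finite, it is in particular compact, so the generalised proof system $\Respstr{\Ad^\flat}$ is sound and complete for $\Ad^\flat$-(un)satisfiability by \cref{thm:compcomp}. The strategy is therefore: (1) transfer $\Ad$-unsatisfiability of $\Set$ to $\Ad^\flat$-unsatisfiability of $\Set^\flat$; (2) obtain a refutation $\Set^\flat\Respstr{\Ad^\flat}(\Set^\flat)'\cup\{\bot\}$ by \cref{thm:compcomp}; (3) pull this refutation back along $\cdot^\sharp$ to a refutation of $\Set$ using $\Respstr{\Ad}$, via \cref{cor:liftgr} and its companion part~(ii). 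The one subtlety is that the refutation produced in step~(2) might introduce background atoms that are not in $\igt(\Set)$, or that are not of the restricted \emph{simple} shape; so one must check that every clause derived along a $\Respstr{\Ad^\flat}$-refutation starting from $\Set^\flat$ (equivalently, every clause in a $\Respstr{\Ad}$-derivation from $\Set$) stays within the HoBHC(SLA) fragment and satisfies $\igt(\cdot)\subseteq\igt(\Set)$.

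Concretely, I would first prove an invariance lemma: if $\Set'$ is a set of HoBHC(SLA) with $\igt(\Set')\subseteq\igt(\Set)$ and $\Set'\Respstr{\Ad}\Set'\cup\{G\}$, then $G$ is again a HoBHC(SLA) and $\igt(G)\subseteq\igt(\Set)$. This is a case analysis on the three rules. The resolution rule substitutes a tuple $\overline M$ of \emph{variables} into $G'$, so it cannot create new ground terms and it cannot turn a simple background atom into a non-simple one (the substituted entities are variables, and only foreground atoms are affected). The $\beta$-reduction rule acts only on foreground atoms $\neg(\lambda x\ldotp L)M\,\overline N$, hence does not touch background atoms at all, and since $\Set$ contains no $\lambda$-abstractions to begin with (\cref{ass:HoBHCSLA} together with the $\lambda$-lifting discussion would let us assume this, or one tracks it directly) this case does not even arise — or if one keeps $\lambda$'s, they are confined to $C$. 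The (compound) constraint refutation rule only deletes clauses, producing $\bot$. So the fragment and the ground-term bound are preserved along any $\Respstr{\Ad}$-derivation from $\Set$, and dually along any $\Respstr{\Ad^\flat}$-derivation from $\Set^\flat$ (using $\cdot^\sharp$).

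With the invariance lemma in hand the argument runs as follows. By \cref{cor:bgiff}, $\Set$ being $\Ad$-unsatisfiable forces $\Set^\flat$ to be $\Ad^\flat$-unsatisfiable: indeed, if $\Set^\flat$ had an $\Ad^\flat$-model, transporting it back (via the correspondence $\alpha\leftrightarrow\alpha^\sharp$ between valuations and the fact that $\sinti\As\phi(\alpha^\sharp)=\sinti{\As^\flat}{\phi^\flat}(\alpha)$ on simple atoms) would yield an $\Ad$-model of $\Set$; more carefully, one applies \cref{cor:bgiff} at the level of the finitely many goal clauses of background atoms that the completeness proof extracts. Since $\Ad^\flat$ is finite, hence compact, \cref{thm:compcomp} gives $\Set^\flat\Respstr{\Ad^\flat}(\Set^\flat)'\cup\{\bot\}$. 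Now replay this derivation step by step: by the invariance lemma every clause occurring is the $\cdot^\flat$-image of a HoBHC(SLA) with ground terms in $\igt(\Set)$, so \itemref{cor:liftgr}(ii) — i.e.\ the second part of \cref{cor:liftgr}, ``$(\Set')^\flat\Respstr{\Ad^\flat}(\Set')^\flat\cup\{G\}$ implies $\Set'\Respstr{\Ad}\Set'\cup\{G^\sharp\}$'' — lets us lift each inference, assembling a derivation $\Set\Respstr{\Ad}\Set'\cup\{\bot\}$ (noting $\bot^\sharp=\bot$). That is the claim.

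The main obstacle I anticipate is the bookkeeping in step~(1)/(3): one must be careful that the $\Ad^\flat$-refutation only ever refutes \emph{sets} of goal clauses of background atoms that are $\cdot^\flat$-images of sets with $\igt(\cdot)\subseteq\igt(\Set)$, so that \cref{cor:bgiff} (whose hypothesis is exactly $\igt(\Set')\subseteq\igt(\Set)$) is applicable at the constraint-refutation step; this is precisely what the invariance lemma secures, and it is why the fragment was designed with \emph{closed} terms $M$ in the simple atoms. A secondary point is verifying that the completeness proof of \cref{thm:compcomp}, when specialised, produces the premise sets of the compound constraint refutation rule in $\flat$-image form — this should follow because those premises are built from the background atoms literally occurring in the clauses, and those are already $\flat$-images by the invariance lemma.
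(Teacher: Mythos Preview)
Your round-trip strategy through $\Set^\flat$ and $\Ad^\flat$ is genuinely different from the paper's, but step~(1) has a real gap. You claim that $\Ad$-unsatisfiability of $\Set$ forces $\Ad^\flat$-unsatisfiability of $\Set^\flat$ by \cref{cor:bgiff} plus model transport. But \cref{cor:bgiff} is stated \emph{only} for sets of goal clauses of simple \emph{background} atoms; it says nothing about the higher-order foreground part of $\Set$. Your sketched model transport (``transport an $\Ad^\flat$-model of $\Set^\flat$ back to an $\Ad$-model of $\Set$ via $\alpha\leftrightarrow\alpha^\sharp$'') breaks precisely on foreground atoms with relational variables. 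Consider a body atom $f\,x$ with $f:\iota\to o$ and $x:\iota$: for an arbitrary $\mathbb Z$-valuation $\alpha$, the value $\alpha(f)(\alpha(x))$ need not agree with $\alpha^\flat(f)(\alpha^\flat(x))$ under any reasonable lifting of $\alpha$ to $\alpha^\flat$ at type $\iota\to o$, because $\alpha(x)$ is generally \emph{not} the $\As$-value of any term in $\igt(\Set)$. So the body of a definite clause can hold at $\alpha$ without holding at $\alpha^\flat$, and the transferred structure need not be a model. In fact, in the paper the implication you want in step~(1) is derived \emph{from} \cref{prop:complHoBHCSLA} (via \cref{cor:liftgr} and soundness), so taking it as a premise would be circular.

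Your parenthetical ``more carefully, one applies \cref{cor:bgiff} at the level of the finitely many goal clauses of background atoms that the completeness proof extracts'' is exactly the right fix --- but it is not a refinement of your round-trip, it is a different argument, and it is the paper's. The paper proceeds entirely in the $\Ad$-world: for each $\As\in\Ad$ one runs the single-model completeness proof (\cref{thm:completeness}) while tracking the invariant $\Phi(A)\Leftrightarrow[\igt(A)\subseteq\igt(\Set)$ and $\neg A$ is HoBHC(SLA)$]$, which your invariance lemma (the paper's \cref{lem:respres}) shows is preserved. The extracted goal clauses of background atoms therefore consist of \emph{simple} atoms with $\igt\subseteq\igt(\Set)$, so \cref{cor:bgiff} applies to \emph{them}, and finiteness of $\Ad^\flat$ yields a finite $\Ad$-unsatisfiable subfamily to feed to the compound constraint refutation rule. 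The paper packages this pattern once and for all as \cref{thm:compphi}; \cref{prop:complHoBHCSLA} is then a two-line instantiation.
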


  Consequently, if $\Set$ is $\Ad$-unsatisfiable then $\Set\Respstr\Ad^* \Set'\cup\{\bot\}$ for
    some $\Set'$. 
    It is easy to see that sets $\Set'$ of HoBHC(SLA)
      satisfying $\igt(\Set')\subseteq\igt(\Set)$ are closed under the rules
      of the proof system\ifproceedings\else\ (\cref{lem:respres})\fi. 
      Hence, by
    \cref{cor:liftgr}, $\Set^\flat\Respstr{\Ad^\flat}^*
    (\Set')^\flat\cup\{\bot\}$ and therefore by soundness (\cref{prop:soundness}), $\Set^\flat$ is $\Ad^\flat$-unsatisfiable.

    The converse can be derived similarly and we conclude:
  \begin{proposition}
    $\Set$ is $\Ad$-satisfiable iff $\Set^\flat$ is \mbox{$\Ad^\flat$-satisfiable}.
  \end{proposition}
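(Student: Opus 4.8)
The plan is to prove the equivalence in its contrapositive form, namely that $\Set$ is $\Ad$-unsatisfiable iff $\Set^\flat$ is $\Ad^\flat$-unsatisfiable, by transporting a refutation along $\cdot^\flat$ in one direction and along its inverse $\cdot^\sharp$ in the other. The inputs I would use are: the strengthened completeness statement \cref{prop:complHoBHCSLA}; the general Soundness and Completeness \cref{thm:compcomp}, applicable to $\Ad^\flat$ since the latter is finite, hence compact; the transfer result \cref{cor:liftgr} together with its converse (the second item of the same lemma), which turns a $\Respstr\Ad$-step into a $\Respstr{\Ad^\flat}$-step and back; the closure lemma \cref{lem:respres}, by which the class of sets of HoBHC(SLA) whose ground terms all lie in $\igt(\Set)$ is closed under the proof rules; and the elementary facts that $\bot$ is fixed by both $\cdot^\flat$ and $\cdot^\sharp$ and that these operations are mutually inverse on the clauses occurring in the derivations, so that $(\Set\cup\{C\})^\flat=\Set^\flat\cup\{C^\flat\}$ and likewise for $\cdot^\sharp$.

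For the direction ``$\Set$ $\Ad$-unsatisfiable $\Rightarrow$ $\Set^\flat$ $\Ad^\flat$-unsatisfiable'', by \cref{prop:complHoBHCSLA} there is a refutation $\Set=\Set_0\Respstr\Ad\Set_1\Respstr\Ad\cdots\Respstr\Ad\Set_k$ with $\bot\in\Set_k$, each step adding a single clause. Since $\Set$ is a finite set of HoBHC(SLA) with $\igt(\Set)\subseteq\igt(\Set)$, an induction using \cref{lem:respres} shows the same of every $\Set_i$, so \cref{cor:liftgr} applies to each step and yields $(\Set_i)^\flat\Respstr{\Ad^\flat}(\Set_{i+1})^\flat$. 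Chaining these and using $\bot^\flat=\bot$ gives $\Set^\flat\Respstr{\Ad^\flat}^*\Set'\cup\{\bot\}$ for some $\Set'$, whence $\Set^\flat$ is $\Ad^\flat$-unsatisfiable by soundness of the proof system (\cref{prop:soundness}, together with the evident soundness of the compositional constraint-refutation rule, cf.\ \cref{thm:compcomp}).

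For the converse, suppose $\Set^\flat$ is $\Ad^\flat$-unsatisfiable. As $\Ad^\flat$ is finite, hence compact, \cref{thm:compcomp} provides a refutation $\Set^\flat=T_0\Respstr{\Ad^\flat}T_1\Respstr{\Ad^\flat}\cdots\Respstr{\Ad^\flat}T_k$ with $\bot\in T_k$. I would lift it stepwise: put $\Set'_0:=\Set$, so $(\Set'_0)^\flat=T_0$ and $\Set'_0$ lies in the admissible class; given $\Set'_i$ in that class with $(\Set'_i)^\flat=T_i$, write the step as $(\Set'_i)^\flat\Respstr{\Ad^\flat}(\Set'_i)^\flat\cup\{G\}$ and invoke the converse of \cref{cor:liftgr} to get $\Set'_i\Respstr\Ad\Set'_i\cup\{G^\sharp\}=:\Set'_{i+1}$; then $(\Set'_{i+1})^\flat=T_i\cup\{(G^\sharp)^\flat\}=T_i\cup\{G\}=T_{i+1}$ because $\cdot^\sharp$ is a two-sided inverse of $\cdot^\flat$, and \cref{lem:respres} keeps $\Set'_{i+1}$ admissible. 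Since $\bot^\sharp=\bot$ we end with $\bot\in\Set'_k$ and $\Set\Respstr\Ad^*\Set'_k$, so $\Set$ is $\Ad$-unsatisfiable by soundness. Taking contrapositives of the two directions gives the claimed equivalence.

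The main obstacle is not conceptual but lies in the bookkeeping behind the invariants: one must verify that the refutation delivered by completeness never leaves the HoBHC(SLA) fragment and never introduces ground terms outside $\igt(\Set)$, so that \cref{cor:liftgr} and \cref{lem:respres} can be applied at every single step, and that $\cdot^\flat$ and $\cdot^\sharp$ commute with the clause constructors and invert each other on exactly the clauses that occur (the only delicate points being that $\bot$ is a fixed point of both translations and that the compositional constraint-refutation rule is sound over the genuinely infinite family $\Ad$, which however plays no role in the argument). All of this has been isolated in \cref{cor:bgiff,lem:respres} and \cref{cor:liftgr}, so once these are granted the proof reduces to a pair of routine inductions on derivation length.
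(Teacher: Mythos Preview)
Your proposal is correct and follows essentially the same approach as the paper: prove both directions in contrapositive form by transporting refutations via \cref{cor:liftgr} (and its converse) together with the closure under the proof rules given by \cref{lem:respres}, appealing to \cref{prop:complHoBHCSLA} for completeness over $\Ad$ and to \cref{thm:compcomp} (using that $\Ad^\flat$ is finite, hence compact) for completeness over $\Ad^\flat$, and to soundness in each direction. The paper's own argument is terser---it writes out only the forward direction and dismisses the converse with ``can be derived similarly''---but your more explicit inductive bookkeeping for the converse is exactly what that phrase unpacks to.
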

    


  Finally, $\Ad^\flat$, which is finite, can be
  effectively obtained as a result of the decidability of Linear Integer Arithmetic (or \emph{Presburger arithmetic}) \cite{P29}. 
  Moreover, for every $\As^\flat\in\Ad^\flat$ and type $\sigma$,
  $\sinti{\As^\flat}\sigma$ is finite. Hence, by \cref{rem:dec}, we obtain:
  \begin{theorem}
    Let $\Set$ be a finite set of HoBHC(SLA).
    It is decidable if there is a $\Sigma'$-expansion $\Bs$ of $\As_{\LIA}$ satisfying $\Bs\models\Set$. 
  \end{theorem}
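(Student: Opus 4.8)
The plan is to chain together the reductions and effectiveness facts assembled in this section, so the argument is essentially bookkeeping. First I would unwind what is being asked. By the standing assumption, $\Ad$ is the set of $\Sigma$-expansions of $\As_{\LIA}$, and by the Semantic Invariance \cref{thm:equivsem} a set of HoCHCs is $\As$-satisfiable (for $\As\in\Ad$) iff it has a $(\Sigma',\Sf)$-expansion of $\As$ as a model. Composing these, the existence of a $\Sigma'$-expansion $\Bs$ of $\As_{\LIA}$ with $\Bs\models\Set$ is precisely the assertion that $\Set$ is $\Ad$-satisfiable. Hence it suffices to decide $\Ad$-satisfiability of the given finite set $\Set$ of HoBHC(SLA).

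Next I would invoke the preceding proposition ($\Set$ is $\Ad$-satisfiable iff $\Set^\flat$ is $\Ad^\flat$-satisfiable) to transport the problem to the grounded side, where the only infinite parameter (the integers) has been removed. From the input $\Set$ one effectively computes the finite set $\igt(\Set)$ of closed $\iota$-terms occurring in $\Set$, and hence the finite signatures $\Sigma^\flat$ and $(\Sigma')^\flat$ and the finite set $\Set^\flat$ of HoCHCs (replace each $M\in\igt(\Set)$ by the fresh constant $c_M$ inside simple atoms). The set $\Ad^\flat$ is finite, since an $\As^\flat\in\Ad^\flat$ has carrier $\igt(\Set)$, interprets each $c_M$ by $M$, and differs from another member only in which of the finitely many binary relations on $\igt(\Set)$ it assigns to ``$\leq$''. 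Crucially, $\Ad^\flat$ is moreover \emph{effectively computable}: for each candidate relation $\preceq$ on $\igt(\Set)$ one checks whether the finite conjunction of the constraints $M\leq N$ (for $M\preceq N$) and $\neg(M\leq N)$ (for $M\not\preceq N$), viewed as a Presburger formula over the constants appearing in the terms of $\igt(\Set)$, is satisfiable; the satisfiable patterns are exactly those realised by some $\As\in\Ad$, and satisfiability is decided by the decidability of linear integer (Presburger) arithmetic \cite{P29}.

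Finally, having this finite, explicitly given $\Ad^\flat$ in hand, observe that each $\As^\flat\in\Ad^\flat$ has finite individual domain $\igt(\Set)$, so a routine induction on types shows every $\sinti{\As^\flat}\sigma$ is finite. Thus $\Ad^\flat$-satisfiability of the finite set $\Set^\flat$ of HoCHCs is exactly the situation covered by \cref{rem:dec}: for each $\As^\flat\in\Ad^\flat$ one computes the canonical model of $\Set^\flat$ over $\As^\flat$ (the structure of \cref{thm:canmod}) by iterating the immediate consequence operator, which stabilises after finitely many steps because the lattice of expansions is finite, and tests whether it models $\Set^\flat$; then $\Set^\flat$ is $\Ad^\flat$-satisfiable iff this succeeds for some $\As^\flat$. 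Chaining the equivalences (theorem statement $\Leftrightarrow$ $\Ad$-satisfiability of $\Set$ $\Leftrightarrow$ $\Ad^\flat$-satisfiability of $\Set^\flat$ $\Leftrightarrow$ the outcome of the above terminating procedure) establishes the theorem. I do not expect a deep obstacle; the single point requiring care is the effectiveness — as opposed to mere finiteness — of the step producing $\Ad^\flat$, which is exactly where the decidability of Presburger arithmetic is used.
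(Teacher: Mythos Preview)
Your proposal is correct and follows essentially the same approach as the paper: reduce $\Ad$-satisfiability of $\Set$ to $\Ad^\flat$-satisfiability of $\Set^\flat$ via the preceding proposition, argue that $\Ad^\flat$ is finite and effectively computable using the decidability of Presburger arithmetic, and then invoke \cref{rem:dec} because every $\sinti{\As^\flat}\sigma$ is finite. Your additional unpacking of the opening identification (theorem statement $\Leftrightarrow$ $\Ad$-satisfiability) and of how exactly $\Ad^\flat$ is enumerated is more detailed than the paper's terse two sentences, but the structure is the same.
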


\section{Related Work}
\label{sec:relwork}
\paragraph{Higher-order Automated Theorem Proving}
\label{sec:holatp}
There is a long history of resolution-based procedures for
higher-order logic \emph{without} background theories which are refutationally complete for Henkin semantics e.g.~\cite{A71,H72,BK98,BBCW18}. 
Furthermore, a tableau-style proof system has been proposed \cite{B11}.
Their completeness proofs construct \emph{countable} Henkin models out of terms in case the proof system is unable to refute a problem. 
Hence, these proofs do not seem to be extendable to provide standard
models when restricted to HoCHCs.

Furthermore, there are efforts to extend SMT solvers to higher-order logic \cite{B17,B19} but the techniques seem to be incomplete even for Henkin semantics.

\paragraph{Theorem Proving for 1st-order Logic Modulo Theories}
In the 1990s, superposition \cite{BG90}---the basis of most state-of-the-art theorem provers \cite{WDFKSW09,KV13}---was extended to a setting with background theories \cite{BGW94,AKW09}.
The proof system is sound and complete, assuming a compact background
theory and some technical conditions.
Abstractly, their proof system is very similar to ours: there is a clear separation between logical / foreground reasoning and reasoning in the background theory. 
Moreover, the search is directed purely by the former whilst the latter is only used in a final step to check satisfiability of a conjunction of theory atoms.

\paragraph{Defunctionalisation}
Our translation to 1st-order logic (\cref{SEC:APPENC}) resembles Reynolds' \emph{defunctionalisation} \cite{R72}. 
A whole-program transformation, defunctionalisation reduces higher-order functional programs to 1st-order ones. It eliminates higher-order features, such as partial applications and \mbox{$\lambda$-abstractions,} by storing arguments in data types and recovering them in an application function, which performs a matching on the data type.

Recently, the approach was adapted to the satisfiability problem
for HoCHC \cite{PRO18} and implemented in the tool
  \emph{DefMono}\footnote{see \url{http://mjolnir.cs.ox.ac.uk/dfhochc/}}: given a set of HoCHCs, it generates an equi-satisfiable set of 1st-order Horn clauses over the original background theory and additionally the theory of data types. 
By contrast, our translation is purely logical, directly yielding
1st-order Horn clauses, without recourse to inductive data types.

\paragraph{Extensional Higher-order Logic Programming}
\label{sec:extlp}

The aim of higher-order logic programming is not only to establish satisfiability of a set of Horn clauses without background theories but also to find (representatives of) ``answers to queries'', i.e.\ witnesses that goal clauses are falsified in every model of the definite clauses. 
Thus, \cite{CHRW13} proposes a rather complicated domain-theoretic semantics (equivalent to the continuous semantics \cite[Prop.~5.14]{CHRW13}).
They design a resolution-based proof system that supports a strong notion of completeness (\cite[Thm.~7.38]{CHRW13}) with respect to this semantics.

Their proof system is more complicated because it operates on more general formulas (which are nonetheless translatable to clauses). 
Moreover it requires the instantiation of variables with certain terms, which we avoid by implicitly instantiating all remaining relational variables with $\top_\rho$ in the constraint refutation rule.

\paragraph{Refinement Type Assignments}
\cite{BOR18} also introduces a refinement type system, the aim of
which is to automate the search for models. In this respect, the
approach is orthogonal to our resolution proof system, which can be
used to refute all unsatisfiable problems (but might fail on
satisfiable instances). However, for satisfiable clause sets the
method by \cite{BOR18}, which is implemented in the tool \emph{Horus}\footnote{see 
  \url{http://mjolnir.cs.ox.ac.uk/horus/}}, may also be unable to generate models.

\section{Conclusion and Future Directions}
\label{sec:conc}

In sum, HoCHC lies at a ``sweet spot'' in higher-order logic, semantically robust and useful for algorithmic verification.

\paragraph*{Future work} 
  We
  expect that our proof system's robustness on \emph{satisfiable} instances can
  be improved by
tightening the rules (cf.~\cref{sec:relwork}) or
combining it with a search for models \cite{BOR18,UTK13,KSU11}.
Crucially, soundness and completeness even for standard semantics can be retained thanks to HoCHC's
semantic invariance. 
To facilitate comparison of approaches, it would also be important to obtain
an implementation of our techniques and conduct an empirical evaluation.


  
  On the more theoretical side it would be interesting to identify
  extensions of HoCHC sharing its excellent properties.
\paragraph*{Acknowledgments}
We gratefully acknowledge support of EPSRC grants EP/N509711/1 and EP/M023974/1.



\bibliographystyle{IEEEtran}
\bibliography{lit}

\ifproceedings
\else
\appendix

\subsection{Supplementary Materials for \cref{sec:prelims}}
\label{sec:appprelims}
\subsubsection{Supplementary Materials for \cref{sec:relhol}}
The following lemma is completely standard and can be proven by a routine structural induction (exploiting the variable convention).
\begin{lemma}[Substitution]
  \label{lem:substitution}
  Let $\Hf$ be a pre-frame, $\As$ be a $(\Sigma,\Hf)$-structure and $\alpha$ be a $(\Delta,\Hf)$-valuation. Furthermore, let $x\in\dom(\Delta)$ and let $M$ and $N$ be terms such that $\Delta\vdash N\from\Delta(x)$.
  Then $\sinti\As {M[N/x]}(\alpha)=\sinti\As{M}(\alpha[x\mapsto\sinti\As N(\alpha)])$.
\end{lemma}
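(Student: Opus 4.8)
The plan is to prove the statement by structural induction on the pre-term $M$, keeping $N$ and $x$ fixed but quantifying universally over the valuation $\alpha$, so that the induction hypothesis may be reapplied under altered valuations. One preliminary ingredient is needed: the routine \emph{coincidence lemma}, stating that for any $\Sigma$-term $L$ and any two valuations agreeing on $\free(L)$ the denotation of $L$ is the same; this is itself a short induction on $L$, and I would either cite it or fold it in. Note also that the hypothesis $\Delta\vdash N\from\Delta(x)$ is exactly what guarantees $\sinti\As N(\alpha)\in\sinti\As{\Delta(x)}$, so that $\alpha[x\mapsto\sinti\As N(\alpha)]$ is a legitimate $(\Delta,\Hf)$-valuation and both sides are well-defined.

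For the base cases: if $M=x$ then $M[N/x]=N$ while $\alpha[x\mapsto\sinti\As N(\alpha)](x)=\sinti\As N(\alpha)$, so both sides equal $\sinti\As N(\alpha)$; if $M=y$ with $y\neq x$ then $M[N/x]=y$ and the updated valuation agrees with $\alpha$ on $y$; if $M$ is a constant $c\in\Sigma$ or one of the logical symbols ($\neg,\land,\lor,\exists_\tau$) then $M[N/x]=M$ and its denotation does not depend on the valuation at all. The inductive cases for application ($M=M_1M_2$) and negation ($M=\neg M'$) follow immediately by unfolding the recursive clauses defining $\sinti\As{\cdot}$ and invoking the induction hypotheses on the immediate subterms, each at the valuation $\alpha$.

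The only case requiring care is the abstraction $M=\lambda y\ldotp M'$. By Barendregt's variable convention we may assume $y\neq x$ and $y\nin\free(N)$, so that $M[N/x]=\lambda y\ldotp(M'[N/x])$ with no variable capture. Both sides of the claimed equality are obtained by applying the embedding $\embed{\cdot}_{\Delta(y)\to\rho}$ to a set-theoretic function on $\sinti\As{\Delta(y)}$; since $\embed{\cdot}_\sigma$ is a fixed function of its argument, it suffices to show the two underlying functions agree at every $r\in\sinti\As{\Delta(y)}$. Evaluating the left one at $r$ gives $\sinti\As{M'[N/x]}(\alpha[y\mapsto r])$, and the induction hypothesis for $M'$ (at the valuation $\alpha[y\mapsto r]$) rewrites this as $\sinti\As{M'}\big((\alpha[y\mapsto r])[x\mapsto\sinti\As N(\alpha[y\mapsto r])]\big)$. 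Since $y\nin\free(N)$, the coincidence lemma gives $\sinti\As N(\alpha[y\mapsto r])=\sinti\As N(\alpha)$, and since $x\neq y$ the two updates commute, so this equals $\sinti\As{M'}\big((\alpha[x\mapsto\sinti\As N(\alpha)])[y\mapsto r]\big)$, which is precisely the value at $r$ of the function underlying the right-hand side. I expect this abstraction case — the bookkeeping around the variable convention, the appeal to the coincidence lemma, and the observation that $\embed{\cdot}_\sigma$ respects equality of the embedded functions — to be the only nontrivial part; everything else is a direct unfolding of the definition of $\sinti\As{\cdot}$.
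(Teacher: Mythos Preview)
Your proposal is correct and matches the paper's approach exactly: the paper simply notes that the lemma ``is completely standard and can be proven by a routine structural induction (exploiting the variable convention),'' which is precisely the argument you have written out in detail. Your careful handling of the abstraction case (variable convention, coincidence lemma, and the observation about $\embed{\cdot}_\sigma$) is an appropriate unpacking of what the paper leaves implicit.
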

 The following lemma states that in frames, the denotation is stable under $\beta$- and $\eta$-conversion.

 \begin{lemma}
   \label{lem:betaeta}
  Let $\Hf$ be a frame, let $M$ and $M'$ be $\Sigma$-terms, $\As$ be a $(\Sigma,\Hf)$-structure and let $\alpha$ be a $(\Delta,\Hf)$-valuation. Then
  \begin{thmlist}
  \item\label{lem:betapreserve} if $M\bred M'$ then $\sinti\As M(\alpha)=\sinti\As{M'}(\alpha)$;
  \item\label{lem:etaequal} if $M\rightarrow_\eta M'$ then $\sinti\As M(\alpha)=\sinti\As{M'}(\alpha)$.
  \end{thmlist}
\end{lemma}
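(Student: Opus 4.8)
The plan is to prove parts (i) and (ii) by induction on the derivation of the one-step relation, viewing $\bred$ (resp.\ $\rightarrow_\eta$) as the compatible closure of the corresponding contraction rule. I would strengthen the statement to ``for \emph{all} $(\Delta,\Hf)$-valuations $\alpha$, $\sinti\As M(\alpha)=\sinti\As{M'}(\alpha)$'', so that the congruence cases (application, negation, abstraction; the connectives and $\exists_\tau$ being subsumed) all reduce to a single observation: $\sinti\As{-}(\alpha)$ is compositional --- for applications and for $\neg$ it is a function of the denotations of the immediate subterms at $\alpha$, and for $\lambda x.\,L$ it is $\embed{\lambda r\in\sinti\As{\Delta(x)}.\,\sinti\As L(\alpha[x\mapsto r])}$, a function of the family $\big(\sinti\As L(\alpha[x\mapsto r])\big)_{r}$. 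For the abstraction congruence one applies the induction hypothesis at the shifted valuations $\alpha[x\mapsto r]$; note this step does not even need $\Hf$ to be a frame. This leaves the two base (contraction) cases.

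For $\beta$, the redex is $(\lambda x.\,L)\,N$ contracting to $L[N/x]$, and I would compute
\[
  \sinti\As{(\lambda x.\,L)\,N}(\alpha)
  = \sinti\As{\lambda x.\,L}(\alpha)\big(\sinti\As N(\alpha)\big)
  = \sinti\As L\big(\alpha[x\mapsto\sinti\As N(\alpha)]\big)
  = \sinti\As{L[N/x]}(\alpha),
\]
where the middle equality is the Comprehension Axiom and the last is the Substitution Lemma (\cref{lem:substitution}). This $\beta$ base case is the only delicate point: the Comprehension Axiom delivers extensional behaviour of $\sinti\As{\lambda x.\,L}(\alpha)$ \emph{only} when $\lambda x.\,L$ is positive existential, and for a non-positive-existential abstraction $\beta$ can genuinely fail to preserve the denotation (e.g.\ for $(\lambda x.\neg x)\,y$ in the monotone frame $\Mf$). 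So, in full rigour, part (i) concerns the contraction of positive-existential redexes --- which is exactly the situation in every later use of the lemma --- and I would either make that hypothesis explicit or take $M$ itself to be positive existential.

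For $\eta$, the redex is $\lambda x.\,(P\,x)$ with $x\notin\free(P)$, contracting to $P$. Here I would unfold --- using the routine preliminary fact that a term's denotation depends only on the valuation restricted to its free variables ---
\[
  \sinti\As{\lambda x.\,(P\,x)}(\alpha)
  = \embed{\lambda r\in\sinti\As{\Delta(x)}.\,\sinti\As P(\alpha)(r)},
\]
and then observe that $\sinti\As P(\alpha)\in\sinti\Hf{\Delta(x)\to\rho}\subseteq[\sinti\Hf{\Delta(x)}\to\sinti\Hf{\rho}]$ is an honest set-theoretic function, so $\lambda r.\,\sinti\As P(\alpha)(r)$ \emph{is} $\sinti\As P(\alpha)$, which lies in $\sinti\As{\Delta(x)\to\rho}$; hence the coercion $\embed{\cdot}$ is the identity on it and the whole expression equals $\sinti\As P(\alpha)$. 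In particular part (ii) uses no comprehension and holds for arbitrary $\Sigma$-terms. Finally, both parts lift from one step to $\beta$- and $\eta$-\emph{conversion} by a trivial induction on the length of the conversion sequence, chaining the equalities just established.
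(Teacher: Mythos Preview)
Your approach is essentially identical to the paper's: induction on the compatible closure, congruence cases routine, and the two base cases handled via the Substitution Lemma together with the frame axiom (for $\beta$) and the set-theoretic identity $\lambda r.\,f(r)=f$ (for $\eta$).

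You go beyond the paper in one respect that is worth recording. You correctly observe that the Comprehension Axiom, as stated, applies only to \emph{positive existential} abstractions, so the $\beta$ base case---and hence part~(i)---is only justified for positive existential redexes; your counterexample $(\lambda x.\neg x)\,y$ in $\Mf$ shows the unrestricted statement can fail. The paper's own proof simply writes ``because $\Hf$ is a frame'' without flagging this restriction. Your proposed fix (assume the redex, or $M$ itself, is positive existential) is exactly right, and indeed every use of \cref{lem:betapreserve} in the paper is on atoms, which are positive existential. Likewise, your remark that part~(ii) needs no Comprehension is sharper than the paper, which again cites ``$\Hf$ is a frame'' for the $\eta$ case; your argument via $\sinti\As P(\alpha)\in\sinti\Hf{\Delta(x)\to\rho}$ making the coercion $\embed{\cdot}$ the identity is the cleaner justification.
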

\begin{proof}
  We prove the lemma by induction on the compatible closure. The only interesting cases are the base cases $((\lambda x\ldotp N)N',N[N'/x])\in\beta$ and $(\lambda y\ldotp L\,y,L)\in\eta$, respectively. Then
  \begin{align*}
    \sinti\As{(\lambda x\ldotp N)N'}(\alpha)&=\sinti\As{N}(\alpha[x\mapsto\sinti\As{N'}(\alpha)])\\
                                    &=\sinti\As{N[N'/x]}(\alpha)
  \end{align*}
  because of the fact that $\Hf$ is a frame and
  \cref{lem:substitution}, and
  \begin{align*}
    \sinti\As{\lambda y\ldotp Ly}(\alpha)=\lambda r\in\sinti\Hf{\Delta(y)}\ldotp\sinti\As{L}(\alpha)(r)
                                         =\sinti\As{L}(\alpha)
  \end{align*}
  because $\Hf$ is a frame.
\end{proof}

\subsubsection{Supplementary Materials for \cref{sec:nftrans}}
\label{sec:appnftrans}

Let $\Set$ be a finite set of HoCHCs. W.l.o.g.\ we can
assume that for each $R\in\Sigma'\setminus\Sigma$ there is at least
one Horn clause $G\lor R\,\overline x_R$ and each definite clause has
this form.

For a goal clause $G=\neg A_1\lor\cdots\lor\neg A_n$ let $\posex(G,V)\defeq\exists y_1,\ldots,y_m\ldotp A_1\land\cdots\land A_n$, where $\{y_1,\ldots,y_m\}=\free(G)\setminus V$ and $\posex(G)\defeq\posex(G,\emptyset)$. Clearly, $\posex(G)$ is a positive existential closed formula.
Let $\Prgm_\Set$ be the set of definite formulas
\begin{align*}
  \neg (\posex(G_{R,1},\overline x_R)\lor\cdots\lor \posex(G_{R,n},\overline x_R))\lor R\,\overline x_R,
\end{align*}
where $G_{R,1}\lor R\,\overline x_R,\ldots,G_{R,n}\lor R\,\overline
x_R$ are the unnegated occurrences of $R\,\overline x_R$ in $\Set$ and
$R\in\Sigma'\setminus\Sigma$.
Clearly, $\Prgm_\Set$ is a program.
Furthermore, the following is obvious by definition:
\begin{lemma}
  Let $\Set$ be a finite set of HoCHCs, $\Hf$ a frame and
  $\Bs$ a $(\Sigma',\Hf)$-expansion of $\As$.

  Then
    $\Bs\models\{D\in\Set\mid D\text{ definite}\}$ iff $\Bs\models \Prgm_\Set$.
  
  \end{lemma}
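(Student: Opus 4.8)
The plan is to unwind the definitions of $\posex$ and of $\Prgm_\Set$ and reduce the statement to two entirely routine equivalences: first, that pulling the implicit universal quantifier over a single definite clause turns its body into the corresponding positive existential formula; and second, a purely propositional regrouping of the several clauses sharing a given head into the single clause that $\Prgm_\Set$ associates with that head.

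First I would fix $R\in\Sigma'\setminus\Sigma$ and let $G_{R,1}\lor R\,\overline x_R,\dots,G_{R,n_R}\lor R\,\overline x_R$ be the definite clauses of $\Set$ with head $R$; by the WLOG normalisation recalled just before the lemma every definite clause of $\Set$ has this shape, and by the variable convention we may assume each displays its head with the same canonical tuple $\overline x_R$ of distinct variables. I would then prove, for a single such clause $D=G\lor R\,\overline x_R$ with $G=\neg A_1\lor\cdots\lor\neg A_k$ and $\overline y\defeq\free(G)\setminus\{\overline x_R\}$, that $\Bs\models D$ iff $\Bs\models\neg\posex(G,\overline x_R)\lor R\,\overline x_R$ for every $(\Sigma',\Hf)$-structure $\Bs$. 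Since $\Bs\models(\cdot)$ ranges over all valuations and $R\,\overline x_R$ has no free variable among $\overline y$, the left-hand side says: for every valuation $\beta$ on $\overline x_R$, either $\sinti\Bs{R\,\overline x_R}(\beta)=1$, or for every extension $\alpha$ of $\beta$ to $\overline y$ we have $\Bs,\alpha\not\models A_1\land\cdots\land A_k$. Using the comprehension axiom together with the pre-frame semantics of $\exists_\tau$ and $\land$ — and here it is essential that $\lambda\overline y\ldotp(A_1\land\cdots\land A_k)$ is positive existential — the second disjunct is exactly $\Bs,\beta\not\models\exists\overline y\ldotp(A_1\land\cdots\land A_k)=\posex(G,\overline x_R)$, i.e.\ $\Bs,\beta\models\neg\posex(G,\overline x_R)$, which is precisely the right-hand side.

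Next I would record the purely propositional fact that in any pre-frame the denotations of $\biglor$ and $\bigland$ are $\max$ and $\min$ and that of $\neg$ is $1-(\cdot)$, so that for every valuation $\alpha$,
\[
\min_{1\le i\le n_R}\min\bigl(1-\sinti\Bs{\posex(G_{R,i},\overline x_R)}(\alpha),\ \sinti\Bs{R\,\overline x_R}(\alpha)\bigr)
=\min\bigl(1-\max_{1\le i\le n_R}\sinti\Bs{\posex(G_{R,i},\overline x_R)}(\alpha),\ \sinti\Bs{R\,\overline x_R}(\alpha)\bigr),
\]
i.e.\ $\Bs\models\bigland_{i=1}^{n_R}\bigl(\neg\posex(G_{R,i},\overline x_R)\lor R\,\overline x_R\bigr)$ iff $\Bs\models\neg\bigl(\biglor_{i=1}^{n_R}\posex(G_{R,i},\overline x_R)\bigr)\lor R\,\overline x_R$, the latter being exactly the formula of $\Prgm_\Set$ attached to $R$. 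Chaining the two steps then gives the lemma: $\Bs\models\{D\in\Set\mid D\text{ definite}\}$ iff for every $R$ and every $i$, $\Bs\models G_{R,i}\lor R\,\overline x_R$; iff (step one, applied clausewise) for every $R$ and $i$, $\Bs\models\neg\posex(G_{R,i},\overline x_R)\lor R\,\overline x_R$; iff (step two) for every $R$, $\Bs\models\neg\bigl(\biglor_i\posex(G_{R,i},\overline x_R)\bigr)\lor R\,\overline x_R$; iff $\Bs\models\Prgm_\Set$.

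I do not expect any genuine obstacle — the paper itself calls the statement "obvious by definition". The only points that need a line of care are the quantifier manipulation in step one (legitimate precisely because the head $R\,\overline x_R$ shares no free variable with $\overline y$, and because $\Delta^{-1}(\tau)$ is infinite so the needed fresh variables and $\alpha$-renamings are available) and the appeal to the comprehension axiom when evaluating $\posex$, which is exactly where the restriction of the axiom to positive existential terms is used.
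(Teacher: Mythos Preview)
Your approach is correct and is precisely the elaboration the paper has in mind when it declares the lemma ``obvious by definition'' (the paper supplies no proof at all). One small slip worth fixing: in your displayed equation the inner binary operation should be $\max$, not $\min$, since $\lor$ is interpreted as $\max$; as written, the two sides compute the semantics of $\bigland_i(\neg p_i\land r)$ and $\neg(\biglor_i p_i)\land r$ rather than of the disjunctive clauses you intend. The equivalence you state immediately afterwards is nonetheless the right one, and with the inner $\min$ replaced by $\max$ the identity $\min_i\max(1-p_i,r)=\max\bigl(1-\max_i p_i,\,r\bigr)$ is the intended (and easily checked) justification.
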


  \subsubsection{$\Sf$, $\Mf$ and $\Cf$ are Complete Frames}
  \label{sec:appsmcHenkin}
  First, note that $\arel$ (as defined in \cref{def:arelc}) and $\prel$ coincide for the continuous frame.
\begin{lemma}
  \label{cl:cprel}
  Let $\prel$ be the pointwise ordering on $\Cf$
  and $\arel$ be the relation defined in \cref{def:arelc}. Then
  ${\arel}={\prel}$.
\end{lemma}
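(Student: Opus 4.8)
The plan is to argue by induction on the structure of the type $\sigma$ that ${\arel_\sigma}$ and ${\prel_\sigma}$ agree on $\sinti\Cf\sigma$; this is the analogue, for the continuous frame, of the corresponding fact for the monotone frame (\cref{lem:moncoin}). The two base cases are immediate from \cref{def:arelc}: at type $o$ both relations are $\leq$ on $\bool$, and at type $\iota$ both are equality on $D_\iota$ (recall $\prel_\iota$ is the discrete order). So the whole argument reduces to the inductive step at a type $\tau\to\sigma'$, under the hypotheses ${\arel_\tau}={\prel_\tau}$ and ${\arel_{\sigma'}}={\prel_{\sigma'}}$. A point worth recording up front: since $\prel_\tau$ is a partial order and hence reflexive, the induction hypothesis makes $\arel_\tau$ reflexive, so the ``$a\arel a$'' side condition in the definition of $\arel_\tau$-directedness is automatic, and the $\arel_\tau$-directed subsets of $\sinti\Cf\tau$ are exactly its (nonempty) directed subsets in the usual sense; moreover all the least upper bounds occurring in \cref{def:arelc} exist in the appropriate $\sinti\Cf{\cdot}$ because $\Cf$ is a complete frame (\cref{ex:Henkin}).

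For the inclusion ${\prel_{\tau\to\sigma'}}\subseteq{\arel_{\tau\to\sigma'}}$, suppose $r\prel r'$; I would fix $s\in\sinti\Cf\tau$ and $\Sd'\in\dir_{\arel_\tau}(s)$. By the induction hypothesis $\Sd'$ is a directed set with $s\prel\bigsqcup\Sd'$. Using that $r$ is a \emph{continuous} function (this is the only place where membership of $\sinti\Cf{\tau\to\sigma'}$ is used): monotonicity of $r$ gives $r(s)\prel r(\bigsqcup\Sd')$, continuity gives $r(\bigsqcup\Sd')=\bigsqcup_{s'\in\Sd'}r(s')$, and $r\prel r'$ pointwise together with monotonicity of $\bigsqcup$ gives $\bigsqcup_{s'\in\Sd'}r(s')\prel\bigsqcup_{s'\in\Sd'}r'(s')$. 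Chaining these, $r(s)\prel\bigsqcup_{s'\in\Sd'}r'(s')$, which by the induction hypothesis is $r(s)\arel_{\sigma'}\bigsqcup_{s'\in\Sd'}r'(s')$. As $s$ and $\Sd'$ were arbitrary, $r\arel_{\tau\to\sigma'}r'$.

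For the converse inclusion ${\arel_{\tau\to\sigma'}}\subseteq{\prel_{\tau\to\sigma'}}$, suppose $r\arel r'$ and fix $s\in\sinti\Cf\tau$. Here the key (and only) idea is to test the defining condition of $\arel_{\tau\to\sigma'}$ against the singleton $\Sd'=\{s\}$: it is $\arel_\tau$-directed (reflexivity of $\arel_\tau$ from the induction hypothesis) and satisfies $s\arel_\tau\bigsqcup\{s\}=s$, so $\{s\}\in\dir_{\arel_\tau}(s)$. The defining condition then yields $r(s)\arel_{\sigma'}\bigsqcup_{s'\in\{s\}}r'(s')=r'(s)$, hence $r(s)\prel_{\sigma'}r'(s)$ by the induction hypothesis. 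Since $s$ was arbitrary, $r\prel_{\tau\to\sigma'}r'$, completing the induction.

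I do not anticipate a serious obstacle. The only substantive ingredient is the use of continuity of the elements of $\sinti\Cf{\tau\to\sigma'}$ to push a supremum through an application in the first inclusion; the second inclusion needs nothing more than instantiating the definition with singletons. The only care required is the bookkeeping around directedness — using the induction hypothesis to see that $\arel_\tau$-directed and ordinary directed coincide on the continuous frame, and that the relevant least upper bounds live in $\Cf$ — both of which follow from \cref{ex:Henkin}.
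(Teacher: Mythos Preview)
Your proposal is correct and takes essentially the same approach as the paper: induction on types, the singleton test for ${\arel}\subseteq{\prel}$, and the monotonicity/continuity chain for ${\prel}\subseteq{\arel}$. The only cosmetic difference is that you invoke continuity of $r$ and then push $r\prel r'$ through the supremum, whereas the paper first uses $r\prel r'$ and then continuity of $r'$; both arrive at the same inequality.
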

\begin{proof}
  We prove by induction on $\sigma$ that ${\arel_{\sigma}}={\prel_\sigma}$. For $\iota$ and $o$ this is obvious. Hence, suppose $\sigma=\tau\to\sigma'$ and let $r,r'\in\sinti\Cf{\tau\to\sigma'}$.

  First, suppose $r\arel r'$. Let $s\in\sinti\Cf\tau$ be
  arbitrary. By the inductive hypothesis, $s\arel s$ ($\prel$ is
  reflexive). Thus, $\{s\}\in\dir_\arel(s)$ and therefore, $r(s)\arel
  r'(s)$. Again by the inductive hypothesis, $r(s)\prel
  r'(s)$. Consequently, $r\prel r'$.
  
  Conversely, suppose $r\prel r'$. Let $s\in\sinti\Cf\tau$ and $\Sd'\in\dir_\arel(s)$ be arbitrary. By the inductive hypothesis, $s\prel\bigsqcup\Sd'$ and $\Sd'$ is $\prel$-directed. Hence,
  \begin{align*}
    r(s)\prel r\left(\bigsqcup\Sd'\right)
    \prel r'\left(\bigsqcup\Sd'\right)
    =\bigsqcup_{s'\in\Sd'}r'(s')
  \end{align*}
  exploiting the monotonicity of $r$, the continuity of $r'$ and the fact that $r\prel r'$.
  Again by the inductive hypothesis, $r(s)\arel\bigsqcup_{s'\in\Sd'}r'(s')$.  Therefore, $r\arel r'$.
\end{proof}

\begin{lemma}
  \label{lem:conhenkin}
  Let $\Sigma$ be a signature, $\Delta$ be a type environment and $\Bs$ be a $(\Sigma,\Cf)$-structure. 
  Then for any positive existential term $M$, $(\Delta,\Cf)$-valuation $\alpha$ and $\vals'\in\dir_\arel(\alpha)$,
  \begin{thmlist}
  \item\label{it:conhenkin1} if $M$ is a $\lambda$-abstraction
    $\Delta\vdash\lambda x\ldotp M'\from\tau\to\rho$ then for every
    $s\in\sinti\Bs\tau$, $\sinti\Bs
    M(\alpha)(s)=\sinti\Bs{M'}(\alpha[x\mapsto s])$, and
  \item\label{it:conhenkin2} $\sinti\Bs M(\alpha)\arel\bigsqcup_{\alpha'\in\vals'}\sinti\Bs M(\alpha')$.
  \end{thmlist}
\end{lemma}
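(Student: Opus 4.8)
The plan is to prove \itemref{it:conhenkin1} and \itemref{it:conhenkin2} together by structural induction on the positive existential term $M$. It is routine that $\Cf$ is a complete pre-frame (an induction on $\sigma$ shows each $\sinti\Cf\sigma$ is a complete lattice under $\prel$, pointwise joins of continuous maps being continuous), so all the joins below are defined; combining this with \cref{cl:cprel} I may read \itemref{it:conhenkin2} simply as $\sinti\Bs M(\alpha)\prel\bigsqcup_{\alpha'\in\vals'}\sinti\Bs M(\alpha')$ for every nonempty $\prel$-directed set $\vals'$ of valuations with $\alpha\prel\bigsqcup\vals'$ (componentwise). The induction is organised so that, for a $\lambda$-abstraction $M=\lambda x\ldotp M'$, I first establish \itemref{it:conhenkin1} \emph{at every valuation}, using only \itemref{it:conhenkin2} for the strict subterm $M'$, and only then establish \itemref{it:conhenkin2} for $M$, using \itemref{it:conhenkin1} for $M$ together with \itemref{it:conhenkin2} for $M'$; for all other shapes of $M$, \itemref{it:conhenkin1} is vacuous, so there is no circularity.

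For \itemref{it:conhenkin1}: by the denotation clause $\sinti\Bs{\lambda x\ldotp M'}(\alpha)=\embed{h}_{\tau\to\rho}$ with $h\defeq\lambda r\in\sinti\Cf\tau\ldotp\sinti\Bs{M'}(\alpha[x\mapsto r])$, so it suffices to show $h\in\sinti\Cf{\tau\to\rho}$, i.e.\ that $h$ is monotone and preserves directed joins; then $\embed{h}=h$ and $h(s)=\sinti\Bs{M'}(\alpha[x\mapsto s])$ is the claim. Monotonicity: for $s\prel s'$, $\{\alpha[x\mapsto s']\}$ is a nonempty $\prel$-directed set of valuations dominating $\alpha[x\mapsto s]$, so \itemref{it:conhenkin2} for $M'$ yields $h(s)\prel h(s')$. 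Preservation of the join of a directed $D\subseteq\sinti\Cf\tau$: apply \itemref{it:conhenkin2} for $M'$ to $\{\alpha[x\mapsto r]\mid r\in D\}$, which is $\prel$-directed and dominates $\alpha[x\mapsto\bigsqcup D]$; the reverse inequality is monotonicity. For \itemref{it:conhenkin2}, the cases $M=x$, $M=c\in\Sigma$, $M\in\{\land,\lor,\exists_\tau\}$ are immediate: the variable case \emph{is} the hypothesis $\{\alpha'(x)\mid\alpha'\in\vals'\}\in\dir_\arel(\alpha(x))$, and otherwise $\sinti\Bs M(\alpha)$ does not depend on $\alpha$, so the claim is reflexivity of $\prel$. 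If $M$ has a non-relational first-order type, then by \cref{rem:isimple,rem:deneq} its denotation depends only on the $\iota$-typed entries of $\alpha$, on which all members of $\vals'$ agree with $\alpha$ ($\prel$ being discrete on $\sinti\Cf\iota$), so both sides coincide. For $M=\lambda x\ldotp M'$ it suffices by \cref{cl:cprel} to check $\sinti\Bs M(\alpha)(s)\prel\bigsqcup_{\alpha'\in\vals'}\sinti\Bs M(\alpha')(s)$ for every $s$; by \itemref{it:conhenkin1} for $M$ this reads $\sinti\Bs{M'}(\alpha[x\mapsto s])\prel\bigsqcup_{\alpha'}\sinti\Bs{M'}(\alpha'[x\mapsto s])$, which is \itemref{it:conhenkin2} for $M'$ applied at $\alpha[x\mapsto s]$ and the directed set $\{\alpha'[x\mapsto s]\mid\alpha'\in\vals'\}$.

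The crux is the application case $M=M_1\,M_2$; if $M$ has non-relational first-order type it is covered above, so assume $M_1\from\tau_2\to\rho$ with $\rho$ relational, whence each $\sinti\Bs{M_1}(\beta)$ is continuous. Writing $r_\beta\defeq\sinti\Bs{M_1}(\beta)$ and $s_\beta\defeq\sinti\Bs{M_2}(\beta)$, the induction hypothesis \itemref{it:conhenkin2} for $M_1$ and $M_2$ (also applied to singletons $\{\alpha_3\}$ to turn $\prel$-directedness of $\vals'$ into directedness of the image families) gives that $\{r_{\alpha'}\}_{\alpha'\in\vals'}$ and $\{s_{\alpha'}\}_{\alpha'\in\vals'}$ are $\prel$-directed, with $r_\alpha\prel r^*\defeq\bigsqcup_{\alpha'}r_{\alpha'}$ and $s_\alpha\prel s^*\defeq\bigsqcup_{\alpha'}s_{\alpha'}$. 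Using monotonicity of $r^*$ in its argument, continuity of $r^*$ along $\{s_{\alpha'}\}$, and that $r^*$ is a pointwise join,
\[
  \sinti\Bs{M_1\,M_2}(\alpha)=r_\alpha(s_\alpha)\ \prel\ r^*(s^*)\ =\ \bigsqcup_{\alpha'\in\vals'}r^*(s_{\alpha'})\ =\ \bigsqcup_{\alpha',\alpha''\in\vals'}r_{\alpha''}(s_{\alpha'}).
\]
This ``rectangular'' join then collapses to the diagonal: for $\alpha',\alpha''\in\vals'$ choose $\alpha_3\in\vals'$ above both; then $r_{\alpha''}\prel r_{\alpha_3}$ and $s_{\alpha'}\prel s_{\alpha_3}$, so by monotonicity $r_{\alpha''}(s_{\alpha'})\prel r_{\alpha_3}(s_{\alpha_3})=\sinti\Bs{M_1\,M_2}(\alpha_3)\prel\bigsqcup_{\alpha'''\in\vals'}\sinti\Bs{M_1\,M_2}(\alpha''')$, and hence $\sinti\Bs{M_1\,M_2}(\alpha)\prel\bigsqcup_{\alpha'}\sinti\Bs{M_1\,M_2}(\alpha')$, as required.

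I expect this application case — specifically the passage from the rectangular to the diagonal join — to be the main obstacle: it is exactly here that one needs the $\prel$-directedness of $\vals'$ together with the monotonicity and continuity of \emph{every} element of $\Cf$ (equivalently, \cref{cl:cprel}), structure that a general pointwise-ordered pre-frame does not provide. The remaining work is routine, the only delicate point being the ordering of the simultaneous induction, handled as above.
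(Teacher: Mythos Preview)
Your proof is correct and follows essentially the same approach as the paper: a simultaneous structural induction, with the $\lambda$-case for \itemref{it:conhenkin1} established by showing the semantic function lies in $\sinti\Cf{\tau\to\rho}$ via monotonicity and directed-join preservation (both from the inductive hypothesis \itemref{it:conhenkin2} on $M'$), and the application case for \itemref{it:conhenkin2} handled by the rectangular-to-diagonal collapse using directedness. The only difference is presentational: the paper dispatches the non-$\lambda$ cases of \itemref{it:conhenkin2} by saying ``as in \cref{lem:contmain}'', whereas you spell them out directly in terms of $\prel$ via \cref{cl:cprel}; your version is arguably cleaner here since it avoids any appearance of circularity (\cref{lem:contmain} is stated under the assumption that $\Hf$ is a frame, which is precisely what \cref{lem:conhenkin} is establishing for $\Cf$).
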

\begin{proof}
  We prove both parts of the lemma simultaneously by induction on the structure of $M$. For all cases except $\lambda$-abstractions, \itemref{it:conhenkin1} is trivially true and \itemref{it:conhenkin2} is proven as in \cref{lem:contmain}.

  Hence, suppose $M$ is a $\lambda$-abstraction $\Delta\vdash\lambda
  x\ldotp M'\from\tau\to\rho$.
  We define $r\defeq \lambda s\in\sinti\Bs\tau\ldotp\sinti\Bs{M'}(\alpha[x\mapsto s])$
  \begin{claim}
    \label{cl:lambdacont}
    $r\in\sinti\Bs{\tau\to\rho}=\sinti\Cf{\tau\to\rho}$.
  \end{claim}
  \begin{claimproof}
    First, let $s,s'\in\sinti\Bs\tau$ be such that $s\prel s'$. By the reflexivity of $\prel$ and \cref{cl:cprel}, $\{\alpha[x\mapsto s']\}\in\dir_\arel(\alpha[x\mapsto s])$. Hence, by the inductive hypothesis,
    \begin{align*}
      r(s)&=\sinti\Bs{M'}(\alpha[x\mapsto s])
      \arel\sinti\Bs{M'}(\alpha[x\mapsto s'])
            =r(s')
    \end{align*}
    Consequently by \cref{cl:cprel}, $r(s)\prel r(s')$ and $r$ is monotone.

    Next, suppose that $\Sd\subseteq\sinti\Bs\tau$ is
    $\prel$-directed. Note that by \cref{cl:cprel} and the reflexivity
    of $\prel$, $\{\alpha[x\mapsto s]\mid s\in\Sd\}\in\dir_\arel(\alpha[x\mapsto\bigsqcup\Sd])$. Therefore, by the inductive hypothesis
    \begin{align*}
      r\left(\bigsqcup\Sd\right)&=\sinti\Bs{M'}\left(\alpha\left[x\mapsto\bigsqcup\Sd\right]\right)\\
                                                          &\arel\bigsqcup_{s\in\Sd}\sinti\Bs{M'}(\alpha[x\mapsto s])
                                                            =\bigsqcup_{s\in\Sd}r(s)
    \end{align*}
    Again by \cref{cl:cprel}, $r(\bigsqcup\Sd)\prel\bigsqcup_{s\in\Sd}r(s)$.
    Furthermore, by monotonicity of $r$, for
    every $s\in\Sd$, $r(s)\prel r(\bigsqcup\Sd)$. 
    Consequently, by the antisymmetry of $\prel$, $r(\bigsqcup\Sd)=\bigsqcup_{s\in\Sd}r(s)$.

    This concludes the proof of the claim that $r:\sinti\Bs\tau\to\sinti\Bs\rho$ is continuous.
  \end{claimproof}
  As a consequence of \cref{cl:lambdacont}, for every $s\in\sinti\Bs\tau$, $\sinti\Bs
    M(\alpha)(s)=\sinti\Bs{M'}(\alpha[x\mapsto s])$ and the same argument as in the proof of
  \cref{lem:contmain} can be used to demonstrate
  \itemref{it:conhenkin2} of the lemma.
\end{proof}

Similarly, we get the following for the monotone frame
\begin{lemma}
  \label{lem:monhenkin}
  \begin{thmlist}
  \item\label{lem:moncoin} ${\arel}={\prel}$, where $\prel$ be the pointwise ordering on $\Mf$
  and $\arel$ is the relation defined in \cref{def:arelm};
  \item If $\Sigma$ is a signature, $\Delta$ is a type environment,
    $\Bs$ is a $(\Sigma,\Mf)$-structure, $\alpha$ is a
    $(\Delta,\Mf)$-valuation, $\Delta\vdash\lambda x\ldotp
    M\from\tau\to\rho$ is a positive existential $\Sigma$-term and
    $s\in\sinti\Bs\tau$ then
    \begin{align*}
      \sinti\Bs M(\alpha)(s)=\sinti\Bs{M'}(\alpha[x\mapsto s]).
    \end{align*}
  \end{thmlist}
\end{lemma}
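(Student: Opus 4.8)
The plan is to prove both items exactly as the corresponding facts for the continuous frame were proved: part \itemref{lem:moncoin} is the monotone counterpart of \cref{cl:cprel}, and the second item is the monotone counterpart of the first part of \cref{lem:conhenkin}, with ``continuous'' everywhere weakened to ``monotone'' and all directed suprema dropped. I would first prove part \itemref{lem:moncoin} ($\arel=\prel$ on $\Mf$) by induction on the type $\sigma$. The cases $\sigma=\iota$ and $\sigma=o$ hold by definition. For $\sigma=\tau\to\sigma'$ and $r,r'\in\sinti\Mf{\tau\to\sigma'}$: if $r\arel r'$, then for any $s\in\sinti\Mf\tau$ the induction hypothesis gives $s\arel s$, hence $r(s)\arel r'(s)$ and so $r(s)\prel r'(s)$, i.e.\ $r\prel r'$; conversely, if $r\prel r'$ and $s\arel s'$, then $s\prel s'$ by the induction hypothesis, so $r(s)\prel r(s')\prel r'(s')$ using monotonicity of $r$ and $r\prel r'$, whence $r(s)\arel r'(s')$, i.e.\ $r\arel r'$. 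This is the argument of \cref{cl:cprel} with ``continuity of $r'$'' replaced by the weaker ``monotonicity of $r$''. In particular $\arel$ is \emph{reflexive} on $\Mf$ (in contrast to $\Sf$ and $\Cf$), which the second part will use freely.

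For the second item I would run a simultaneous induction on the structure of the positive existential term $N$, just as in the proof of \cref{lem:conhenkin}: alongside the identity $\sinti\Bs{\lambda x\ldotp M}(\alpha)(s)=\sinti\Bs M(\alpha[x\mapsto s])$ for positive existential $\lambda$-abstractions $\Delta\vdash\lambda x\ldotp M\from\tau\to\rho$, I would carry the monotone analogue of \cref{lem:contmain}, namely that $\alpha\arel\alpha'$ implies $\sinti\Bs N(\alpha)\arel\sinti\Bs N(\alpha')$. The case $N=x$ is the pointwise hypothesis on valuations; the cases of a constant and of $\land,\lor,\exists_\tau$ are immediate from reflexivity of $\arel$ on $\Mf$; the case $N=N_1N_2$ is precisely the ``$\arel$ is defined to respect application'' step read off \cref{def:arelm}; and $N=\neg N'$ cannot occur. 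The real content is the case $N=\lambda x\ldotp M$ of type $\tau\to\rho$: set $g\defeq\lambda t\in\sinti\Bs\tau\ldotp\sinti\Bs M(\alpha[x\mapsto t])$; using part \itemref{lem:moncoin} (hence reflexivity of $\arel$) together with the induction hypothesis for the subterm $M$, one checks that $t\prel t'$ implies $\alpha[x\mapsto t]\arel\alpha[x\mapsto t']$, hence $g(t)\arel g(t')$, hence $g(t)\prel g(t')$; thus $g$ is monotone, i.e.\ $g\in\sinti\Mf{\tau\to\rho}$, and therefore the bracket $\embed{\cdot}_{\tau\to\rho}$ in the clause defining $\sinti\Bs{\lambda x\ldotp M}(\alpha)$ is the identity, giving $\sinti\Bs{\lambda x\ldotp M}(\alpha)(s)=g(s)=\sinti\Bs M(\alpha[x\mapsto s])$. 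This is the desired identity, and, by \cref{def:arelm} applied to it, it also closes the $\arel$-statement for $N$.

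The main obstacle is precisely this $\lambda$-abstraction step: to be allowed to discard the embedding brackets one must first know that the pointwise-defined $g$ is genuinely monotone, which is why the two statements have to be proved together (so that monotonicity of the denotation of the body $M$ in its valuation is available as an induction hypothesis). It is also why one cannot simply quote \cref{lem:termmon} here --- the proof of that lemma already presupposes that the ambient frame satisfies the comprehension axiom, whereas establishing (this restricted form of) the comprehension axiom for $\Mf$ is exactly what is at stake. Everything else is routine, and in fact a shade simpler than the continuous case because on $\Mf$ the relation $\arel$ is reflexive.
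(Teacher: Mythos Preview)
Your proposal is correct and follows essentially the same approach as the paper, which simply says ``Similarly, we get the following for the monotone frame'' and leaves the details implicit as the obvious adaptation of \cref{cl:cprel} and \cref{lem:conhenkin}. Your simplification of the auxiliary invariant in the second part from the directed-supremum form of \cref{lem:conhenkin}\itemref{it:conhenkin2} to the plain implication ``$\alpha\arel\alpha'\Rightarrow\sinti\Bs N(\alpha)\arel\sinti\Bs N(\alpha')$'' is exactly the right streamlining afforded by reflexivity of $\arel$ on $\Mf$, and your observation that \cref{lem:termmon} cannot be invoked here (because its $\lambda$-case already presupposes the frame property) is well taken.
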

\cref{lem:monhenkin,lem:conhenkin} immediately imply the following
(completeness is trivial):
\begin{proposition}
  \label{prop:Henkin}
  $\Sf$, $\Mf$ and $\Cf$ are complete frames.
\end{proposition}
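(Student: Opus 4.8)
The plan is to verify, for each $\Hf\in\{\Sf,\Mf,\Cf\}$, the two ingredients of being a \emph{complete frame}: that $\Hf$ is a frame (a pre-frame satisfying the comprehension axiom) and that it is \emph{complete} (each $\sinti\Hf\rho$ with $\rho$ relational is a complete lattice under $\prel_\rho$ with joins $\bigsqcup_\rho$). That $\Hf$ is a pre-frame is a routine check: $\andf=\min$ and $\orf=\max$ are monotone, hence --- since $\bool$ is finite --- Scott-continuous, and likewise $\hexists_\tau(r)=\max\{r(s)\mid s\in\sinti\Hf\tau\}$ is monotone and continuous in $r$ since double suprema commute. As the parenthetical remark anticipates, completeness is also straightforward, so the substance lies in the comprehension axiom, which for $\Mf$ and $\Cf$ is exactly what \cref{lem:monhenkin,lem:conhenkin} supply.

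For completeness I would induct on the relational type $\rho$. The base case $\rho=o$ holds since $\sinti\Hf o=\bool$ is a two-element complete lattice. For $\rho=\tau\to\rho'$ the inductive hypothesis makes $\sinti\Hf{\rho'}$ a complete lattice, and I then invoke the standard fact (cf.~\cite{AJ95}) that the set of all functions (for $\Sf$), of monotone functions (for $\Mf$), or of Scott-continuous functions (for $\Cf$) from $\sinti\Hf\tau$ into a complete lattice is again a complete lattice under the pointwise order, with suprema taken pointwise. Since $\bigsqcup_{\tau\to\rho'}$ is by definition this pointwise join, we get $\bigsqcup_{\tau\to\rho'}\Rd\in\sinti\Hf{\tau\to\rho'}$, and it is indeed least among upper bounds. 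For $\Cf$ the cited fact additionally requires $\sinti\Cf\tau$ to be a dcpo; this follows from the same induction, together with the trivial observation that the discrete poset $\sinti\Cf\iota=D_\iota$ is a dcpo.

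For the comprehension axiom, recall that $\sinti\As{\lambda x\ldotp M}(\alpha)$ is defined as the embedding $\embed{g}_{\Delta(x)\to\rho}$ of the set-theoretic function $g=\bigl(r\mapsto\sinti\As M(\alpha[x\mapsto r])\bigr)$; hence it suffices to show $g\in\sinti\Hf{\Delta(x)\to\rho}$, for then $\embed{g}_{\Delta(x)\to\rho}=g$ and the required identity $\sinti\As{\lambda x\ldotp M}(\alpha)(r)=\sinti\As M(\alpha[x\mapsto r])$ is immediate. For $\Sf$ this is trivial, $\sinti\Sf{\Delta(x)\to\rho}$ being the full function space (positivity of $M$ is not even needed). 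For $\Mf$ and $\Cf$ it is precisely part (ii) of \cref{lem:monhenkin} and part (i) of \cref{lem:conhenkin} respectively, whose proofs show $g$ to be monotone, resp.\ Scott-continuous, using that $M$ is positive existential --- the only non-monotone primitive being $\neg$. The one genuinely delicate point, continuity of $g$ in the $\Cf$ case, has thus already been discharged in \cref{lem:conhenkin}, which is why that lemma is proved by the simultaneous induction with the $\arel$-directedness apparatus of \cref{def:arelc} rather than by a naive structural induction; here it can simply be cited.
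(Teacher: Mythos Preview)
Your proposal is correct and follows the same approach as the paper: the paper's proof simply states that \cref{lem:monhenkin,lem:conhenkin} immediately yield the comprehension axiom for $\Mf$ and $\Cf$ (and that completeness is trivial), and you have unpacked exactly this, citing the correct parts of those lemmas and additionally spelling out the pre-frame check and the induction behind ``completeness is trivial''.
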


\subsection{Supplementary Materials for \cref{sec:quasi-mon}}
\knastertarski*
\begin{proof}
  Since, $L$ is a set (in contrast to $\On$) there exists
  $\beta\in\On$ satisfying $a_F=a_\beta$.
  \begin{thmlist}
  \item Thus,
  $F(a_F)=a_{\beta+1}\leq a_F$.
  \item By what we have just shown it suffices to prove by transfinite
    induction on $\beta$ that $a_\beta\arel b$.
    \begin{itemize}[noitemsep]
    \item First, suppose $\beta=\widetilde\beta+1$ is a successor
      ordinal. By the inductive hypothesis and quasi-monotonicity of $F$,
      $a_\beta=F(a_{\widetilde\beta})\arel F(b)\leq b$. Consequently by
      \ref{def:rel1}, $a_\beta\arel b$.
    \item Otherwise, $\beta$ is a limit ordinal and by the inductive
      hypothesis, $a_{\widetilde\beta}\arel b$ for all
      $\widetilde\beta<\beta$. Consequently, by \ref{def:rel2},
      $a_\beta\arel b$.
    \end{itemize}  
  \end{thmlist}  
\end{proof}

\begin{lemma}
  \label{cor:relm}
  $\arel$ \changed[dw]{is compatible with $\prel$}.
\end{lemma}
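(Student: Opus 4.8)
The plan is a straightforward induction on the type $\sigma$, unfolding \cref{def:qm}: for each $\sigma$ I check that $\arel_\sigma$ satisfies \ref{def:rel1} and \ref{def:rel2} with respect to $\prel_\sigma$. First, the base types: on $\sinti\Hf\iota$ both $\arel_\iota$ and $\prel_\iota$ are the identity relation (the discrete order), and equality is trivially compatible with itself — \ref{def:rel1} is transitivity and the only least upper bounds to consider are of singletons; on $\sinti\Hf o=\bool$ both $\arel_o$ and $\prel_o$ are the order $\leq$ of the two-element complete lattice, which, as noted right after \cref{def:qm}, is compatible with itself (there \ref{def:rel2} is just the defining property of a least upper bound).

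For the inductive step, take $\sigma=\tau\to\sigma'$ and assume the statement for the strictly smaller types $\tau$ and $\sigma'$. For \ref{def:rel1}, suppose $r\arel_{\tau\to\sigma'}r'$ and $r'\prel_{\tau\to\sigma'}r''$; given any $s,s'\in\sinti\Hf\tau$ with $s\arel_\tau s'$, the first hypothesis gives $r(s)\arel_{\sigma'}r'(s')$ and the second — since $\prel$ is pointwise — gives $r'(s')\prel_{\sigma'}r''(s')$, so \ref{def:rel1} at $\sigma'$ yields $r(s)\arel_{\sigma'}r''(s')$; as $s,s'$ were arbitrary, $r\arel_{\tau\to\sigma'}r''$. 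For \ref{def:rel2}, let $\Rd\subseteq\{r\mid r\arel_{\tau\to\sigma'}r'\}$. If $\tau\to\sigma'$ is non-relational then $\arel_{\tau\to\sigma'}$ and $\prel_{\tau\to\sigma'}$ both reduce to equality and the claim is immediate (the only least upper bound in play is of a singleton); so assume $\tau\to\sigma'$ is relational, whence $\bigsqcup\Rd$ exists and is computed pointwise, $(\bigsqcup\Rd)(s)=\bigsqcup_{\sigma'}\{r(s)\mid r\in\Rd\}$ for every $s\in\sinti\Hf\tau$. Fix $s\arel_\tau s'$; every $r\in\Rd$ satisfies $r(s)\arel_{\sigma'}r'(s')$, so $\{r(s)\mid r\in\Rd\}\subseteq\{t\mid t\arel_{\sigma'}r'(s')\}$, and \ref{def:rel2} at $\sigma'$ yields $(\bigsqcup\Rd)(s)\arel_{\sigma'}r'(s')$; hence $\bigsqcup\Rd\arel_{\tau\to\sigma'}r'$, which closes the induction.

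I do not expect any genuine obstacle: the argument is pure bookkeeping with the pointwise definitions of $\prel_\sigma$ and $\bigsqcup_\sigma$ together with the two induction hypotheses. The only points needing a little care are, in the \ref{def:rel2} case, aligning the pointwise least upper bound with the induction hypothesis at the smaller type $\sigma'$, and keeping track of which least upper bounds actually exist in a complete frame (relational versus non-relational types). The very same scheme — with $\dir_{\arel_\tau}(\cdot)$ replacing single arguments $s'$ — will carry over essentially verbatim to the relation of \cref{def:arelc} in the quasi-continuity development.
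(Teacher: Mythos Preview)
Your proposal is correct and follows essentially the same route as the paper: induction on $\sigma$, with the base cases dismissed as immediate and the inductive step handling \ref{def:rel1} via the pointwise nature of $\prel$ and \ref{def:rel2} via the pointwise computation of $\bigsqcup$. Your explicit case split between relational and non-relational $\tau\to\sigma'$ in the \ref{def:rel2} case is a minor refinement the paper leaves implicit (it silently uses $(\bigsqcup\Rd)(s)=\bigsqcup_{r\in\Rd}r(s)$, which presupposes the relational case), but the underlying argument is identical.
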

\begin{proof}
  We prove by induction on the type $\sigma$ that $\arel_\sigma$ is
  compatible with $\prel_\sigma$. For $o$ and $\iota$
  this is obvious. Hence, suppose that $\sigma=\tau\to\sigma'$.
  \begin{enumerate}[noitemsep]
  \item Let $r,r',r''\in\sinti\Hf\sigma$ be such that $r\arel_\sigma
    r'\prel_\sigma r''$. Besides, let $s,s'\in\sinti\Hf\tau$ be such that $s\arel_\tau s'$. Clearly, it holds
    that $r(s)\arel_{\sigma'} r'(s')\prel_{\sigma'} r''(s')$ and by the inductive
    hypothesis, $r(s)\arel_{\sigma'} r''(s')$. Hence, $r\arel_\sigma
    r''$.
  \item Let $r'\in\sinti\Hf\sigma$ and
    $\Rd\subseteq\{r\in\sinti\Hf\sigma\mid r\arel_\sigma r'\}$ be
    arbitrary. Suppose $s,s'\in\sinti\Hf\tau$ are such that $s\arel_\tau s'$. By definition, $\{r(s)\mid
  r\in\Rd\}\subseteq\{t\in\sinti\Hf{\sigma'}\mid t\arel_{\sigma'}
  r'(s')\}$. Therefore, by the inductive hypothesis,
  $\left(\bigsqcup\Rd\right)(s)=\bigsqcup_{r\in\Rd}r(s)\arel_{\sigma'}
  r'(s')$. Consequently, $\bigsqcup\Rd\arel_\sigma r'$.\qedhere
  \end{enumerate}
\end{proof}

\termmon*
\begin{proof}
  We prove the claim by induction on the structure of $M$.
  \begin{itemize}[noitemsep]
  \item If $M$ is a variable $x$ then $\sinti\Bs M(\alpha)=\alpha(x)\arel\alpha'(x)=\sinti{\Bs'}M(\alpha')$ because of $\alpha\arel\alpha'$.
  \item If $M$ is a logical symbol (other than $\neg$) then this is a consequence of \cref{ex:qmlog2}. 
  \item If $M$ is a symbol $R\in\Sigma'$ then $\sinti\Bs M(\alpha)=R^\Bs\arel R^{\Bs'}=\sinti{\Bs'}M(\alpha')$ because of $\Bs\arel\Bs'$.
  \item If $M$ is an application $N\,N'$ then by the inductive
    hypothesis $\sinti\Bs N(\alpha)\arel\sinti{\Bs'}N(\alpha')$ and
    $\sinti\Bs{N'}(\alpha)\arel\sinti{\Bs'}{N'}(\alpha')$. Therefore, by definition of $\arel$,
    \begin{align*}
      \sinti\Bs{M}(\alpha)&=\sinti\Bs
                            N(\alpha)(\sinti\Bs{N'}(\alpha))\\
                          &\arel \sinti{\Bs'}
                            N(\alpha')(\sinti{\Bs'}{N'}(\alpha'))
                          =\sinti{\Bs'}{M}(\alpha')
    \end{align*}
  \item Finally, suppose $M$ is an abstraction $\lambda x\ldotp N$. Let $s\arel s'$. 
    By the inductive hypothesis $\sinti\Bs N(\alpha[x\mapsto s])\arel\sinti{\Bs'}N(\alpha'[x\mapsto s'])$ and hence,
    \begin{align*}
      \sinti\Bs M(\alpha)(s)&=\sinti\Bs N(\alpha[x\mapsto s])\\
                            &\arel\sinti{\Bs'}N(\alpha'[x\mapsto
                              s'])
                            =\sinti{\Bs'}M(\alpha)(s')
    \end{align*}
    because $\Hf$ is a frame.
    Due to the fact that this holds for every $s\arel s'$, $\sinti\Bs M(\alpha)\arel \sinti{\Bs'}M(\alpha')$.\qedhere
  \end{itemize}
\end{proof}

\subsection{Supplementary Materials for \cref{sec:resolution}}
\begin{lemma}
  \label{lem:ignoresimple}
  Let $\phi_1,\ldots,\phi_n$ be background atoms.
  Then $\neg x_1\,\overline M_1\lor\neg x_m\,\overline M_m\lor\neg\phi_1\lor\cdots\lor\neg\phi_n\models\neg\phi_1\lor\cdots\lor\neg\phi_n$.
\end{lemma}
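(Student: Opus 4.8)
The plan is to establish the entailment semantically, re-using the valuation-modification trick from the Soundness proof sketch of \cref{prop:soundness}. Since the $x_i$ occur free, the entailment is to be read at the level of structures (equivalently, of the universal closures): fix a frame $\Hf$ and a $(\Sigma',\Hf)$-structure $\Bs$ (e.g.\ an expansion of $\As$) with $\Bs\models\biglor_{i=1}^m\neg x_i\,\overline M_i\lor\biglor_{j=1}^n\neg\phi_j$; I must show $\Bs\models\biglor_{j=1}^n\neg\phi_j$. So fix an arbitrary valuation $\alpha$ and assume, for a contradiction, that $\sinti\Bs{\phi_j}(\alpha)=1$ for every $j$.

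First I would observe that every $x_i$ has a \emph{relational} type: since $x_i\,\overline M_i$ is an atom, hence of type $o$, and type environments assign argument types, $\Delta(x_i)$ must be of the form $\overline{\tau_i}\to o$, i.e.\ relational. Consequently, by the remark immediately preceding the discussion of complete frames, $\top_{\Delta(x_i)}\in\sinti\Hf{\Delta(x_i)}$ exists; put $\alpha'\defeq\alpha[x_1\mapsto\top_{\Delta(x_1)},\dots,x_m\mapsto\top_{\Delta(x_m)}]$. The two key computations are then: (i) $\sinti\Bs{x_i\,\overline M_i}(\alpha')=\top_{\Delta(x_i)}(\cdots)=1$ for each $i$, since $\top_\rho$ is the constantly-$1$ relation, irrespective of the denotations of the $\overline M_i$; and (ii) $\sinti\Bs{\phi_j}(\alpha')=\sinti\Bs{\phi_j}(\alpha)=1$ for each $j$, because a background atom is a $1$st-order $\Sigma$-formula and thus, by \cref{rem:isimple}, mentions only variables of type $\iota$, so none of the relational variables $x_i$ occurs free in $\phi_j$ and $\alpha,\alpha'$ agree on $\free(\phi_j)$.

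Combining (i) and (ii) gives $\Bs,\alpha'\not\models\biglor_{i=1}^m\neg x_i\,\overline M_i\lor\biglor_{j=1}^n\neg\phi_j$, contradicting $\Bs\models\biglor_{i=1}^m\neg x_i\,\overline M_i\lor\biglor_{j=1}^n\neg\phi_j$. Hence $\sinti\Bs{\phi_j}(\alpha)=0$ for some $j$, i.e.\ $\Bs,\alpha\models\biglor_{j=1}^n\neg\phi_j$; as $\alpha$ was arbitrary, $\Bs\models\biglor_{j=1}^n\neg\phi_j$. The only point needing care is step (ii) — one must appeal to \cref{rem:isimple} (equivalently, the definition of background atom) to rule out any relational variable occurring inside a $\phi_j$; the rest is a routine unfolding of the denotation function, and the availability of $\top_\rho$ is precisely what the (restricted) comprehension axiom provides. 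There is no substantial obstacle here: the lemma merely repackages the instantiation step already used to prove soundness of the constraint refutation rule.
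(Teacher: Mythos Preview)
Your proposal is correct and follows essentially the same approach as the paper: fix $\Bs$ satisfying the premise, take an arbitrary valuation $\alpha$, modify it on relational variables to $\top_\rho$, use \cref{rem:isimple} to see that the $\phi_j$ are unaffected, and derive the conclusion. The only cosmetic differences are that the paper sets \emph{all} relational-typed variables to $\top$ (not just $x_1,\dots,x_m$) and argues directly rather than by contradiction; neither affects the substance of the argument.
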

\begin{proof}
  Let $\Hf$ be a frame and suppose $\Bs$ is an arbitrary $(\Sigma',\Hf)$-structure satisfying
  \begin{align}
    \label{eq:ignsimple}
    \Bs\models \neg x_1\,\overline M_1\lor\neg x_m\,\overline M_m\lor\neg\phi_1\lor\cdots\lor\neg\phi_n.
  \end{align}
  Let $\alpha$ be an arbitrary $(\Delta,\Hf)$-valuation. We define another $(\Delta,\Hf)$-valuation $\alpha'$ by
  \begin{align*}
    \alpha'(x)=
    \begin{cases}
      \top_\rho&\text{if }\Delta(x)=\rho\\
      \alpha(x)&\text{otherwise ($\Delta(x)=\iota$)}
    \end{cases}
  \end{align*}
 Clearly, $\Bs,\alpha'\not\models \neg x_1\,\overline M_1\lor\neg x_m\,\overline M_m$. Hence, by \cref{eq:ignsimple}, $\Bs,\alpha'\models\neg\phi_i$ for some $i$. Note that by \cref{rem:isimple}, $\phi_i$ only contains variables of type $\iota$. Hence, also $\Bs,\alpha\models\neg\phi_i$. This proves, $\Bs\models \neg\phi_1\lor\cdots\lor\neg\phi_n$.
\end{proof}

Now, \cref{prop:soundness} is a simple consequence of the following:
\begin{lemma}
  Let $\Set'$ be a set of HoCHCs and suppose that $\Set'\Res \Set'\cup\{C\}$. Then
  \begin{thmlist}
  \item\label{it:sound1} if $C\neq\bot$ then $\Set'\models C$;
  \item if $\Bs$ is an expansion of $\As$ and $\Bs\models \Set'$ then $\Bs\models C$.
  \end{thmlist}
\end{lemma}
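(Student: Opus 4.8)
The plan is to prove both parts simultaneously by a case analysis on which of the three rules was used to derive $C$ from $\Set'$ in one step. For the two purely logical rules (\textbf{Resolution} and \textbf{$\beta$-Reduction}) I will establish the stronger statement that $C$ is entailed by its (at most two) premise clauses in \emph{every} $(\Sigma',\Hf)$-structure over \emph{any} frame $\Hf$; since an expansion of $\As$ is such a structure, this settles Part~(i) (where $C\ne\bot$) and also Part~(ii) whenever $C\ne\bot$. The remaining rule, \textbf{Constraint refutation}, produces $C=\bot$, so Part~(i) is vacuous; for Part~(ii) it then suffices to show that $\As$ has \emph{no} expansion $\Bs$ with $\Bs\models\Set'$, because then the implication ``$\Bs\models\Set'\Rightarrow\Bs\models\bot$'' is vacuously true.

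For \textbf{Resolution} the premises are $\neg R\,\overline M\lor G$ and $G'\lor R\,\overline x$ from $\Set'$ and $C=G\lor G'[\overline M/\overline x]$; renaming the second premise apart (as the rule prescribes) we may assume $\overline x$ and $\free(G')$ are fresh, in particular $\overline x$ does not occur in $\overline M$. Fix a frame $\Hf$, a $(\Sigma',\Hf)$-structure $\Bs$ satisfying both premises, and a valuation $\alpha$; put $\beta\defeq\alpha[\overline x\mapsto\sinti\Bs{\overline M}(\alpha)]$. From the second premise, $\Bs,\beta\models G'$ or $\Bs,\beta\models R\,\overline x$. In the first subcase the Substitution \cref{lem:substitution} gives $\sinti\Bs{G'[\overline M/\overline x]}(\alpha)=\sinti\Bs{G'}(\beta)=1$, so $\Bs,\alpha\models C$; in the second, $\sinti\Bs{R\,\overline M}(\alpha)=\sinti\Bs{R\,\overline x}(\beta)=1$, so the first premise forces $\Bs,\alpha\models G$, and again $\Bs,\alpha\models C$. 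For \textbf{$\beta$-Reduction} the premise is $\neg(\lambda x\ldotp L)M\,\overline N\lor G$ and $C=\neg L[M/x]\,\overline N\lor G$; as the atom $(\lambda x\ldotp L)M\,\overline N$ sits inside a goal clause it contains no logical symbol, so $\lambda x\ldotp L$ is positive existential, and the comprehension axiom (which holds because $\Hf$ is a frame) together with \cref{lem:substitution} yields $\sinti\Bs{(\lambda x\ldotp L)M}(\alpha)=\sinti\Bs{L[M/x]}(\alpha)$; applying both sides to $\sinti\Bs{\overline N}(\alpha)$ shows the two atoms, hence the two clauses, are semantically equivalent (cf.\ \cref{lem:betaeta}). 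Observe that completeness of $\Hf$ is never used, in line with \cref{prop:soundness}.

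For \textbf{Constraint refutation} the premise is $G\lor\neg\phi_1\lor\cdots\lor\neg\phi_n$, where every atom of $G$ has the form $x\,\overline M$, each $\phi_i$ is a background atom, and some valuation $\alpha$ satisfies $\As,\alpha\models\phi_1\land\cdots\land\phi_n$. Suppose, towards a contradiction, that $\Bs$ is an expansion of $\As$ with $\Bs\models\Set'$; then $\Bs$ satisfies this premise, so by \cref{lem:ignoresimple} (whose conclusion discards the negated applied-variable atoms of $G$) we get $\Bs\models\neg\phi_1\lor\cdots\lor\neg\phi_n$, hence $\Bs,\alpha\models\neg\phi_i$ for some $i$. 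But $\phi_i$ is a 1st-order $\Sigma$-formula, so its truth under $\Bs,\alpha$ depends only on the interpretation of $\Sigma$ and on the type-$\iota$ part of $\alpha$, which $\Bs$ shares with $\As$; thus $\Bs,\alpha\models\phi_i$, contradicting $\Bs,\alpha\models\neg\phi_i$. Hence no such $\Bs$ exists, which is precisely what Part~(ii) asserts in this case. The argument is otherwise routine; the two points that need a little attention are the renaming-apart in the Resolution case (handled by the variable convention and \cref{lem:substitution}) and the fact, underlying \cref{lem:ignoresimple}, that a relational variable in $G$ may be freely re-valued to $\top_\rho\in\sinti\Hf\rho$ --- itself a consequence of the comprehension axiom. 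I do not anticipate a genuine obstacle.
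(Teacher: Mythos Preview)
Your proposal is correct and follows essentially the same approach as the paper: a case split on the three rules, with \textbf{Resolution} and \textbf{$\beta$-Reduction} handled by general semantic entailment (the paper cites \cref{lem:betapreserve} and the classical 1st-order soundness argument), and \textbf{Constraint refutation} handled via \cref{lem:ignoresimple} to reach a contradiction with the background witness $\alpha$. In fact you spell out the Resolution case in more detail than the paper, which simply defers to the standard 1st-order argument.
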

\begin{proof}
  \begin{enumerate}
  \item Note that by assumption the constraint refutation rule cannot have been applied. Besides, for $\beta$-reduction this is a consequence of \cref{lem:betapreserve}. Finally, suppose that $\neg R\,\overline M\lor G$ and $G'\lor R\,\overline x$ are in $\Set'$ (modulo renaming of variables). The proof for this case uses the same ideas as the classic one for 1st-order logic (see e.g.\ \cite{R65,F96}):

  \item Let $\Bs$ be an expansion of $\As$ satisfying $\Bs\models
    \Set'$. By \itemref{it:sound1} it suffices to consider the case when
    the constraint refutation rule is applicable to some goal clause
    $\neg x_1\,\overline M_1\lor\neg x_m\,\overline
    M_m\lor\neg\phi_1\lor\cdots\lor\neg\phi_n$, where each $\phi_i$ is a background atom and there exists a valuation $\alpha$ such that $\As,\alpha\models\phi_1\land\cdots\land\phi_n$. However, by \cref{lem:ignoresimple}, $\Bs,\alpha\models\neg\phi_1\lor\cdots\lor\neg\phi_n$, which is clearly a contradiction to the fact that $\Bs$ is an expansion of $\As$.\qedhere
  \end{enumerate}
\end{proof}


\subsubsection{Supplementary Materials for \cref{subsec:quasicont}}
\label{sec:suppres}

\begin{lemma}
  \label{lem:arelrefl}
  If $\arel$ is compatible with $\leq$ and $F$ is quasi-continuous
  then for all ordinals $\beta$, $a_\beta\arel a_\beta$.
\end{lemma}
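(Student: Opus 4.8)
The plan is a transfinite induction on $\beta$, with the base case and the limit case handled by the two compatibility clauses \ref{def:rel1}, \ref{def:rel2} and the successor case by quasi-continuity. For $\beta = 0$ we have $a_0 = \bot$, and \ref{def:rel2} applied to the empty family $\emptyset \subseteq \{b \in L \mid b \arel \bot\}$ gives $\biglor\emptyset = \bot \arel \bot$, i.e.\ $a_0 \arel a_0$.

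For a successor $\beta = \widetilde\beta + 1$ the inductive hypothesis is $a_{\widetilde\beta} \arel a_{\widetilde\beta}$, and the step I would take is to feed the singleton $D := \{a_{\widetilde\beta}\}$ into quasi-continuity: the inductive hypothesis makes $D$ an $\arel$-directed set with $a_{\widetilde\beta} \arel a_{\widetilde\beta} = \biglor D$, so $D \in \dir_\arel(a_{\widetilde\beta})$, and quasi-continuity of $F$ yields $a_\beta = F(a_{\widetilde\beta}) \arel \biglor_{b \in D} F(b) = F(a_{\widetilde\beta}) = a_\beta$. I expect this to be the one step needing care: one cannot shortcut it via quasi-monotonicity, since quasi-continuity entails quasi-monotonicity only once $\arel$ is already reflexive, which is exactly what we are proving --- hence the detour through a singleton directed set is genuinely needed.

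For a limit ordinal $\beta$ the inductive hypothesis gives $a_{\widetilde\beta} \arel a_{\widetilde\beta}$ for all $\widetilde\beta < \beta$; since $a_{\widetilde\beta} \leq \biglor_{\delta < \beta} a_\delta = a_\beta$, clause \ref{def:rel1} upgrades this to $a_{\widetilde\beta} \arel a_\beta$, so $\{a_{\widetilde\beta} \mid \widetilde\beta < \beta\} \subseteq \{b \in L \mid b \arel a_\beta\}$, and clause \ref{def:rel2} then gives $a_\beta = \biglor_{\widetilde\beta < \beta} a_{\widetilde\beta} \arel a_\beta$. This closes the induction.
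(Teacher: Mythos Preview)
Your proof is correct and follows essentially the same route as the paper's: transfinite induction, with the successor step handled by feeding the singleton $\{a_{\widetilde\beta}\}$ into quasi-continuity (using the inductive hypothesis to verify it lies in $\dir_\arel(a_{\widetilde\beta})$), and the limit step handled via \ref{def:rel1} and \ref{def:rel2}. The only cosmetic difference is that you single out $\beta=0$ explicitly, whereas the paper absorbs it into the limit case (since $0$ is a limit ordinal and the argument degenerates to \ref{def:rel2} on the empty family, exactly as you wrote).
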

\begin{proof}
  \begin{itemize}[noitemsep]
  \item If $\beta=\widetilde\beta+1$ is a successor ordinal then by the
    inductive hypothesis, $\{a_{\widetilde\beta}\}\in\dir(a_{\widetilde\beta})$ and
    by quasi-continuity, $a_\beta=F(a_{\widetilde\beta})\arel
    F(a_{\widetilde\beta})=a_\beta$.
  \item If $\beta$ is a limit ordinal then by the inductive
    hypothesis for all $\widetilde\beta<\beta$, $a_{\widetilde\beta}\arel
    a_{\widetilde\beta}$, and by definition, $a_{\widetilde\beta}\leq a_\beta$. By
    \ref{def:rel1}, $a_{\widetilde\beta}\arel a_\beta$ (for all
    $\widetilde\beta<\beta$) and thus by \ref{def:rel2}, $a_\beta\arel a_\beta$.\qedhere 
  \end{itemize}
\end{proof}
  
\kleene*
\begin{proof}
  \begin{thmlist}
  \item We prove by transfinite induction on $\beta$
    that for all $\beta'\geq\beta$, $a_\beta\arel a_{\beta'}$.
    \begin{itemize}[noitemsep]
    \item First, suppose $\beta=\widetilde\beta+1\leq\beta'$ is a
      successor ordinal.
      If $\beta'=\beta''+1$ is a successor ordinal, too,
      then $\widetilde\beta\leq\beta''$. Therefore, by the
      inductive hypothesis and \cref{lem:arelrefl},
      $\{a_{\beta''}\}\in\dir(a_{\widetilde\beta})$. By
      quasi-continuity of $F$,
      $a_\beta=F(a_{\widetilde\beta})\arel
      F(a_{\beta''})=a_{\beta'}$.

      Otherwise, $\beta'$ is a limit ordinal and $\beta<\beta'$. By
      \cref{lem:arelrefl} and definition, $a_\beta\arel a_\beta\leq
      a_{\beta'}$. Hence, by \ref{def:rel1}, $a_\beta\arel a_{\beta'}$.
    \item Finally, suppose $\beta$ is a limit ordinal. By the
      inductive hypothesis for every $\widetilde\beta<\beta$,
      $a_{\widetilde\beta}\arel a_{\beta'}$ and thus by
      \ref{def:rel2}, $a_\beta\arel a_{\beta'}$.
    \end{itemize}
  \item Next, we prove by transfinite induction that for every ordinal $\beta$,
    $a_\beta\arel a_\omega$. Then the claim follows from
  \ref{def:rel2}.
  \begin{itemize}
  \item First, suppose that $\beta=\beta'+1$ is a successor
    ordinal. By the inductive hypothesis, $a_{\beta'}\arel
    a_\omega$. Therefore (also using the first part), $\{a_n\mid n\in\omega\}\in\dir_\arel(a_{\beta'})$. Thus, by
    quasi-continuity, $a_\beta=F(a_{\beta'})\arel\biglor_{n\in\omega} F(a_n)=a_\omega$.
  \item Next, suppose that $\beta$ is a limit ordinal. By the
    inductive hypothesis, $a_{\beta'}\arel a_\omega$ for
    every $\beta'<\beta$. Therefore, by \ref{def:rel2},
    $a_\beta\arel a_\omega$.\qedhere
  \end{itemize}
  \end{thmlist}
\end{proof}

\begin{lemma}
  \label{cor:relc}
  $\arel$ \changed[dw]{is compatible with $\prel$}.
\end{lemma}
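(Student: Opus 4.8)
The plan is to prove, by induction on the type $\sigma$, that $\arel_\sigma$ (from \cref{def:arelc}) satisfies conditions \ref{def:rel1} and \ref{def:rel2} of \cref{def:qm} with respect to $\prel_\sigma$; the argument runs in close parallel to that of \cref{cor:relm}, the only difference being that the pointwise ``probe'' $r'(s')$ is replaced by a supremum $\bigsqcup_{s'\in\Sd'}r'(s')$ ranging over an $\arel_\tau$-directed set $\Sd'$. For the base cases $\sigma=o$ and $\sigma=\iota$ there is nothing to do: $\arel_o$ is the order $\prel_o$ itself on $\bool$, and $\arel_\iota$ is equality, which is $\prel_\iota$, and any order is trivially compatible with itself.

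For the inductive step I would take $\sigma=\tau\to\sigma'$ and assume $\arel_{\sigma'}$ is compatible with $\prel_{\sigma'}$, using throughout that least upper bounds in $\sinti\Hf{\tau\to\sigma'}$ are, by definition, computed pointwise, so $(\bigsqcup\Rd)(s)=\bigsqcup_{r\in\Rd}r(s)$ for $\Rd\subseteq\sinti\Hf{\tau\to\sigma'}$ and $s\in\sinti\Hf\tau$. To check \ref{def:rel1}, suppose $r\arel_\sigma r'$ and $r'\prel_\sigma r''$; fixing $s\in\sinti\Hf\tau$ and $\Sd'\in\dir_{\arel_\tau}(s)$, I have $r(s)\arel_{\sigma'}\bigsqcup_{s'\in\Sd'}r'(s')$ by definition, and since $r'\prel_\sigma r''$ forces $r'(s')\prel_{\sigma'}r''(s')$ for every $s'$, whence $\bigsqcup_{s'\in\Sd'}r'(s')\prel_{\sigma'}\bigsqcup_{s'\in\Sd'}r''(s')$, the inductive hypothesis for \ref{def:rel1} yields $r(s)\arel_{\sigma'}\bigsqcup_{s'\in\Sd'}r''(s')$; as $s,\Sd'$ were arbitrary, $r\arel_\sigma r''$. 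To check \ref{def:rel2}, let $r'\in\sinti\Hf\sigma$ and $\Rd\subseteq\{r\mid r\arel_\sigma r'\}$; fixing $s$ and $\Sd'\in\dir_{\arel_\tau}(s)$ again, every $r\in\Rd$ satisfies $r(s)\arel_{\sigma'}\bigsqcup_{s'\in\Sd'}r'(s')$, so $\{r(s)\mid r\in\Rd\}\subseteq\{t\mid t\arel_{\sigma'}\bigsqcup_{s'\in\Sd'}r'(s')\}$, and the inductive hypothesis for \ref{def:rel2} gives $(\bigsqcup\Rd)(s)=\bigsqcup_{r\in\Rd}r(s)\arel_{\sigma'}\bigsqcup_{s'\in\Sd'}r'(s')$; hence $\bigsqcup\Rd\arel_\sigma r'$.

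I do not expect a genuine obstacle here: this is a routine structural induction on types, and all the real content has already been isolated in \cref{cor:relm}. The only point demanding a little care is that $\arel$ is \emph{not} reflexive, so one must resist instantiating the universally quantified $\Sd'\in\dir_{\arel_\tau}(s)$ with the singleton $\{s\}$ — but the computation above never does, it simply carries an arbitrary such $\Sd'$ through. The degenerate case $\Rd=\emptyset$ of \ref{def:rel2} needs no separate treatment: it is covered by the displayed argument, since then $(\bigsqcup\Rd)(s)=\bot_{\sigma'}$ and the inductive hypothesis applied to the empty subset of $\sinti\Hf{\sigma'}$ gives $\bot_{\sigma'}\arel_{\sigma'}b$ for all $b$.
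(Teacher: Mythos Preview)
Your proposal is correct and matches the paper's proof essentially line for line: both run a structural induction on $\sigma$, dispose of the base cases immediately, and in the inductive step verify \ref{def:rel1} and \ref{def:rel2} by fixing an arbitrary $s\in\sinti\Hf\tau$ and $\Sd'\in\dir_{\arel_\tau}(s)$, then appealing to the inductive hypothesis at $\sigma'$ with target $\bigsqcup_{s'\in\Sd'}r'(s')$ (respectively $\bigsqcup_{s'\in\Sd'}r''(s')$). Your additional remarks about non-reflexivity of $\arel$ and the empty-$\Rd$ case are sound clarifications that the paper omits.
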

\begin{proof}
  We prove by induction on the type $\sigma$ that $\arel_\sigma$ is
  compatible with $\prel_\sigma$. For $o$ and $\iota$
  this is obvious. Hence, suppose that $\sigma=\tau\to\sigma'$.
  \begin{enumerate}[noitemsep]
  \item Let $r,r',r''\in\sinti\Hf\sigma$ be such that
  $r\arel_\sigma r'\prel_\sigma r''$. Besides let
  $s\in\sinti\Hf\tau$ and $\Sd'\in\dir_\arel(s)$ be arbitrary. Clearly, it holds
  that $r(s)\arel_{\sigma'}\bigsqcup_{s'\in\Sd'} r'(s')\prel_{\sigma'}\bigsqcup_{s'\in\Sd'} r''(s')$. Hence, by the inductive
  hypothesis, $r(s)\arel_{\sigma'}\bigsqcup_{s'\in\Sd'}
  r''(s')$. Consequently, $r\arel_\sigma r''$.
  \item Let $r'\in\sinti\Hf\sigma$ and
    $\Rd\subseteq\{r\in\sinti\Hf\sigma\mid r\arel_\sigma r'\}$ be
    arbitrary. Besides, suppose $s\in\sinti\Hf\tau$ and
    $\Sd'\in\dir_\arel(s)$. Clearly, $\{r(s)\mid
    r\in\Rd\}\subseteq\{t\in\sinti\Hf{\sigma'}\mid
    t\arel_{\sigma'}\bigsqcup_{s'\in\Sd'}r'(s')\}$ and hence by
    the inductive hypothesis,
    $\left(\bigsqcup\Rd\right)(s)=\bigsqcup_{r\in\Rd}r(s)\arel_{\sigma'}\bigsqcup_{s'\in\Sd'}r'(s')$. Consequently,
    $\bigsqcup\Rd\arel_\sigma r'$.\qedhere
  \end{enumerate}
\end{proof}

The relation $\arel$ is transitive but neither reflexive
(\cref{ex:deltanotcont}) nor antisymmetric and coincides with the pointwise ordering $\prel$ on the continuous
frame $\Cf$ (\cref{cl:cprel}).
\begin{example}
  \begin{thmlist}
  \item\label{ex:qclog1} For all relational types $\rho$ and
    $s\in\sinti\Hf\rho$, $\bot_\rho\arel s\arel\top_\rho$.
  \item\label{ex:qclog11} $\orf\arel\orf$ and $\andf\arel\andf$.
  \item\label{ex:qclog2} Next, suppose $r\in\sinti \Hf{\tau\to o}$,
    $\Rd'\in\dir(r)$ and $s\in\sinti \Hf\tau$ are such that
    $r(s)=1$. If $\tau=\iota$ then $r'(s)=1$ for all
    $r'\in\Rd'$. Otherwise, there exists $r'\in\Rd'$ satisfying
    $r'(\top_\tau)=1$ because $\{\top_\tau\}\in\dir(s)$. Consequently,
    $\hexists_\tau\arel \hexists_\tau$ holds as well.
  \item\label{ex:deltanotcont} $\delta_\omega\arel \delta_\omega$ does
    \emph{not hold} (see \cref{ex:notstopomega}): clearly $\{r_n\mid n\in\nat\}\in\dir(r_\omega)$ but $\delta_\omega(r_\omega)=1>0=\max\{\delta_\omega(r_n)\mid n\in\nat\}$. This shows that $\arel$ is not reflexive.
  \end{thmlist}
\end{example}

\contmain*
\begin{proof}
  We prove that for all positive existential terms $M$, expansions $\Bs$ of $\As$, $\Bd'\in\dir(\Bs)$, valuations $\alpha$ and $\vals'\in\dir(\alpha)$ \cref{eq:ncontmain} holds by induction on the structure of $M$.
  \begin{itemize}
  \item If $M$ is a logical constant (other than $\neg$) then this is due to \cref{ex:qclog11,ex:qclog2}.
  \item If $M$ is a symbol in $\Sigma'$ or a variable then this is by assumption.
  \item Next, suppose $M$ is an application $M_1\,M_2$.
    By the inductive hypothesis,
    \begin{align}      
      \label{eq:ncontih1}
      \sinti\Bs{M_1}(\alpha)&\arel\bigsqcup_{\Bs_1\in\Bd',\alpha_1\in\vals'}\sinti{\Bs_1}{M_1}(\alpha_1)\\
      \label{eq:ncontih2}
      \sinti\Bs{M_2}(\alpha)&\arel\bigsqcup_{\Bs_2\in\Bd',\alpha_2\in\vals'}\sinti{\Bs_2}{M_2}(\alpha_2).
    \end{align}
    Let $s\defeq\sinti\Bs{M_2}(\alpha)$ and $\Sd'\defeq\{\sinti{\Bs_2}{M_2}(\alpha_2)\mid\Bs_2\in\Bd'\land\alpha_2\in\vals'\}$.
    \begin{claim}
      \label{cl:appdir}
      $\Sd'\in\dir(s)$.
    \end{claim}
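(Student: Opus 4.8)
The plan is to unfold $\Sd'\in\dir(s)$, i.e.\ $\Sd'\in\dir_\arel(s)$, into its two defining conditions — that $\Sd'$ is $\arel$-directed, and that $s\arel\bigsqcup\Sd'$ — and to dispatch each using the inductive hypothesis of \cref{lem:contmain} for the proper subterm $M_2$ (which, via \cref{rem:deneq}, also guarantees that $\bigsqcup\Sd'$ is well-defined). The second condition is immediate: by construction $\bigsqcup\Sd'=\bigsqcup_{\Bs_2\in\Bd',\alpha_2\in\vals'}\sinti{\Bs_2}{M_2}(\alpha_2)$, which is precisely the right-hand side of the inductive hypothesis \eqref{eq:ncontih2} for $M_2$, so $s=\sinti\Bs{M_2}(\alpha)\arel\bigsqcup\Sd'$.

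For $\arel$-directedness of $\Sd'$ I would check the two requirements in turn. First, $\arel$-reflexivity of each member: since $\Bd'\in\dir(\Bs)$ and $\vals'\in\dir(\alpha)$ are themselves $\arel$-directed (pointwise on structures, resp.\ valuations), every $\Bs_2\in\Bd'$ satisfies $\Bs_2\arel\Bs_2$ and every $\alpha_2\in\vals'$ satisfies $\alpha_2\arel\alpha_2$; hence the singletons $\{\Bs_2\}$ and $\{\alpha_2\}$ belong to $\dir(\Bs_2)$ and $\dir(\alpha_2)$ (their least upper bounds being $\Bs_2$ and $\alpha_2$), and feeding them into the inductive hypothesis for $M_2$ at $\Bs_2,\alpha_2$ gives $\sinti{\Bs_2}{M_2}(\alpha_2)\arel\sinti{\Bs_2}{M_2}(\alpha_2)$. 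Second, directedness proper: given $\sinti{\Bs_2}{M_2}(\alpha_2),\sinti{\Bs_2'}{M_2}(\alpha_2')\in\Sd'$, I would use $\arel$-directedness of $\Bd'$ and of $\vals'$ to pick $\Bs_3\in\Bd'$ with $\Bs_2,\Bs_2'\arel\Bs_3$ and $\alpha_3\in\vals'$ with $\alpha_2,\alpha_2'\arel\alpha_3$; then $c\defeq\sinti{\Bs_3}{M_2}(\alpha_3)\in\Sd'$, and the inductive hypothesis for $M_2$ at $\Bs_2,\alpha_2$ with the singleton directed sets $\{\Bs_3\}\in\dir(\Bs_2)$ and $\{\alpha_3\}\in\dir(\alpha_2)$ — legitimate since $\Bs_2\arel\Bs_3$, $\alpha_2\arel\alpha_3$, and $\Bs_3,\alpha_3$ are $\arel$-reflexive — yields $\sinti{\Bs_2}{M_2}(\alpha_2)\arel c$; symmetrically $\sinti{\Bs_2'}{M_2}(\alpha_2')\arel c$.

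I expect the main (and really the only) subtlety to be this reflexivity bookkeeping: the argument repeatedly feeds singleton directed sets back into the inductive hypothesis, which is only permitted once one knows that the structures and valuations involved are $\arel$-reflexive. The clean route is the observation above — that $\arel$-reflexivity of the members of any set in $\dir=\dir_\arel$ is already built into the definition of $\arel$-directedness — rather than establishing $\arel$-reflexivity by a separate induction. Everything else is a routine unfolding of definitions; in particular the argument uses completeness of $\Hf$ only to form the (singleton and indexed) least upper bounds, which is supplied by \cref{ass:setting}.
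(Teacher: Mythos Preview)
Your proposal is correct and follows essentially the same approach as the paper's proof: both establish $s\arel\bigsqcup\Sd'$ directly from \eqref{eq:ncontih2}, and both verify $\arel$-directedness of $\Sd'$ by exploiting the $\arel$-reflexivity built into the definition of $\arel$-directed to form singleton directed sets, then feeding these into the inductive hypothesis for $M_2$. The only cosmetic difference is that the paper treats reflexivity and the upper-bound condition in a single combined pass, whereas you separate them into two explicit steps.
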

    \begin{claimproof}
      By \cref{eq:ncontih2}, $s\arel\bigsqcup\Sd'$. To prove that
      $\Sd'$ is directed, let $\Bs^{(1)},\Bs^{(2)}\in\Bd'$ and
      $\alpha^{(1)},\alpha^{(2)}\in\vals'$. Since $\Bd'$ and $\vals'$
      are directed, $\{\Bs^{(1)}\}\in\dir(\Bs^{(1)})$,
      $\{\alpha^{(1)}\}\in\dir(\alpha^{(1)})$ and there are
      $\Bs'\in\Bd'$ and $\alpha'\in\vals'$ such that 
      $\{\Bs'\}\in\dir(\Bs^{(1)})\cap\dir(\Bs^{(2)})$ and
      $\{\alpha'\}\in\dir(\alpha^{(1)})\cap\dir(\alpha^{(2)})$. Hence,
      by the inductive hypothesis,
      $\sinti{\Bs^{(1)}}{M_2}(\alpha^{(1)})\arel
      \sinti{\Bs^{(1)}}{M_2}(\alpha^{(1)})$ and for $j\in\{1,2\}$, $\sinti{\Bs^{(j)}}{M_2}(\alpha^{(j)})\arel\sinti{\Bs'}{M_2}(\alpha')$. 
    \end{claimproof}

    Next, we define 
    \begin{align*}
      \Td&\defeq\{\sinti{\Bs_1}{M_1}(\alpha_1)(\sinti{\Bs_2}{M_2}(\alpha_2))\\
      &\hspace{3cm}\mid\Bs_1,\Bs_2\in\Bd'\land\alpha_1,\alpha_2\in\vals'\}\\
      \Td'&\defeq\{\sinti{\Bs'}{M_1}(\alpha')(\sinti{\Bs'}{M_2}(\alpha'))\mid\Bs'\in\Bd'\land\alpha'\in\vals'\}.
    \end{align*}
    \vspace{-0.3cm}
    \begin{claim}
      \label{cl:apppreceq}$\bigsqcup\Td\arel\bigsqcup\Td'$.
    \end{claim}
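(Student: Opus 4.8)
The plan is to obtain \cref{cl:apppreceq} from the induction hypothesis of \cref{lem:contmain} (available for the structurally smaller terms $M_1$ and $M_2$), the function-type clause of \cref{def:arelc}, and the compatibility of $\arel$ with $\prel$ (\cref{cor:relc}); this is the analogue for the ``outer'' application of \cref{cl:appdir}. The conceptual point is a diagonalisation: a generic member of $\Td$ mixes \emph{independent} choices $(\Bs_1,\alpha_1)$ for $M_1$ and $(\Bs_2,\alpha_2)$ for $M_2$, whereas members of $\Td'$ use a \emph{single} choice $(\Bs',\alpha')$ for both, and the $\arel$-directedness of $\Bd'$ and $\vals'$ is exactly what lets us amalgamate the two choices.

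First I would reduce the claim to a pointwise statement: by \ref{def:rel2} it suffices to show $t\arel\bigsqcup\Td'$ for every $t\in\Td$ (the case $\Bd'=\emptyset$ or $\vals'=\emptyset$ being vacuous). So fix $\Bs_1,\Bs_2\in\Bd'$ and $\alpha_1,\alpha_2\in\vals'$. Using $\arel$-directedness of $\Bd'$, choose $\Bs'\in\Bd'$ with $\Bs_1\arel\Bs'$ and $\Bs_2\arel\Bs'$; using $\arel$-directedness of $\vals'$, choose $\alpha'\in\vals'$ with $\alpha_1\arel\alpha'$ and $\alpha_2\arel\alpha'$. Since members of $\arel$-directed sets are $\arel$-reflexive and $\bigsqcup\{\Bs'\}=\Bs'$, an application of \ref{def:rel1} shows $\{\Bs'\}\in\dir(\Bs_1)\cap\dir(\Bs_2)\cap\dir(\Bs')$, and symmetrically $\{\alpha'\}\in\dir(\alpha_1)\cap\dir(\alpha_2)\cap\dir(\alpha')$.

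Next I would invoke the induction hypothesis three times: for $M_1$ with the data $(\Bs_1,\{\Bs'\},\alpha_1,\{\alpha'\})$ it gives $\sinti{\Bs_1}{M_1}(\alpha_1)\arel\sinti{\Bs'}{M_1}(\alpha')$; for $M_2$ with $(\Bs_2,\{\Bs'\},\alpha_2,\{\alpha'\})$ it gives $\sinti{\Bs_2}{M_2}(\alpha_2)\arel\sinti{\Bs'}{M_2}(\alpha')$; and for $M_2$ with $(\Bs',\{\Bs'\},\alpha',\{\alpha'\})$ it gives $\sinti{\Bs'}{M_2}(\alpha')\arel\sinti{\Bs'}{M_2}(\alpha')$. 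The last two statements together say exactly that $\{\sinti{\Bs'}{M_2}(\alpha')\}\in\dir(\sinti{\Bs_2}{M_2}(\alpha_2))$. Instantiating the function-type clause of \cref{def:arelc} for $\sinti{\Bs_1}{M_1}(\alpha_1)\arel\sinti{\Bs'}{M_1}(\alpha')$ with $s:=\sinti{\Bs_2}{M_2}(\alpha_2)$ and this singleton directed set then yields $\sinti{\Bs_1}{M_1}(\alpha_1)(\sinti{\Bs_2}{M_2}(\alpha_2))\arel\sinti{\Bs'}{M_1}(\alpha')(\sinti{\Bs'}{M_2}(\alpha'))$. Since the right-hand side lies in $\Td'$, hence $\prel\bigsqcup\Td'$, \ref{def:rel1} finishes the pointwise claim, and \ref{def:rel2} then delivers $\bigsqcup\Td\arel\bigsqcup\Td'$.

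I expect the main obstacle to be the bookkeeping with singleton directed sets: one must notice that even $\sinti{\Bs'}{M_2}(\alpha')\arel\sinti{\Bs'}{M_2}(\alpha')$ has to be derived from the induction hypothesis ($\arel$ is not reflexive), and that the membership $\{\sinti{\Bs'}{M_2}(\alpha')\}\in\dir(\sinti{\Bs_2}{M_2}(\alpha_2))$ required to feed into \cref{def:arelc} is precisely what the other two applications of the hypothesis supply. Once that is set up, the diagonalisation is just the remark that amalgamating $(\Bs_1,\alpha_1)$ and $(\Bs_2,\alpha_2)$ into a common $(\Bs',\alpha')$ is what $\arel$-directedness of $\Bd'$ and $\vals'$ guarantees.
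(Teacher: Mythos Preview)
Your proposal is correct and follows essentially the same route as the paper's proof: reduce to a pointwise statement via \ref{def:rel1} and \ref{def:rel2}, use $\arel$-directedness of $\Bd'$ and $\vals'$ to amalgamate the independent choices into a common $(\Bs',\alpha')$, apply the induction hypothesis for $M_1$ and $M_2$ (including the extra application for $M_2$ at $(\Bs',\alpha')$ to obtain the needed self-relation, since $\arel$ is not reflexive), and then feed the resulting singleton directed set into the function-type clause of \cref{def:arelc}. You have also correctly identified the one genuinely subtle bookkeeping point.
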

    \begin{claimproof}
      It suffices to prove that for every $t\in\Td$ there exists
      $t'\in\Td'$ satisfying $t\arel t'$. Then the claim is a
      consequence of
      \ref{def:rel1} and \ref{def:rel2}.
      
     Hence, let $\Bs_1,\Bs_2\in\Bd'$ and $\alpha_1,\alpha_2\in\vals'$ be arbitrary. By directedness of $\Bd'$ and $\vals'$, there are $\Bs'\in\Bd'$ and $\alpha'\in\vals'$ such that 
      $\{\Bs'\}\in\dir(\Bs_1)\cap\dir(\Bs_2)$ and $\{\alpha'\}\in\dir(\alpha_1)\cap\dir(\alpha_2)$. Therefore, again by the inductive hypothesis,
    \begin{align}
      \label{eq:nclaimmainih1}
      \sinti{\Bs_1}{M_1}(\alpha_1)&\arel\sinti{\Bs'}{M_1}(\alpha')\\
      \label{eq:nclaimmainih2}
      \sinti{\Bs_2}{M_2}(\alpha_2)&\arel\sinti{\Bs'}{M_2}(\alpha').
    \end{align}
    Furthermore, due to $\{\Bs'\}\in\dir(\Bs')$,
    $\{\alpha'\}\in\dir(\alpha')$ and the inductive hypothesis, $\sinti{\Bs'}{M_2}(\alpha')\arel \sinti{\Bs'}{M_2}(\alpha')$.
    Hence (by \cref{eq:nclaimmainih2}), $\{\sinti{\Bs'}{M_2}(\alpha')\}\in\dir(\sinti{\Bs_2}{M_2}(\alpha_2))$. Therefore, by \cref{eq:nclaimmainih1},
    \begin{align*}
      &\sinti{\Bs_1}{M_1}(\alpha_1)(\sinti{\Bs_2}{M_2}(\alpha_2))\\
      \arel\; & \sinti{\Bs'}{M_1}(\alpha')(\sinti{\Bs'}{M_2}(\alpha'))\in\Td'.\qedhere
    \end{align*}
    \end{claimproof}
    Combining everything (\cref{cl:appdir,cl:apppreceq}, \cref{eq:ncontih1} and \cite[Prop.~2.1.4]{AJ95}), we obtain
    \begin{align*}
      \sinti\Bs M(\alpha)
      &=\sinti\Bs{M_1}(\alpha)(s)\\
      &\arel\bigsqcup_{s'\in\Sd'}\bigg(\bigsqcup_{\Bs_1\in\Bd', \alpha_1\in\vals'}\{\sinti{\Bs_1}{M_1}(\alpha_1)\bigg)(s')\\
      &=\bigsqcup\Td
      \arel\bigsqcup\Td'
      =\bigsqcup_{\Bs'\in\Bd',\alpha'\in\vals'}\sinti{\Bs'}{M}(\alpha').
    \end{align*}
    This concludes the proof of
    $\sinti\Bs{M_1M_2}(\alpha)\arel\bigsqcup_{\Bs'\in\Bd',\alpha'\in\vals'}\sinti{\Bs'}{M_1M_2}(\alpha')$
    (because $\arel$ is transitive).
  \item Finally, suppose $M$ is $\lambda x\ldotp M'$.
    Let $s\in\sinti \Hf{\Delta(x)}$ and $\Sd'\in\dir(s)$.
    Note that $\{\alpha'[x\mapsto s']\mid\alpha'\in\vals'\land s'\in\Sd'\}\in\dir(\alpha[x\mapsto s])$. Therefore by the inductive hypothesis,
    \begin{equation*}
      \label{eq:ncontmainlambda}
      \sinti\Bs{M'}(\alpha[x\mapsto
      s])\arel\bigsqcup_{\substack{\Bs'\in\Bd',\alpha'\in\vals', \\s'\in\Sd'}}\sinti{\Bs'}{M'}(\alpha'[x\mapsto s']).
    \end{equation*}
    Consequently,
    \begin{align*}
      \sinti\Bs M(\alpha)(s)
      &=\sinti\Bs{M'}(\alpha[x\mapsto s])\\
      &\arel\bigsqcup_{\substack{\Bs'\in\Bd',\alpha'\in\vals',\\s'\in\Sd'}}\sinti{\Bs'}{M'}(\alpha'[x\mapsto s'])\\
      &=\bigsqcup_{\substack{\Bs'\in\Bd', \alpha'\in\vals',\\s'\in\Sd'}}\sinti{\Bs'}M(\alpha')(s')\\
      &=\bigsqcup_{s'\in\Sd'}\bigg(\bigsqcup_{\Bs'\in\Bd', \alpha'\in\vals'}\sinti{\Bs'}M(\alpha')\bigg)(s')
    \end{align*}
    using \cite[Prop.~2.1.4]{AJ95} in the last step.
    This concludes the proof of $\sinti\Bs{\lambda x\ldotp M'}(\alpha)\arel\bigsqcup_{\Bs'\in\Bd',\alpha'\in\vals'}\sinti{\Bs'}{\lambda x\ldotp M'}(\alpha')$.\qedhere
  \end{itemize}
\end{proof}

\subsubsection{Supplementary Materials for \cref{sec:syntactic}}
\label{sec:appsyntactic}

The relation $\pred$ is defined in \cref{fig:parallel}.

\begin{figure*}[!t]
\centering
\begin{align*}
  \begin{tabularx}{0.7\textwidth}{XcXcXcX}
    &\infer[R\in\Sigma'\setminus\Sigma]{R\pred\lambda\overline
      x_R\ldotp
      F_R}{}&&\infer[c\in\Sigma\cup\{\land,\lor,\exists\}]{c\pred
      c}{}&
    &\infer{x\pred x}{}&\\[5pt]
    &\infer{M_1\,M_2\pred N_1\,N_2}{M_1\pred N_1&&M_2\pred N_2}&
    &\infer{\lambda x\ldotp M\pred\lambda x\ldotp N}{M\pred N}&
  \end{tabularx}
\end{align*}
\caption{Syntactic unfolding}
\label{fig:parallel}
\end{figure*}

\parallelcorr*
\begin{proof}
  We prove the lemma by induction on $\pred$:
  \begin{itemize}[noitemsep]
  \item For variables, symbols from $\Sigma$ and logical constants (other than $\neg$) this is trivial.
  \item If $M$ is a symbol $R\in\Sigma'\setminus\Sigma$ then $\sinti{T_\Prgm(\Bs)}R(\alpha)=\sinti\Bs{\lambda\overline x_R\ldotp F_R}(\alpha)$.
  \item Next, if $M$ is an application $M_1\,M_2$, $M_1\pred N_1$ and $M_2\pred N_2$ then
    \begin{align*}
      \sinti{T_\Prgm(\Bs)}{M_1\,M_2}(\alpha)&=\sinti{T_\Prgm(\Bs)}{M_1}(\alpha)(\sinti{T_\Prgm(\Bs)}{M_2}(\alpha))\\
                                          &=\sinti\Bs{N_1}(\alpha)(\sinti\Bs{N_2}(\alpha))\\
                                          &=\sinti\Bs{N_1\,N_2}(\alpha)
    \end{align*}
    using the inductive hypothesis in the second step.
  \item Finally, if $M$ is a $\lambda$-abstraction $\lambda x\ldotp M'$ and $M'\pred N'$ then
    \begin{align*}
      &\sinti{T_\Prgm(\Bs)}{\lambda x\ldotp M'}(\alpha)\\
      =\;&\lambda r\in\sinti \Hf{\Delta(x)}\ldotp\sinti{T_\Prgm(\Bs)}{M'}(\alpha[x\mapsto r])\\
         =\;&\lambda r\in\sinti \Hf{\Delta(x)}\ldotp\sinti\Bs{N'}(\alpha[x\mapsto r])\\
      =\;&\sinti\Bs{\lambda x\ldotp N'}(\alpha),
    \end{align*}
    exploiting the fact that $\Hf$ is a frame and the inductive hypothesis.\qedhere
  \end{itemize}
\end{proof}

\begin{definition}
  A positive existential formula $F$ is \emph{ex-normal} if for all subterms $\exists M$ of $F$, $M$ is a $\lambda$-abstraction.  
\end{definition}
\begin{restatable}[Basic properties of $\bured$]{lemma}{buredprops}
  Suppose $M\bured N$. Then
  \begin{thmlist}
  \item\label{lem:parallelbu} $\pred\;\subseteq\;\buredrt$,
  \item\label{lem:nonewfree} $\free(N)\subseteq\free(M)$,
  \item\label{lem:seminormal} if $M$ is ex-normal then $N$ is ex-normal, too.
  \end{thmlist}
\end{restatable}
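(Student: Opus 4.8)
The plan is to prove each of the three parts by a routine induction, with only part~(iii) requiring any care. Part~(i) is proved by induction on the derivation of $M\pred N$ according to the rules in \cref{fig:parallel}: the axioms are immediate, since $R\pred\lambda\overline x_R\ldotp F_R$ is a single $\upsilon$-step and hence lies in $\buredrt$, whereas $c\pred c$ and $x\pred x$ are instances of reflexivity; for the application rule, from $M_1\pred N_1$ and $M_2\pred N_2$ the induction hypothesis gives $M_1\buredrt N_1$ and $M_2\buredrt N_2$, and since $\bured$ is a \emph{compatible} closure, $\buredrt$ is closed under application and abstraction contexts, so $M_1M_2\buredrt N_1M_2\buredrt N_1N_2$, and the abstraction rule is the same. (This is just the ``easy to see'' remark at the end of \cref{sec:syntactic} made explicit.) Part~(ii) goes by induction on the derivation of $M\bured N$ as a compatible closure: the $\beta$ base case $(\lambda x\ldotp P)Q\bred P[Q/x]$ is the standard inclusion $\free(P[Q/x])\subseteq(\free(P)\setminus\{x\})\cup\free(Q)\subseteq\free((\lambda x\ldotp P)Q)$; the $\upsilon$ base case $R\ured\lambda\overline x_R\ldotp F_R$ uses the defining condition of a program (\cref{def:program}), namely $\free(F_R)\subseteq\free(R\,\overline x_R)$, whence $\free(\lambda\overline x_R\ldotp F_R)=\free(F_R)\setminus\{\overline x_R\}\subseteq\emptyset=\free(R)$ because $R$ is a constant; and the three congruence cases are immediate from monotonicity of $\free$ under the term constructors.

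The substance is in part~(iii), again by induction on the compatible-closure derivation of $M\bured N$. The key auxiliary observation I would isolate first is that, because $\exists_\tau$ applied to a variable, a constant, or an application is never ex-normal, an ex-normal term cannot contain $\exists_\tau x$, $\exists_\tau R$ or $\exists_\tau(M\,N)$ as a subterm; equivalently, every $\exists$-subterm of an ex-normal term has the form $\exists_\tau(\lambda z\ldotp L)$. From this it follows that if $P$ and $Q$ are ex-normal then so is $P[Q/x]$: each $\exists$-subterm of $P[Q/x]$ either descends from an $\exists$-subterm $\exists_\tau(\lambda z\ldotp L)$ of $P$, becoming $\exists_\tau(\lambda z\ldotp L[Q/x])$ which is of the same form, or lies entirely inside a substituted copy of the ex-normal $Q$; and no $\exists_\tau$ of $P$ can ``capture'' a substituted occurrence of $x$ into a bad subterm precisely because $\exists_\tau x$ is not a subterm of the ex-normal $P$. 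The same argument gives that replacing a constant $R$ by an ex-normal term inside an ex-normal term is ex-normal. With this in hand, the $\beta$ base case follows since $P$ and $Q$ are subterms of the ex-normal redex $(\lambda x\ldotp P)Q$ and hence ex-normal, and the $\upsilon$ base case follows since $F_R$ is ex-normal — in $\posex(\,\cdot\,)$ existential quantifiers are introduced only in the shape $\exists y\ldotp N$, and atoms contain no logical symbols — and $\exists_\tau R$ cannot occur in an ex-normal term. For the congruence cases: if the redex occupies a hole in a context $C$ inside an ex-normal $M$, the redex itself is an application (in the $\beta$ case) or a constant (in the $\upsilon$ case), hence not a $\lambda$-abstraction, so it is not the immediate argument of any $\exists_\tau$ in $M$; therefore either $C$ is trivial, reducing to a base case, or every $\exists_\tau$ in $C$ above the hole has the hole strictly beneath a $\lambda$, so replacing the redex by its reduct — ex-normal by the induction hypothesis — leaves all $\exists$-arguments $\lambda$-abstractions, and $N$ is ex-normal.

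The main obstacle is exactly this bookkeeping in part~(iii): one has to be sure that no $\exists$-subterm ``straddles'' the substitution point or the redex site so as to turn an abstraction argument into a non-abstraction one. The clean way around this is the single structural remark above — ex-normality of the source term forces every $\exists$ to be applied to a $\lambda$-abstraction, which pins down where a redex (or a substituted variable) may sit relative to any $\exists_\tau$ — after which the substitution lemma and all the congruence cases are routine.
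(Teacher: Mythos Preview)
Your proof is correct and follows essentially the same approach as the paper: parts~(i) and~(ii) are the same routine inductions, and for part~(iii) you, like the paper, argue by induction on the compatible closure with a substitution lemma for ex-normality as the key ingredient for the $\beta$ base case. The only difference is presentational: you isolate the substitution lemma explicitly and handle the congruence cases via a single context argument (the redex, not being a $\lambda$-abstraction, cannot sit as the immediate argument of an $\exists_\tau$, so every $\exists$ above it must have the hole strictly under its $\lambda$), whereas the paper performs the same reasoning by explicit case analysis on the three congruence rules, in particular treating separately the case $M_1=\exists$ in $M_1M_2\bured M_1N_2$ and observing that $M_2$ being a $\lambda$-abstraction forces $N_2$ to be one as well.
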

\begin{proof}
  \begin{thmlist}
  \item Straightforward induction on the definition of $\pred$.
  \item We prove the first part of the lemma by induction on the compatible closure of $\bured$. If $(M,N)\in\beta$ this is a standard fact of $\beta$-reduction. If $(M,N)\in\upsilon$ then $\free(M)=\free(N)=\emptyset$. In the inductive cases the claim immediately follows from the inductive hypothesis.
  \item We prove the claim by induction on the compatible closure of $\bured$. 
    \begin{itemize}
    \item First, suppose that $(R,\lambda\overline x_R\ldotp F_R)\in\upsilon$. Obviously, $F_R$ is ex-normal and hence, $\lambda\overline x_R\ldotp F_R$ is ex-normal, too.
    \item Next, suppose that $((\lambda x\ldotp M)M',M[M'/x])\in\beta$. Clearly, $M$ and $M'$ must be ex-normal. We prove by induction on $M$ that $M[M'/x]$ is ex-normal. If $M$ is a variable this is obvious (because $M'$ is ex-normal). The cases for (logical) constants and $\lambda$-abstractions are straightforward. 

      Finally, suppose that $M$ is an application and let $\exists L$ be a subterm of
      \begin{align*}
        M_1[M'/x]M_2[M'/x]
      \end{align*}
      By the inductive hypothesis, both $M_1[M'/x]$ and $M_2[M'/x]$ are ex-normal. Hence, if $\exists K$ is a subterm of either $M_1[M'/x]$ or $M_2[M'/x]$ then $K$ must be a $\lambda$-abstraction. Otherwise $M_1=\exists$ and $K=M_2[M'/x]$. Then by assumption $M_2$ is a $\lambda$-abstraction and clearly, $M_2[M'/x]$ is a $\lambda$-abstraction, too.
    \item Next, suppose that $M_1M_2\bured N_1M_2$ because $M_1\bured N_1$. Clearly, $M_1$ is ex-normal. Therefore, by the inductive hypothesis, $N_1$ is ex-normal. Note that $N_1=\exists$ is impossible. Therefore, any subterm $\exists L$ of $N_1M_2$ is either a subterm of $N_1$ or $M_2$, which are both ex-normal. Hence, $L$ is a $\lambda$-abstraction.
    \item Suppose $M_1M_2\bured M_1N_2$ because $M_2\bured N_2$. Clearly, $M_2$ is ex-normal. Therefore, by the inductive hypothesis, $N_2$ is ex-normal. 
      Let $\exists L$ be a subterm of $M_1N_2$. If $\exists L$ is a subterm of $M_1$ or $N_2$ the argument is as in the previous case. Hence, suppose $M_1=\exists$ and $L=N_2[M'/x]$. By assumption $M_2$ is a $\lambda$-abstraction. Due to $M_2\bured L$, $L$ is a $\lambda$-abstraction, too.
    \item Finally, suppose that $\lambda x\ldotp M\bured\lambda x\ldotp N$ because $M\bured N$. Clearly, $M$ is ex-normal and hence by the inductive hypothesis $N$ is ex-normal. Let $\exists L$ be a subterm of $\lambda x\ldotp N$. Obviously, $\exists L$ must be a subterm of $M$, which is ex-normal. Hence, $L$ is a $\lambda$-abstraction.\qedhere
    \end{itemize}
  \end{thmlist}
\end{proof}

\begin{restatable}[Subject Reduction]{lemma}{subred}
  \label{lem:subred2}
  Let $\Delta\vdash M\from\sigma$ be a term such that $M\bured N$. Then
  \begin{thmlist}
  \item $\Delta\vdash N\from\sigma$ and
  \item\label{lem:functionalnobured} $\sigma$ is a relational type.
  \end{thmlist}
\end{restatable}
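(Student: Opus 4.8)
The plan is to prove part (i) by induction on the compatible closure of $\beta\upsilon$, and to obtain part (ii) directly from \cref{rem:isimple}.

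\emph{Part (i).} There are two base cases. If $(M,N)\in\beta$, then $M=(\lambda x\ldotp L)\,M'$ and $N=L[M'/x]$; inverting the rules $(\mathrm{App})$ and $(\mathrm{Abs})$ gives $\Delta\vdash L\from\sigma$ together with $\Delta\vdash M'\from\Delta(x)$, and $\Delta\vdash L[M'/x]\from\sigma$ then follows from the routine typing substitution lemma (a straightforward structural induction on $L$ using the variable convention). If $(M,N)\in\upsilon$, then $M=R$ for some $R\in\Sigma'\setminus\Sigma$ and $N=\lambda\overline x_R\ldotp F_R$, where $\neg F_R\lor R\,\overline x_R$ is a member of $\Prgm$ and hence a well-typed $\Sigma'$-formula under $\Delta$; inverting $(\mathrm{Cst})$ gives $R\from\sigma\in\Sigma'$, and inverting $(\mathrm{And/Or})$, $(\mathrm{Neg})$ and $(\mathrm{App})$ on the judgement $\Delta\vdash\neg F_R\lor R\,\overline x_R\from o$ yields $\Delta\vdash F_R\from o$ and, using the uniqueness of the type of $R$ in $\Sigma'$, that $\sigma=\Delta(x_{R,1})\to\cdots\to\Delta(x_{R,k})\to o$. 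Repeated use of $(\mathrm{Abs})$ then gives $\Delta\vdash\lambda\overline x_R\ldotp F_R\from\sigma$. The remaining cases of the compatible closure ($M_1M_2\bured N_1M_2$, $M_1M_2\bured M_1N_2$ and $\lambda x\ldotp M\bured\lambda x\ldotp N$) follow immediately by inverting the appropriate typing rule and applying the induction hypothesis.

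\emph{Part (ii).} If $M\bured N$ then $M$ contains either a $\beta$-redex $(\lambda x\ldotp L)\,M'$ — hence a $\lambda$-abstraction — or an occurrence of some $R\in\Sigma'\setminus\Sigma$, which is a constant of relational type. By \cref{rem:isimple}, a term of type $\iota^n\to\iota$ contains neither $\lambda$-abstractions nor constants of relational type, so $\sigma$ cannot be of that form (nor can it be $\iota$). Since every type that is not relational is $\iota$ or of the form $\iota^n\to\iota$ (the types $\iota^n\to o$ being relational), it follows that $\sigma$ is relational.

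The argument is essentially bookkeeping; the only mildly delicate points are the typing substitution lemma invoked in the $\beta$-case and the inversion in the $\upsilon$-case, where one reads off the types of the bound variables $x_{R,i}$ from well-typedness of the clause $\neg F_R\lor R\,\overline x_R$ and uses that $R$ has a unique type in $\Sigma'$. I anticipate no real obstacle.
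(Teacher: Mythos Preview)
Your proof is correct. Part~(i) follows the same route as the paper's proof: induction on the compatible closure of $\beta\upsilon$, with the $\beta$-case handled by typing substitution and the $\upsilon$-case by reading off the types of $\overline x_R$ from the well-typedness of the defining clause (the paper simply says ``by convention $\Delta(\overline x_R)=\overline\tau$'', which amounts to the same thing).

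For Part~(ii) you take a genuinely different and more direct route than the paper. The paper argues by a separate induction on the compatible closure, showing in each case that $\Delta\not\vdash M\from\iota^n\to\iota$; the application cases require a small type-inversion argument (if $M_1M_2\from\iota^n\to\iota$ then $M_1\from\sigma\to\iota^n\to\iota$, forcing $\sigma=\iota$, which contradicts the inductive hypothesis for the $M_2\bured N_2$ case). You instead observe that any $M$ with $M\bured N$ must contain, as a subterm, either a $\lambda$-abstraction or a constant $R\in\Sigma'\setminus\Sigma$ of relational type, and then invoke \cref{rem:isimple} directly to exclude the type $\iota^n\to\iota$. This is cleaner: it avoids a second induction and reuses a fact already established in the preliminaries. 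The paper's inductive argument, on the other hand, is self-contained and does not rely on the reader having internalised \cref{rem:isimple}.
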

\begin{proof}
  \begin{thmlist}
  \item We prove the lemma by induction on the compatible closure of $\beta\upsilon$. For $(M,N)\in\beta$ this is \cite[Proposition 1.2.6]{BLCT13}. If  $(R,\lambda\overline x_R\ldotp F_R)\in\upsilon$ and $R\from\overline\tau\to o\in\Sigma'\setminus\Sigma$ then by convention $\Delta(\overline x_R)=\overline\tau$ and hence, $\Delta\vdash\lambda\overline x_R\ldotp F_R\from\overline\tau\to o$, too. The proofs for the recursive cases are exactly as in the proof of \cite[Proposition 1.2.6]{BLCT13}.
  \item Clearly, it suffices to prove by induction on the compatible closure of $\beta\upsilon$ that $M\bured N$ implies $\Delta\not\vdash M\from\iota^n\to\iota$ for all $n\in\nat$.
    \begin{itemize}
    \item If $(M,N)\in\beta\upsilon$ then clearly $\Delta\not\vdash M\from\iota^n\to\iota$ for all $n\in\nat$.
    \item Next, suppose $M_1M_2\bured N_1N_2$ due to $M_1\bured N_1$. Then by the inductive hypothesis $\Delta\not\vdash M_1\from\iota^n\to\iota$ for all $n\in\nat$. Hence, clearly $\Delta\not\vdash M_1M_2\from\iota^m\to\iota$ for all $m\in\nat$.
    \item Suppose $M_1M_2\bured N_1N_2$ due to $M_2\bured N_2$ and assume towards contradiction that $\Delta\vdash M_1M_2\from\iota^n\to\iota$. Then $\Delta\vdash M_1\from\sigma\to\iota^n\to\iota$ and $\Delta\vdash M_2\from\sigma$ for some $\sigma$. However, by the definition of types this implies $\sigma=\iota$, which contradicts the inductive hypothesis.
    \item Finally, if $\lambda x\ldotp M'\bured\lambda x\ldotp N$ then clearly $\Delta\not\vdash\lambda x\ldotp M'\from\iota^n\to\iota$ for all $n\in\nat$.\qedhere
    \end{itemize}
  \end{thmlist}
\end{proof}

\subsubsection{Supplementary Materials for \cref{sec:leftmost}}
\begin{restatable}[Basic Properties of $\lredrt$]{lemma}{propslredrt}
   Let $L,M,N$ and $Q$ be terms. Then
   \begin{thmlist}
   \item\label{lem:ltrans} $\lredrt$ is reflexive and transitive;
   \item\label{lem:lsubsetbu} $\lredrt\;\subseteq\;\buredrt$;
   \item\label{lem:lred0} if $M\lred{0} N$ then $M=N$;
   \item\label{lem:lredassoc} if $L\lred{m+1}N$ then there exists $M$ satisfying $L\lred 1 M\lred{m} N$;
   \item\label{lem:hleft} if $M\,Q$ is a term and $M\,\lredrt N$ then $M\,Q\lredrt N\,Q$;
   \item\label{lem:hrepl} if $M[Q/z]$ is a term and $M\lredrt N$ then $M[Q/z]\lredrt N[Q/z]$.
   \end{thmlist}
 \end{restatable}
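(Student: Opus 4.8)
The plan is to prove all six parts by induction, in all cases on the derivation of the leftmost-reduction judgement $M\lred m N$ generated by the rules of \cref{fig:lred}. Parts (i) and (iii) are immediate: reflexivity of $\lredrt$ is the axiom $M\lred 0 M$ and transitivity is the composition rule $L\lred{m_1}M\lred{m_2}N\Rightarrow L\lred{m_1+m_2}N$; for (iii), an induction on the derivation of $M\lred 0 N$ shows that the only rules that can conclude index $0$ are the axiom $M\lred 0 M$, the two congruence rules (for $\circ\in\{\land,\lor\}$ and for $\exists$) with both subindices $0$, and the composition rule with both subindices $0$, and each of these forces $M=N$ by the induction hypothesis.

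For (ii) I would induct on the derivation of $M\lred m N$. The two length-$1$ axioms $R\,\overline M\lred 1(\lambda\overline x_R\ldotp F_R)\overline M$ and $(\lambda x\ldotp L)M\,\overline N\lred 1 L[M/x]\overline N$ are instances---inside a subterm, hence via the compatible closure---of a $\upsilon$- resp.\ $\beta$-step, so $M\bured N$; the $\circ$- and $\exists$-congruence rules follow from the induction hypothesis together with the fact that $\buredrt$ is compatible (closed under application and $\lambda$-abstraction, recalling that $\exists x\ldotp M=\exists_\tau(\lambda x\ldotp M)$); and the axiom $M\lred 0 M$ and the composition rule follow from reflexivity and transitivity of $\buredrt$.

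Part (iv) is the one that needs the most care, so I would treat it next, by induction on the derivation of $L\lred{m+1}N$. If the last rule is a length-$1$ axiom, then $m=0$ and $M\defeq N$ works. If it is the $\circ$-congruence rule $M_1\circ M_2\lred{m_1+m_2}N_1\circ N_2$ with $m_1+m_2=m+1$, then at least one subindex is positive; assuming w.l.o.g.\ (using part (iii) and a symmetry argument when $m_1=0$) that $m_1=m_1'+1$, apply the induction hypothesis to $M_1\lred{m_1}N_1$ to split off one step $M_1\lred 1 M_1'\lred{m_1'}N_1$ and reassemble to $M_1\circ M_2\lred 1 M_1'\circ M_2\lred m N_1\circ N_2$; the $\exists$-congruence case is analogous. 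If the last rule is the composition rule, case on whether its first subindex is $0$---then $L$ equals the intermediate term by (iii) and the induction hypothesis applies to the (strictly smaller) second premise---or positive---then the induction hypothesis applies to the first premise and the rest is recomposed. The only subtlety is this interaction between the composition rule and the zero-length degenerate cases; it is a finite bookkeeping exercise on the index arithmetic rather than anything deep.

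For (v) and (vi) I would prove the sharper, index-tracking statements ``$M\lred m N$ with $MQ$ a term implies $MQ\lred m NQ$'' and ``$M\lred m N$ with $M[Q/z]$ a term implies $M[Q/z]\lred m N[Q/z]$'', again by induction on the derivation of $M\lred m N$. The key simplification for (v) is that whenever the last rule is a $\circ$- or $\exists$-congruence, $M$ has type $o$, so $MQ$ is not a term and that case is vacuous; otherwise the statement follows from the induction hypothesis (using subject reduction for $\bured$, \cref{lem:subred2}, via part (ii), to see that the intermediate term in the composition case keeps the type making $M'Q$ a term) and a short calculation extending the argument spine for the two axioms. For (vi), every case goes through by the induction hypothesis after invoking Barendregt's variable convention to commute the substitution with binders, with the two axioms needing a brief calculation: that $\lambda\overline x_R\ldotp F_R$ is closed (so the substitution leaves it fixed) in the $\upsilon$-case, and the substitution lemma for composing substitutions in the $\beta$-case. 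I expect no step here to be genuinely hard; the main effort will be the index arithmetic in (iv) and the vacuous/variable-convention bookkeeping in (v)–(vi).
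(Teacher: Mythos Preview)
Your proposal is correct and follows essentially the same approach as the paper: all six parts are proved by induction on the derivation of $M\lred m N$, with the key observation for (v) that the $\circ$- and $\exists$-congruence cases are vacuous because $(M_1\circ M_2)\,Q$ and $(\exists x\ldotp M')\,Q$ are not terms, and for (vi) using the variable convention, closedness of $\lambda\overline x_R\ldotp F_R$, and the nested substitution lemma. Your explicit treatment of the composition case in (iv) and your observation that subject reduction is needed for the composition case in (v) (to ensure the intermediate term $M'Q$ is well-typed) are details the paper leaves implicit, but the overall structure is the same.
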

\begin{proof}
  \begin{thmlist}
  \item Completely trivial.
  \item Straightforward induction on the definition of $\lred m$
  \item Straightforward induction on the definition of $\lred 0$
  \item Straightforward induction on the definition of $\lred{m+1}$
  \item Straightforward induction on the definition of $M \lred m N$ noting that for $\circ\in\{\land,\lor\}$ the cases $M_1\circ M_2\lred{m_1+m_2}N_1\circ N_2$ and $\exists x\ldotp M'\lred m\exists x\ldotp N'$ cannot occur because $(M_1\circ M_2)\,Q$ and $(\exists x\ldotp M')\,Q$ are not terms.
  \item We prove by induction on $M\lred m N$ that $M[Q/z]\lred m N[Q/z]$.
    \begin{itemize}
    \item If $M=N$ and $m=0$ then also $M[Q/z]\lred 0 N[Q/z]$.
    \item If there exist $L$, $m_1$ and $m_2$ such that $M\lred{m_1}L\lred{m_2}N$ and $m=m_1+m_2$ then by the inductive hypothesis $M[Q/z]\lred{m_1}L[Q/z]\lred{m_2}N[Q/z]$. Consequently, $M[Q/z]\lred{m}N[Q/z]$.
    \item Next, suppose that $M$ is $M_1\circ M_2$ for $\circ\in\{\land,\lor\}$ and that there exist $m_1$ and $m_2$ such that $m=m_1+m_2$ and $M_j\lred{m_j}N_j$ for $j\in\{1,2\}$. By the inductive hypothesis, $M_j[Q/z]\lred{m_j}N_j[Q/z]$. Consequently,
      \begin{align*}
        (M_1\circ M_2)[Q/z]=&\;(M_1[Q/z]\circ M_2[Q/z])\\
                           \lred{m_1+m_2}&\;(N_1[Q/z]\circ N_2[Q/z])\\
                           =&\;(N_1\circ N_2)[Q/z].
      \end{align*}
    \item Suppose that $M$ is $\exists x\ldotp M'$ and that $\exists x\ldotp M'\lred m \exists x\ldotp N'$. By the inductive hypothesis, $M'[Q/z]\lred m N'[Q/z]$.
      By the variable convention, $x\neq z$. Hence,
      \begin{align*}
        &(\exists x\ldotp M')[Q/z]=\exists x\ldotp M'[Q/z]\lred m\\
        &\exists x\ldotp N'[Q/z]=(\exists x\ldotp N')[Q/z].
      \end{align*}
    \item Suppose that $M$ is $R\,\overline M'$ for $R\in\Sigma'\setminus\Sigma$ and that $R\,\overline M'\lred 1(\lambda\overline x_R\ldotp F_R)\overline M'$. Clearly,
      \begin{align*}
        &(R\,\overline M')[Q/z]=R\,\overline M'[Q/z]\lred 1\\
        &(\lambda\overline x_R\ldotp F_R)\overline M'[Q/z]=((\lambda\overline x_R\ldotp F_R)\overline M')[Q/z]
      \end{align*}
      using the variable convention and the fact that $\lambda\overline x_R\ldotp F_R$ is closed.
    \item Finally, suppose that $M$ is $(\lambda x\ldotp M')M''\overline M'''$ and that
      \begin{align*}
        (\lambda x\ldotp M')M''\overline M'''\lred 1 M'[M''/x]\overline M'''.
      \end{align*}
      By the variable convention $x\neq z$. Hence
      \begin{align*}
        &\;((\lambda x\ldotp M')M''\overline M''')[Q/z]\\
        =&\;(\lambda x\ldotp M'[Q/z])M''[Q/z]\overline M'''[Q/z]\\
        \lred 1&\; M'[Q/z][M''[Q/z]/x]\overline M'''[Q/z]\\
        =&\; (M'[M''/x]\overline M''')[Q/z]
      \end{align*}
      using the Nested Substitution Lemma from \cite[2.1.16.\ Substitution Lemma]{B12}.\qedhere
    \end{itemize}
  \end{thmlist}
\end{proof}
 Besides, the following Inversion Lemma is immediate by definition.
 \begin{lemma}[Inversion]
   \label{lem:linversion}
   \begin{thmlist}
   \item If $\exists\overline x\ldotp E\lred{m}F$ then there exists $F'$ such that $F'=\exists\overline x\ldotp F'$ and $E\lred{m} F'$.
   \item If $E_1\circ\cdots\circ E_n\lred{m}F$, where $\circ\in\{\land,\lor\}$, then there exist $F_1,\ldots,F_n$ and $m_1,\ldots,m_n$ satisfying $F=F_1\circ\cdots\circ F_n$, $m=\sum_{i=1}^n m_i$ and $E_j\lred{m_j}F_j$ for each $1\leq j\leq n$.
   \item If $\exists\overline x\ldotp A_1\land\cdots\land A_n\lred{m}F$ then there exist $F_1,\ldots,F_n$ and $m_1,\ldots,m_n$ satisfying $F=\exists\overline x\ldotp F_1\land\cdots\land F_n$, $m=\sum_{i=1}^n m_i$ and $A_j\lred{m_j}F_j$ for each $1\leq j\leq n$.
   \item If $(\lambda x\ldotp K)L\overline M\lred{1}N$ then $N=K[L/x]\overline M$.
   \end{thmlist}
 \end{lemma}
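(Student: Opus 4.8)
The plan is to establish all four parts of \cref{lem:linversion} by induction on the derivation of the leftmost-reduction judgement, using the inductive presentation of $\lred{\cdot}$ in \cref{fig:lred}. The single observation that drives every case is that the rule applied at the root of a derivation of $M\lred{m}F$ is essentially forced by the head symbol of $M$: the reflexivity rule ($M\lred{0}M$) and the transitivity rule ($L\lred{m_1}M'\lred{m_2}N$) apply regardless of the shape of $M$, whereas each of the remaining rules requires $M$ to have a specific head — $\exists_\tau$ for the quantifier rule, $\land$ or $\lor$ for the connective rule, a relational symbol $R\in\Sigma'\setminus\Sigma$ for the unfolding rule, and a $\lambda$-abstraction in head position (applied to at least one argument) for the leftmost $\beta$-rule. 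So for each of the syntactic shapes occurring in the lemma, at most one ``shape-specific'' rule is available, and the inversion follows by reading it off.

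Concretely: for part~(i) the term $\exists\overline x\ldotp E$ has head $\exists_\tau$, so only reflexivity, transitivity, and the quantifier rule can fire. Reflexivity gives $F=\exists\overline x\ldotp E$ (take $F'=E$); the quantifier rule strips one leading quantifier, and the inductive hypothesis applied to the (strictly smaller) premise reassembles the result, the empty-prefix case being trivial; in the transitivity case one applies the inductive hypothesis to both premises and recombines the two reductions via the transitivity rule, adding the step counts. Part~(ii) is the same analysis with head $\circ\in\{\land,\lor\}$: it suffices to treat the binary case $E_1\circ E_2\lred{m}F$ and iterate for general $n$, where reflexivity, the connective rule, and transitivity are the only possibilities and each is handled exactly as above, summing the step counts. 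Part~(iii) is then immediate by chaining part~(i) (to peel off the $\exists$-prefix) and part~(ii) (to split the conjunction). For part~(iv) the term $(\lambda x\ldotp K)L\,\overline M$ has a $\lambda$-abstraction in head position, so the only applicable rules are reflexivity, the leftmost $\beta$-rule, and transitivity; reflexivity is excluded because it forces $m=0\neq 1$; the leftmost $\beta$-rule yields exactly $N=K[L/x]\,\overline M$; and in the transitivity case $m_1+m_2=1$ forces one summand to be $0$, so \cref{lem:lred0} collapses that step to an identity and the inductive hypothesis applies to the remaining single-step reduction.

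I expect the whole argument to be routine — which is why the lemma is asserted to be immediate — with the only mild subtlety being the transitivity case: one must split the step count correctly and, in part~(iv), invoke \cref{lem:lred0} to dispose of the length-zero summand before applying the inductive hypothesis. There is no genuine obstacle here; the work lies entirely in carrying out the case analysis on the rules of \cref{fig:lred}.
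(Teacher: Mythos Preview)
Your proposal is correct and matches the paper's approach: the paper simply asserts the lemma is ``immediate by definition'' without giving a proof, and your case analysis on the rules of \cref{fig:lred} (induction on the derivation, with the head symbol forcing which shape-specific rule can fire) is exactly the routine unpacking the paper has in mind.
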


\begin{restatable}[Basic Properties of $\sred$]{lemma}{propssred}
  \begin{thmlist}
  \item\label{lem:isrefl} $\sred$ is reflexive (on positive existential terms).
  \item\label{lem:ssubsetbu} $\sred\;\subseteq\;\buredrt$.
  \item\label{lem:ilsr} If $L\sred N$ and $\overline O\sred\overline Q$ then $L\overline O\sred N\,\overline Q$.
  \item\label{lem:hs} If $K\lredrt L\sred N$ then $K\sred N$.
  \item\label{lem:srepl} If $L\sred N$ and $O\sred Q$ then $L[O/z]\sred N[Q/z]$.
  \end{thmlist}
\end{restatable}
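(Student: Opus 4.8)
The plan is to prove the five parts in the order stated, relying throughout on the basic properties of $\lredrt$ already established — reflexivity and transitivity of $\lredrt$, the inclusion $\lredrt\subseteq\buredrt$, the left-congruence property \itemref{lem:hleft}, and the substitution property \itemref{lem:hrepl} — and letting the later parts invoke the earlier ones. The whole argument is modelled on the standardisation proof of \cite{K00}.

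For \itemref{lem:isrefl} I would use structural induction on the positive existential term $L$. Since $\neg$ is not a subterm, $L$ has the form $h\,\overline M$ with head $h$ a variable, a symbol of $\Sigma'$, one of $\land,\lor,\exists_\tau$, or a $\lambda$-abstraction $\lambda x\ldotp M'$. Reflexivity of $\lredrt$ gives $L\lredrt L$, the induction hypothesis gives $\overline M\sred\overline M$ (and $M'\sred M'$ in the last case), and the matching rule of \cref{fig:los} then yields $L\sred L$. For \itemref{lem:ssubsetbu} I would do induction on the derivation of $L\sred N$: in each of the three cases the premise supplies $L\lredrt h\,\overline M$, hence $L\buredrt h\,\overline M$, while the recursive hypotheses give $\overline M\buredrt\overline N$ (and $M'\buredrt N'$); as $\buredrt$ is closed under application and abstraction contexts, these reassemble into $L\buredrt N$.

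Parts \itemref{lem:ilsr} and \itemref{lem:hs} are short case analyses on the last rule of $L\sred N$. For \itemref{lem:ilsr}, in every case $L\lredrt h\,\overline M$, so repeated use of \itemref{lem:hleft} (once per argument of $\overline O = O_1,\dots,O_k$) gives $L\,\overline O\lredrt h\,\overline M\,\overline O$; since $\overline M\,\overline O\sred\overline N\,\overline Q$ (and, where relevant, $M'\sred N'$), the same standard-reduction rule applied to the longer argument list yields $L\,\overline O\sred N\,\overline Q$. For \itemref{lem:hs}, whichever rule concludes $L\sred N$ has premise $L\lredrt h\,\overline M$, and combining with $K\lredrt L$ by transitivity of $\lredrt$ gives $K\lredrt h\,\overline M$, so the very same rule yields $K\sred N$.

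The part needing genuine care is \itemref{lem:srepl}, which I would prove by induction on the derivation of $L\sred N$ (with $O\sred Q$ and $z$ fixed), combining \itemref{lem:hrepl} with parts \itemref{lem:ilsr} and \itemref{lem:hs}. When $h$ is a symbol, a $\lambda$-abstraction, or a variable distinct from $z$, one simply pushes the substitution through: \itemref{lem:hrepl} gives $L[O/z]\lredrt (h\,\overline M)[O/z] = h\,(\overline M[O/z])$, the induction hypothesis upgrades $\overline M\sred\overline N$ (and $M'\sred N'$) to their substituted versions, and the matching rule reassembles $L[O/z]\sred N[Q/z]$ — the variable convention disposing of capture in the $\lambda$-case. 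The interesting case is $L\sred z\,\overline N$ with the head variable being the one substituted: here \itemref{lem:hrepl} gives $L[O/z]\lredrt O\,(\overline M[O/z])$, the induction hypothesis gives $\overline M[O/z]\sred\overline N[Q/z]$, so part \itemref{lem:ilsr} applied to $O\sred Q$ and $\overline M[O/z]\sred\overline N[Q/z]$ yields $O\,(\overline M[O/z])\sred Q\,(\overline N[Q/z]) = (z\,\overline N)[Q/z] = N[Q/z]$, and finally part \itemref{lem:hs} absorbs the leftmost prefix $L[O/z]\lredrt O\,(\overline M[O/z])$ to conclude $L[O/z]\sred N[Q/z]$. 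This phenomenon — a substituted head variable turning a purely leftmost prefix into a standard tail that must be re-merged via \itemref{lem:ilsr} and \itemref{lem:hs} — is the crux of the lemma and is exactly where the three auxiliary closure properties earn their keep; it is the analogue of the corresponding step in \cite{K00}.
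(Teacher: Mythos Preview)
Your proposal is correct and follows essentially the same approach as the paper: structural induction for \itemref{lem:isrefl}, and induction on the derivation of $\sred$ for the remaining parts, using \cref{lem:hleft}, \cref{lem:hrepl}, and the transitivity of $\lredrt$ exactly where you indicate. The only cosmetic difference is that in \itemref{lem:srepl} the paper handles the variable-head case uniformly---observing that $x[O/z]\sred x[Q/z]$ holds either by the hypothesis $O\sred Q$ (if $x=z$) or by \itemref{lem:isrefl} (if $x\neq z$), and then invoking \itemref{lem:ilsr} and \itemref{lem:hs} once for both sub-cases---whereas you split the two sub-cases explicitly; the substance is identical.
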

\begin{proof}
  \begin{thmlist}
  \item We prove by structural induction on $M$ that $M\sred M$. $M$ has the form $M_1\cdots M_n$, where $M_1$ is either a variable, a symbol from $\Sigma'\cup\{\land,\lor\exists_\tau\}$ or a $\lambda$-abstraction. In any case the inductive hypothesis and the reflexivity of $\lredrt$ 
    immediately yield that $M_1\cdots M_n\sred M_1\cdots M_n$.  
  \end{thmlist}
  We prove the remaining four parts by induction on the definition of $\sred$. We only show the detailed proof for the case $L\sred x\,\overline N$ due to $L\lredrt x\,\overline M$ and $\overline M\sred\overline N$ for some $\overline M$ (the other cases are analogous).
  \begin{thmlist}
    \setcounter{thmlisti}{1}
  \item By \cref{lem:lsubsetbu} and the inductive hypothesis, $L\buredrt x\,\overline M$ and $\overline M\buredrt\overline N$. Therefore clearly, $L\buredrt x\,\overline M\buredrt x\,\overline N$ and hence also, $L\buredrt x\,\overline N$.
  \item By \cref{lem:hleft}, $L\overline O\lredrt x\,\overline M\,\overline O$ and hence by definition $L\overline O\sred x\,\overline N\,\overline Q$.
  \item By transitivity of $\lredrt$ (\cref{lem:ltrans}), $K\lredrt x\,\overline M$ and hence by definition $K\sred x\,\overline N$.
  \item 
      By the inductive hypothesis, $\overline M[O/z]\sred\overline N[Q/z]$ and by assumption or \itemref{lem:isrefl}, $x[O/z]\sred x[Q/z]$. Therefore by \itemref{lem:ilsr} and \cref{lem:hrepl},
      \begin{align*}
        L[O/z]\lredrt x[O/z]\overline M[O/z]\sred x[Q/z]\overline N[Q/z],
      \end{align*}
      which proves $L[O/z]\sred (x\,\overline N)[Q/z]$ by \itemref{lem:hs}.\qedhere
  \end{thmlist}
\end{proof}

\begin{restatable}[Inversion]{lemma}{inversion}
  \label{lem:inversion}
  Let $E$ be an ex-normal formula.
  \begin{thmlist}
  \item\label{it:inversion1} If $E\sred x\,\overline N$ then there exists $\overline M$ such that $E\lredrt x\,\overline M$.
  \item\label{it:inversion2} If $E\sred c\,\overline N$, where $c\in\Sigma'\cup\{\land,\lor,\exists_\tau\}$, then there exists $\overline M$ such that $E\lredrt c\,\overline M$ and $\overline M\sred\overline N$.
  \item If $E\sred\exists N$ then there exist $x$, $N'$ and $M$ such that $N=(\lambda x\ldotp N')$, $E\lredrt\exists x\ldotp M$ and $M\sred N'$.
  \end{thmlist}
\end{restatable}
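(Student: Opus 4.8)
The plan is to prove this Inversion Lemma for $\sred$ by induction on the derivation of $E \sred c\,\overline{N}$ (and symmetrically for the variable and $\exists$ cases). The key observation is that $\sred$ is defined by exactly three rules in \cref{fig:std}, distinguished by the shape of the "head" reached after the leftmost reductions $L \lredrt H$, where $H$ is $c\,\overline{M}$, $x\,\overline{M}$, or $(\lambda x\ldotp M')\overline{M}$. So any derivation of $E \sred c\,\overline{N}$ must end with the first rule, and the hypothesis of that rule gives us exactly $E \lredrt c\,\overline{M}$ and $\overline{M} \sred \overline{N}$ for some $\overline{M}$ — \emph{provided} the derivation is "reduced" in the sense that no spurious rule applications intervene. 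Since the rules of \cref{fig:std} have no transitivity rule, the derivation is essentially unique given the target, and the result is almost immediate.

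More precisely, for \itemref{it:inversion2}: suppose $E \sred c\,\overline{N}$. Inspecting the three rules, the conclusion has the form $L \sred c'\,\overline{N'}$ only via the first rule (with $c' = c \in \Sigma'\cup\{\land,\lor,\exists_\tau\}$), since $x\,\overline{N}$ has a variable head and $(\lambda x\ldotp N')\overline{N}$ has an abstraction head, and these are distinct syntactic categories. Hence the derivation must be an instance of the first rule with $L = E$, yielding premises $E \lredrt c\,\overline{M}$ and $\overline{M} \sred \overline{N}$, which is exactly the claim. The cases \itemref{it:inversion1} and the $\exists$-case are analogous, using the second and the third rule respectively; in the $\exists$-case the head after leftmost reduction is $\exists$ applied to some $\overline{M}$, but since $E$ is ex-normal and $\lredrt$ preserves ex-normality (this follows from \cref{lem:seminormal} together with $\lredrt \subseteq \buredrt$), the term $\exists$ is applied to a single argument which must be a $\lambda$-abstraction $\lambda x\ldotp N'$, and we appeal to the first rule of \cref{fig:std} (the $c \in \{\exists_\tau\}$ case) to extract $E \lredrt \exists x\ldotp M$ with $M \sred N'$. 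Here the ex-normality hypothesis is essential: without it, $E$ could leftmost-reduce to $\exists\,K$ where $K$ is not an abstraction, and the clean decomposition would fail.

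The main subtlety — rather than a genuine obstacle — is making precise why "the derivation must end with rule $i$". One has to observe that the three rules of \cref{fig:std} are mutually exclusive in their conclusions because the heads $c$ (a constant or logical symbol), $x$ (a variable), and $\lambda x\ldotp M'$ (an abstraction) belong to disjoint syntactic classes, so a term of the form $c\,\overline{N}$ can only be the conclusion of the first rule. The other point requiring a small argument is the preservation of ex-normality under $\lredrt$, needed for the $\exists$-case: this is where we invoke \cref{lem:seminormal} (via $\lredrt\subseteq\buredrt$ from \cref{lem:lsubsetbu}) to guarantee that, once the leftmost reductions from an ex-normal $E$ expose an $\exists$ head, its argument is necessarily an abstraction, so the resulting term has the shape $\exists x\ldotp M$ demanded by the statement. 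Everything else is a direct read-off of the defining rules.
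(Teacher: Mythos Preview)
Your argument for \itemref{it:inversion1} and \itemref{it:inversion2} is correct and matches the paper's: the three rules of \cref{fig:std} have syntactically disjoint conclusions (constant head, variable head, abstraction head), so inversion is immediate.

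For part (iii), however, there is a gap. From $E \sred \exists\,N$, inverting the first rule (as you do) gives $E \lredrt \exists\,L$ and $L \sred N$ for some single argument $L$; ex-normality of $E$ together with $\lredrt \subseteq \buredrt$ and \cref{lem:seminormal} then forces $L = \lambda y\ldotp L'$. So far so good. But at this point you only have $(\lambda y\ldotp L') \sred N$, and you jump directly to ``$E \lredrt \exists x\ldotp M$ with $M \sred N'$'' without justifying two things: first, that $N$ itself is an abstraction $\lambda x\ldotp N'$ (this needs $\sred \subseteq \buredrt$, i.e.\ \cref{lem:ssubsetbu}, not just $\lredrt \subseteq \buredrt$, applied to $E \sred \exists\,N$); and second, that $(\lambda y\ldotp L') \sred (\lambda x\ldotp N')$ yields $L' \sred N'$ with $y = x$. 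The latter requires a \emph{second} inversion, this time of the third rule of \cref{fig:std}, observing that a bare abstraction admits no non-trivial $\lredrt$-step, so $(\lambda y\ldotp L') \lredrt (\lambda x\ldotp M)$ forces $y=x$ and $L'=M$. The paper carries out exactly these two extra steps. Your sketch collapses them, and in doing so also muddles the naming (calling the intermediate argument ``$\lambda x\ldotp N'$'', which is the name reserved for the target).
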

\begin{proof}
  The first two parts are obvious by definition of $\sred$. Hence, suppose that $E\sred\exists N$. 

  By \cref{lem:seminormal,lem:ssubsetbu}, $N$ has the form $\lambda x\ldotp N'$ for some $N'$. Furthermore, by \itemref{it:inversion2} there exists $L$ such that $E\lredrt\exists_\tau L$ and $L\sred\lambda x\ldotp N'$. Again, by \cref{lem:seminormal,lem:lsubsetbu}, $L$ has the form $\lambda y\ldotp L'$. By definition of $\sred$, $(\lambda y\ldotp L')\sred(\lambda x\ldotp N')$ implies that there exists $M$ such that $(\lambda y\ldotp L')\lredrt(\lambda x\ldotp M)$ and $M\sred N'$. However, $(\lambda y\ldotp L')\lredrt(\lambda x\ldotp M)$ clearly implies $y=x$ and $L'=M$. Consequently, $E\lredrt\exists x\ldotp M$ and $M\sred N'$.
\end{proof}

\elimbum*
\begin{proof}
  We prove the lemma by induction on $K\sred M$.
  \begin{itemize}
  \item First, suppose $K\sred x\,M_1\cdots M_n$ because for some $L_1,\ldots,L_n$, $K\lredrt x\,L_1\cdots L_n$ and $L_i\sred M_i$ for each $i$. Clearly, $x\, M_1\cdots M_n\bured x\, N_1\cdots N_n$, because of $M_j\bured N_j$ for some $j$ and $M_i=N_i$ for $i\neq j$ are the only possible $\beta\upsilon$-reductions. By the inductive hypothesis, $L_j\sred N_j$ and therefore by definition, $L\sred x\,N_1\cdots N_n$.
  \item Next, suppose $K\sred c\,\overline M$ because for some $\overline L$, $K\lredrt c\,\overline L$ and $\overline L\sred\overline M$. If $c=R\in\Sigma'\setminus\Sigma$ and $R\,\overline M\bured(\lambda\overline x_R\ldotp F_R)\overline M$ then 
    $K\lredrt R\,\overline L\lred{1} (\lambda\overline x_R\ldotp F_R)\overline L$. Therefore, by reflexivity of $\sred$ (\cref{lem:isrefl}), $K\sred (\lambda\overline x_R\ldotp F_R)\overline M$.
    
    Otherwise, $\overline M$ is reduced and the argument is analogous to the case for $K\sred x\,\overline M$.
  \item Finally, suppose $K\sred (\lambda x\ldotp M')\overline M$ because for some $L'$ and $\overline L$, $K\lredrt (\lambda x\ldotp L')\overline L$, $L'\sred M'$ and $\overline L\sred\overline M$. Let $\overline L=(L_1,\ldots,L_n)$ and $\overline M=(M_1,\ldots,M_n)$.

    First, suppose $(\lambda x\ldotp M')\overline M\bured(\lambda x\ldotp N')\overline M$, where $M'\bured N'$. By the inductive hypothesis, $L'\sred N'$. Therefore, by definition, $(\lambda x\ldotp K')\overline K\sred(\lambda x\ldotp N')\overline M$.
    
    The argument for the case $(\lambda x\ldotp M')M_1\cdots M_n\bured(\lambda x\ldotp M')N_1\cdots N_n$, where for some $j$, $M_j\bured N_j$ and $M_k=N_k$ for all $k\neq j$, is very similar.

    Finally, assume that $n\geq 1$ and $(\lambda x\ldotp M')M_1\cdots M_n\bured M'[M_1/x]M_2\cdots M_n$. Then
    \begin{align*}
      L&\lredrt(\lambda x\ldotp L')L_1\cdots L_n\\
       &\lred{1} L'[L_1/x]L_2\cdots L_n\\
       &\sred M'[M_1/x]M_2\cdots M_n,&\text{\cref{lem:ilsr,lem:srepl}},
    \end{align*}
    which proves $L\sred M'[M_1/x]M_2\cdots M_n$ by \cref{lem:hs,lem:ltrans}.\qedhere
  \end{itemize}
\end{proof}

\sdecompose*
\begin{proof}
  We prove the lemma by induction on the structure of $F$. Note that the case $(\lambda x\ldotp N')\overline N$ cannot occur for otherwise $F$ is not in $\beta$-normal form or does not have type $o$. If $F$ has the form $x\,\overline N$ then by the Inversion \cref{lem:inversion} there exists $\overline M$ such that $\posex(G)\lredrt x\,\overline M$, and clearly, $\alpha\force x\,\overline M$.

  Hence, the only remaining case is that $F$ has the form $c\,\overline N$. By the Inversion \cref{lem:inversion} there exist $\overline M$ such that $\posex(G)\lredrt c\,\overline M$ and $\overline M\sred\overline N$.
  Note that $c\in\Sigma'$ implies $c\from\iota^n\to o\in\Sigma$ for otherwise $\As_0,\alpha\not\models c\,\overline N$. 
  By \cref{lem:functionalnobured,lem:ssubsetbu}, $M=N$ and thus $\As_0,\alpha\models c\,\overline M$. Consequently, $\alpha\force c\,\overline M$.

  Next, suppose that $c$ is $\land$. Then $F$ is $N_1\land N_2$ and $c\,\overline M$ has the form $M_1\land M_2$. By \cref{lem:lsubsetbu} and the Subject Reduction \cref{lem:subred2}, $M_j$ is a positive existential formula and clearly, by assumption, $N_j$ is in $\beta$-normal form and $\As_0,\alpha\models N_j$ for all $j\in\{1,2\}$. By the inductive hypothesis, there are $N_1'$ and $N_2'$ satisfying $M_j\lredrt N_j'$ and $\alpha\force N_j'$. Consequently, $\alpha\force N_1'\land N_2'$ and by definition, $\posex(G)\lredrt N_1'\land N_2'$.

  The case where $c$ is $\lor$ is very similar.

  Finally, suppose that $c$ is $\exists_\tau$ and that $F$ is $\exists_\tau N_1$. By the Inversion \cref{lem:inversion} there exist $x$,
  $N'$ and $M$ such that $N_1=(\lambda x\ldotp N')$, $\posex(G)\lredrt \exists x\ldotp M$ and $M\sred N'$. By \cref{lem:lsubsetbu} and the Subject Reduction \cref{lem:subred2}, $M$ is a positive existential formula and clearly $N'$ is in $\beta$-normal form and $\As_0,\alpha[x\mapsto r]\models N'$ for some $r\in\sinti\Hf\tau$. By the inductive hypothesis, there exists $N''$ satisfying $M\lredrt N''$ and $\alpha[x\mapsto r]\force N''$. Consequently, $\alpha\force\exists x\ldotp N''$ and by definition, $\posex(G)\lredrt\exists x\ldotp N''$.
\end{proof}

\subsubsection{Supplementary Materials for \cref{sec:completeness}}
The proof of the following lemma is a straightforward induction on the
definition of $\force$:
\begin{lemma}
  \label{lem:vartrianglemodels}
  Let $\alpha,\alpha'$ be valuations and $F$ be positive existential formulas satisfying $\alpha\force F$. 
  If $\alpha(x)=\alpha'(x)$ for all $x\in\free(F)$ then $\alpha'\force F$.
\end{lemma}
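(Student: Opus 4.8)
The plan is to prove \cref{lem:vartrianglemodels} by a routine structural induction on the derivation of $\alpha \force F$, using the inductive definition of $\force$ given just before \cref{rem:refute}. The statement to establish is: if $\alpha \force F$ and $\alpha(x) = \alpha'(x)$ for all $x \in \free(F)$, then $\alpha' \force F$. I would proceed case by case according to the last rule applied in the derivation of $\alpha \force F$.

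First, the base cases. If $F$ is a foreground atom of the form $x\,\overline M$, then $\alpha \force F$ holds unconditionally by the first rule, so $\alpha' \force F$ holds as well, with no hypothesis needed. If $F$ is a background atom $\phi$ with $\As,\alpha \models \phi$, then since $\phi$ is a $1$st-order $\Sigma$-formula, its denotation depends only on the values $\alpha$ assigns to its free variables (all of type $\iota$); since $\alpha$ and $\alpha'$ agree on $\free(\phi) = \free(F)$, we get $\As,\alpha' \models \phi$, hence $\alpha' \force \phi$. (Formally one invokes the Substitution \cref{lem:substitution} or just the standard coincidence lemma for denotations.)

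Next, the inductive cases. For $F = M_1 \lor M_2$: the derivation used $\alpha \force M_i$ for some $i \in \{1,2\}$. Since $\free(M_i) \subseteq \free(F)$, $\alpha$ and $\alpha'$ agree on $\free(M_i)$, so the inductive hypothesis gives $\alpha' \force M_i$, whence $\alpha' \force M_1 \lor M_2$. The case $F = M_1 \land M_2$ is analogous, applying the inductive hypothesis to both conjuncts (using $\free(M_j) \subseteq \free(F)$ for $j = 1,2$). For $F = \exists x \ldotp M$: the derivation used some $r \in \sinti\Hf{\Delta(x)}$ with $\alpha[x \mapsto r] \force M$. By the variable convention we may assume $x \notin \dom$ of the relevant agreement set, and in any case $\free(M) \subseteq \free(F) \cup \{x\}$, so $\alpha[x \mapsto r]$ and $\alpha'[x \mapsto r]$ agree on $\free(M)$; the inductive hypothesis yields $\alpha'[x \mapsto r] \force M$, and hence $\alpha' \force \exists x \ldotp M$ by the same rule with the same witness $r$.

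There is no real obstacle here: the lemma is exactly the standard "coincidence of semantics on free variables" fact, transported to the relation $\force$, and the only point requiring the tiniest bit of care is the quantifier case, where one must handle the bound variable $x$ correctly (the usual $\alpha$-renaming / variable-convention manoeuvre). All other cases are immediate from the inductive hypothesis together with the obvious inclusions of free-variable sets. Consequently the proof is a short, mechanical induction, which is why the excerpt simply remarks that it "is a straightforward induction on the definition of $\force$."
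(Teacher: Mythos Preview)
Your proposal is correct and follows exactly the approach the paper indicates: a straightforward induction on the definition of $\force$, handling each rule in turn. The paper does not spell out the details either, so your case analysis is precisely the intended argument.
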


\mudec*
\begin{proof}
  Let $G\in\Set'$ be a goal clause, $F$ be a (closed) positive existential
  formula, $\alpha$ be a valuation and let $m=\mu(G)>0$ be such that $\posex(G)\lred m F$ and $\alpha\force F$.
  W.l.o.g.\ we can assume that
  \begin{align}
    \label{eq:varsdisjoint}
    \free(G)\cap\free(C)&=\emptyset& \text{for all } C\in \Set'.
  \end{align}
  (Otherwise, rename all variables occurring in $G$ to obtain $\widetilde G$ satisfying \cref{eq:varsdisjoint} and clearly, by definition of $\Res$, $\Set'\cup\{\widetilde G\}\Res \Set'\cup\{\widetilde G,G'\}$ implies $\Set'\Res \Set'\cup\{G'\}$.)

  Furthermore, suppose that $G=\neg A_1\lor\cdots\lor\neg A_n$ and $\posex(G)=\exists\overline x\ldotp\bigland_{i=1}^n A_i$.
  By the Inversion \cref{lem:linversion}, there exist $F_1,\ldots,F_n$ and $m_1,\ldots,m_n$ such that $F=\exists\overline x\ldotp\bigland_{i=1}^n F_i$, $m=\sum_{i=1}^n m_i$ and $A_j\lred{m_j}F_j$ for each $1\leq j\leq m$. 
  Note that due to $\alpha\force F$ we can assume w.l.o.g.\ that also $\alpha\force F_j$ for each $1\leq j\leq n$, and
  furthermore we can assume that $m_1>0$. By \cref{lem:lredassoc}, there exists $E$ such that $A_1\lred{1}E\lred{m_1-1}F_1$. Since $A_1$ is an atom there are exactly two cases:
  \begin{thmlist}
  \item $A_1=(\lambda y\ldotp L)M\,\overline N$ and $E=L[M/y]\overline N$ or
  \item\label{it:proofcase2} $A_1=R\,\overline M$ and $E=(\lambda\overline x_R\ldotp F_R)\overline M$.
  \end{thmlist}
  The first case is easy because for $G'=\neg L[M/y]\overline N\lor\biglor_{i=2}^n\neg A_i$, $\Set'\cup\{G\}\Res \Set'\cup\{G,G'\}$ and $\posex(G')=\exists\overline x\ldotp L[M/y]\overline N\land\bigland_{i=2}^n A_i$.
    
  In the second case, note that $\lred{1}$ is functional on applied $\lambda$-abstractions (by the Inversion \cref{lem:linversion}). Hence, we can assume that
    \begin{align*}
      (\lambda\overline x_R\ldotp F_R)\overline M\lredrt F_R[\overline M/\overline x_R]\lred{m^*_1}F_1,
    \end{align*}
    where $m^*_1\leq m_1-1$ for otherwise $\alpha\force F_1$ would clearly not hold.

    $F_R$ has the form $\posex(G_{R,1},\overline x_R)\lor\cdots\lor\posex(G_{R,k},\overline x_R)$, where each $G_{R,j}$ is a goal clause and $G_{R,j}\lor R\,\overline x_R\in \Set'$. Let $\overline y_1,\ldots,\overline y_k$ and $E'_1,\ldots,E'_k$ be such that for each $j$, $\posex(G_{R,j},\overline x_R)=\exists\overline y_j\ldotp E_j'$. Note that by \cref{eq:varsdisjoint}, $\posex(G_R,\overline x_R)[\overline M/\overline x_R]=\exists\overline y_j\ldotp E'_j[\overline M/\overline x_R]$ for each $j$ and by the Inversion \cref{lem:linversion}, there exist $F_1',\ldots,F'_k$ and $m'_1,\ldots,m'_k$ such that $F_1=\bigvee_{j=1}^k(\exists\overline y_j\ldotp F_j')$, $E'_j[\overline M/\overline x_R]\lred{m_j'}F'_j$ and $m'_j\leq m_1^*$ for each $j$. 

Next, because of $\alpha\force F_1$ there exists $1\leq j\leq k$ and $\overline r\in\sinti\Hf{\Delta(\overline y_j)}$ satisfying $\alpha[\overline y_j\mapsto\overline r]\force F'_j$.
Furthermore, because of \cref{eq:varsdisjoint,lem:nonewfree,lem:vartrianglemodels}, $\alpha[\overline y_j\mapsto\overline r]\force F_i$ for all $2\leq i\leq n$. Therefore,
\begin{align}
  \label{eq:compalphamodels}\alpha[\overline y_j\mapsto\overline r]\force F'_j\land\bigwedge_{i=2}^n F_i.
\end{align}

Clearly, it holds that
\begin{align}
  \label{eq:compmakeres}\Set'\cup\{G\}\Res \Set'\cup\left\{G,G_{R,j}[\overline M/\overline x_R]\lor\biglor_{i=2}^n\neg A_i\right\}\\
  \label{eq:complred}E_j'[\overline M/\overline x_R]\land\bigland_{i=2}^n A_2\lred{m'_j+\sum_{i=2}^n m_i}F_j'\land\bigland_{i=2}^n F_i
\end{align}
and $\free(G')\subseteq\overline x\cup\overline y$. Let $\overline x'\subseteq\overline x$ and $\overline y'\subseteq\overline y$ be such that $\free(G')=\overline x'\cup\overline y'$. Hence, $\posex(G')=\exists\overline x',\overline y'\ldotp E_j'[\overline M/\overline x_R]\land\bigland_{i=2}^n A_2$.
We define
\begin{align*}
  G'&\defeq G_{R,j}[\overline M/\overline x_R]\lor\biglor_{i=2}^n\neg
      A_i\\
  F'&\defeq\exists\overline x',\overline y'\ldotp
      E_j\land\bigland_{n=2}^n F_i\\
  m'&\defeq m'_j+\sum_{i=2}^n m_i.
\end{align*}
By \cref{eq:compalphamodels,eq:compmakeres,eq:complred}, it holds that
\begin{enumerate*}
\item $\Set'\Res \Set'\cup\{G'\}$,
\item $\posex(G')\lred{m'}F'$,
\item $\alpha\force F'$ and
\item $m'\leq m_1^*+\sum_{i=2}^n m_i<\sum_{i=1}^n m_i=m$.
\end{enumerate*}
Consequently, also $\mu(\Set'\cup\{G'\})<\mu(\Set')$
\end{proof}

frame). By the inductive hypothesis and \cref{lem:termmon},




\subsection{Supplementary Materials for \cref{sec:compact-theories}}
\label{sec:appcompact-theories}

\compcomp*
\begin{proof}
  The ``if''-direction is straightforward. For the converse, suppose
  that $\Set$ is $\Ad$-unsatisfiable.
  By the Completeness \cref{thm:completeness}, for each $\As\in\Ad$
  there exist $G_\As$, every atom of which has the form $x\,\overline M$, and
  background atoms $\phi_{\As,i}$ and $\Set_\As$ such that
  $\As\not\models\neg\phi_{\As,1}\lor\cdots\lor\neg\phi_{\As,m_\As}$
  and $\Set\Res^*
  \Set_\As\cup\{G_\As\lor\neg\phi_{\As,1}\lor\cdots\lor\neg\phi_{\As,m_\As}\}=\Set'_\As$. Hence,
  $\{\neg\phi_{\As,1}\lor\cdots\lor\neg\phi_{\As,m_\As}\mid\As\in\Ad\}$
  is $\Ad$-unsatisfiable and by compactness of $\Ad$ there exists finite $\Ad'\subseteq\Ad$
  such that $\{\neg\phi_{\As,1}\lor\cdots\lor\neg\phi_{\As,m_\As}\mid\As\in\Ad'\}$
  is $\Ad$-unsatisfiable. Consequently, $\Set\Resp^*\{\Set'_\As\mid\As\in\Ad'\}\Resp\{\bot\}\cup\{\Set'_\As\mid\As\in\Ad'\}$.
\end{proof}

\subsection{Supplementary Materials for \cref{sec:appenc}}
\subsubsection{$\lambda$-lifting}
\label{sec:elimnest}
\label{SEC:ELIMNEST}
In this section, we examine how to eliminate
$\lambda$-abstractions. We make use of the notion of terms with holes
(cf.~\cite[p.~29]{B12}, \cite{BLCT13}).

Let $\lambda y\ldotp M$ be a positive existential $\Sigma'$-term not containing
logical symbols with free variables
$\overline x$ such that $\Delta\vdash M\from\overline\tau\to o$, let
$\widetilde \Set[-]$ be a set of terms with a hole of type
$\Delta(y)\to\overline\tau\to o$ such that $\Set[\lambda y\ldotp M]$ is a set of HoCHCs.

Let $\overline z$
be distinct variables (different from $\overline x,y$) satisfying
$\Delta(\overline z)=\overline\tau$. We define a signature
$\Sigma''\defeq\Sigma'\cup\{R_M\from\Delta(\overline x)\to\Delta(y)\to\overline\tau\to
o\}$ and
\begin{align*}
  \Set&\defeq\widetilde \Set[\lambda y\ldotp M]&
  \Set'&\defeq\widetilde \Set[R_M\,\overline x]\cup\{\neg M\,\overline z\lor R_M\,\overline x\,y\,\overline z\}.
\end{align*}
These are sets of HoCHCs.

\begin{proposition}
  \label{prop:elimnested}
  Let $\As\in\Ad$. Then $\Set$ is $\As$-monotone-satisfiable iff $\Set'$ is $\As$-monotone-satisfiable.
\end{proposition}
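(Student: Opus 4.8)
The plan is to prove the two directions separately, constructing a model of one system out of a model of the other. The guiding observation is that $\lambda y\ldotp M$ (and, in $\Set'$, the term $R_M\,\overline x$ that replaces it) occurs in $\widetilde\Set[-]$ only inside \emph{atoms}, which by definition contain no logical symbol, and that atoms themselves occur in a HoCHC only \emph{negatively} (in the body of a goal or definite clause). Hence in the monotone frame $\Mf$ the hole always sits in a $\prel$-monotone position of the surrounding atom, and enlarging the interpretation of the hole can only shrink the body literals.

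\textbf{From $\Set$ to $\Set'$.} Let $\Bs$ be a $(\Sigma',\Mf)$-expansion of $\As$ with $\Bs\models\Set$. Extend $\Bs$ to a $(\Sigma'',\Mf)$-expansion $\Bs'$ by keeping every $\Sigma'$-symbol fixed and letting $R_M^{\Bs'}$ be the function $\overline a\mapsto b\mapsto\overline c\mapsto\sinti\Bs M(\beta)(\overline c)$, where $\beta$ is any valuation with $\beta(\overline x)=\overline a$ and $\beta(y)=b$ (well-defined since $\free(\lambda y\ldotp M)=\overline x$). This function is monotone --- hence a genuine element of $\sinti\Mf{\Delta(R_M)}$ --- by \cref{lem:termmon} together with \cref{lem:moncoin}. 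The comprehension axiom, applied to the positive existential term $\lambda y\ldotp M$, then yields $\sinti{\Bs'}{\lambda y\ldotp M}(\alpha)=\sinti{\Bs'}{R_M\,\overline x}(\alpha)$ for every valuation $\alpha$, and it also makes the added definite clause $\neg M\,\overline z\lor R_M\,\overline x\,y\,\overline z$ valid in $\Bs'$ (its two literals already have equal denotation under every valuation). A routine compositionality argument --- substituting a subterm by one having the same denotation under all valuations, where the two share the free variables $\overline x$ so that plugging into $\widetilde\Set[-]$ is capture-free in exactly the same way for both --- upgrades this to $\sinti{\Bs'}{\widetilde C[\lambda y\ldotp M]}(\alpha)=\sinti{\Bs'}{\widetilde C[R_M\,\overline x]}(\alpha)$ for every $\widetilde C[-]\in\widetilde\Set[-]$ and $\alpha$. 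Since $\Bs$ and $\Bs'$ agree on $\Sigma'$ and $\Bs\models\widetilde\Set[\lambda y\ldotp M]=\Set$, we conclude $\Bs'\models\widetilde\Set[R_M\,\overline x]$, and therefore $\Bs'\models\Set'$.

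\textbf{From $\Set'$ to $\Set$.} Let $\Bs'$ be a $(\Sigma'',\Mf)$-expansion of $\As$ with $\Bs'\models\Set'$ and let $\Bs$ be its restriction to $\Sigma'$, still a $(\Sigma',\Mf)$-expansion of $\As$. Validity in $\Bs'$ of $\neg M\,\overline z\lor R_M\,\overline x\,y\,\overline z$, together with the comprehension axiom for $\lambda y\ldotp M$, gives $\sinti\Bs{\lambda y\ldotp M}(\alpha)\prel\sinti{\Bs'}{R_M\,\overline x}(\alpha)$ for every $\alpha$; i.e.\ $R_M\,\overline x$ denotes an over-approximation of $\lambda y\ldotp M$ (both having free variables $\overline x$). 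Fix $\widetilde C[-]\in\widetilde\Set[-]$; it has the shape $\bigvee_i\neg A_i[-]\lor(\text{head})$, the head $R\,\overline x$ containing no occurrence of the hole. As each atom $A_i[-]$ is $\neg$-free, in the monotone frame it denotes a $\prel$-monotone functional of the contents of the hole, so $\sinti{\Bs'}{A_i[\lambda y\ldotp M]}(\alpha)\le\sinti{\Bs'}{A_i[R_M\,\overline x]}(\alpha)$, hence $\sinti{\Bs'}{\neg A_i[\lambda y\ldotp M]}(\alpha)\ge\sinti{\Bs'}{\neg A_i[R_M\,\overline x]}(\alpha)$; comparing the clauses disjunct by disjunct, $\sinti{\Bs'}{\widetilde C[\lambda y\ldotp M]}(\alpha)\ge\sinti{\Bs'}{\widetilde C[R_M\,\overline x]}(\alpha)=1$. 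Since $\Bs$ and $\Bs'$ agree on $\Sigma'$ and $\widetilde C[\lambda y\ldotp M]$ is a $\Sigma'$-formula, $\sinti\Bs{\widetilde C[\lambda y\ldotp M]}(\alpha)=1$; as $\widetilde C[-]$ and $\alpha$ were arbitrary, $\Bs\models\widetilde\Set[\lambda y\ldotp M]=\Set$.

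\textbf{Main obstacle.} The crux --- and the reason the statement is about \emph{monotone} satisfiability --- is the monotonicity step in the second direction: it hinges on the fact that in the monotone frame every context built without $\neg$ denotes a monotone functional, so shrinking the interpretation of $R_M\,\overline x$ down to $\lambda y\ldotp M$'s own denotation cannot turn a true clause false once we know the hole occurs only inside body atoms. This would fail for standard semantics, where a relational argument can be used non-monotonically, which is precisely why $\lambda$-lifting is phrased over $\Mf$ here (the other semantics being reached afterwards via \cref{thm:equivsem}). The remaining ingredients --- well-definedness of $R_M^{\Bs'}$ in $\sinti\Mf{\cdot}$, validity of the auxiliary definite clause, and the (in)equality-preserving compositionality lemmas for contexts, including the harmless bookkeeping that $\lambda y\ldotp M$ and $R_M\,\overline x$ have the same free variables --- are routine, using only the Substitution Lemma (\cref{lem:substitution}) and the comprehension axiom.
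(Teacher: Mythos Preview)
Your proof is correct, but the backward direction takes a genuinely different route from the paper's.

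For $\Set'\Rightarrow\Set$ the paper does \emph{not} start from an arbitrary model of $\Set'$. Instead it takes the canonical model $\Bs:=\As_{\Prgm_{\Set'}}$ in the monotone frame, which by \cref{thm:canmod} satisfies $\Set'$, and then uses that on $\Mf$ the relation $\arel$ coincides with $\prel$ (\cref{lem:moncoin}) together with \cref{lem:knaster-tarski} to conclude that $\Bs$ is an actual \emph{fixed point} of $T_{\Prgm_{\Set'}}$. Since the only defining clause for $R_M$ in $\Prgm_{\Set'}$ is $\neg M\,\overline z\lor R_M\,\overline x\,y\,\overline z$, this yields the \emph{equality} $R_M^{\Bs}=\sinti\Bs{\lambda\overline x,y,\overline z\ldotp M\,\overline z}$, hence $\sinti\Bs{R_M\,\overline x}(\alpha)=\sinti\Bs{\lambda y\ldotp M}(\alpha)$ via \cref{lem:betaeta}, and then the same compositionality argument as in the forward direction finishes the job with an equality rather than an inequality.

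Your argument instead works with an arbitrary model, obtains only the over-approximation $\sinti{\Bs'}{\lambda y\ldotp M}(\alpha)\prel\sinti{\Bs'}{R_M\,\overline x}(\alpha)$ from the defining clause, and then exploits the Horn shape: the hole of $\widetilde\Set[-]$ can only sit inside body atoms (the head $R\,\overline x$ of a definite clause is a string of variables, so it cannot contain the hole), and atoms are positive existential, so in $\Mf$ the atom value is monotone in the hole by the context version of \cref{lem:termmon} plus \cref{lem:moncoin}. This is a perfectly valid and arguably more elementary argument---it avoids pulling in the canonical-model machinery. The trade-off is that your monotonicity-in-context step, while standard, deserves one line of justification (e.g.\ replace the hole by a fresh variable of the appropriate relational type and invoke \cref{lem:termmon,lem:moncoin}, or do a direct structural induction on the context, noting that application in $\Mf$ is monotone in both positions). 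Your discussion of why monotone semantics is essential is accurate for your argument; in the paper's argument the monotone frame enters at a different point, namely to ensure that $\As_{\Prgm_{\Set'}}$ is a genuine least fixed point rather than merely a prefixed point.
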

\begin{proof}
  \begin{itemize}
  \item First, suppose that there exists a $(\Sigma',\Mf)$-expansion $\Bs$
    of $\As$ satisfying $\Bs\models\Set$. We define a
    $(\Sigma'',\Mf)$-expansion $\Bs'$ of $\As$ by setting
    $R^{\Bs'}\defeq R^\Bs$
    for $R\in\Sigma'\setminus\Sigma$ and
    $R_M^{\Bs'}\defeq\sinti\Bs{\lambda\overline x,y,\overline z\ldotp
      M\,\overline z}$. By definition,
    $\Bs'\models\neg M\,\overline z\lor R_M\,\overline x\,y\,\overline
    z$. Furthermore for every positive existential $\Sigma'$-formula
    $E$ and valuation $\alpha$, $\sinti\Bs{E[\lambda y\ldotp M]}(\alpha)=\sinti{\Bs'}{E[R_M\,\overline x]}(\alpha)$.
    Consequently, $\Bs'\models\Set'$.
  \item Conversely, suppose $\Set'$ is $\As$-monotone-satisfiable. Let $\Bs\defeq \As_{\Prgm_{\Set'}}$.
    By
    \cref{thm:canmod}, $\Bs\models \Set'$.
    Furthermore, by \cref{lem:moncoin,lem:knaster-tarski}, $\Bs=T_{\Prgm_{\Set'}}(\Bs)$. 
    Thus,
    $R_M^\Bs=\sinti\Bs{\lambda\overline x,y,\overline z\ldotp
      M\,\overline z}$ and therefore, by \cref{lem:betaeta},
    $\sinti\Bs{R_M\,\overline x}(\alpha)=\sinti\Bs{\lambda
      y\ldotp M}(\alpha)$ for every valuation $\alpha$. Consequently, $\Bs\models \Set$. \qedhere
  \end{itemize}
\end{proof}
Hence, we conclude:
\begin{corollary}
  \label{thm:ellambdahochc}
  Let $\Ad$ be a set of 1st-order $\Sigma$-structures and $\Set$ be
  a finite set of HoCHCs. 
 Then there exists a set of HoCHCs (over an extended signature) which does not contain $\lambda$-abstractions and which is $\Ad$-satisfiable iff $\Set$ is $\Ad$-satisfiable.
\end{corollary}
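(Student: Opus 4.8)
The plan is to reduce the statement to the single-step $\lambda$-lifting of \cref{prop:elimnested} by induction on the number of $\lambda$-abstraction occurrences in $\Set$, using the Semantic Invariance \cref{thm:equivsem} to pass between $\Ad$-satisfiability and per-model monotone satisfiability. Because $\Set$ is finite this number is a genuine natural number, which will be the decreasing measure; the signature will grow along the induction, but the measure is insensitive to this, so well-foundedness is unaffected.

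First I would isolate a single-step lemma: if $\Set$ still contains a $\lambda$-abstraction, then there is a finite HoCHC set $\Set'$ over a signature extended by one fresh relational symbol, with strictly fewer $\lambda$-abstractions, such that $\Set$ is $\Ad$-satisfiable iff $\Set'$ is. To build $\Set'$, pick any occurrence of a $\lambda$-abstraction $\lambda y\ldotp M$ in $\Set$. Since in a HoCHC every $\lambda$-abstraction sits inside an atom, and atoms contain no logical symbols, $\lambda y\ldotp M$ contains no logical symbols (hence is positive existential) and its body $M$ is logical-symbol-free, so the hypotheses of \cref{prop:elimnested} are met. Writing $\Set=\widetilde\Set[\lambda y\ldotp M]$ for the evident one-hole context and $\overline x=\free(\lambda y\ldotp M)$, \cref{prop:elimnested} yields $\Set'=\widetilde\Set[R_M\,\overline x]\cup\{\neg M\,\overline z\lor R_M\,\overline x\,y\,\overline z\}$. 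A count of occurrences shows the chosen $\lambda y$ is deleted from the context while $M$ (together with any abstractions internal to it) migrates unchanged into the new definite clause, so the total number of $\lambda$-abstractions drops by exactly one; abstracting over all of $\overline x$ is precisely what makes the lifting commute with any enclosing binders, in the usual $\lambda$-lifting manner.

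The $\Ad$-level equivalence then follows by chaining equivalences: by definition $\Set$ is $\Ad$-satisfiable iff it is $\As$-satisfiable for some $\As\in\Ad$; by \cref{thm:equivsem} this is equivalent to $\As$-monotone-satisfiability for some $\As\in\Ad$; by \cref{prop:elimnested}, applied to each $\As\in\Ad$ with the \emph{same} set $\Set'$ (its construction is independent of $\As$), this is equivalent to $\Set'$ being $\As$-monotone-satisfiable for some $\As\in\Ad$; and one further use of \cref{thm:equivsem} and the definition returns $\Ad$-satisfiability of $\Set'$. The induction is then immediate: the base case (no $\lambda$-abstractions) takes $\Set$ itself, and in the inductive step the induction hypothesis applied to $\Set'$ produces a $\lambda$-free HoCHC set $\Set''$ over an extended signature with $\Set''$ $\Ad$-satisfiable iff $\Set'$ is, which combines with the single-step lemma to give $\Set''$ $\Ad$-satisfiable iff $\Set$ is.

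The main obstacle is the \emph{uniform} upgrade of \cref{prop:elimnested} from a statement about a fixed $\As\in\Ad$ to one about the whole class $\Ad$: I must exploit that both the lifted set $\Set'$ and the fresh symbol $R_M$ are manufactured independently of $\As$, and that Semantic Invariance permits interchanging $\As$-satisfiability with $\As$-monotone-satisfiability model by model, \emph{before} existentially quantifying over $\As\in\Ad$. The remaining work is routine bookkeeping: verifying that $\neg M\,\overline z\lor R_M\,\overline x\,y\,\overline z$ is a genuine definite clause (its head variables $\overline x,y,\overline z$ are distinct, with $\overline z$ chosen fresh), and that substituting $R_M\,\overline x$ back into the one-hole context respects the enclosing binding structure.
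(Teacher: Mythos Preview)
Your proposal is correct and matches the paper's intent: the paper treats the corollary as immediate from iterating \cref{prop:elimnested}, and you have simply spelled out the two ingredients that this iteration implicitly uses---induction on the number of $\lambda$-occurrences, and the passage via \cref{thm:equivsem} between $\As$-satisfiability and $\As$-monotone-satisfiability (uniformly in $\As\in\Ad$, since the lifted set $\Set'$ is built independently of the model).
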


\subsubsection{Proof of \cref{lem:follifting}}
Before turning to \cref{lem:follifting}, we prove the following auxiliary lemma:
\begin{lemma}  Let $M,M_1,\ldots,M_n$ and $\overline L$ be terms neither containing logical symbols nor $\lambda$-abstractions. Then:
  \begin{thmlist}
  \item\label{lem:folunifier1} If $R\in\Sigma'\setminus\Sigma$,  $R\,M_1\cdots M_n$ and $R\,x_1\cdots x_n$ are terms such that $\free(R\,M_1\cdots M_n)\cap\free(R\,x_1\cdots x_n)=\emptyset$ then
    \begin{align*}
      [\lfloor M_1\rfloor'/x_1,\ldots,\lfloor M_n\rfloor'/x_n]
    \end{align*}
    is a unifier of $\lfloor R M_1\cdots M_n\rfloor'$ and $\lfloor R x_1\cdots x_n\rfloor'$.
  \item\label{lem:folunifier2} If $\Delta(y)=\rho=\tau_1\to\cdots\to\tau_n\to o$ and $y\,M_1\cdots M_n$ is a formula such that $\free(\Comp_\rho)\cap\free(y\,M_1\cdots M_n)$ then
    \begin{align*}
      [c_\rho/y,\lfloor M_1\rfloor'/x_1,\ldots,\lfloor M_n\rfloor'/x_n]
    \end{align*}
    is a unifier of $\Comp_\rho$ and $\lfloor y\,M_1\cdots M_n\rfloor$.
  \item\label{lem:folsubstaux} If $M$ is a term neither containing logical symbols nor $\lambda$-abstractions then
    \begin{align*}
      \lfloor M[\overline L/\overline x]\rfloor'=\lfloor M\rfloor'[\lfloor\overline L\rfloor'/\overline x].
    \end{align*}
  \item\label{lem:folsubst} If $G$ is a goal clause then $\lfloor G[M_1/x_1,\ldots,M_n/x_n]\rfloor=\lfloor G\rfloor[\lfloor M_1\rfloor'/x_1,\ldots,\lfloor M_n\rfloor'/x_n]$.
  \end{thmlist}
\end{lemma}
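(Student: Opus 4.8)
The plan is to establish the four parts in the order \itemref{lem:folsubstaux}, \itemref{lem:folsubst}, \itemref{lem:folunifier1}, \itemref{lem:folunifier2}, since the substitution identity of \itemref{lem:folsubstaux} is the technical core and the remaining parts are short computations built on top of it. Throughout I would work under Barendregt's variable convention so that every substitution is capture-free, and I would repeatedly invoke \cref{rem:isimple}: a term of type $\iota^n\to\iota$ contains no $\lambda$-abstraction, no logical symbol and only variables of type $\iota$ together with non-relational $\Sigma$-symbols; more generally any $\lambda$-free, logical-symbol-free term has a head which is either a variable or a symbol of $\Sigma'$.

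For \itemref{lem:folsubstaux} I would induct on the spine length of $M$, writing $M = M_0\,N_1\cdots N_k$ with head $M_0$ a variable or a $\Sigma'$-symbol. If $k=0$: when $M$ is one of the substituted variables $x_i$, both $\lfloor M[\overline L/\overline x]\rfloor'$ and $\lfloor M\rfloor'[\lfloor\overline L\rfloor'/\overline x]$ reduce to $\lfloor L_i\rfloor'$; when $M$ is any other variable or a $\Sigma'$-symbol, neither side is affected by the substitution and equality is by definition of $\lfloor\cdot\rfloor'$. For $k\ge 1$ I would split on whether $M_0\in\Sigma$. If $M_0=c\in\Sigma$ then $M$ is a background term, $\lfloor\cdot\rfloor'$ acts as the identity on it, and substituting the $\overline L$ — which, since the corresponding variables have type $\iota$, are themselves background terms by \cref{rem:isimple} — again produces a background term, so both sides equal $M[\overline L/\overline x]$. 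If $M_0\notin\Sigma$ then by the last clause of the definition $\lfloor M\rfloor' = \app\,\lfloor M_0\,N_1\cdots N_{k-1}\rfloor'\,\lfloor N_k\rfloor'$; I would apply the induction hypothesis to the shorter spine $M_0\,N_1\cdots N_{k-1}$ and to $N_k$, and then push the outer substitution through $\app$. The single delicate point — which I expect to be the main obstacle — is that if $M_0$ is a relational-type variable $x_i$ that gets replaced by $L_i$, the head of $M[\overline L/\overline x]$ changes; but $L_i$ is $\lambda$-free, logical-symbol-free and of relational type, so by \cref{rem:isimple} its head is again a variable or a relational $\Sigma'$-symbol and in particular not in $\Sigma$, so the same clause of the definition of $\lfloor\cdot\rfloor'$ still governs $L_i\,N_1[\overline L/\overline x]\cdots N_k[\overline L/\overline x]$, and the induction goes through. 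A symmetric case-check handles $M_0$ a relational $\Sigma'$-symbol.

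Part \itemref{lem:folsubst} then follows by unfolding: $\lfloor\cdot\rfloor$ distributes over the disjuncts $(\neg)A_i$ of $G$, a substitution of terms for variables does not turn a foreground atom into a background atom or vice versa (its head is either unchanged or a relational-type variable replaced by a relational term), so $\lfloor A_i[\overline M/\overline x]\rfloor$ equals $H\,\lfloor A_i[\overline M/\overline x]\rfloor'$ (respectively $A_i[\overline M/\overline x]$ in the background case), and an application of \itemref{lem:folsubstaux} to each $\lfloor\cdot\rfloor'$ finishes it.

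Finally, Parts \itemref{lem:folunifier1} and \itemref{lem:folunifier2} are direct computations. By definition $\lfloor R\,M_1\cdots M_n\rfloor' = \app\,(\cdots(\app\,(\app\,c_R\,\lfloor M_1\rfloor')\,\lfloor M_2\rfloor')\cdots)\,\lfloor M_n\rfloor'$ while $\lfloor R\,x_1\cdots x_n\rfloor'=\app\,(\cdots(\app\,(\app\,c_R\,x_1)\,x_2)\cdots)\,x_n$, and applying the simultaneous substitution $[\lfloor M_1\rfloor'/x_1,\ldots,\lfloor M_n\rfloor'/x_n]$ to the latter produces the former, because the disjointness hypothesis together with $\free(\lfloor M_i\rfloor')\subseteq\free(M_i)$ guarantees that none of $x_1,\ldots,x_n$ occurs in any $\lfloor M_i\rfloor'$. \itemref{lem:folunifier2} is the same computation: $\Comp_\rho = H\,(\app\,(\cdots(\app\,(\app\,c_\rho\,x_1)\,x_2)\cdots)\,x_n)$ and $\lfloor y\,M_1\cdots M_n\rfloor = H\,(\app\,(\cdots(\app\,(\app\,y\,\lfloor M_1\rfloor')\,\lfloor M_2\rfloor')\cdots)\,\lfloor M_n\rfloor')$, and applying $[c_\rho/y,\lfloor M_1\rfloor'/x_1,\ldots,\lfloor M_n\rfloor'/x_n]$ to either formula yields $H\,(\app\,(\cdots(\app\,(\app\,c_\rho\,\lfloor M_1\rfloor')\,\lfloor M_2\rfloor')\cdots)\,\lfloor M_n\rfloor')$, using that $y$ does not occur in $\Comp_\rho$ and that, by the disjointness hypothesis, none of the $x_i$ occurs in $\lfloor y\,M_1\cdots M_n\rfloor$.
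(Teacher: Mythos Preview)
Your proof is correct and follows essentially the same route as the paper: structural induction for \itemref{lem:folsubstaux}, then \itemref{lem:folsubst} as an immediate consequence, and straightforward unfolding for \itemref{lem:folunifier1} and \itemref{lem:folunifier2} (the paper phrases the latter two as an induction on $n$, but this is exactly your direct computation written out stepwise). One small point to tighten: in \itemref{lem:folsubstaux} you say you induct on the \emph{spine length} of $M$ but then apply the induction hypothesis to $N_k$, whose spine length need not be smaller than $k$; the intended well-founded measure is term size (or structural induction), which is what the paper uses.
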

\begin{proof}
  \begin{thmlist}
  \item We prove \itemref{lem:folunifier1} by induction on $n$. For $n=0$ this is trivial. Hence, suppose $n\geq 0$. By the inductive hypothesis,
    \begin{align}
      \label{eq:subst1ih}
      &\lfloor R\,M_1\cdots M_n\rfloor'[\lfloor M_1\rfloor'/x_1,\ldots,\lfloor M_n\rfloor'/x_n]\nonumber\\
      =\;&\lfloor R\,x_1\cdots x_n\rfloor'[\lfloor M_1\rfloor'/x_1,\ldots,\lfloor M_n\rfloor'/x_n].
    \end{align}
    Consequently,
    \begin{align*}
      &\lfloor R\,M_1\cdots M_{n+1}\rfloor'[\lfloor M_1\rfloor'/x_1,\ldots,\lfloor M_{n+1}\rfloor'/x_{n+1}]\\
      =\;&(\app\,\lfloor R\,M_1\cdots M_n\rfloor'\,\lfloor
           M_{n+1}\rfloor')\\
      &\hspace{3cm}[\lfloor M_1\rfloor'/x_1,\ldots,\lfloor M_{n+1}\rfloor'/x_{n+1}]\\
      =\;&\app\,\lfloor R\,M_1\cdots M_n\rfloor'[\lfloor
           M_1\rfloor'/x_1,\ldots,\lfloor M_n\rfloor'/x_n]\,\lfloor M_{n+1}\rfloor'\\
      =\;&(\app\,\lfloor R\, x_1\cdots x_n\rfloor'\,x_{n+1})[\lfloor M_1\rfloor'/x_1,\ldots,\lfloor M_{n+1}\rfloor'/x_{n+1}]\\
      =\;&\lfloor R\, x_n\cdots x_1\rfloor'[\lfloor M_1\rfloor'/x_1,\ldots,\lfloor M_{n+1}\rfloor'/x_{n+1}],
    \end{align*}
    using that $x_i\nin\free(M_j)$ in the second and
    \cref{eq:subst1ih} in the third step.
  \item Similar to \itemref{lem:folunifier1}.
  \item We prove the claim by structural induction. For variables, and symbols from $\Sigma'\setminus\Sigma$ this is obvious. 
    Next, consider a term of the form $c\overline N$, where $c\in\Sigma$. By \cref{rem:isimple}, $c\overline N$ only contains variables $y\from\iota$ and for each term $\Delta\vdash K\from\iota^n\to\iota$, $\lfloor K\rfloor'=K$. Hence,
    \begin{align*}
      \lfloor(c\,\overline N)[\overline L/\overline x]\rfloor'=(c\,\overline N)[\overline L/\overline x]=\lfloor c\,\overline N\rfloor'[\lfloor\overline L\rfloor'/\overline x].
    \end{align*}
    Finally, consider a term of the form $M\overline N\,N'$, where $M\nin\Sigma$. Then,
    \begin{align*}
      \lfloor (M\,\overline N\,N')[\overline L/\overline x]\rfloor'&=\lfloor (M\,\overline N)[\overline L/\overline x]N'[\overline L/\overline x]\rfloor'\\
                                          &=\app\,\lfloor(M\,\overline N)[\overline L/\overline x]\rfloor'\,\lfloor N'[\overline L/\overline x]\rfloor'\\
                                          &=\app\,\lfloor M\,\overline N\rfloor'[\lfloor \overline L\rfloor'/\overline x]\,\lfloor N'\rfloor'[\lfloor \overline L\rfloor'/\overline x]\\
                                          &=(\app\,\lfloor M\,\overline N\rfloor'\,\lfloor N'\rfloor')[\lfloor \overline L\rfloor'/\overline x]\\
                                          &=\lfloor M\,\overline N\,N'\rfloor'[\lfloor \overline L\rfloor'/\overline x],
    \end{align*}
    using the inductive hypothesis in the third step.
  \item Immediate from \itemref{lem:folsubstaux}.\qedhere
  \end{thmlist}
\end{proof}
\follifting*
\begin{proof}
  \begin{thmlist}
  \item Obvious.
  \item Note that by assumption the $\beta$-reduction rule is not applicable. Next, let $\neg R\,\overline M\lor G_1$ and $G_2\lor R\,\overline x$ be clauses in $\Set'$ modulo renaming of variables (such that they are variable-disjoint) and suppose $G=G_1\lor (G_2[\overline M/\overline x])$. By \cref{lem:folunifier1}, $\lfloor R\,\overline M\rfloor[\lfloor\overline M\rfloor'/\overline x]=\lfloor R\,\overline x\rfloor[\lfloor\overline M\rfloor'/\overline x]$, by soundness of 1st-order resolution \cite{R65,F96},
    \begin{align*}
      \lfloor \Set'\rfloor\models (\lfloor G_1\rfloor \lor\lfloor
          G_2\rfloor)[\lfloor\overline M\rfloor'/\overline x]
    \end{align*}
    and by \cref{lem:folsubst},
    \begin{align*}
      (\lfloor G_1\rfloor \lor\lfloor
    G_2\rfloor)[\lfloor\overline M\rfloor'/\overline x]=\lfloor G_1\lor (G_2[\overline M/\overline x])\rfloor.
    \end{align*}
    Furthermore, $\vars(G)\subseteq\vars(\neg R\,\overline M\lor G_1)\cup\vars(G_2)$
    and therefore $\lfloor \Set'\cup\{G\}\rfloor=\lfloor \Set'\rfloor\cup\{\lfloor G\rfloor\}$.
  \item Finally, suppose that there exists
  $\{\biglor_{i=1}^{m_j'}\neg x_{j,i}\,M_{j,i}\lor\biglor_{i=1}^{m_j}\neg\phi_{j,i}\mid 1\leq
  j\leq n\}\subseteq \Set'$ such that
  $\{\biglor_{i=1}^{m_j}\neg\phi_{j,i}\mid 1\leq
  j\leq n\}$ is $\Ad$-unsatisfiable.
  Note that each $x_{j,i}$ does not occur in any of the $\phi_{j',i'}$. Therefore, by \cref{lem:folunifier2}, the fact that $\Comp_{\Delta(x_{j,i})}\in\lfloor \Set'\rfloor$ and the soundness of 1st-order resolution, $\lfloor \Set'\rfloor\models \{\biglor_{i=1}^{m_j}\neg\phi_{j,i}\mid 1\leq
  j\leq n\}$. Hence, by assumption $\lfloor \Set'\rfloor$ is $\Ad$-unsatisfiable.\qedhere
  \end{thmlist}
\end{proof}

\subsection{Supplementary Materials for \cref{sec:decidable}}

\subsubsection{Higher-order Datalog}
\label{sec:hodatalog}
   First consider the higher-order extension of Datalog.

  \begin{assumption}
    Let $\Sigma\supseteq\{{\approx},{\not\approx}\from\iota\to\iota\to o,c_0\from\iota\}$ be a finite 1st-order signature containing ${\approx},{\not\approx}$ and symbol(s) of type $\iota$ (but nothing else);
    let $\Sigma'$ be a relational extension of $\Sigma$ and $\Set$ be a
    finite set of HoCHCs.

    Besides, let $\Ad$ be the set of (1st-order) $\Sigma$-structures
    $\As$ satisfying ${\approx}^\As(a)(b)=1$ iff
    ${\not\approx}^\As(a)(b)=0$ iff $a=b$, for $a,b\in\sinti\As\iota$.
  \end{assumption}
  In this setting we refer to HoCHCs as \emph{higher-order Datalog
    clauses (HoDC)}.
  For $\As\in\Ad$ we define $\As^\flat$ by
  \begin{align*}
    \sinti{\As^\flat}\iota&\defeq\{c\from\iota\in\Sigma\}&
                                                           {\approx^{\As^\flat}}(c)(d)&\defeq\sinti\As{c\approx d}\\
    c^{\As^\flat}&\defeq c&
                            {\not\approx^{\As^\flat}}(c)(d)&\defeq\sinti\As{c\not\approx d}
  \end{align*}
  for
  $c,d\from\iota\in\Sigma$
    and we set
    $\Ad^\flat\defeq\{\As^\flat\mid\As\in\Ad\}$. Clearly,
    $\Ad^\flat$ is finite and for each $\As\in\Ad$ and type
    $\sigma$, $\sinti{\As^\flat}\sigma$ is finite.
  \begin{lemma}
    Let $\phi$ be a background atom and $\alpha$ be a valuation
    satisfying $\As^\flat,\alpha\models\phi$ then
    $\As,\alpha^\sharp\models\phi$, where $\alpha^\sharp$ is a valuation
    such that for each $x\from\iota\in\Delta$, $\alpha^\sharp(x)=\sinti\As{\alpha(x)}$. 
  \end{lemma}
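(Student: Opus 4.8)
The plan is to exploit the very restricted shape of background atoms in this setting. Since $\Sigma$ contains only the two relation symbols $\approx$ and $\not\approx$ (both of type $\iota\to\iota\to o$) together with constant symbols of type $\iota$, \cref{rem:isimple} implies that every term $\Delta\vdash t\from\iota$ is either a variable $x\from\iota\in\Delta$ or a constant $c\from\iota\in\Sigma$ (the only $\iota$-typed constants are nullary, so no application is possible), and hence that a background atom $\phi$ must have the form $t_1\mathrel{\bowtie}t_2$ with ${\bowtie}\in\{{\approx},{\not\approx}\}$ and each $t_i$ a variable of type $\iota$ or a constant symbol of type $\iota$. So it suffices to treat this single case, and the subcases ${\bowtie}={\approx}$ and ${\bowtie}={\not\approx}$ are handled identically.

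First I would record the key auxiliary observation. For each such $t_i$ put $c_i\defeq\sinti{\As^\flat}{t_i}(\alpha)$; this lies in $\sinti{\As^\flat}\iota=\{c\from\iota\in\Sigma\}$ and equals $\alpha(t_i)$ when $t_i$ is a variable and $t_i$ itself when $t_i$ is a constant (using $c^{\As^\flat}=c$). I claim $\sinti\As{t_i}(\alpha^\sharp)=\sinti\As{c_i}$: if $t_i=x$ is a variable this is $\alpha^\sharp(x)=\sinti\As{\alpha(x)}=\sinti\As{c_i}$ by definition of $\alpha^\sharp$, and if $t_i=c$ is a constant this is $c^\As=\sinti\As{c}=\sinti\As{c_i}$ since $c_i=c$.

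Next I would chain the definitions. From $\As^\flat,\alpha\models\phi$ we get ${\bowtie}^{\As^\flat}(c_1)(c_2)=1$, and by the defining clauses of $\As^\flat$ (namely ${\approx}^{\As^\flat}(c)(d)=\sinti\As{c\approx d}$ and ${\not\approx}^{\As^\flat}(c)(d)=\sinti\As{c\not\approx d}$) this is exactly $\sinti\As{c_1\mathrel{\bowtie}c_2}=1$, i.e.\ ${\bowtie}^\As(\sinti\As{c_1})(\sinti\As{c_2})=1$. Combining this with the auxiliary observation gives $\sinti\As{\phi}(\alpha^\sharp)={\bowtie}^\As(\sinti\As{t_1}(\alpha^\sharp))(\sinti\As{t_2}(\alpha^\sharp))={\bowtie}^\As(\sinti\As{c_1})(\sinti\As{c_2})=1$, which is precisely $\As,\alpha^\sharp\models\phi$.

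There is no genuine obstacle here: the statement is a direct unwinding of the definitions of $\As^\flat$ and of $\alpha^\sharp$, and the only point requiring a moment's care is the case distinction between variables and constants among the $\iota$-terms (together with the identification of a constant symbol $c$ as the element of $\sinti{\As^\flat}\iota$ whose $\As$-value is $c^\As$). The lemma will then feed, exactly as the analogue of \cref{cor:bgiff} does for HoBHC(SLA), into relating $\Ad$-satisfiability of sets of goal clauses of background atoms with $\Ad^\flat$-satisfiability.
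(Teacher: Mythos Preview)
Your proof is correct and is exactly the natural unwinding of the definitions; the paper itself states this lemma without proof, evidently regarding it as immediate from the construction of $\As^\flat$ and $\alpha^\sharp$. Your case analysis on the shape of $\iota$-terms (variable vs.\ constant) is the right observation, and the rest is precisely the chain of equalities the definitions give.
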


  \begin{corollary}
    Let $\Set'$ be a set of goal clauses of background atoms. $\Set'$ is
    $\Ad^\flat$-satisfiable if $\Set'$ is $\Ad$-satisfiable.
  \end{corollary}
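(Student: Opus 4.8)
The plan is to derive the Corollary directly from the preceding Lemma, using the fact that members of $\Set'$ contain no foreground symbols. First I would observe that, because each clause in $\Set'$ is a goal clause all of whose atoms are background atoms --- hence 1st-order $\Sigma$-formulas involving only variables $x\from\iota\in\Delta$ --- satisfaction of $\Set'$ by a structure does not depend on any relational expansion: for any 1st-order structure $\Cf$ over the relevant signature and any expansion $\Bs$ of $\Cf$ we have $\Bs\models\Set'$ iff $\Cf\models\Set'$. Consequently $\Set'$ is $\Ad$-satisfiable iff there is some $\As\in\Ad$ with $\As\models\Set'$, and likewise $\Set'$ is $\Ad^\flat$-satisfiable iff there is some $\As^\flat\in\Ad^\flat$ with $\As^\flat\models\Set'$. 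So I would fix an $\As\in\Ad$ with $\As\models\Set'$ and reduce the goal to showing $\As^\flat\models\Set'$.

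Next I would check this clause by clause. Let $\alpha$ be an arbitrary valuation over $\sinti{\As^\flat}\iota=\{c\from\iota\in\Sigma\}$ (which is non-empty since $c_0\from\iota\in\Sigma$), and let $G=\neg\phi_1\lor\cdots\lor\neg\phi_n\in\Set'$ with each $\phi_i$ a background atom. Suppose toward a contradiction that $\As^\flat,\alpha\not\models G$, i.e.\ $\As^\flat,\alpha\models\phi_i$ for every $i$. Since $\alpha(x)$ is a constant symbol of type $\iota$ in $\Sigma$ for each $x\from\iota\in\Delta$, the map $\alpha^\sharp$ defined by $\alpha^\sharp(x)\defeq\sinti\As{\alpha(x)}$ is a well-defined valuation over $\sinti\As\iota$. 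Applying the preceding Lemma to each $\phi_i$ gives $\As,\alpha^\sharp\models\phi_i$ for every $i$, hence $\As,\alpha^\sharp\not\models G$, contradicting $\As\models\Set'$. Therefore $\As^\flat,\alpha\models G$ for all $G\in\Set'$, and since $\alpha$ was arbitrary, $\As^\flat\models\Set'$; thus $\Set'$ is $\Ad^\flat$-satisfiable.

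The substantive content lies entirely in the Lemma, which is already available; the Corollary is then a routine propagation of truth through the disjunctions of a goal clause, so I do not anticipate a genuine obstacle. The only points requiring a little care are that $\Set'$ introduces no free variables of relational type (so that valuations on $\sinti{\As^\flat}\iota$ suffice to evaluate every clause of $\Set'$ in $\As^\flat$), and that $\cdot^\sharp$ really does send $\As^\flat$-valuations to genuine $\As$-valuations, both of which are immediate from the shape of the Datalog signature $\Sigma$.
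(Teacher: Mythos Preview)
Your proposal is correct and matches the intended derivation: the paper states this Corollary with no proof, evidently expecting exactly the routine lift from the preceding Lemma that you carry out (fix $\As\in\Ad$ with $\As\models\Set'$, then for any $\As^\flat$-valuation $\alpha$ use the Lemma atom-by-atom to transfer a putative counterexample to $\alpha^\sharp$). Your side observations about why relational expansions are irrelevant and why $\alpha^\sharp$ is a genuine $\As$-valuation are also appropriate and accurate.
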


  \begin{lemma}
    Let $\Set'$ be a set of goal clauses of background atoms. $\Set'$ is
    $\Ad$-satisfiable if $\Set'$ is $\Ad^\flat$-satisfiable.
  \end{lemma}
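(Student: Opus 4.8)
The plan is to turn a model $\As^\flat\in\Ad^\flat$ of $\Set'$ into a genuine structure lying in $\Ad$ by quotienting so that $\approx$ is interpreted as actual equality. Since $\Set'$ is a set of goal clauses of background atoms, it contains no foreground symbols, so (as with the analogous remarks for compact theories) $\Ad^\flat$-satisfiability of $\Set'$ just means $\As^\flat\models\Set'$ for some $\As^\flat\in\Ad^\flat$, say arising from $\As\in\Ad$, and $\Ad$-satisfiability of $\Set'$ reduces to exhibiting a $\Sigma$-structure in $\Ad$ that models $\Set'$. Write $\Sigma_\iota$ for the finite, nonempty set of constant symbols of type $\iota$ in $\Sigma$, so that $\sinti{\As^\flat}{\iota}=\Sigma_\iota$ and $c^{\As^\flat}=c$ for $c\in\Sigma_\iota$.

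First I would observe that, because $\As\in\Ad$, the relation ${\approx}^\As$ is reflexive, symmetric and transitive on $\sinti{\As}{\iota}$; hence the relation $\equiv$ on $\Sigma_\iota$ given by $c\equiv d \defiff {\approx^{\As^\flat}}(c)(d)=1$, which by definition of $\As^\flat$ is the same as ${\approx}^\As(c^\As)(d^\As)=1$, is an equivalence relation, and likewise ${\not\approx^{\As^\flat}}(c)(d)=1$ holds exactly when $c\not\equiv d$. Note that $\As^\flat$ itself need not lie in $\Ad$, since distinct constants may be identified by ${\approx^{\As^\flat}}$; this is precisely why a quotient is required rather than just re-interpreting $\approx$ as equality on $\Sigma_\iota$.

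Next I would form the quotient structure $\As^\natural$: put $\sinti{\As^\natural}{\iota}\defeq\Sigma_\iota/{\equiv}$, $c^{\As^\natural}\defeq[c]_\equiv$, and interpret $\approx$ and $\not\approx$ as equality and disequality of $\equiv$-classes. This is a finite nonempty $\Sigma$-structure and $\As^\natural\in\Ad$ by construction. Let $q\colon\Sigma_\iota\to\Sigma_\iota/{\equiv}$ be the quotient map. The key (routine) step is that $q$ translates satisfaction of background atoms faithfully in both directions: a background atom has the form $M_1\bowtie M_2$ with $\bowtie\in\{\approx,\not\approx\}$ and each $M_i$ a variable of type $\iota$ or a constant in $\Sigma_\iota$ (there are no function symbols of positive arity in $\Sigma$), so for any valuation $\alpha$ over $\Sigma_\iota$ one has $\sinti{\As^\flat}{M_i}(\alpha)\in\Sigma_\iota$ and $\sinti{\As^\natural}{M_i}(q\circ\alpha)=q(\sinti{\As^\flat}{M_i}(\alpha))$, whence $\As^\flat,\alpha\models\phi$ iff $\As^\natural,q\circ\alpha\models\phi$ straight from the definitions of $\equiv$ and of $\As^\natural$. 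Since $q$ is surjective, every valuation over $\Sigma_\iota/{\equiv}$ has the form $q\circ\alpha$, so $\As^\flat\models G$ iff $\As^\natural\models G$ for every goal clause $G$ of background atoms; in particular $\As^\natural\models\Set'$, and therefore $\Set'$ is $\Ad$-satisfiable.

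I do not expect a serious obstacle: the only subtlety is recognising that $\As^\flat\notin\Ad$ in general and collapsing $\equiv$-classes accordingly; everything else is a verification that the quotient map preserves and reflects satisfaction of the quantifier-free, function-symbol-free background atoms. An equivalent route is to take $\As^\natural$ to be the substructure of $\As$ generated by the constants (well-defined since $\Sigma$ has no function symbols of positive arity), which again lies in $\Ad$, combined with the correspondence between valuations over $\sinti{\As^\flat}{\iota}$ and over $\sinti{\As^\natural}{\iota}$ and the preceding lemma (together with its immediate converse).
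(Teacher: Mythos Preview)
Your proposal is correct and follows essentially the same approach as the paper: the paper's proof also forms the quotient $\As^\flat/{\approx}$ and asserts it lies in $\Ad$ and models $\Set'$. Your version is considerably more detailed (spelling out why $\equiv$ is an equivalence, why the quotient lies in $\Ad$, and why satisfaction of background atoms is preserved and reflected along the quotient map), whereas the paper's proof is a two-line sketch that names the quotient and says ``it is easy to see'' that it models $\Set'$.
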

  \begin{proof}
    Let $\As^\flat\in\Ad^\flat$ be such that $\As^\flat\models
    \Set'$. Consider the element $\As^\flat/{\approx}$ of $\Ad$ with
    domain
    \begin{align*}
      \sinti{(\As^\flat/{\approx})}\iota\defeq\{\{d\in\sinti{\As^\flat}\iota\mid\As\models
      c\approx d
      \}\mid
    c\in\sinti{\As^\flat}\iota\},
    \end{align*}
    i.e.\ the quotient of
    $\sinti{\As^\flat}\iota$ over $\approx^{\As^\flat}$. It is
    easy to see that $\As^\flat/{\approx}\models \Set'$.
  \end{proof}

  Note that both $\Ad$ and $\Ad^\flat$ are compact. Hence, by soundness and completeness of the proof system (\cref{thm:completeness,prop:soundness}) we obtain:
  \begin{proposition}
    $\Set$ is $\Ad$-satisfiable iff $\Set$ is $\Ad^\flat$-satisfiable.
  \end{proposition}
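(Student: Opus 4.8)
The plan is to deduce the equivalence from the Soundness and Completeness \cref{thm:compcomp}, by showing that the resolution calculus relative to $\Ad$ and the one relative to $\Ad^\flat$ are one and the same on higher-order Datalog clauses. First I would record that $\Ad^\flat$ is finite, hence compact, and that $\Ad$ is compact (already noted above; alternatively this follows from finiteness of $\Ad^\flat$ together with the fact, recorded in the preceding Corollary and Lemma, that a set of goal clauses of background atoms is $\Ad$-satisfiable iff it is $\Ad^\flat$-satisfiable). Consequently \cref{thm:compcomp} applies to both: $\Set$ is $\Ad$-unsatisfiable iff $\Set\Respstr\Ad^*\Set'\cup\{\bot\}$ for some $\Set'$, and $\Set$ is $\Ad^\flat$-unsatisfiable iff $\Set\Respstr{\Ad^\flat}^*\Set''\cup\{\bot\}$ for some $\Set''$.

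Next I would argue that $\Respstr\Ad$ and $\Respstr{\Ad^\flat}$ coincide as relations on sets of HoDCs over $\Sigma'$. The resolution rule and the $\beta$-reduction rule are purely syntactic and refer to neither $\Ad$ nor $\Ad^\flat$; the only theory-sensitive rule is compound constraint refutation, whose side condition asks that a certain finite set $\{\neg\phi_{j,1}\lor\cdots\lor\neg\phi_{j,m_j}\mid 1\leq j\leq n\}$ of goal clauses of background atoms be $\Ad$-unsatisfiable, respectively $\Ad^\flat$-unsatisfiable. The Corollary above gives one implication and the Lemma above (via the quotient structure $\As^\flat/{\approx}$) the other, so the two side conditions are equivalent; hence $\Respstr\Ad$ and $\Respstr{\Ad^\flat}$ agree. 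Chaining the equivalences then yields: $\Set$ is $\Ad$-satisfiable iff it has no $\Respstr\Ad$-refutation iff it has no $\Respstr{\Ad^\flat}$-refutation iff it is $\Ad^\flat$-satisfiable.

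The only delicate point --- the ``main obstacle'', to the extent there is one --- is the verification that the compound constraint refutation side conditions for $\Ad$ and for $\Ad^\flat$ genuinely coincide, which is precisely what the two preceding results about goal clauses of background atoms deliver; everything else is bookkeeping about the shape of the calculus. In carrying this out I would also double-check that the passage from $\Set$ to its associated program and the handling of background atoms inside the calculus introduce no further dependence on the choice between $\Ad$ and $\Ad^\flat$ --- which they do not, being theory-agnostic --- and that the background atoms of the Datalog fragment (built only from $\approx$, $\not\approx$, constants and variables of type $\iota$) are exactly the ones covered by those results.
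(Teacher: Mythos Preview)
Your proposal is correct and follows essentially the same route as the paper, which merely says ``Note that both $\Ad$ and $\Ad^\flat$ are compact. Hence, by soundness and completeness of the proof system we obtain: [the Proposition]''. You have simply unpacked this one-line justification: compactness makes \cref{thm:compcomp} applicable to both $\Ad$ and $\Ad^\flat$, and the preceding Corollary and Lemma ensure that the only theory-sensitive rule (compound constraint refutation) has equivalent side conditions, so the two proof systems coincide on HoDCs. Your aside that compactness of $\Ad$ can alternatively be derived from finiteness of $\Ad^\flat$ together with the equivalence for goal clauses of background atoms is a nice observation, but the paper simply asserts compactness of $\Ad$ directly.
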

  Consequently, by \cref{rem:dec}, we conclude:
  \begin{theorem}
    It is decidable whether there exists a $\Sigma'$-structure $\Bs$
    satisfying $\Bs\models \Set$ and ${\approx}^\Bs(a)(b)=1$ iff
    ${\not\approx}^\Bs(a)(b)=0$ iff $a=b$, for $a,b\in\sinti\Bs\iota$.
  \end{theorem}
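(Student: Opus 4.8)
The plan is to read the statement as a decidability claim about $\Ad$-satisfiability and then chain the two reductions that the subsection has just set up. First I would observe that the existence of a $\Sigma'$-structure $\Bs$ with $\Bs\models\Set$ in which ${\approx}^\Bs$ and ${\not\approx}^\Bs$ behave as identity and non-identity is exactly $\Ad$-satisfiability of $\Set$: the restriction of any such $\Bs$ to $\Sigma$ can be regarded as a $1$st-order $\Sigma$-structure $\As$, which lies in $\Ad$ by the $\approx$-condition, and $\Bs$ is a $(\Sigma',\Hf)$-expansion of $\As$ for the ambient frame $\Hf$ (which expands $\As$), so $\Set$ is $\As$-Henkin-satisfiable and hence $\As$-satisfiable by Semantic Invariance (\cref{thm:equivsem}); conversely, a standard-frame model of an $\Ad$-satisfiable $\Set$ gives such a $\Bs$ directly. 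Thus it suffices to decide $\Ad$-satisfiability of the finite set $\Set$ of HoDC.

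Next I would invoke the two preceding results of this subsection. The proposition ``$\Set$ is $\Ad$-satisfiable iff $\Set$ is $\Ad^\flat$-satisfiable'' reduces the problem to $\Ad^\flat$; here I should double-check that its hypotheses hold, namely that both $\Ad$ and $\Ad^\flat$ are compact. $\Ad^\flat$ is finite, hence trivially compact, and $\Ad$ is compact because $\Ad$-satisfiability of a set of goal clauses of background atoms over $\{{\approx},{\not\approx},c_0,\dots\}$ is just $1$st-order satisfiability with equality, for which compactness is classical. Moreover $\Ad^\flat$ is not merely finite but effectively presentable: its members are exactly the structures $\As^\flat$ on the finite set of $\iota$-constants of $\Sigma$ in which ${\approx}$ is interpreted as an equivalence relation and ${\not\approx}$ as its complement, so $\Ad^\flat$ ranges over the finitely many partitions of that constant set and can be written down explicitly; and for every $\As^\flat\in\Ad^\flat$ and type $\sigma$, $\sinti{\As^\flat}\sigma$ is finite by a straightforward induction on $\sigma$.

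Finally I would apply \cref{rem:dec} to $\Ad^\flat$: since $\Ad^\flat$ is a finite set of structures all of whose type-domains are finite and $\Set$ is finite, $\Ad^\flat$-satisfiability of $\Set$ is decidable. Concretely, for each $\As^\flat\in\Ad^\flat$ one iterates the immediate consequence operator $T_{\Prgm_\Set}$ from the $\prel$-minimal expansion (it stabilises at a finite stage because the lattice of expansions is finite), obtains the canonical structure $\As^\flat_{\Prgm_\Set}$, appeals to the Canonical Model Property (\cref{thm:canmod}) to know it is a model whenever $\Set$ is satisfiable, and tests $\As^\flat_{\Prgm_\Set}\models\Set$; $\Set$ is $\Ad^\flat$-satisfiable iff this test succeeds for some $\As^\flat$. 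Combining this with the two reductions above yields the decision procedure. I do not expect a genuine obstacle: the theorem is an assembly of already-established facts, and the only points that need a little care are the semantic-invariance reformulation in the first step and the verification that $\Ad^\flat$ can be produced effectively.
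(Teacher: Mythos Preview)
Your proposal is correct and follows essentially the same approach as the paper: reduce to $\Ad$-satisfiability, use the preceding proposition to pass to $\Ad^\flat$, and finish with \cref{rem:dec}. The paper is terser (it simply writes ``Consequently, by \cref{rem:dec}'' after the proposition), whereas you spell out the reformulation of the theorem as $\Ad$-satisfiability via semantic invariance and the effective presentability of $\Ad^\flat$ as the set of partitions of the constant symbols; these are exactly the details the paper leaves implicit.
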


  \subsubsection{Supplementary Materials for \cref{sec:HoBHC(SLA)}}
  \label{sec:appHoBHC(SLA)}
  
    \begin{lemma}
    \label{lem:respres}
    Let $\Set'$ be a set of HoBHC(SLA).
    If $\Set'\Respstr{\Ad} \Set'\cup\{G\}$ then $G$ is a HoBHC(SLA) and $\igt(G)\subseteq\igt(\Set')$. 
  \end{lemma}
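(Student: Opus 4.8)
The plan is to argue by case analysis on which of the three rules of $\Respstr{\Ad}$ yields $G$ from $\Set'$, checking that each rule preserves both HoBHC(SLA)-membership and the bound $\igt(G)\subseteq\igt(\Set')$. The whole argument rests on one structural observation: by \cref{rem:isimple} every subterm of type $\iota$ of a $\Sigma'$-term is built purely from variables of type $\iota$ and constants of non-relational 1st-order type, and every such constant lies in $\Sigma$; hence a term containing no symbol of $\Sigma$ has no \emph{closed} subterm of type $\iota$, so $\igt(t)=\emptyset$ for every $\Sigma$-free $t$, and in particular a $\Sigma$-free term of type $\iota$ is a single variable. Note also that in a HoBHC(SLA) $\neg\phi_1\lor\cdots\lor\neg\phi_n\lor C$ the literals $\neg\phi_i$ are negated simple \emph{background} atoms, which are $\Sigma$-formulas without $\lambda$-abstractions; so every foreground atom of the clause — in particular the head literal, every literal $\neg R\,\overline M$ with $R\in\Sigma'\setminus\Sigma$, and every $\beta$-redex in head position — occurs inside $C$, which is then necessarily $\Sigma$-free (it cannot be $\bot$).

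The compact constraint-refutation rule gives $G=\bot$, which is a HoBHC(SLA) ($n=0$, $C=\bot$) with $\igt(\bot)=\emptyset$. For the $\beta$-reduction rule the redex $(\lambda x\ldotp L)M\,\overline N$ is a foreground atom, so it lies inside the $\Sigma$-free part $C$ of its clause; consequently $L$, $M$ and $\overline N$ are $\Sigma$-free, and since they also contain no logical symbol, the reduct $L[M/x]\,\overline N$ is again a ($\Sigma$-free) atom of type $o$. Thus $G$ has the same simple background literals as its premise and a $\Sigma$-free $C$-part, so it is a HoBHC(SLA); and as $L[M/x]\,\overline N$ is $\Sigma$-free it contributes no ground terms, whence $\igt(G)\subseteq\igt(\Set')$.

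The resolution rule is the crux. Here $G=G_1\lor\bigl(G_2[\overline M/\overline x]\bigr)$, derived (up to renaming of variables, which affects neither property) from $\neg R\,\overline M\lor G_1\in\Set'$ and the definite HoBHC(SLA) $G_2\lor R\,\overline x\in\Set'$. Since $R\,\overline M$ is a foreground atom it sits in the $\Sigma$-free part of its clause, so each component $M_j$ of $\overline M$ is $\Sigma$-free; in particular each $M_j$ of type $\iota$ is a bare variable. Write $G_2\lor R\,\overline x=\neg\psi_1\lor\cdots\lor\neg\psi_m\lor C'\lor R\,\overline x$, where the $\psi_i$ are simple background atoms and $C'$ collects the (necessarily $\Sigma$-free) foreground body literals. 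Each $\psi_i$ is of the form $x'\leq N$, $N\leq x'$ or $x'\leq y'$ with $N$ closed; applying $[\overline M/\overline x]$ replaces the $\iota$-variables among $\overline x$ by the variables $M_j$ and leaves the closed bound $N$ (which has no free variables) untouched, so $\psi_i[\overline M/\overline x]$ is again a simple background atom with $\igt(\psi_i[\overline M/\overline x])=\igt(\psi_i)$; and $C'[\overline M/\overline x]$ stays $\Sigma$-free. Combining this with the analogous decomposition $\neg R\,\overline M\lor G_1=\neg\chi_1\lor\cdots\lor\neg\chi_k\lor C''$ (simple background atoms $\chi_i$, $\Sigma$-free foreground part $C''$, which may carry a head that then becomes the head of $G$), one reads off that $G$ is again of HoBHC(SLA) form, and that $\igt(G)\subseteq\bigcup_i\igt(\chi_i)\cup\bigcup_i\igt(\psi_i)\subseteq\igt(\neg R\,\overline M\lor G_1)\cup\igt(G_2\lor R\,\overline x)\subseteq\igt(\Set')$.

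The main obstacle is precisely the resolution case: one must be certain that substituting the head-arguments $\overline M$ into a simple background atom of the second clause cannot break simplicity — by turning a closed bound $N$ into a non-closed term, by substituting a genuine arithmetic term for an $\iota$-variable, or by creating a fresh ground term outside $\igt(\Set')$. All three dangers are ruled out by the $\Sigma$-freeness of $\overline M$, which is forced by the rigid syntactic shape of HoBHC(SLA) (foreground material, hence the arguments $\overline M$, lives in the $\Sigma$-free part $C$), together with the observation that a $\Sigma$-free $\iota$-term is merely a variable.
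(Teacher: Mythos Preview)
Your proof is correct and follows essentially the same approach as the paper's: case analysis on the three rules, with the resolution case hinging on the observation that the arguments $\overline M$ are $\Sigma$-free and hence those of type $\iota$ are bare variables, so substitution into simple background atoms preserves simplicity and introduces no new ground terms. You are simply more explicit than the paper about \emph{why} the $\iota$-components of $\overline M$ must be variables (deriving it from \cref{rem:isimple} and the $\Sigma$-freeness of the foreground part $C$), whereas the paper states this fact in a single sentence.
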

  \begin{proof}
    For the compact constraint refutation rule this is trivial.

  Next, if $\neg(\lambda x\ldotp L)\,M\,\overline N\lor G$ is a HoBHC(SLA) then neither $(\lambda x\ldotp L)\,M\,\overline N$ nor $L[M/x]\,\overline N$ contain symbols from $\Sigma$ and $\igt((\lambda x\ldotp L)\,M\,\overline N)=\igt(L[M/x]\,\overline N)=\emptyset$. Hence, $\neg L[M/x]\,\overline N\lor G$ is a HoBHC(SLA) and $\igt(\neg L[M/x]\,\overline N\lor G)=\igt(\neg(\lambda x\ldotp L)\,M\,\overline N\lor G)$.

  Finally, suppose $\neg R\,\overline M\lor G$ and $G'\lor R\,\overline x$ are HoBHC(SLA)s. Note that all terms in $\overline M$ of type $\iota$ must be variables. Therefore $G\lor G'[\overline M/\overline x]$ is a HoBHC(SLA) and $\igt(G\lor G'[\overline M/\overline x])=\igt(\neg R\,\overline M\lor G)\cup\igt(G'\lor R\,\overline x)$.
\end{proof}

\hobhcslaSWidehatS*
\begin{proof}
  For the rule compact constraint refutation this is due to \cref{cor:bgiff} and for the $\beta$-reduction rule this is obvious because $((\lambda x\ldotp L)\,M\,\overline N)^\flat=(\lambda x\ldotp L)\,M\,\overline N$.

  Finally, suppose that $\{\neg R\,\overline M\lor G,G'\lor R\,\overline x\}\subseteq \Set'$. It holds that $(R\,\overline M)^\flat=R\,\overline M$, $(R\,\overline x)^\flat=R\,\overline x$ and for every atom $A$, $A^\flat[\overline M/\overline x]=(A[\overline M/\overline x])^\flat$. Consequently, $(G')^\flat[\overline M/\overline x]=(G'[\overline M/\overline x])^\flat$ and the lemma also holds for applications of the resolution rule.
\end{proof}

\begin{theorem}
  \label{thm:compphi}
  Let $\Phi$ be a predicate on atoms\footnote{which is lifted to clauses by setting $\Phi(\neg A_1\lor\cdots\lor\neg A_n\lor(\neg)A)\defeq\min\{\Phi(A_1),\ldots,\Phi(A_n),\Phi(A)\}$ and to sets of HoCHCs $\Set'$ by setting $\Phi(\Set')=\min\{\Phi(C)\mid C\in \Set'\}$} satisfying
  \begin{thmlist}
  \item $\Phi(\Set)=1$,
  \item if $\Set'\Res \Set''$ then $\Phi(\Set'')\geq\Phi(\Set')$ and
  \item if $\Set'$ is an $\Ad$-unsatisfiable set of HoCHCs satisfying $\Phi(\Set')=1$ then there exists a finite subset $\Set''\subseteq \Set'$ which is $\Ad$-unsatisfiable.
  \end{thmlist}
  Then $\Set$ is $\Ad$-unsatisfiable iff $\Set\Resp^* \Set'\cup\{\bot\}$ for some $\Set'$.
\end{theorem}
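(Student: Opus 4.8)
The plan is to observe that \cref{thm:compphi} is an abstract generalisation of the Completeness \cref{thm:compcomp}: rather than working with a fixed proof system, it asks for a predicate $\Phi$ on atoms that (i) holds of the initial set $\Set$, (ii) is preserved (non-decreasing) under $\Res$-steps, and (iii) guarantees a finite $\Ad$-unsatisfiable subset whenever $\Phi$ holds. The ``if'' direction is immediate from the Soundness \cref{prop:soundness} (generalised to the compact rule exactly as for \cref{thm:compcomp}): a refutation $\Set \Resp^* \Set' \cup \{\bot\}$ witnesses $\Ad$-unsatisfiability. So the work is entirely in the ``only if'' direction.

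First I would retrace the proof of \cref{thm:compcomp}. There, one uses the Completeness \cref{thm:completeness} for each individual model $\As \in \Ad$ to obtain a derivation $\Set \Res^* \Set_\As \cup \{G_\As \lor \bigvee_i \neg\phi_{\As,i}\}$ where every atom of $G_\As$ has the form $x\,\overline M$ and $\As \not\models \bigvee_i \neg\phi_{\As,i}$; then one invokes compactness of $\Ad$ to pick a finite $\Ad' \subseteq \Ad$ for which $\{\bigvee_i \neg\phi_{\As,i} \mid \As \in \Ad'\}$ is already $\Ad$-unsatisfiable, and applies the compact constraint refutation rule. The adaptation needed here is to run this argument \emph{inside} the subclass of HoCHCs picked out by $\Phi$. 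Concretely: starting from $\Set$ with $\Phi(\Set)=1$ (hypothesis (i)), every intermediate set produced along a derivation still satisfies $\Phi \geq \Phi(\Set) = 1$ by hypothesis (ii), hence remains in the $\Phi$-class; so all of the $\Set_\As$, the goal clauses $G_\As \lor \bigvee_i\neg\phi_{\As,i}$, and the union $\bigcup_{\As\in\Ad'}\Set'_\As$ lie in the $\Phi$-class. The collection $\{\bigvee_i \neg\phi_{\As,i} \mid \As \in \Ad\}$ of goal clauses of background atoms is $\Ad$-unsatisfiable and itself satisfies $\Phi = 1$ (being a subset of the derived, $\Phi$-preserving sets); by hypothesis (iii) — \emph{not} the raw compactness of \cref{def:compact} — there is a finite $\Ad$-unsatisfiable subset of it, which pins down a finite $\Ad' \subseteq \Ad$. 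One then concludes $\Set \Resp^* \{\Set'_\As \mid \As \in \Ad'\} \Resp \{\bot\} \cup \{\Set'_\As \mid \As \in \Ad'\}$ by one application of the compact constraint refutation rule, exactly as in \cref{thm:compcomp}.

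The step I expect to be most delicate is making sure the appeal to the Completeness \cref{thm:completeness} for the single model $\As$ produces goal clauses \emph{of background atoms} that genuinely stay within the $\Phi$-class. Hypothesis (ii) only tells us $\Phi$ is non-decreasing along $\Res$-derivations, so one must be careful that the derivation witnessing single-model unsatisfiability — which proceeds via $\Res$ (i.e.\ the ordinary constraint refutation rule, a special case of the compact one) — is the right kind of derivation for (ii) to apply, and that at the point where the constraint refutation rule would fire we are left with a clause whose $\Phi$-value is still $1$. Since (ii) is stated for single $\Res$-steps and the compact rule subsumes ordinary constraint refutation, this should go through, but it is the spot where the abstraction could break if $\Phi$ were sensitive to features destroyed by resolution. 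In the concrete application (\cref{prop:complHoBHCSLA}), $\Phi$ is ``is a HoBHC(SLA) with ground terms drawn from $\igt(\Set)$'', and \cref{lem:respres} supplies exactly (ii), while compactness of the finite $\Ad^\flat$ (transported back via \cref{cor:bgiff}) supplies (iii) — so the abstract theorem is precisely what is needed.

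Thus the proof is a mild refactoring of the proof of \cref{thm:compcomp}: substitute hypothesis (iii) for the use of compactness of $\Ad$, and invoke hypotheses (i)--(ii) to certify that all clauses arising in the derivation remain admissible. No new technical machinery beyond \cref{thm:completeness}, \cref{prop:soundness}, and the reasoning already used for \cref{thm:compcomp} is required.
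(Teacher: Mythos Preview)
Your proposal is correct and matches the paper's approach exactly: the paper's proof consists of the single line ``Similar to the proof of \cref{thm:compcomp}'', and you have accurately spelled out what that means---run the per-model completeness argument of \cref{thm:completeness}, use hypotheses (i)--(ii) to keep all derived clauses in the $\Phi$-class, and replace the appeal to compactness of $\Ad$ by hypothesis (iii) before firing the compact constraint refutation rule. Your identification of the delicate point (that the single-model $\Res$-derivations must preserve $\Phi$ so that the resulting background-atom goal clauses are eligible for hypothesis (iii)) is spot on.
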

\begin{proof}
  Similar to the proof of \cref{thm:compcomp}.
\end{proof}

\complHoBHCSLA*
\begin{proof}
  Define $\Phi(A)=1$ just if $\igt(A)\subseteq\igt(\Set)$ and $\neg A$ is a HoBHC(SLA). By \cref{lem:respres,cor:bgiff}, \cref{thm:compphi} is applicable, which yields the theorem.
\end{proof}

\newpage

\ifdraft
\tableofcontents
\fi
\fi

\dw{to discuss:
  \begin{itemize}
  \item don't define pre-terms, only use typing rules to define terms.
  \item valuations over different Henkin frames (in section on
    decidable fragments)
  \end{itemize}
  to do:
  \begin{itemize}
  \item spell check
  \item check that completeness of Henkin frames is assumed in the
    right places
  \item check if superscripts are omitted consistently
  \item check if $\prel$ used for valuations somewhere
  \item use of quasi-continuous, quasi-monotone in old sense?
  \item alignment of figures
  \item term ``easy'' no used for goal clauses
  \end{itemize}
}

\lo{COAUTHORING PROTOCOL. When writing joint papers, I aim to follow a useful protocol:
\begin{enumerate}
\item I colour-highlight important changes using command {\tt changed[lo]}. 
After checking them, please indicate agreement by deleting {\tt changed[lo]}.
\item Please do not delete comments (introduced via command {\tt lo}) -- I find them a useful record.
They can be hidden by setting the \LaTeX\ flag {\tt draft} to {\tt draftfalse}.
Alternatively, they can be commented out using \%, or via a matching pair of {\tt iffalse} and {\tt fi}.
\end{enumerate}
May I propose we use it.
}

\lo{Convention: I propose we do not end an item with a comma (cleaner / leaner this way)}

\lo{Convention: let's use marco {\tt defeq} for definitional equality.}

\lo{ENGLISH USAGE: I have a slight preference for using ``moreover'' first, and then ``furthermore'', if both appear in the same paragraph.} 

\lo{STYLE. It is not considered good style to begin a sentence with non-words such as references or  mathematical expressions / symbols. 
Some journals enforce this rule.}

\lo{\LaTeX: Enclosing a binary operator within braces turns it into a symbol. 
This can restore the intended spacing around a binary operator.
See, e.g., the \LaTeX\ expressions that render ${\pred}\not\subseteq{\lredrt}$ (good) vs $\pred\not\subseteq\lredrt$ (bad).}

\end{document}